\documentclass[a4paper]{article}
\usepackage[margin=1in]{geometry}

\usepackage{amsmath,amsthm,amssymb,bbm,hyperref,enumerate,graphicx,caption,subcaption}
\usepackage{natbib}
\usepackage{xcolor}
\usepackage{float}
\usepackage[]{algorithm2e}

\hypersetup{colorlinks,citecolor=blue,urlcolor=black}%

\usepackage{enumitem}

\usepackage{booktabs}

\newcommand{\N}{\mathbb{N}}

\newcommand{\R}{\mathbb{R}}

\newcommand{\E}{\mathbb{E}}
\renewcommand{\Pr}{\mathbb{P}}

\newcommand{\ceil}[1]{\lceil #1 \rceil}

\newcommand{\cP}{\mathcal{P}}

\newcommand{\cX}{\mathcal{X}}

\newcommand{\cH}{\mathcal{H}}

\newcommand{\cI}{\mathcal{I}}
\newcommand{\cE}{\mathcal{E}}
\newcommand{\op}{o_{\cP,\cX,\cH,\Lambda}}
\newcommand{\supoP}{o_{\cP,\cX,\cH,\Lambda}}
\newcommand{\supOP}{O_{\cP,\cX,\cH,\Lambda}}
\newcommand{\ind}{\mathbbm{1}}
\newcommand{\RN}[1]{\textup{\uppercase\expandafter{\romannumeral#1}}}
\newcommand{\opt}a

\newcommand{\indep}{\mathrel{\perp\!\!\!\perp}}

\newcommand{\given}{\,|\,}
\newcommand{\biggiven}{\,\big|\,}
\newcommand{\Biggiven}{\,\Big|\,}
\newcommand{\bigggiven}{\,\bigg|\,}

\newcommand{\outerr}{\mathcal{B}_{x_0}(\lambda h)\setminus\mathcal{B}_{x_0}(h)}
\newcommand{\innerr}{\mathcal{B}_{x_0}(h)}

\makeatletter
\newcommand*\bigcdot{\mathpalette\bigcdot@{.5}}
\newcommand*\bigcdot@[2]{\mathbin{\vcenter{\hbox{\scalebox{#2}{$\m@th#1\bullet$}}}}}
\makeatother

\DeclareMathOperator\Var{Var}

\DeclareMathOperator\tr{tr}
\DeclareMathOperator{\supp}{supp}

\DeclareMathOperator{\sgn}{sgn}
\DeclareMathOperator*{\argmin}{argmin}
\DeclareMathOperator*{\sargmin}{sargmin}

\newtheorem{theorem}{Theorem}

\newtheorem{lemma}[theorem]{Lemma}
\newtheorem{proposition}[theorem]{Proposition}
\newtheorem{corollary}[theorem]{Corollary}

\theoremstyle{definition}

\newtheorem*{remark*}{Remark}
\newtheorem{assumption}{Assumption}

\makeatletter
\def\underbar{\underline}
\makeatother

\usepackage{authblk}
\author[1]{{Elliot H.\ Young}
\thanks{ey244@cam.ac.uk}}
\author[1]{{Rajen D.\ Shah}
\thanks{r.shah@statslab.cam.ac.uk}}
\author[1]{{Richard J.\ Samworth}
\thanks{r.samworth@statslab.cam.ac.uk}}
\affil{Statistical Laboratory, University of Cambridge, UK}

\title{Outrigger local polynomial regression}
\date{}

\DeclareMathAlphabet{\mathpzc}{OT1}{pzc}{m}{it}
\begin{document}

\maketitle

\begin{abstract}
Standard local polynomial estimators of a nonparametric regression function employ a weighted least squares loss function that is tailored to the setting of homoscedastic Gaussian errors.  We introduce the \emph{outrigger local polynomial estimator}, which is designed to achieve distributional adaptivity across different conditional error distributions.  It modifies a standard local polynomial estimator by employing an estimate of the conditional score function of the errors and an `outrigger' that draws on the data in a broader local window to stabilise the influence of the conditional score estimate.  Subject to smoothness and moment conditions, and only requiring consistency of the conditional score estimate, we first establish that even under the least favourable settings for the outrigger estimator, the asymptotic ratio of the worst-case local risks of the two estimators is at most $1$, with equality if and only if the conditional error distribution is Gaussian.  Moreover, we prove that the outrigger estimator is minimax optimal over H\"older classes up to a multiplicative factor $A_{\beta,d}$, depending only on the smoothness $\beta \in (0,\infty)$ of the regression function and the dimension~$d$ of the covariates.  When $\beta \in (0,1]$, we find that $A_{\beta,d} \leq 1.69$, with $\lim_{\beta \searrow 0} A_{\beta,d} = 1$.  A further attraction of our proposal is that we do not require structural assumptions such as independence of errors and covariates, or symmetry of the conditional error distribution.  Numerical results on simulated and real data validate our theoretical findings; our methodology is implemented in \texttt{R} and available at \url{https://github.com/elliot-young/outrigger}.
\end{abstract}

\section{Introduction}\label{sec:intro}

The estimation of a regression function is one of the most important and widely studied problems in statistics and machine learning. Suppose we have access to independent copies $(X_1,Y_1),\ldots,(X_n,Y_n)$ of $(X,Y) \sim P$ on $\mathbb{R}^d \times \mathbb{R}$ with $\E_P(Y^2) < \infty$, and are interested in estimating the \emph{conditional mean function} $f:\mathbb{R}^d \rightarrow \mathbb{R}$, given by
\begin{equation*}
    f(x) := \E_P(Y\given X=x).
\end{equation*}
It is well known that 
\begin{equation*}
    f \in \argmin_{g \in \mathcal{G}} \E_P\bigl\{w(X)\bigl(Y-g(X)\bigr)^2\bigr\},
\end{equation*}
where $\mathcal{G}$ denotes the set of Borel measurable functions from $\mathbb{R}^d$ to $\mathbb{R}$ and $w \in \mathcal{G}$ is a non-negative weight function. 
This fundamental observation underpins empirical risk minimisation with squared error loss. Indeed, many if not most commonly used regression methods are motivated by this core principle, ranging from ordinary and weighted least squares in parametric models to standard implementations of more flexible approaches such as local polynomial regression, random forests, gradient boosting, neural networks and splines.

A more formal justification for minimising the least squares loss comes from semiparametric theory \citep{bickel}.  
Indeed, for an arbitrary parametric conditional mean model, a weighted least squares estimator is \emph{semiparametrically efficient}: no estimator can outperform it in a local asymptotic minimax sense~\citep[Chapter~4]{tsiatis}.  As another example, for the nonparametric problem of estimating~$\E(Y\given X\in R)= \int_R f \,dP$ over a measurable region $R\subseteq\R^d$ subject only to the regularity constraints that $\mathbb{P}(X \in R) > 0$, $\E_P(Y^2)<\infty$ and $f$ is bounded and measurable, a semiparametric efficient estimator is the sample mean $\frac{1}{|\{i:X_i\in R\}|}\sum_{i:X_i\in R}Y_i$, which minimises $\mu \mapsto n^{-1}\sum_{i=1}^n \bigl\{(Y_i-\mu)^2\ind_R(X_i)\bigr\}$~\citep[Example 3.2]{bickel}.  This latter example acts as a heuristic justification for the use of local least squares estimators such as local polynomials and random forests. 
A notable characteristic of these semiparametric optimality results, however, is that they are only valid in settings where the parameter of interest is estimable at rate $n^{-1/2}$.  In other words, they need not apply in the nonparametric settings for which modern statistical and machine learning methods are designed.  
Of course, there is one nonparametric setting where squared error loss remains natural, namely when the errors are conditionally Gaussian, so that (weighted) least squares corresponds simply to (local) maximum likelihood estimation.  On the other hand, different error distributions (if known) would give rise to alternative loss functions based on their respective negative local log-likelihoods, and could potentially lead to improved estimators. 

In this work, we introduce a new estimator of a nonparametric regression function, which we call an \emph{outrigger estimator}, designed to adapt to the unknown error distribution (which may in particular be non-Gaussian).  
A naive first attempt towards this goal would be to replace the conditional score function of the errors in the estimating equations arising from the local likelihood with a data-driven estimate.  
It turns out that this strategy introduces a significant bias, for reasons outlined in Section~\ref{sec:lpr}. 
Our primary methodological idea, then, is to modify a standard local polynomial estimator by combining an estimate of the conditional score of the errors with an `outrigger' that draws on the data in a broader local window to stabilise the influence of the conditional score estimate.  It is these two features that give rise to the method's name, since they are evocative of an outrigger on a boat or crane that projects over the side to provide stability.   

Our main theoretical results are of two flavours: first, a comparison of the local worst-case risks of the outrigger estimator and the standard local polynomial estimator; and second, a minimax analysis that compares the outrigger estimator with any alternative procedure.  In the first case, we are able to establish a strong sense in which the outrigger estimator yields an asymptotic improvement: uniformly over bandwidth sequences, estimation points $x_0$ and a broad class of data generating mechanisms, the ratio of the local worst-case risks is asymptotically at most one, with equality if and only if the error distribution is Gaussian.  In fact, we show that for $\beta$-H\"older smooth regression functions, the asymptotic least-favourable ratio of these local risks is given by 
\begin{equation}
\label{Eq:BasicRatio}
    \biggl(\frac{1/i_P(x_0)}{\sigma_P^2(x_0)}\biggr)^{2\beta/(2\beta+d)}
\end{equation}
under optimal bandwidth choices for both methods, where $\sigma_P^2(x_0)$ denotes the conditional error variance and $i_P(x_0)$, defined in~\eqref{eq:Fisher-info} below, denotes the conditional Fisher information of the errors. 
The ratio~\eqref{Eq:BasicRatio} is indeed at most one, with equality if only if the error distribution is Gaussian, and in this sense standard local polynomial estimators are asymptotically inadmissible; see Theorem~\ref{thm:improvements} for a precise statement.

Theorems~\ref{thm:UB} and~\ref{thm:LAM-LB} allow us to compare the worst-case mean squared error (over regression functions in a H\"older ball) of the outrigger estimator with a minimax lower bound.  Remarkably, the ratio of these two quantities depends asymptotically only on the H\"older smoothness $\beta \in (0,\infty)$ and the covariate dimension $d \in \mathbb{N}$.  In fact, when $\beta \in (0,1]$, the asymptotic ratio is at most $1.69$ for every $d$, and converges to $1$ in the low smoothness limit as $\beta \searrow 0$, showing that the outrigger estimator has almost optimal performance even at the level of constants.

\begin{figure}[ht]
  \centering
  \begin{minipage}{0.61\textwidth}
    \centering
    \includegraphics[width=\linewidth]{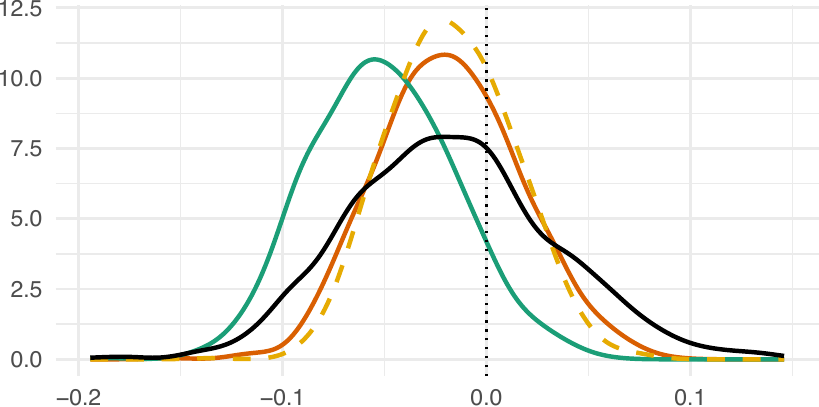}
  \end{minipage}%
  \hfill
  \begin{minipage}{0.38\textwidth}
    \vspace{0cm}\centering
		\begin{tabular}{cc}
			\toprule
			Estimator & {\begin{tabular}{@{}c@{}c@{}}MSE\\{\small$(\times10^3)$}\end{tabular}} \\
			\midrule
			Standard local polynomial  & 3.04     \\
            Oracle (Local likelihood) & 1.23 \\
			{\bf Outrigger} & {\bf1.51}     \\
		      Score plug-in & 3.47     \\
			\bottomrule
		\end{tabular}  
  \end{minipage}

 \vspace{0.3cm} 
  \includegraphics[width=0.82\linewidth]{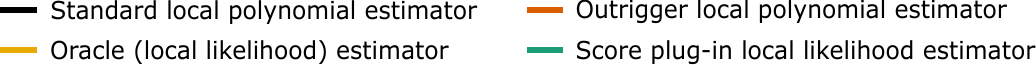}
  \caption{Kernel density estimates of $\hat{f}(0) - f(0)$ for the simulation example of Section~\ref{sec:numerics-nonindep-errors}\ref{item:exp-t3} for different estimators $\hat{f}$, based on 1000 repetitions with sample size $n=10^4$.  A standard local constant estimator (black) does not adapt to the unknown (non-Gaussian) error distribution, so its variance is larger than that of the oracle local likelihood estimator~\eqref{eq:rho-est-oracle} (dashed yellow) that exploits knowledge of the conditional score function $\rho$. The estimator~\eqref{eq:errorO} based on a naive distributional plug-in estimator $\hat\rho$ of $\rho$ (green) reduces variance compared with the standard local polynomial estimator, at the expense of a significant additional bias.  On the other hand, our outrigger estimator (orange) enjoys a very similar reduction in variance to the distributional plug-in estimator, and a similar bias to those of the oracle and standard local polynomial estimators.  The mean squared error of each estimator is given in the adjoining table.
}
\label{fig:intro}
\end{figure}

At this point, it is worth providing some intuition for distributional adaptivity in nonparametric regression.  As alluded to briefly above, if the conditional distribution of the errors given the covariates were known, then it would be natural to employ a locally weighted maximum likelihood estimator, as in the local likelihood approach of~\citet{loc-lik}.  In the more typical case where this conditional error distribution is unknown, a first thought would be to estimate the error distribution and adopt a plug-in local likelihood strategy.  The main issue with this approach is that the bias incurred in the estimation of the conditional error distribution typically leads to poor downstream regression function estimates, as has been observed in many related settings~\citep[e.g.][Example 3.2]{bickel};  see also the simulated example in Figure~\ref{fig:intro}.  Indeed, the conditional error density is a function of $d+1$ variables (where the last $d$ variables are the covariates) and the relevant functional of this density for local likelihood is the partial derivative with respect to its first argument of its logarithm (i.e.~the conditional score function).  
Estimating the conditional score may therefore be regarded as more challenging than estimating the original regression function, which is a function of $d$ variables with no derivatives involved.  In addition to making parametric assumptions on the form of the regression function, prior works have therefore imposed structural assumptions on the conditional error distribution, most commonly that the errors and covariates are independent, or that the conditional error distribution is symmetric; see, e.g.~\citet[Example~5.3, Section~4.3]{bickel},~\citet[Section~25.8.1]{vandervaart},~\citet[Section~5.1]{tsiatis},~\citet[Section~4.1.1]{kosorok},~\citet{deepreg-lik} and~\citet{asm}.  
The main effect of our outrigger is to stabilise the conditional score estimate under smoothness assumptions by eliminating its dominant bias contribution.  To the best of our knowledge then, this is the first work to achieve optimal distributional adaptation in nonparametric regression, and moreover this is achieved without structural assumptions on the conditional error distribution.  

Our outrigger estimator leverages two key assumptions in achieving distributional adaptivity, namely smoothness of the data generating distribution and access to a smooth and consistent conditional score estimator.  With regards to the latter, various (conditional) score estimators have been proposed, including those based on score matching \citep{dennis-cox, scorematching}, generative adversarial networks~\citep{gan}, engression~\citep{engression} and distributional learners such as distributional random forests~\citep{drf}. 
We further note that in the modern debiased learning framework, it is often assumed that nuisance functions can be estimated at rates faster than~$n^{-1/4}$~\citep[e.g.][]{dml}.  An attraction of our novel outrigger scheme is that we require only consistency of the conditional score estimator.   

The remainder of this paper is organised as follows.  After reviewing related literature and defining relevant notation, we present our outrigger local polynomial estimator methodology in Section~\ref{sec:methodology}.  This section begins with a more detailed discussion of oracle local likelihood estimation and the difficulties alluded to above with distributional plug-in estimation.  Our theoretical results on the performance of the outrigger estimator are given in Section~\ref{sec:theory}; these are complemented by numerical experiments on simulated and real data in Section~\ref{sec:sims}.  We conclude in Section~\ref{sec:extensions} by discussing various extensions to our methodology, including settings where we may prefer to estimate a proxy for the conditional score, or where our interest is in some other aspect of the conditional distribution of the response given covariates (e.g.~quantile regression).  All proofs are deferred to the Appendix.
\subsection{Related work}\label{sec:lit-rev}

Outside low-dimensional parametric models where the likelihood is assumed to be known, (weighted) least squares has long formed the data-fidelity terms employed in loss functions for estimators constructed as (penalised) empirical risk minimisers.  Beyond ordinary least squares and its weighted (Aitken estimator) variant, such estimators include ridge regression~\citep{ridge}, the Lasso~\citep{lasso}, standard local polynomial estimators~\citep{nadaraya,watson,stone}, random forests~\citep{randomforest}, gradient boosting~\citep{boosting,statisticalboosting}, neural networks~\citep{goodfellow} and splines~\citep{wahba}.  In nonparametric regression, minimax optimal rates over various smoothness classes are known to be achieved by several estimators formed in this way, e.g.~by local polynomials over H\"older smooth functions \citep[e.g.][]{tsybakov, samworth-shah}; gradient boosting for Sobolev classes~\citep{l2boost}; random forests for H\"older smooth functions~\citep{mondrian-minimax, underwood, crf}, neural networks over compositional H\"older classes~\citep{schmidt-hieber,ma2025deep} and splines over Sobolev classes~\citep{speckman}.  Going further, it is even known that linear estimators constructed via empirical (weighted) least squares can come close to matching minimax lower bounds at the level of constants~\citep{ibra, donoho1, donoho3, fan-minimax, cai}.  

A significant limitation of this impressive body of work that seeks almost optimal constants, however, is that it is restricted to settings with Gaussian error distributions.  In fact, in constructing global minimax lower bounds over classes of error distributions with mean zero and variance $\sigma^2>0$, these works have restricted attention to Gaussian submodels with $N(0,\sigma^2)$ errors; from this perspective, Gaussian errors are (at least almost) the worst case.  On the other hand, this leaves open the possibility that improved estimation performance may be achievable when our error distribution is non-Gaussian, and indeed the main contribution of this work is to show how these potential benefits can be realised.

Local polynomial estimators date back to the seminal works of~\citet{nadaraya}, \citet{watson}, \citet{priestley} and \citet{stone}. For modern treatments, see e.g.,~\citet{hardle}, \citet{wand}, \citet{fan}, \citet{gyorfi}, \citet{wasserman}, \citet{tsybakov}, \citet{samworth-shah}. These estimators have been extended to alternative regression contexts such as quantile regression~\citep{loc-quantile}, robust regression~\citep{robust-under-symmetry}, estimation of causal functions of interest~\citep{kennedy2, kennedy, scheidegger} and likelihood-based methods~\citep{loc-lik, staniswalis}, under additional structural assumptions.  
Over the years there has also been substantial work on inferential questions, including confidence interval and band construction~\citep{hall, xia, fan-inference, claeskens, wasserman,   cattaneo2}. 

Our work connects to the literature on distributional learning and generative modelling, which has emerged as a central pillar of modern machine learning. These methods, which encompass generative adversarial networks~\citep{gan}, autoencoders~\citep{autoencoder, vincent2}, diffusion models~\citep{song1}, and more generally energy-based distributional regression~\citep{energy-score, energy-book, drf, engression}, focus on modelling the entire probability distribution of the data. The Langevin dynamics that underpin diffusion models mean that it is convenient to represent these distributions via their score function, defined to be the derivative of the logarithm of the density.  Such considerations have led to the emergence of score estimation, especially via score matching, as a key estimation vehicle~\citep{dennis-cox, scorematching, wibisono, dou, lewis, asm}.  Our outrigger local polynomial estimator takes as an input a smooth, consistent estimator of the conditional score function, which can be obtained for instance via (conditional) score matching, or extracted from a generative model using the techniques above.

Recent years have seen a large literature in semiparametric statistics, seeking to estimate parameters of interest at rate $n^{-1/2}$ even in the presence of an infinite-dimensional nuisance parameter~\citep{robinson, bickel, vandervaart}.  Debiased machine learning approaches often have the property that such nuisance functions may only need to be estimated at rate $o(n^{-1/4})$, or even the more relaxed property that the product of the root mean squared errors of two nuisance function estimators is $o(n^{-1/2})$~\citep{dml,stijn, kennedy-dml}. 
Our setting is different in that the nonparametric regression function of interest is not estimable at rate $n^{-1/2}$, and in fact this is crucial for the success of our distributional adaptation methodology. 
Another key difference between our work and the semiparametric statistics literature is that we only require a consistent conditional score estimator (with no requirement on its rate of convergence).  This is a particularly attractive feature of our approach, as, especially in moderate- or high-dimensional problems, even $o(n^{-1/4})$ rates may be infeasible, so may result in these methods having poor practical performance~\citep{rose}. 

\subsection{Notation}\label{sec:notation}

We denote the set of natural numbers by $\N$, and write $\N_0:=\N\cup\{0\}$. For $n\in\N$, we write $[n]:=\{1,\ldots,n\}$ and $[n]_0:=[n]\cup\{0\}$.  For $D \in \mathbb{N}$ and $j \in [D]$, the $j$th standard basis vector in $\mathbb{R}^D$ is denoted $\mathrm{e}_j$. Fixing $q \geq 1$, we write $\|\cdot\|$ for the $\ell_q$ norm on $\mathbb{R}^d$, and define the closed ball of radius $r \geq 0$ around $x \in \mathbb{R}^d$ by $\mathcal{B}_{x}(r):=\{\tilde{x}\in\R^d:\|\tilde{x}-x\|\leq r\}$.  
Let $\|\cdot\|_{\mathrm{op}}$ denote the operator (spectral) norm of a square matrix.  For $\beta>0$, we write $\beta_0:=\ceil{\beta}-1$. We use multi-index notation for partial derivatives, so that, for $\alpha = (\alpha_1,\ldots,\alpha_d) \in \N_0^d$, we write $\|\alpha\|_1 := \sum_{j=1}^d \alpha_j$ and, for a sufficiently smooth function $f$ on $\mathcal{X} \subseteq \mathbb{R}^d$, we let $\partial^\alpha f(x):=\frac{\partial^{\beta_0}f(x)}{\prod_{j=1}^d\partial x_j^{\alpha_j}}$ when $\|\alpha\|_1 = \beta_0$; we also use the shorthand $x^\alpha := \prod_{j=1}^d x_j^{\alpha_j}$.   For $L>0$, 
we define the \emph{H\"{o}lder class} $\cH(\beta,L)$ on $\mathcal{X}$ to be the set of $\beta_0$-times differentiable functions $f:\cX\to\R$ satisfying
\[
\max_{\alpha\in\N_0^d:\|\alpha\|_1\leq\beta_0}\|\partial^\alpha f\|_\infty\leq L,\quad
        \max_{\alpha\in\N_0^d:\|\alpha\|_1=\beta_0}\sup_{x\neq y\in\cX}\frac{|\partial^\alpha f(x)-\partial^\alpha f(y)|}{\|x-y\|^{\beta-\beta_0}}\leq L.
\]

Given an index set $\mathcal{R}$, a sequence $(X_{n,R})$ of random variables for each $R \in \mathcal{R}$ and a deterministic, positive sequence $(a_{n,R})$ for each $R \in \mathcal{R}$, we write $X_{n,R} = o_{\mathcal{R}}(a_{n,R})$ if $\lim_{n\to\infty}\sup_{R \in \mathcal{R}}\Pr_R(|X_{n,R}|/a_{n,R} >\epsilon)=0$ for all $\epsilon>0$ and $X_{n,R}= O_{\mathcal{R}}(a_{n,R})$ if for any $\epsilon>0$, there exist $M_\epsilon, N_\epsilon>0$ such that $\sup_{n\geq N_\epsilon}\sup_{R\in\mathcal{R}}\Pr_R(|X_{n,R}|/a_{n,R}>M_\epsilon)<\epsilon$.  When $\mathcal{R}$ is a product set of the form $\mathcal{R} = \mathcal{S} \times \mathcal{T}$, we write, e.g., $X_{n,R} = o_{\mathcal{S},\mathcal{T}}(a_{n,R})$ instead of $X_{n,R} = o_{\mathcal{S} \times \mathcal{T}}(a_{n,R})$.  The standard normal distribution function is denoted by $\Phi$.

A \emph{kernel} $K:\R^d\to\R$ is a Borel measurable function satisfying $\int_{\mathbb{R}^d} K(\nu) \, d\nu = 1$.  For $\ell \in \mathbb{N}$, we say a kernel $K$ is of \emph{order} $\ell$ if $\int_{\R^d}K(\nu)\nu^\alpha \, d\nu=0$ for all $\alpha\in\N_0^d$ with $1\leq \|\alpha\|_1\leq\ell-1$. 
Given $p\in\mathbb{N}$, define $\bar{p} := \binom{d+p}{p}$ and $Q:\mathbb{R}^d \rightarrow \mathbb{R}^{\bar{p}}$ by  $Q(\nu):=\bigl(\frac{1}{\alpha!}\prod_{j=1}^d\nu_r^{\alpha_r}:\|\alpha\|_1\leq p\bigr)$,  with components in increasing lexicographic order.  We also define the quantities
\begin{gather*}
    u(K):=\int_{\R^d}K(\nu)Q(\nu) \, d\nu,
    \qquad
    s_t(K):=\int_{\R^d}K^t(\nu)Q(\nu)Q(\nu)^\top \, d\nu\quad(t\in\{1,2\}),
    \\
    \mu_\beta(K) := \int_{\R^d}K(\nu)\|\nu\|_{\beta}^{\beta} \, d\nu \quad (\beta > 0),
    \qquad
    R_2(K) := \int_{\R^d} K^2(\nu)\,d\nu.
\end{gather*}
For a bandwidth $h>0$, define the \emph{scaled kernel} by $K_h(\cdot):=K(\cdot/h)/h^d$ and also set $Q_h(\cdot):=Q(\cdot/h)$.

\section{The outrigger local polynomial estimator}\label{sec:methodology}

\subsection{Motivation: Local polynomial regression and the hardness of structure-free distributional adaptivity}\label{sec:lpr}

The purpose of this subsection is to provide background and informal arguments to motivate the introduction of our outrigger local polynomial estimator.  Let $(X_1,Y_1),\ldots,(X_n,Y_n)$ be independent and identically distributed covariate-response pairs in $\R^d \times \R$, regarded as copies of $(X,Y)$ satisfying $Y = f(X) + \varepsilon$ with $\mathbb{E}_P(\varepsilon \given X) = 0$.  Let $h > 0$ be a bandwidth, let $p \in \N_0$ denote a local polynomial degree and let $K$ be a kernel with support $\mathcal{B}_{0}(1)$.  The celebrated local polynomial estimator $\hat{f}^{\mathrm{LP}}$ is defined at the point $x_0 \in \R^d$ by $\hat{f}^{\mathrm{LP}}(x_0) := \mathrm{e}_1^\top\hat\theta^{\mathrm{LP}}$, where  $\hat\theta^{\mathrm{LP}}$ solves the linear estimating equation
\begin{equation}\label{eq:lp}
    \sum_{i=1}^nK_h(X_i-x_0)Q_h(X_i-x_0)\bigl(Y_i-Q_h(X_i-x_0)^\top\theta\bigr)=0,
\end{equation}
over $\theta\in\R^{\bar{p}}$.  
Suppose that the conditional variance $\sigma_P ^2(x_0):=\Var_P(Y\given X=x_0)$ is positive and finite and that $X$ has density $p_X$.  Then, provided $f$ is $\beta$-H\"older smooth and the bandwidth $h \equiv h_n$ satisfies $h\to0$ and $nh^d\to\infty$, the local polynomial estimator  $\hat{f}^{\mathrm{LP}}(x_0)$ admits the asymptotic decomposition
\begin{equation}\label{eq:lp-decomp}
    \hat{f}^{\mathrm{LP}}(x_0)-f(x_0)\overset{d}{=}B(f,x_0,K,h)h^{\beta^*}+\frac{1}{\sqrt{nh^d}}\biggl\{\frac{R_2(K)\sigma_P^2(x_0)}{p_X(x_0)}\biggr\}^{1/2}N(0,1)
    +o_P\biggl(h^{\beta^*} +\frac{1}{\sqrt{nh^d}}\biggr)
    ,
\end{equation}
for a bias term $B(f,x_0,K,h)$ satisfying $\limsup_{n \rightarrow \infty} |B(f,x_0,K,h)| < \infty$, and where $\beta^*:=\beta\wedge(p+1)$. 
See, e.g.,~\citet{fan-lp-asymptotics} for the case where $f$ is twice differentiable and~\citet{rupertwand} for extensions to $d \geq 2$ and infinitely differentiable $f$.

Suppose for now that the conditional density $p_{\varepsilon|X}(\cdot \given x)$ of the errors, or equivalently the conditional score function $\rho(\cdot \given x)$, given by $\rho(\varepsilon \given x) := \frac{\partial}{\partial \varepsilon} \log p_{\varepsilon|X}(\varepsilon \given x)$ were known. 
The local likelihood estimator~\citep{loc-lik} is defined as $\hat{f}^{\mathrm{LL}}(x_0)=\mathrm{e}_1^\top\hat\theta^{\mathrm{LL}}$, where $\hat\theta^{\mathrm{LL}}$ is a zero of the function $g:\mathbb{R}^{\bar{p}} \rightarrow \mathbb{R}^{\bar{p}}$, given by 
\begin{equation}\label{eq:lp-ora-conditional}
    g^{\mathrm{LL}}(\theta):=\frac{1}{n}\sum_{i=1}^nK_h(X_i-x_0)Q_h(X_i-x_0)\rho\bigl(Y_i-Q_h(X_i-x_0)^\top\theta\biggiven X_i\bigr).
\end{equation}
Under similar conditions, and provided the \emph{conditional Fisher information} 
\begin{equation}\label{eq:Fisher-info}
    i_P(x_0) := \E_P\bigl(\rho^2(\varepsilon\given X)\biggiven X=x_0\bigr)
    \end{equation}
    is positive and finite, we can write 
\begin{equation}\label{eq:rho-est-oracle}
    \hat{f}^{\mathrm{LL}}(x_0)-f(x_0)\overset{d}{=}B(f,x_0,K,h)h^{\beta^*}+\frac{1}{\sqrt{nh^d}}\bigg\{\frac{R_2(K)}{i_P(x_0)p_X(x_0)}\bigg\}^{1/2}N(0,1)
    + o_P\biggl(h^{\beta^*}+\frac{1}{\sqrt{nh^d}}\biggr)
    ;
\end{equation}
see Theorem~\ref{thm:decomp}.  Thus, the dominant bias term is the same as in~\eqref{eq:lp-decomp}, but the dominant variance term is reduced, since $\sigma_P^2(x_0) \geq 1/i_P(x_0)$, with equality if and only if $\varepsilon\given \{X=x_0\}\sim N\bigl(0,\sigma_P^2(x_0)\bigr)$; see Lemma~\ref{lem:score-min}. 

In practice, the conditional error distribution is rarely known, so suppose instead that we have access to an estimator $\hat\rho(\cdot \given x)$ of $\rho(\cdot \given x)$. For simplicity of exposition here, we assume that $\hat\rho$ has been constructed via independent auxiliary data. A naive plug-in estimator motivated by~\eqref{eq:rho-est-oracle} would take $\hat{f}^{\mathrm{plug.in}}(x_0)=\mathrm{e}_1^\top\hat\theta^{\mathrm{plug.in}}$, where $\hat\theta^{\mathrm{plug.in}}$ is a zero of the function
\begin{equation}\label{eq:errorO}
    g^{\mathrm{plug.in}}(\theta):=\frac{1}{n}\sum_{i=1}^nK_h(X_i-x_0)Q_h(X_i-x_0)\hat\rho\bigl(Y_i-Q_h(X_i-x_0)^\top \theta\biggiven X_i\bigr).
\end{equation}
Now let $r_n:=\frac{1}{\sqrt{nh^d}}+h^{\beta}$.  Provided that $\rho(\cdot\given x)$ is differentiable for each $x$, a Taylor expansion yields that for every $C > 0$,
\begin{align*}
    \sup_{\theta\in\R^{\bar{p}}:\|\theta-f(x_0)\mathrm{e}_1\|\leq C r_n}&\biggl\|g^{\mathrm{LL}}(\theta)
     - \frac{1}{n}\sum_{i=1}^n K_h(X_i-x_0)Q_h(X_i-x_0)\rho\bigl(\varepsilon_i\biggiven X_i\bigr)
     \\
    &-
    \frac{1}{n}\sum_{i=1}^n K_h(X_i-x_0)Q_h(X_i-x_0)\rho'\bigl(\varepsilon_i\biggiven X_i\bigr)\bigl(f(X_i)-Q_h(X_i-x_0)^\top\theta\bigr)
    \biggr\| = o_P(r_n);
\end{align*}
a similar property holds for $g^{\mathrm{plug.in}}$, but with $\rho$ replaced by $\hat{\rho}$.  Noting that $\mathbb{E}_P\bigl\{\rho(\varepsilon \given X)\biggiven X\bigr\} = 0$, and provided $\hat{\rho}'(\cdot \given x)$ is a uniformly consistent estimator of $\rho'(\cdot \given x)$ over $\mathcal{B}_{x_0}(Cr_n)$, we can therefore expect that
\begin{equation*}
    \sup_{\theta \in \mathbb{R}^{\bar{p}}:\|\theta-f(x_0)\mathrm{e}_1\|\leq C r_n} \Bigl\|
    g^{\mathrm{plug.in}}(\theta)- g^{\mathrm{LL}}(\theta)-
     \E_P\bigl\{K_h(X-x_0)Q_h(X-x_0)\hat\rho(\varepsilon\given X)\bigr\}
    \Bigr\| = o_P(r_n).
\end{equation*}
Thus in the typical case that $\mathrm{Bias}_P(\hat\rho(\varepsilon\given X)\given X=x_0)\neq0$ the term
\begin{equation}\label{eq:bias}
    \E_P\bigl\{K_h(X-x_0)Q_h(X-x_0)\hat\rho(\varepsilon\given X)\bigr\}
    = \bigl(1+o(1)\bigr)u(K)p_X(x_0)\,\mathrm{Bias}_P\bigl(\hat\rho(\varepsilon\given X)\biggiven X=x_0\bigr)
\end{equation}
biases the naive plug-in estimating equation, and hence the resulting estimator.  A further complication in seeking to estimate this bias is the fact that we do not have direct access to independent realisations of the errors, and would have to rely on fitted residuals from a pilot estimate.  This means that, outside some special cases discussed below, the order of the bias is reflected in the rate at which we can hope to estimate the conditional score at $x_0$.  Since this is a function of $d+1$ variables and involves a partial derivative of the error density, in general we do not expect to be able to estimate this function at rate $1/r_n$, and the bias of the corresponding estimator will typically be of the same order as the bias of the conditional score estimate.  This explains the empirical findings on the failure of the naive plug-in estimator illustrated in Figure~\ref{fig:intro}.

As mentioned above, there are special cases where the naive plug-in estimator may still perform well.  
One is where the conditional errors are known to be symmetric, so that $\rho(\cdot\given x)$ is antisymmetric for each~$x$. In this setting, if the estimator $\hat\rho(\cdot\given x)$ is constructed to also be antisymmetric, then the bias term~\eqref{eq:bias} is exactly zero~\citep[Example 25.27]{vandervaart}. Another case may be when the errors are independent of the covariates, in which case $\rho(\cdot\given x)=\rho(\cdot)$ is a univariate function, although even in this setting the intricacies of using fitted residuals instead of oracle residuals in this procedure still makes this problem non-trivial. These settings effectively assume away the difficulty of distributional adaptivity; the challenge taken up in this work is to achieve distributional adaptation in a structure-free setting.

\subsection{The outrigger}\label{sec:outrigging}

We are now in a position to provide the intuition behind our outrigger local polynomial estimator; a formal outline of our proposed methodology is given in Algorithm~\ref{alg:outrigger} below.  In addition to the usual ingredients of a local polynomial estimator $\hat{f}^{\mathrm{LP}}$ with kernel $K$ and bandwidth $h$, our outrigger estimator at $x_0$ involves an \emph{outrigger kernel}~$\kappa_\lambda:\mathbb{R}^d \rightarrow \mathbb{R}$, indexed by $\lambda > 1$ and supported on $\mathcal{B}_{x_0}(\lambda)\setminus\mathcal{B}_{x_0}(1)$.  By analogy with the original kernel, we also define the \emph{scaled outrigger kernel} $\kappa_{h,\lambda}(\cdot) := h^{-d}\kappa_\lambda(\cdot/h)$. 
It is now convenient to define, for $i \in [n]$, \emph{population-level outrigger weightings}
\begin{equation*}
    \varphi_{h,\lambda}(X_i-x_0) := K_h(X_i-x_0)Q_h(X_i-x_0)-\mu(x_0)\kappa_{h,\lambda}(X_i-x_0),
    \quad
    \mu(x_0) :=\frac{\E_P\bigl\{K_h(X-x_0)Q_h(X-x_0)\bigr\}}{\E_P\bigl\{\kappa_{h,\lambda}(X-x_0)\bigr\}}
    ,
\end{equation*}
and their empirical analogues 
\begin{equation*}
    \hat\varphi_{h,\lambda}(X_i-x_0) := K_h(X_i-x_0)Q_h(X_i-x_0)-\hat\mu(x_0)\kappa_{h,\lambda}(X_i-x_0),
\end{equation*}
where $\hat\mu(x_0)$ is an empirical estimator of $\mu(x_0)$, assumed for now to be constructed using auxiliary data.  An illustration of the population-level outrigger weightings in the case of a local constant estimator, is given in the lower panel of Figure~\ref{fig:outrigger-kernel}.   By construction,~$\E\bigl(\varphi_{h,\lambda}(X-x_0)\bigr)=0$. Given a pilot estimator, which we take to be the standard local polynomial estimator $\hat{f}^{\mathrm{LP}}$, the starting point for our procedure is to consider  
\begin{equation}\label{eq:orthog-scoring}
    g^{\mathrm{Outrig}}(\theta):=\frac{1}{n}\sum_{i=1}^n\hat\varphi_{h,\lambda}(X_i-x_0)\hat\rho\Bigl(Y_i-Q_h(X_i-x_0)^\top\theta\,\ind_{\innerr}(X_i)-\tilde{f}(X_i)\ind_{\mathcal{B}_{x_0}(\lambda h)\setminus\innerr}(X_i)\Biggiven X_i\Bigr),
\end{equation}
where    
\begin{equation}
    \tilde{f}(X_i) := \hat{f}^{\mathrm{L{P}}}(X_i)+\hat{c}(x_0) \quad \text{and} \quad 
    \hat{c}(x_0):=\frac{\sum_{j=1}^n\kappa_{h,\lambda}(X_j-x_0)\bigl(Y_j-\hat{f}^{\mathrm{LP}}(X_j)\bigr)}{\sum_{j=1}^n\kappa_{h,\lambda}(X_j-x_0)}.
\end{equation}
\begin{figure}[ht]
    \centering
    \includegraphics[width=0.7\linewidth]{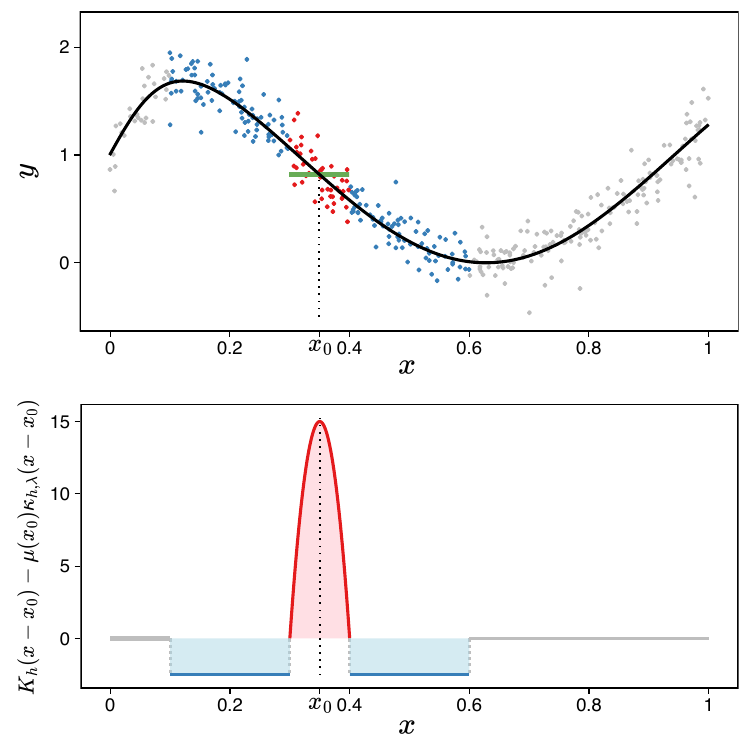}
    \caption{Illustration of a local constant estimator ($p=0$) in the single covariate ($d=1$) case at  $x_0=0.35$ with bandwidth $h=0.05$ and outrigger parameter $\lambda=5$. The solid black curve is the true regression function. The green line shows the outrigger local constant fit at $x_0$. The bottom diagram shows the orthogonal combination of `inner and outer region' kernels about $x_0$. In our schematic,  $K(\nu)=\tfrac{3}{4}\max(1-\nu^2,0)$ is the Epanechnikov kernel and $\kappa_\lambda(\nu)=\tfrac{1}{2(\lambda-1)}\ind_{\mathcal{B}_{x_0}(\lambda)\setminus\mathcal{B}_{x_0}(1)}$ is a uniform kernel. The red points lie within the inner region $\mathcal{B}_{x_0}(h)$, which contribute to our estimating equation via the kernel $K_h(x-x_0)$. The blue points lie within the regions $\mathcal{B}_{x_0}(\lambda h)\setminus\mathcal{B}_{x_0}(h)$ contributing to our estimating equation via the kernel $\kappa_{h,\lambda}(x-x_0)$; the scaling $\mu(x_0)$ is such that the expectation of the function in the bottom diagram is zero. Each of $K$ and $\kappa_\lambda$ are kernels of order 2. The grey points do not contribute to the estimator at $x_0$.}
    \label{fig:outrigger-kernel}
\end{figure}

The function $g^{\mathrm{Outrig}}$ has three differences with $g^{\mathrm{LL}}$ defined in~\eqref{eq:lp-ora-conditional}, the first of which is that the unknown score $\rho$ is replaced with the estimate $\hat{\rho}$.  We know that on its own, this naive replacement yields an estimator with significant bias, so the other two alterations are introduced to compensate.  The weightings $K_h(X_i - x_0)Q_h(X_i-x_0)$ are replaced with empirical outrigger weightings $\hat\varphi_{h,\lambda}(X_i-x_0)$, designed to have population mean close to zero, so as to stabilise the score estimator.  Further, since these empirical outrigger weightings are generally non-vanishing in the larger region $\mathcal{B}_{x_0}(\lambda h)$, we now require meaningful residuals in the support of the outrigger kernel, and these are provided via the intermediate estimator $\tilde{f}$. This intermediate estimator is constructed by debiasing the pilot estimator $\hat{f}^{\mathrm{LP}}$ via the addition of the term $\hat{c}(x_0)$, which represents an in-sample weighted average estimate over the data in the outrigger's support of the average pilot estimator residual.  The net effect of these modifications of $g^{\mathrm{LL}}$ is that, as shown as a significant part of the proof of Theorem~\ref{thm:decomp} below, in the asymptotic regime where $\lambda\to\infty$ and $\lambda h\to0$, we have for every $C > 0$ that 
\begin{align}\label{eq:Out-LL-1}
    \sup_{\theta \in \mathbb{R}^{\bar{p}}:\|\theta-\theta_0\|\leq C r_n} \bigl\|
    g^{\mathrm{Outrig}}(\theta)-g^{\mathrm{LL}}(\theta) 
   \bigr\| = o_P(r_n),
\end{align}
where $\theta_0 := f(x_0)\mathrm{e}_1$.  Moreover, writing $D$ for the derivative operator and $\pi_h(\cdot):=K_h(\cdot)Q_h(\cdot)Q_h(\cdot)^\top$, we have
\begin{align}
    Dg^{\mathrm{LL}}(\theta_0) &=
    \frac{1}{n}\sum_{i=1}^n \pi_h(X_i-x_0) \rho'\bigl(Y_i-Q_h(X_i-x_0)^\top\theta_0\biggiven X_i\bigr)
    \notag
    \\
    &=p_X(x_0) \E_P\bigl\{\rho'(\varepsilon\given X)\biggiven X=x_0\bigr\} s_1(K)+o_P(1),   
    \label{eq:Out-LL-2}
\end{align}
where, under mild conditions, $\E_P\bigl\{\rho'(\varepsilon\given X)\biggiven X=x_0\bigr\} = -\E_P\bigl\{\rho^2(\varepsilon\given X)\biggiven X=x_0\bigr\} < 0$; see~\eqref{Eq:FITwoForms} in the proof of Lemma~\ref{lem:score-min}.  Finally, again under mild smoothness conditions and provided that $\hat{\rho}'(\cdot \given x_0)$ is a uniformly consistent estimator of $\rho'(\cdot \given x_0)$ over $\mathcal{B}_{x_0}(Cr_n)$, 
\begin{align}
    \sup_{\theta \in \mathbb{R}^{\bar{p}}:\|\theta-\theta_0\|\leq C r_n} &\bigl\|D g^{\mathrm{Outrig}}(\theta)
    -
    D g^{\mathrm{LL}}(\theta)
    \bigr\|_{\mathrm{op}}
    \notag
    \\
    &=
    \sup_{\theta \in \mathbb{R}^{\bar{p}}:\|\theta-\theta_0\|\leq C r_n} \biggl\|\frac{1}{n}\sum_{i=1}^n \pi_h(X_i-x_0)(\hat\rho'-\rho')\bigl(Y_i-Q_h(X_i-x_0)^\top\theta\biggiven X_i\bigr)\biggr\|_{\mathrm{op}}
    =
    o_P(1).
    \label{eq:Out-LL-3}
\end{align}
Assuming for simplicity here that the roots of our estimating equations are unique, we see from~\eqref{eq:Out-LL-1},~\eqref{eq:Out-LL-2} and~\eqref{eq:Out-LL-3} that $\hat{\theta}^{\mathrm{Outrig}}$ in Algorithm~\ref{alg:outrigger} satisfies $\hat{\theta}^{\mathrm{Outrig}} - \hat\theta^{\mathrm{LL}} = o_P(r_n)$, and we are further able to establish that $\hat{\theta}^{\mathrm{LL}} - \theta_0 = O_P(r_n)$, so that $\hat{\theta}^{\mathrm{Outrig}}$ and $\hat{\theta}^{\mathrm{LL}}$ have the same asymptotic behaviour. 

This simplified presentation is elaborated and formalised in our theory in Section~\ref{sec:theory} to follow.

\subsubsection{Implementation details}

As we typically do not have access to auxiliary data with which to estimate the conditional score function, we employ a modified form of $\mathcal{K}$-fold cross-fitting~\citep{dml} in Algorithm~\ref{alg:outrigger}.  Specifically, the score estimator $\hat\rho_k$ on the $k$th fold and outrigger weighting quantity $\hat\mu_k$ (used for score stabilisation) are constructed out-of-fold, while $\hat{c}_k$ (used for pilot stabilisation) is constructed in-fold.  Examples of conditional score estimates that could be used as an input in Algorithm~\ref{alg:outrigger} were discussed in Section~\ref{sec:lit-rev}; see also Section~\ref{sec:numerical-implementation}. 

The computation of $\hat\theta^{\mathrm{Outrig}}$ in Algorithm~\ref{alg:outrigger} involves solving a non-linear estimating equation.  
In practice we implement Fisher scoring steps, initialised at the pilot estimator~$\hat\theta^{\mathrm{LP}}$, via the updates
\[
    \theta \leftarrow \theta - \hat{\mathcal{J}}(\theta)^{-1}\hat{\mathcal{S}}(\theta),
\]
    where
    \begin{align*}
    \hat{\mathcal{J}}(\theta) &:= \sum_{k=1}^{\mathcal{K}}\sum_{i
    \in\cI_k}\pi_h(X_i-x_0)\hat\rho_k^2\big(Y_i-Q_h(X_i-x_0)^\top\theta \biggiven X_i\big),
    \\
    \hat{\mathcal{S}}(\theta) &:= 
    \sum_{k=1}^{\mathcal{K}}\sum_{i
    \in\cI_k}\hat\varphi_{h,\lambda,k}(X_i-x_0)\hat\rho_k\big(Y_i-Q_h(X_i-x_0)^\top\theta\,\ind_{\mathcal{B}_{x_0}(h)}(X_i)-\tilde{f}_k(X_i)\ind_{\outerr}\biggiven X_i\big),
    \\
    \hat\varphi_{h,\lambda,k}(\cdot) &:= K_h(\cdot)Q_h(\cdot)-\hat{\mu}_k(x_0)\kappa_{h,\lambda}(\cdot),
    \qquad
    \tilde{f}_k(X_i) := \hat{f}_k^{\mathrm{LP}}(X_i)+\hat{c}_k(x_0),
\end{align*}
where $(\cI_k)_{k\in[\mathcal{K}]}$ forms a partition of $[n]$, where $\cI_k$ indexes the observations in the $k$th fold and where $\hat{f}_k^{\mathrm{LP}}$ is the standard local polynomial estimator computed on the $k$th fold.

{
\RestyleAlgo{ruled}
\begin{algorithm}[H]
\KwIn{Data $(X_i,Y_i)_{i\in[n]}$; point $x_0\in\R^d$; bandwidth $h > 0$; degree of polynomial $p \in \mathbb{N}_0$; kernel $K:\R^d\to\R$ supported on $\mathcal{B}_{x_0}(1)$; outrigger parameter\footnote{
The quantity $\lambda_0(K) \geq 1$, depending only on the kernel $K$ chosen by the practitioner, is defined in Assumption~\ref{ass:kappa}.
} $\lambda\in \bigl(\lambda_0(K),\infty\bigr]$; outrigger kernel $\kappa_{\lambda}:\R^d\to\R$  supported on $\mathcal{B}_{x_0}(\lambda  )\setminus\mathcal{B}_{x_0}(1)$; number $\mathcal{K}\geq2$ of folds for cross-fitting.
}

\smallskip

Partition $[n]$ into $\mathcal{K}$ disjoint sets $(\cI_k)_{k\in[\mathcal{K}]}$ of approximately equal size. 

\For{$k\in[\mathcal{K}]$}{
    $\hat f^{\mathrm{LP}}_k(\cdot) := \mathrm{e}_1^\top\bigl(\sum_{i\in\cI_k^c}K_{h}(X_i-\cdot)Q_{h}(X_i-\cdot)Q_{h}(X_i-\cdot)^\top\bigr)^{-1}\sum_{i\in\cI_k^c}K_{h}(X_i-\cdot)Q_{h}(X_i-\cdot)Y_i$.
    
    Construct an estimator $\hat\rho_k$ of $\rho$ using $(\hat\varepsilon_i,X_i)_{i\in\cI_k^c}:=\bigl(Y_i-\hat{f}^{\mathrm{LP}}_k(X_i),\,X_i\bigr)_{i\in\cI_k^c}$. 

    $\hat\mu_k(x_0) := \big(\sum_{i\in\cI_k^c}\kappa_{h,\lambda}(X_i-x_0)\big)^{-1}\big(\sum_{i\in\cI_k^c}K_h(X_i-x_0)Q_h(X_i-x_0)\big)$.
    
    $\hat{c}_k(x_0):=\bigl(\sum_{i\in\cI_k}\kappa_{h,\lambda}(X_i-x_0)\bigr)^{-1}\bigl\{\sum_{i\in\cI_k}\kappa_{h,\lambda}(X_i-x_0)\bigl(Y_i-\hat{f}_k^{\mathrm{LP}}(X_i)\bigr)\bigr\}$.

}

Let $\Theta \subseteq \R^{\bar{p}}$ denote the set of zeros of 
\begin{align}
    \theta \mapsto \sum_{k=1}^\mathcal{K} &\sum_{i\in\cI_k} 
    \Bigl\{K_h(X_i-x_0)Q_h(X_i-x_0)-
    \hat\mu_k(x_0)\kappa_{h,\lambda}(X_i-x_0)
    \Bigr\}
    \notag
    \\ &\hspace{-0.66cm} \cdot
    \hat\rho_k\Big(Y_i-Q_h(X_i-x_0)^\top\theta\,\ind_{\innerr}(X_i)-\big(\hat{f}^{\mathrm{LP}}_k(X_i)+\hat{c}_k(x_0)\big)\ind_{\outerr}(X_i)\Biggiven X_i\Big),
    \label{eq:algorithm-est-eqn}
\end{align}
and take\footnote{Here, $\mathrm{sargmin}$ denotes the smallest element in the lexicographic ordering of the $\argmin$ set.} $\hat{\theta}^{\mathrm{Outrig}} := \sargmin_{\theta \in \Theta} \|\theta - \hat\theta^{\mathrm{LP}}\|_\infty$, where $\hat\theta^{\mathrm{LP}}:=\bigl(\sum_{i=1}^nK_h(X_i-x_0)Q_h(X_i-x_0)Q_h(X_i-x_0)^\top \bigr)^{-1}\sum_{i=1}^n K_h(X_i-x_0)Q_h(X_i-x_0)Y_i$.

\KwOut{The outrigger estimator $\hat{f}^{\mathrm{Outrig}}(x_0) :=\mathrm{e}_1^\top\hat\theta^{\mathrm{Outrig}}$.}
\caption{The outrigger local polynomial estimator}\label{alg:outrigger}
\end{algorithm}
}

\section{Main results}\label{sec:theory}

Throughout, we will assume that our data $(X_1,Y_1),\ldots,(X_n,Y_n)$ are independent copies of a pair $(X,Y)$, where $X$ takes values in $\R^d$ and 
\begin{equation}\label{eq:y=f(x)+e}
Y = f(X) + \varepsilon,
\end{equation}
with $\mathbb{E}_P(\varepsilon \given X) = 0$.  The joint distribution $P$ of $(X,Y)$ may be described by a triple $(P_X,P_{\varepsilon|X},f)$, where $P_X$ denotes the marginal distribution of $X$, where $P_{\varepsilon|X} = (P_{\varepsilon|x})_{x \in \R^d}$ denotes a disintegration of $P$ into conditional distributions on $\mathbb{R}$ \citep[e.g.][Section~10.9.2]{samworth-shah} and where $f:\R^d \rightarrow \mathbb{R}$ denotes the regression function, given by $f(x) := \mathbb{E}_P(Y \given X=x)$.  The convergence in probability statements in Assumptions~\ref{ass:DGM}~and~\ref{ass:bandwidth-kernels-and-co} below are stated uniformly over the given index sets so as to facilitate uniform conclusions; see Section~\ref{sec:notation} for formal definitions.

\begin{assumption}[Regularity of the data generating mechanism]\label{ass:DGM}
     Let $\mathcal{P}$ denote the class of distributions $P = (P_X,P_{\varepsilon|X},f)$ for $(X,Y)$ that satisfy:
    \begin{enumerate}[label=(A1.\arabic*), leftmargin=*, align=left]
        \item (Covariate distribution) There exist $\beta_X,L_X > 0$, as well as a compact set $\mathcal{X} \subseteq \mathbb{R}^d$ and an open neighbourhood $\mathcal{X}^\circ$ of $\mathcal{X}$, such that the restriction of $P_X$ to $\mathcal{X}^\circ$ is absolutely continuous with respect to Lebesgue measure on $\R^d$, with Radon--Nikodym derivative $p_X \in \cH(\beta_X,L_X)$ satisfying 
        $0 < c_X \leq p_X(x)\leq C_X$ for all $x\in\cX^\circ$.  We assume that our estimation point $x_0$ of interest belongs to $\mathcal{X}$.
        \label{ass:covariates}
        \item (Smoothness of regression function) $f\in\cH(\beta, L)$ on $\mathcal{X}^\circ$ for some $\beta,L > 0$.
        \item (Score function smoothness) We assume that $P_{\varepsilon|x}$ is absolutely continuous with respect to Lebesgue measure for each $x \in \mathcal{X}^\circ$, with density $p \equiv p_{\varepsilon|x}$.  Moreover, we assume that the \emph{score function} $\rho:\R \times \mathcal{X}^\circ \to\R$, given by $\rho(e \given x) := p'(e \given x)/p(e \given x)$, is well-defined and that $e \mapsto \rho(e \given x)$ is differentiable for each $x \in \mathcal{X}^\circ$, with uniformly continuous derivative in the sense that $\sup_{e \in \mathbb{R}} \sup_{x \in \mathcal{X}^\circ} \bigl|\rho'(e+t\given x) - \rho'(e \given x)\bigr| \rightarrow 0$ as $t \rightarrow 0$. 
        Further suppose that $\E_P\bigl(\rho^2(\varepsilon\given X)\biggiven X=\cdot\bigr)$ is uniformly continuous.
        \label{ass:score}
        \item (Score function estimation) We have access to an estimator $\hat{\rho}$ of $\rho$ satisfying, for almost all realisations of data on which it is constructed, that 
        \begin{enumerate}[label=(\roman*)]
            \item $\E_P\bigl\{\hat\rho(\varepsilon\given X)\given X=\cdot \bigr\},\, \E_P\bigl\{\hat\rho'(\varepsilon\given X)\given X=\cdot \bigr\} \in \cH(\beta_\cE, L_\cE)$ for some $\beta_\cE > 0$, $L_\cE \geq 0$;
            \item There exists $\epsilon>0$ such that $\sup_{|\tau|\leq\epsilon}\E_P\bigl\{(\hat\rho-\rho)^2(\varepsilon+\tau\given X)\biggiven\hat\rho,X=\cdot\bigr\}=o_{\mathcal{P},\mathcal{X}}(1)$.
            \item $e \mapsto \hat{\rho}(e \given x)$ is differentiable for each $x\in\cX^\circ$, with uniformly continuous derivative $\hat{\rho}'(\cdot\given x)$.  Moreover, $\E_P\bigl\{(\hat\rho'-\rho')^2(\varepsilon\given X)\biggiven\hat\rho,X=x\bigr\} = o_{\mathcal{P},\mathcal{X}}(1)$.
            \label{ass:rho'-consistency}
            \item 
            $\E_P\bigl\{(\hat\rho'-\rho')^2(\varepsilon\given X)\biggiven X=\cdot\bigr\}$ is uniformly continuous. 
        \end{enumerate}
        \label{ass:score-estimation}
        \item (Moment conditions) 
        There exist values 
        $C_1,C_2,c_1,C_3,c_3,\delta,
        \epsilon>0$ such that  $\sup_{|\tau|\leq\epsilon}\E_P\bigl(\bigl|\rho(\varepsilon+\tau\given X)\bigr|^{2+\delta}\biggiven X=x\bigr) \leq C_1$, $\sup_{|\tau|\leq\epsilon}\E_P\bigl(\rho'(\varepsilon+\tau\given X)^2\given X=x\bigr) \leq C_2$, $\bigl|\E_P\bigl(\rho^2(\varepsilon\given X)\biggiven X=x\bigr)\bigr| \geq c_1$, $\E_P(\varepsilon^{2}\given X=x) \geq c_3$ and  $\E_P(|\varepsilon|^{2+\delta}\given X=x) \leq C_3$ for $P_X$-almost all $x\in\cX^\circ$. 
        \label{ass:exp-bounds}
    \end{enumerate}
\end{assumption}

Although there are many parts to Assumption~\ref{ass:DGM}, the overall restrictions on the data generating mechanism remain relatively mild.  At this point we do not impose constraints on the (relative) H\"older smoothness levels of the different functions involved, though these will appear as part of Assumption~\ref{ass:bandwidth-kernels-and-co} below.  

In addition to these assumptions on the data generating mechanism, we constrain the practitioner-chosen inputs to the outrigger local polynomial estimation algorithm as follows.
\begin{assumption}[Outrigger estimator construction]\label{ass:bandwidth-kernels-and-co}
The inputs to Algorithm~\ref{alg:outrigger} satisfy:
\begin{enumerate}[label=(A2.\arabic*), leftmargin=*, align=left]
    \item (Primary kernel) The kernel $K$ is bounded, supported on $\mathcal{B}_{0}(1)$, and of order $p+1\geq\ceil{\beta_{\cE}\wedge\beta_X}$. 
    We further suppose that $s_1(K) \in \mathbb{R}^{\bar{p} \times \bar{p}}$ is invertible, and 
    $\mu_{\beta^*}(K)\neq0$, where $\beta^* := \beta \wedge (p+1)$.
    \label{ass:kernel}
    \item (Outrigger kernel) 
    The outrigger kernel $\kappa_\lambda$ is a bounded kernel, supported on $\mathcal{B}_0(\lambda)\setminus\mathcal{B}_0(1)$, and of order $\ceil{\beta_\cE\wedge\beta_X}$.  
    We suppose that  $\lambda\mapsto R_2(\kappa_\lambda)$ is strictly decreasing on $(1,\infty)$, with $R_2(\kappa_\lambda) \rightarrow 0$ as $\lambda \rightarrow \infty$, and let $\lambda_0(K) \in (1,\infty)$ be the unique solution to $R_2(\kappa_\lambda)=R_2(K)$. 
    In addition, we assume that  $\sup_{\lambda\geq\lambda_0(K)}\int_{\R^d}|\kappa_\lambda|<\infty$.
    \label{ass:kappa}
    \item (Bandwidth)
    For each $n\in\mathbb{N}$, the bandwidth $h=h_n$ lies in an interval $\cH_n$ and the outrigger parameter $\lambda=\lambda_n$ lies in an interval $\Lambda_n$, where
    \begin{equation}\label{ass:smoothness}
        \sup_{(h,\lambda) \in\cH_n \times \Lambda_n} \biggl(\lambda h \vee \frac{1}{nh^d}\vee \frac{L_\cE(\lambda h)^{\beta_\cE\wedge\beta_X}}{h^{\beta^*}}\biggr)\to0,
    \end{equation}
    as $n\to\infty$, and moreover  $\inf_{n\in\N}\inf\Lambda_n>\lambda_0(K)$.  
    Let $\cH$ and $\Lambda$ denote the set of all such sequences $(h_n)_{n\in\N}$ and $(\lambda_n)_{n\in\N}$ respectively. 
    \label{ass:h}
    \item (Cross-fitting) The number of folds $\mathcal{K}$ used for cross-fitting is deterministic and a bounded function of $n$.
    \label{ass:cross-fitting}
    \item (Identifiability) There exist $0<\eta_1<\eta_2$ such that, for each $k \in [\mathcal{K}]$ and on a sequence of events of probability $1-\supoP(1)$, the estimated score functions $\hat\rho_k$ have the property that there is a zero of the function~\eqref{eq:algorithm-est-eqn} in Algorithm~\ref{alg:outrigger} in the ball $\bigl\{\theta \in \mathbb{R}^{\bar{p}}:\|\theta-f(x_0)\mathrm{e}_1\|_\infty\leq\eta_1\bigr\}$, that is unique on the ball $\bigl\{\theta \in \mathbb{R}^{\bar{p}}:\|\theta-f(x_0)\mathrm{e}_1\|_\infty\leq\eta_2\bigr\}$.
    \label{ass:unique-root-ish}
\end{enumerate}
\end{assumption}
The outrigger kernel can take similar forms to classical kernels supported on $\mathcal{B}_{0}(1)$, except that it should be supported over the outer region $\mathcal{B}_{x_0}(\lambda )\setminus\mathcal{B}_{x_0}(1)$. For example for a second-order outrigger kernel one could take the uniform kernel
$\kappa_\lambda=\frac{v_{d,q}}{\lambda^d-1}\ind_{\mathcal{B}_{x_0}(\lambda)\setminus\mathcal{B}_{x_0}(1)}$ for $\lambda > 1$, where $v_{d,q}
:=\frac{\Gamma(1+d/q)}{\{2\Gamma(1+1/q)\}^{d}}$ denotes the Lebesgue measure of the unit $\ell_q$ ball in $\R^d$. Then indeed $\lambda\mapsto R_2(\kappa_\lambda)=\frac{v_{d,q}}{\lambda^d-1}$ is strictly decreasing, so for any $\lambda>\bigl(1+\frac{v_{d,q}}{R_2(K)}\bigr)^{1/d} =: \lambda_0(K)$ we have that $\frac{R_2(\kappa_\lambda)}{R_2(K)}<1$.  
If in addition $d=1$,  
then $\lambda_0(K)=1+\tfrac{1}{2R_2(K)}$, which for the Epanechnikov kernel $\nu\mapsto\frac{3}{4}\max(1-\nu^2,0)$ evaluates as $11/6$.

The requirement~\eqref{ass:smoothness} is our critical assumption that facilitates full distributional adaptivity; a sufficient condition for the existence of intervals $\mathcal{H}_n$ and $\Lambda_n$ satisfying this property is that  
\begin{equation}\label{eq:suff-smoothness}
    \beta_\cE\wedge\beta_X > \beta^* =\beta\wedge(p+1).
\end{equation}
The left-hand side of~\eqref{eq:suff-smoothness} represents the minimum of the H\"older smoothnesses of the conditional density of the errors $\varepsilon\given X$ and the expectations of the estimator of the conditional score function and its derivative.  Common practical choices of $p$ are $p=0$, corresponding to the local constant estimator and $p=1$, corresponding to the local linear estimator, so in these cases, asking for $\beta_\cE\wedge\beta_X > p+1$ is relatively mild.  Moreover, the requirement is further weakened in cases where the underlying regression function is relatively rough (so that $\beta < p+1$).  We finally mention that we can also find intervals~$\mathcal{H}_n$ and $\Lambda_n$ satisfying~\eqref{ass:smoothness} when either $\varepsilon$ and $X$ are independent or when both $\varepsilon \given X$ is symmetric and $\hat\rho(\varepsilon\given x)$ is antisymmetric, because in both of these cases $L_\cE = 0$.  However, these assumptions are substantially stronger than what is necessary for~\eqref{ass:smoothness}, as outlined above. 

Some intuition regarding Assumption~\ref{ass:unique-root-ish} was provided in Section~\ref{sec:outrigging}: after replacing the estimator~$\hat{\rho}_k$ with its population-level analogue $\rho$ in~\eqref{eq:algorithm-est-eqn}, the estimating equation has asymptotically invertible derivative at $\theta_0 = f(x_0)\mathrm{e}_1$; see~\eqref{eq:Out-LL-2}.  Moreover, under our conditions, the derivative of the sample version in~\eqref{eq:algorithm-est-eqn} is a smooth and uniformly consistent estimator of this population-level analogue, so it is reasonable to expect that with high probability there will be a unique zero of the estimating equation~\eqref{eq:algorithm-est-eqn} in a small neighbourhood of $\theta_0$.

\subsection{Theoretical guarantees}

Theorem~\ref{thm:decomp} below provides an asymptotic decomposition of the pointwise error of our estimator $\hat{f}^{\mathrm{Outrig}}$.  Recall that the conditional variance and conditional Fisher information of the errors are given by $\sigma_P^2(x_0):=\Var_P(\varepsilon\given X=x_0)$ and $i_P(x_0) := \E_P\bigl\{\rho^2(\varepsilon\biggiven X)\given X=x_0\bigr\}$ respectively, and for $\lambda > \lambda_0(K)$ and $x_0 \in \mathcal{X}$, define
\begin{equation}\label{eq:V-lambda}
V_{P}^{(\lambda)}(x_0) := \frac{1}{i_P(x_0)}+\biggl(\sigma_P^2(x_0)-\frac{1}{i_P(x_0)}\biggr)\frac{R_2(\kappa_\lambda)}{R_2(K)}.
\end{equation}
\begin{theorem}\label{thm:decomp}
    Suppose that $\mathcal{P}$ satisfies Assumption~\ref{ass:DGM} and that $\hat{f}^{\mathrm{Outrig}}$ in Algorithm~\ref{alg:outrigger} satisfies Assumption~\ref{ass:bandwidth-kernels-and-co}. Then for each $x_0\in\cX$,
    \begin{equation}\label{eq:decomp}
        \hat{f}^{\mathrm{Outrig}}(x_0)-f(x_0) = B(f,x_0,K,h)h^{\beta^*}+\frac{1}{\sqrt{nh^d}}\biggl\{\frac{R_2(K)V_{P}^{(\lambda)}(x_0)}{p_X(x_0)}\biggr\}^{1/2}Z_n 
         + R_n,
    \end{equation}
    where
    \[
\sup_{P\in\cP}\sup_{x_0\in\cX}\sup_{(h,\lambda)\in\cH_n\times\Lambda_n}\sup_{t\in\R}\big|\Pr_P( Z_{n} \leq t)-\Phi(t)\big| \rightarrow 0 \quad \text{and} \quad R_{n} = \op\bigg(h^{\beta^*}+\frac{1}{\sqrt{nh^d}}\bigg),
\]
    and where the deterministic quantity $B(f,x_0,K,h)$ is defined in Appendix~\ref{appsec:proof-decomp} and satisfies $B(f,x_0,K,h) = O_{\mathcal{P},\mathcal{X},\mathcal{H}}(1)$.
\end{theorem}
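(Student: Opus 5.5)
The plan is to reduce the outrigger estimating equation~\eqref{eq:algorithm-est-eqn} to a linear expansion around $\theta_0 = f(x_0)\mathrm{e}_1$. That expansion has three parts: the oracle local likelihood score, an additional centred noise term coming from the outer region, and negligible remainders. The result then follows from a Jacobian argument and a uniform central limit theorem. Throughout, all $o_P$ statements are kept uniform over $\cP,\cX,\cH,\Lambda$. To do so, I work conditionally on the out-of-fold data for each $k\in[\mathcal{K}]$, which is legitimate because $\mathcal{K}$ is bounded by~\ref{ass:cross-fitting}, and I use Chebyshev-type bounds whose constants depend only on the moment bounds in~\ref{ass:exp-bounds}.

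\emph{Step 1 (decomposition of the score).} Write the residual for inner points as $\varepsilon_i + f(X_i) - Q_h(X_i-x_0)^\top\theta$. For outer points $X_i \in \outerr$, write it as
\[
\varepsilon_i - \bar\varepsilon_\kappa + \Delta_i,
\qquad \bar\varepsilon_\kappa := \frac{\sum_{j\in\cI_k}\kappa_{h,\lambda}(X_j-x_0)\varepsilon_j}{\sum_{j\in\cI_k}\kappa_{h,\lambda}(X_j-x_0)},
\]
where $\Delta_i := (f-\hat f^{\mathrm{LP}}_k)(X_i)$ minus its $\kappa$-weighted average over the fold. Taylor expanding $\hat\rho_k$ to first order, using the uniform continuity of $\hat\rho'$ and $\rho'$, gives four contributions:
\begin{enumerate}[label=(\roman*)]
\item $n^{-1}\sum_i \varphi_{h,\lambda}(X_i-x_0)\hat\rho_k(\varepsilon_i\given X_i)$;
\item the inner linear term $n^{-1}\sum_{i}\pi_h(X_i-x_0)\rho'(\varepsilon_i\given X_i)(\theta_0-\theta)$, together with a bias term in $f(X_i)-f(x_0)$;
\item the outer term $-\hat\mu_k(x_0) n^{-1}\sum_i \kappa_{h,\lambda}(X_i-x_0)\hat\rho_k'(\varepsilon_i\given X_i)(\Delta_i - \bar\varepsilon_\kappa)$;
\item the error from replacing $\hat\mu_k$ by $\mu$.
\end{enumerate}
The remaining work is to control each of these.

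For (i), I split $\hat\rho_k = \rho + (\hat\rho_k - \rho)$. The $\rho$ part is the oracle noise term. For the $(\hat\rho_k-\rho)$ part, its conditional mean is $\E\{\varphi_{h,\lambda}(X-x_0)m(X)\}$ with $m(x) := \E\{\hat\rho_k(\varepsilon\given X)\given X=x\}$. Since $\E\varphi_{h,\lambda}=0$, $m\,p_X \in \cH(\beta_\cE\wedge\beta_X,\cdot)$, and both $K$ and $\kappa_\lambda$ have order $\ceil{\beta_\cE\wedge\beta_X}$, this mean is $O\bigl(L_\cE(\lambda h)^{\beta_\cE\wedge\beta_X}\bigr)=o(h^{\beta^*})$ by~\eqref{ass:smoothness}. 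This is exactly where the outrigger removes the score-estimation bias. The fluctuation of the $(\hat\rho_k-\rho)$ part is $o_P\bigl((nh^d)^{-1/2}\bigr)$, because its variance is at most $(nh^d)^{-1}\int|\varphi|^2\cdot o(1)$ by (A1.4)(ii); here I need $\int|\kappa_\lambda|$ and $R_2(\kappa_\lambda)$ to be bounded, which holds since $\lambda>\lambda_0(K)$. For (iii), I replace $\hat\rho'_k$ by $\E\{\rho'\given X\} = -i_P(X)$ using (A1.4)(iii), and then use continuity of $i_P$. The $\Delta_i$ contributions cancel to leading order: they are $\kappa$-weighted deviations from their own $\kappa$-weighted mean, so after using that $i_P$ is nearly constant on $\mathcal{B}_{x_0}(\lambda h)$ they are $o_P(r_n)$, using $\lambda h\to 0$ and the pilot rate $f-\hat f^{\mathrm{LP}}_k = O_P(r_n)$ uniformly. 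The $\bar\varepsilon_\kappa$ part becomes $-\mu(x_0)\, i_P(x_0)\, n^{-1}\sum_{i\in\cI_k}\kappa_{h,\lambda}(X_i-x_0)\varepsilon_i$, up to $o_P(r_n)$.

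\emph{Step 2 (Jacobian and root).} As in~\eqref{eq:Out-LL-2} and~\eqref{eq:Out-LL-3}, the derivative satisfies, uniformly over $\|\theta-\theta_0\|\le Cr_n$,
\[
Dg^{\mathrm{Outrig}}(\theta) = -p_X(x_0)\,i_P(x_0)\,s_1(K)+o_P(1),
\]
and this limit is invertible by~\ref{ass:kernel} and the lower bound $c_1$ in~\ref{ass:exp-bounds}. I would then combine this with Step~1 and the identifiability condition~\ref{ass:unique-root-ish}. A Brouwer/Newton–Kantorovich argument on a ball of radius $Cr_n$ locates a zero there, and uniqueness on the $\eta_2$-ball shows that this zero is the one selected as $\hat\theta^{\mathrm{Outrig}}$. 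Inverting the linearisation gives $\hat\theta^{\mathrm{Outrig}}-\theta_0 = \{p_X i_P s_1(K)\}^{-1}\{\text{bias} + S_n\} + o_P(r_n)$. The bias term is the same local polynomial bias that appears in~\eqref{eq:lp-decomp}, which defines $B(f,x_0,K,h)$. The noise term is
\[
S_n := \frac1n\sum_i\Bigl\{K_hQ_h(X_i-x_0)\rho(\varepsilon_i\given X_i) - \mu(x_0)\kappa_{h,\lambda}(X_i-x_0)\bigl(\rho(\varepsilon_i\given X_i)+i_P(x_0)\varepsilon_i\bigr)\Bigr\}.
\]

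\emph{Step 3 (variance and CLT).} The supports of the two summands in $S_n$ are disjoint, so there are no cross terms. Using $\E(\rho\varepsilon\given X)=-1$, we get $\E\{(\rho+i_P\varepsilon)^2\given X=x_0\}=i_P^2\sigma_P^2-i_P$. Also $\mu(x_0)\to u(K)$, and $\mathrm{e}_1^\top s_1(K)^{-1}u(K)=1$. Putting these together, the variance of $\mathrm{e}_1^\top$ of the leading term is
\[
\frac{1}{nh^d\,p_X(x_0)}\Bigl\{\frac{R_2(K)}{i_P}+R_2(\kappa_\lambda)\Bigl(\sigma_P^2-\frac1{i_P}\Bigr)\Bigr\}
=\frac{R_2(K)V_P^{(\lambda)}(x_0)}{nh^d\,p_X(x_0)},
\]
which is the variance in~\eqref{eq:decomp}. (Strictly, the first term should be read as $\mathrm{e}_1^\top s_1^{-1}s_2s_1^{-1}\mathrm{e}_1$, and I would check that it matches the normalisation intended by $R_2(K)$, as in~\eqref{eq:rho-est-oracle}.) Uniform asymptotic normality of $Z_n$ then follows from the Berry–Esseen bound, using the $2+\delta$ moment bounds on $\rho$ and $\varepsilon$ and the lower bounds $c_1,c_3$, which keep the variance away from zero. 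Finally, Pólya's theorem upgrades this to the $\sup_t$ statement.

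I expect the main obstacle to be Step~1(iii): showing, uniformly, that the pilot-error terms $\Delta_i$ in the outer region cancel to $o_P(r_n)$. This requires the $\hat c_k$ centring and the near-constancy of $\E(\hat\rho_k'\given X)$ on $\mathcal{B}_{x_0}(\lambda h)$ to work together. It also requires handling the in-fold dependence between $\hat c_k$ and the outer summands, where a leave-one-out comparison should suffice. A secondary difficulty is that all the remainder bounds must hold uniformly in $\lambda\to\infty$.
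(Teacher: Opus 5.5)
Steps~1 and~3 essentially reproduce the paper's argument. The outrigger cancellation of $\E\{\varphi\,\cE_{\hat\rho,k}\}$ via the kernel orders and~\eqref{ass:smoothness}, the conversion of the $\hat c_k$-centred outer term into the extra noise $\kappa_{h,\lambda}(X_i-x_0)\varepsilon_i$, and the variance computation all match. Step~2, however, fails whenever $\beta^*>1$, for example $p=1$ and $\beta\in(1,2]$.

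\textbf{Why Step~2 fails.} The higher-order coefficients of the zero of~\eqref{eq:algorithm-est-eqn} estimate $h^{\|\alpha\|_1}\partial^\alpha f(x_0)$, which are of order $h\gg r_n$ for bandwidths of the optimal order. Equivalently, $g^{\mathrm{Outrig}}(\theta_0)$ with $\theta_0=f(x_0)\mathrm{e}_1$ contains the Taylor-polynomial part of $f(X_i)-f(x_0)$, and this has components of order $h$, not $O_P(r_n)$. So the root does not lie in the ball of radius $Cr_n$ around $\theta_0$, and no Brouwer self-map of that ball exists. Enlarging the ball does not help. Linearising with the deterministic limit $J$ leaves the error $\mathrm{e}_1^\top(\hat J^{-1}-J^{-1})v$ with $\|v\|\asymp h$. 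Since $\hat J-J=o_P(1)$ with no rate (modulus of continuity of $\hat\rho_k'$, mere continuity of $\zeta_{\rho'}$), this error is only $o_P(h)$, not $o_P(h^{\beta^*})$. You therefore cannot recover $B(f,x_0,K,h)h^{\beta^*}$, which involves only the Taylor remainder $e_0^*$ of order $\beta_0^*$.

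\textbf{The missing idea: polynomial reproduction with the empirical Jacobian.}
\begin{enumerate}[label=(\roman*)]
\item The paper proves only consistency of the selected root, in Lemma~\ref{lem:gamma0-consistency}, via Poincar\'e--Miranda and~\ref{ass:unique-root-ish}. A topological argument like your Brouwer step can deliver this, but only on a small fixed ball.
\item It then applies the mean value theorem separately to each scalar $\hat\rho_k(\varepsilon_i-\varrho_i\given X_i)$. This yields the exact identity~\eqref{eq:gam-gam} with the random matrix $\hat\Omega(\boldsymbol\tau)$.
\item Since $\iota_h(X_i-x_0)(X_i-x_0)^\alpha=\alpha!\,h^{\|\alpha\|_1}\pi_h(X_i-x_0)\mathrm{e}_{m_\alpha}$, and the same weights $\hat\rho_k'(\varepsilon_i-\tau_i\varrho_i\given X_i)$ appear in $\hat\Omega(\boldsymbol\tau)$, the Taylor-polynomial part is annihilated exactly by $\mathrm{e}_1^\top\hat\Omega(\boldsymbol\tau)^{-1}$; see~\eqref{eq:destroy} and Lemma~\ref{lem:usual-bias}. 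Only $e_0^*=O(h^{\beta^*})$ survives.
\item Consistency alone gives $\max_{i:X_i\in\mathcal{B}_{x_0}(h)}|\varrho_i|=o_P(1)$, which kills the modulus-of-continuity terms.
\end{enumerate}

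\textbf{Two smaller points.}
\begin{enumerate}[label=(\roman*)]
\item The chain rule gives $Dg^{\mathrm{Outrig}}=-n^{-1}\sum\pi_h\hat\rho_k'(\cdot)\to+p_X(x_0)i_P(x_0)s_1(K)$. Your Jacobian, inherited from~\eqref{eq:Out-LL-2}, has the opposite sign and would flip the sign of the bias.
\item Your $R_2(K)$ normalisation concern is resolved by the order $p+1$ of $K$. This gives $s_1(K)\mathrm{e}_1=u(K)=\mathrm{e}_1$, hence $\mathrm{e}_1^\top s_1(K)^{-1}=\mathrm{e}_1^\top$ and $\mathrm{e}_1^\top s_1^{-1}s_2s_1^{-1}\mathrm{e}_1=R_2(K)$.
\end{enumerate}
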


We see from Theorem~\ref{thm:decomp} that the error of $\hat{f}^{\mathrm{Outrig}}(x_0)$ decomposes into the sum of a deterministic bias term, a stochastic error term with an asymptotic centred Gaussian distribution and an asymptotically negligible remainder term.  An attractive feature of the decomposition is that neither the bias term $B(f,x_0,K,h)$ nor the variance term $V_{P}^{(\lambda)}(x_0)$ carry any dependence on $\hat\rho$, so all score estimators satisfying~\ref{ass:score-estimation} result in the same asymptotic behaviour for the outrigger estimator.  In fact, the bias term also does not depend on $\rho$, and in particular it is identical to the bias of the standard local polynomial estimator.  The advantage of the outrigger lies in the variance term being necessarily no larger than that of the standard local polynomial estimator; see  Lemma~\ref{lem:score-min} and Figure~\ref{fig:intro}.  
Since $\sigma_P^2(x_0)\geq 1/i_P(x_0)$ by Cauchy--Schwarz (see Lemma~\ref{lem:score-min}), $V_{P}^{(\lambda)}(x_0)$ decreases from $\sigma_P^2(x_0)$ to $1/i_P(x_0)$ as~$\lambda$ increases from $\lambda_0(K)$ to infinity.  If~\eqref{eq:suff-smoothness} holds, then there exists a sequence $(\lambda_n) \in \Lambda$ with $\lambda_n\to\infty$, and hence we can obtain full distributional adaptivity, i.e.~we can replace $V_{P}^{(\lambda)}(x_0)$ in~\eqref{eq:decomp} with
\begin{equation*}
    V_{P}^{(\infty)}(x_0)
    :=
    \lim_{\lambda\to\infty}V_{P}^{(\lambda)}(x_0) 
    = \frac{1}{i_P(x_0)}.
\end{equation*}

The following corollary is an immediate consequence of Theorem~\ref{thm:decomp}.
\begin{corollary}\label{corr:smoothing}
Assume the hypotheses of Theorem~\ref{thm:decomp}.  Consider a sequence of measurable bandwidth functions 
$h_n:\cX\to (0,\infty)$, set $\tau_n:=n^{1/(2\beta^*\hspace{-0.1em}+d)}h_n$, and define a sequence of outrigger parameters $\lambda_n\to\lambda^*\in \bigl(\lambda_0(K),\infty\bigr]$. Then
\begin{equation*}
    \sup_{P\in\cP}\sup_{x_0\in\cX}
    \sup_{t\in\R}\biggl|
    \Pr_P\biggl(\biggl\{\frac{p_X(x_0)\tau_n^d(x_0)}{R_2(K)V_{P}^{(\hspace{-0.01em}\lambda^*\hspace{-0.05em})}\hspace{-0.08em}(x_0)}\biggr\}^{\hspace{-0.1em}\frac{1}{2}}\hspace{-0.1em}\biggl\{n^{\frac{\beta^*}{2\beta^*\hspace{-0.1em}+d}}\bigl(\hat{f}^{\mathrm{Outrig}}(x_0)-f(x_0)\bigr)-B_f(x_0)\tau_n^{\beta^*}\hspace{-0.2em}(x_0)\biggr\}\leq t\biggr)-\Phi(t)
    \biggr|
     \rightarrow 0,
\end{equation*}
where we use the shorthand $B_f(x_0):=B\bigl(f,x_0,K,h_n(x_0)\bigr)$. 
\end{corollary}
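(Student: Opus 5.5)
We would derive the corollary directly from the decomposition~\eqref{eq:decomp} in Theorem~\ref{thm:decomp}, applied with $h = h_n(x_0)$ and $\lambda=\lambda_n$. Multiplying~\eqref{eq:decomp} by $n^{\beta^*/(2\beta^*+d)}$ and using $h_n = \tau_n n^{-1/(2\beta^*+d)}$, we have $n^{\beta^*/(2\beta^*+d)} h_n^{\beta^*} = \tau_n^{\beta^*}$ and $n^{\beta^*/(2\beta^*+d)}(nh_n^d)^{-1/2} = \tau_n^{-d/2}$. Hence
\[
n^{\frac{\beta^*}{2\beta^*+d}}\bigl(\hat f^{\mathrm{Outrig}}(x_0)-f(x_0)\bigr) - B_f(x_0)\tau_n^{\beta^*} = \tau_n^{-d/2}\biggl\{\frac{R_2(K)V_P^{(\lambda_n)}(x_0)}{p_X(x_0)}\biggr\}^{1/2} Z_n + n^{\frac{\beta^*}{2\beta^*+d}}R_n.
\]
After multiplying by the normalising factor with $V_P^{(\lambda^*)}$, the left-hand side of the probability in the corollary becomes $\{V_P^{(\lambda_n)}(x_0)/V_P^{(\lambda^*)}(x_0)\}^{1/2} Z_n + W_n$. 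Here
\[
W_n := \{p_X\tau_n^d/(R_2(K)V_P^{(\lambda^*)})\}^{1/2}n^{\beta^*/(2\beta^*+d)}R_n .
\]
Since $R_n = o(h^{\beta^*}+(nh^d)^{-1/2})$ uniformly, we get $W_n = o\bigl((\tau_n^{\beta^*}+\tau_n^{-d/2})\tau_n^{d/2}\bigr)$. This is $o(1)$ provided $\tau_n$ stays bounded. We would read the implicit requirement that $h_n\in\cH_n$ as keeping $\tau_n$ in a compact subset of $(0,\infty)$, i.e.\ optimal-order bandwidths, uniformly in $x_0$. The factor $V_P^{(\lambda^*)}$ is bounded below by $1/i_P\ge 1/C_1^{2/(2+\delta)}$-type moment bounds, and above by $\sigma_P^2\le C_3^{2/(2+\delta)}$, using~\ref{ass:exp-bounds}. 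Also $p_X\in[c_X,C_X]$. So the prefactor is uniformly bounded and $W_n=\op(1)$.

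Next, $V_P^{(\lambda_n)}(x_0)\to V_P^{(\lambda^*)}(x_0)$ uniformly in $P$ and $x_0$. This follows since $\sigma_P^2-1/i_P$ is uniformly bounded and $R_2(\kappa_{\lambda_n})\to R_2(\kappa_{\lambda^*})$, with limit $0$ when $\lambda^*=\infty$ by~\ref{ass:kappa}. For finite $\lambda^*$, continuity of $\lambda\mapsto R_2(\kappa_\lambda)$ is needed. If this is not automatic from monotonicity, we would assume it or state it as implicit. Combined with the lower bound on $V_P^{(\lambda^*)}$, the ratio of the two $V$'s tends to $1$ uniformly.

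Finally, combine these facts via a uniform Slutsky argument. For $a_n\to1$ deterministically (uniformly), $W_n=\op(1)$, and $\sup_t|\Pr(Z_n\le t)-\Phi(t)|\to0$ uniformly, we bound $\Pr(a_nZ_n+W_n\le t)$ between $\Pr(Z_n\le (t\pm\epsilon)/a_n)\mp\Pr(|W_n|>\epsilon)$. We then use uniform continuity of $\Phi$ together with $\sup_t|\Phi((t\pm\epsilon)/a_n)-\Phi(t)|\to0$, which holds because $\Phi$ has bounded density and Gaussian tails controlling $t/a_n-t$.

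The main obstacle is the measurability and uniformity of the data-dependent bandwidth $h_n(x_0)$. Theorem~\ref{thm:decomp} holds uniformly over $(h,\lambda)\in\cH_n\times\Lambda_n$ and $x_0\in\cX$, and the corollary's supremum over $x_0$ simply evaluates the theorem at $h=h_n(x_0)$, a deterministic value for each fixed $x_0$. So uniformity transfers directly, provided we check that $h_n(x_0)\in\cH_n$ for all $x_0$ and that $\tau_n$ is bounded above and away from zero.
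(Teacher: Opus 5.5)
Your argument is correct and follows the paper's route: the paper presents the corollary as an immediate consequence of Theorem~\ref{thm:decomp} (rescale~\eqref{eq:decomp} at $h=h_n(x_0)$, $\lambda=\lambda_n$, replace $V_P^{(\lambda_n)}$ by $V_P^{(\lambda^*)}$, then apply a uniform Slutsky step), and you rightly flag the conditions the statement leaves implicit, namely $h_n(x_0)\in\cH_n$, $\lambda_n\in\Lambda_n$, $\sup_{n,x}\tau_n(x)<\infty$, and continuity of $\lambda\mapsto R_2(\kappa_\lambda)$ at a finite $\lambda^*$. One correction: your own bound $W_n=o\bigl(\tau_n^{\beta^*+d/2}+1\bigr)$ shows that only an upper bound on $\tau_n$ is needed, so drop the requirement that $\tau_n$ be bounded away from zero, which would otherwise exclude the undersmoothing regime $\sup_{x}\tau_n(x)\to0$ that the paper uses for confidence intervals.
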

Corollary~\ref{corr:smoothing} can be employed for confidence interval construction by taking an undersmoothing bandwidth satisfying $\sup_{x\in\cX}\tau_n(x)\to0$, as for standard local polynomial estimators~\citep{hall, wasserman}, but with our outrigger intervals being asymptotically tighter, except in the case of Gaussian errors, where they are asymptotically equivalent. 

We now proceed to our main results, which describe the senses in which our outrigger local polynomial estimator outperforms the standard local polynomial estimator.  This requires some preliminary notation.  For $\epsilon>0$ and $f \in \cH(\beta,L)$, define the \emph{local H\"older class} 
$\cH_{f,\epsilon}^{\mathrm{loc}}(\beta,L):=\{\tilde{f}\in\cH(\beta,L):\|\tilde{f}-f\|_\infty\leq\epsilon\}$. 
Given a polynomial degree $p\in\N_0$, define the convergence rate $s_{n,h}:=h^{2\beta^*}+\frac{1}{nh^d}$.  We measure the performance of an estimator $\bar{f}$ via its \emph{local worst-case risk}
\[
\mathcal{R}_{n,h,P,x_0,\epsilon,M}(\bar{f}) := 
\sup_{\tilde{f}\in\cH_{f,\epsilon}^{\mathrm{loc}}(\beta,L)}
\E_{(P_X,P_{\varepsilon|X},\tilde{f})}\bigl\{s_{n,h}^{-1}\bigl(\bar{f}(x_0)-\tilde{f}(x_0)\bigr)^2\wedge M\bigr\},
\]
where $M>0$ is a truncation level.  The supremum here is taken over a local function class to avoid a  Hodges-type phenomenon where pointwise asymptotics may disguise true finite-sample performance \citep[e.g.][Section~1.3.2]{samworth-shah}. The truncation at level $M$ may be replaced with the exact mean squared error (i.e.~we may take $M=\infty$) under a number of possible mild modifications of the estimator \citep[e.g.][]{fan-minimax, incontext}.  

We will compare the performance of our outrigger estimator with the standard local polynomial estimator by studying the ratio of the local worst-case risks.  To be more precise, and in order to make the comparison as favourable as possible for the standard estimator, our primary quantity of interest will be a supremum of this ratio over choices of bandwidths and outrigger parameters, points $x_0 \in \mathcal{X}$ at which we seek to estimate the regression function, and all data generating distributions $P = (P_X,P_{\varepsilon|X},f) \in \mathcal{P}$.  Indeed, we define the \emph{asymptotic least-favourable risk ratio} to be
\[
r_{\mathcal{H}} := \sup_{\epsilon>0}\limsup_{M\to\infty}\limsup_{n\to\infty}\sup_{P\in\cP}\sup_{x_0\in\cX}\sup_{(h,\lambda)\in\cH_n\times\Lambda_n}\frac{\mathcal{R}_{n,h,P,x_0,\epsilon,M}(\hat{f}^{\mathrm{Outrig}})}{\mathcal{R}_{n,h,P,x_0,\epsilon,M}(\hat{f}^{\mathrm{LP}})}.
\]
\begin{theorem}\label{thm:do-no-worse}
Consider the standard local polynomial estimator $\hat{f}^{\mathrm{LP}}$ and outrigger estimator $\hat{f}^{\mathrm{Outrig}}$ in Algorithm~\ref{alg:outrigger}, each with the same degree $p\in\mathbb{N}_0$ and kernel $K$. Suppose that Assumptions~\ref{ass:DGM} and~\ref{ass:bandwidth-kernels-and-co} hold.  Then 
\begin{equation*}
r_{\cH}
    \leq 1.
\end{equation*}
\end{theorem}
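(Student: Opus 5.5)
The plan is to reduce both local worst-case risks to explicit Gaussian limits via Theorem~\ref{thm:decomp} and the analogous decomposition~\eqref{eq:lp-decomp} for $\hat f^{\mathrm{LP}}$, and then compare the limits termwise. For fixed $\epsilon$, $M$ and any $P\in\cP$, $x_0\in\cX$, $(h,\lambda)\in\cH_n\times\Lambda_n$, note that every $\tilde f\in\cH^{\mathrm{loc}}_{f,\epsilon}(\beta,L)$ gives a distribution $(P_X,P_{\varepsilon|X},\tilde f)$ that still lies in $\cP$, since Assumption~\ref{ass:DGM} constrains the regression function only through membership of $\cH(\beta,L)$. The uniformity of Theorem~\ref{thm:decomp} over $\cP$ therefore transfers to uniformity over $\tilde f$. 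Writing $b_{\tilde f}:=B(\tilde f,x_0,K,h)h^{\beta^*}s_{n,h}^{-1/2}$ and $v_n:=(nh^d s_{n,h})^{-1}$, both lie in bounded sets, with $b_{\tilde f}^2+v_n\le C$ for some constant $C$ (using $B=O(1)$ uniformly). Theorem~\ref{thm:decomp} then gives
\[
s_{n,h}^{-1/2}\bigl(\hat f^{\mathrm{Outrig}}(x_0)-\tilde f(x_0)\bigr)=b_{\tilde f}+\sqrt{v_n R_2(K)V^{(\lambda)}_P(x_0)/p_X(x_0)}\,Z_n+o(1),
\]
uniformly in probability. Because $z\mapsto z^2\wedge M$ is bounded and continuous, and $Z_n$ converges in Kolmogorov distance uniformly, we obtain, uniformly over all indices,
\[
\E\bigl\{s_{n,h}^{-1}(\hat f^{\mathrm{Outrig}}(x_0)-\tilde f(x_0))^2\wedge M\bigr\}=g_M\bigl(b_{\tilde f},\,v_nR_2(K)V^{(\lambda)}_P(x_0)/p_X(x_0)\bigr)+o(1),
\]
where $g_M(b,\sigma^2):=\E\{(b+\sigma Z)^2\wedge M\}$ with $Z\sim N(0,1)$. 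The same statement holds for $\hat f^{\mathrm{LP}}$, with $V^{(\lambda)}_P$ replaced by $\sigma^2_P(x_0)$ and the same bias term $b_{\tilde f}$. This requires a version of~\eqref{eq:lp-decomp} that is uniform over $\cP\times\cX\times\cH$, which I would either cite or derive as the special case of the argument for Theorem~\ref{thm:decomp} without the score.

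Next I compare the two limits. By Lemma~\ref{lem:score-min}, $\sigma_P^2(x_0)\ge 1/i_P(x_0)$, and $R_2(\kappa_\lambda)/R_2(K)<1$ for $\lambda>\lambda_0(K)$. Together these give $V^{(\lambda)}_P(x_0)\le\sigma^2_P(x_0)$. It remains to show that $g_M(b,\sigma^2)$ is nondecreasing in $\sigma^2$ for each fixed $b$. Since $(b+\sigma Z)^2\overset{d}{=}$ a noncentral $\chi^2_1$ variable scaled by $\sigma^2$, this follows from a direct check: $(b+\sigma Z)^2$ is stochastically increasing in $\sigma$ (for fixed $b$, $\Pr(|b+\sigma Z|\le t)$ decreases in $\sigma$ by Anderson's lemma / unimodality of the Gaussian), and $x\mapsto x\wedge M$ is nondecreasing. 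Hence for every $\tilde f$ the outrigger limit is at most the LP limit, so taking suprema over $\tilde f$ gives $\mathcal{R}(\hat f^{\mathrm{Outrig}})\le\mathcal{R}(\hat f^{\mathrm{LP}})+o(1)$ uniformly.

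To pass to the ratio I need a uniform lower bound on the denominator. Here $\mathcal{R}(\hat f^{\mathrm{LP}})\ge g_M(b_f,v_nR_2(K)\sigma_P^2/p_X)-o(1)$. The quantity $v_n+b_f^2$ is bounded below: $s_{n,h}=h^{2\beta^*}+1/(nh^d)$, so either $v_n\ge 1/2$, or $h^{2\beta^*}\ge s_{n,h}/2$. In the latter case I need $|B|$ bounded below, or at least that the variance contributes. This is the main obstacle. I would handle it by using the supremum over the local class: within $\cH^{\mathrm{loc}}_{f,\epsilon}$ one can perturb $f$ by a small bump of amplitude $\asymp h^{\beta^*}$ (of order $o(\epsilon)$) that makes $|B(\tilde f,\cdot)|\ge c>0$, exploiting $\mu_{\beta^*}(K)\ne0$ from~\ref{ass:kernel}. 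Combining this with $\sigma_P^2\ge c_3$ and $p_X\le C_X$, the denominator is $\ge c(M)>0$ uniformly for $M$ large. Then ratio $\le 1+o(1)/c(M)$, and taking $\limsup_n$, $\limsup_M$ and $\sup_\epsilon$ gives $r_{\cH}\le1$.
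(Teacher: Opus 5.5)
Your architecture is the paper's: the Gaussian surrogate $g_M$, the inequality $V^{(\lambda)}_P(x_0)\le\sigma^2_P(x_0)$, monotonicity of $g_M$ in its variance argument, and a bump built from $\mu_{\beta^*}(K)\neq0$ to keep the denominator away from zero. The gap is that the monotonicity step rests on a false claim.

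\textbf{The false claim.} For fixed $b\neq0$, $(b+\sigma Z)^2$ is not stochastically increasing in $\sigma$. A direct computation gives
\[
\frac{\partial}{\partial\sigma}\Pr\bigl(|b+\sigma Z|\le t\bigr)=\frac{1}{\sigma^2}\Bigl\{(|b|-t)\,\phi\Bigl(\frac{|b|-t}{\sigma}\Bigr)-(|b|+t)\,\phi\Bigl(\frac{|b|+t}{\sigma}\Bigr)\Bigr\},
\]
which is strictly positive when $t<|b|$ and $\sigma$ is small. Anderson's lemma, including its covariance-ordering form, concerns centred laws. With a fixed nonzero mean, increasing the scale moves mass \emph{into} a symmetric interval that excludes $b$.

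Nor is $g_M(b,\cdot)$ monotone in general. If $b^2\ge M$, then $g_M(b,0)=M>g_M(b,V)$ for every $V>0$. Already for $b^2\approx0.98M$ one has $\partial_V g_M(b,V)<0$ at $V=10^{-4}M$. So ``nondecreasing in $\sigma^2$ for each fixed $b$'' fails, and your termwise comparison fails with it.

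\textbf{How to repair it.} You need two facts you already have: $|b_{\tilde f}|\le C_B$ uniformly, and $r_{\cH}$ takes $\limsup_{M\to\infty}$. So the comparison is only needed for $|b|\ll\sqrt M$. There are two ways to use this.
\begin{enumerate}
\item The paper's route (Lemma~\ref{lem:GM}) shows $V\mapsto g_M(b,V)$ is strictly increasing whenever $|b|/\sqrt M\le 2/5$. It writes $\partial_V g_M(b,V)=(2V)^{-1}\E[\{(b+\sqrt V\varepsilon)^2\wedge M\}(\varepsilon^2-1)]$, signs this via Stein's identity, and then works with $M\ge\frac{25}{4}C_B^2$.
\item A cheaper route avoids exact monotonicity. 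The bounds $b^2+V-3(b^2+V)^2/M\le g_M(b,V)\le b^2+V$ give $g_M(b,V_1)\le g_M(b,V_2)+3(b^2+V_2)^2/M$ whenever $V_1\le V_2$. Since $b^2+V_2$ is bounded above, and your denominator is bounded below uniformly in large $M$, the extra term vanishes after $\limsup_{M\to\infty}$.
\end{enumerate}
Either way, the restriction to large $M$ must enter the comparison step itself, not only the denominator step.

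\textbf{A smaller point on the denominator.} Adding a bump $g$ to $f$ can push $f+g$ out of $\cH(\beta,L)$ when $f$ lies on the boundary of the H\"older ball. The bump cannot have negligible H\"older seminorm, because $B(g,\cdot)$ scales with that seminorm. The paper instead takes $\tilde f=(1-\xi)f+\xi g$ with $\xi\asymp\epsilon/L$. Linearity of $B$ then gives $|B(\tilde f,\cdot)|\ge\xi|B(g,\cdot)|-(1-\xi)|B(f,\cdot)|$, which is bounded below whenever $|B(f,\cdot)|$ is small, and $\|\tilde f-f\|_\infty\le\epsilon$.
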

Theorem~\ref{thm:do-no-worse} demonstrates a very strong sense in which the outrigger estimator is no worse than the standard estimator: despite taking the worst case for the outrigger estimator over all choices of bandwidth~$h$ and outrigger parameter $\lambda$ in $\mathcal{H}_n \times \Lambda_n$, $x_0 \in \mathcal{X}$ and $P \in \mathcal{P}$ in the definition of the asymptotic least-favourable risk ratio, this quantity is at most 1.

Now let $\cH_n'\subseteq \cH_n$ be such that
\[
\sup_{(h,\lambda)\in\cH_n'\times\Lambda_n}n^{1/(2\beta^*\hspace{-0.1em}+d)}h<\infty.
\]
This restriction constrains our bandwidths to be such that our estimator is asymptotically stochastic, i.e.~the deterministic bias term does not asymptotically strictly dominate in the decomposition~\eqref{eq:decomp}. 
To argue that the outrigger estimator represents a strict improvement on standard local polynomial estimation, we consider two versions of the \emph{asymptotic least-favourable risk ratio at $x_0 \in \mathcal{X}$ for a given error distribution $P_{\varepsilon|x_0}$}: first, let 
 \begin{equation*}
    {r}_{\cH'}(P_{\varepsilon|x_0}) := \sup_{\epsilon>0}\limsup_{M\to\infty}\limsup_{n\to\infty}\sup_{P_X\in\cP_X}\sup_{f\in\cH(\beta,L)}\sup_{(h,\lambda)\in \cH_n' \times \Lambda_n}\frac{\mathcal{R}_{n,h,P,x_0,\epsilon,M}(\hat{f}^{\mathrm{Outrig}})}{\mathcal{R}_{n,h,P,x_0,\epsilon,M}(\hat{f}^{\mathrm{LP}})},
\end{equation*}
and second, for a sequence $(\lambda_n) \in \Lambda$ with  $\underbar{\lambda} := \liminf_{n \rightarrow \infty} \lambda_n$, let
\begin{equation*}
    {r}_{\cH'}(P_{\varepsilon|x_0},\underbar{\lambda}) := \sup_{\epsilon>0}\limsup_{M\to\infty}\limsup_{n\to\infty}\sup_{P_X\in\cP_X}\sup_{f\in\cH(\beta,L)}\sup_{h\in \cH_n'}\frac{\mathcal{R}_{n,h,P,x_0,\epsilon,M}(\hat{f}^{\mathrm{Outrig}})}{\mathcal{R}_{n,h,P,x_0,\epsilon,M}(\hat{f}^{\mathrm{LP}})},
\end{equation*}

\begin{theorem}
\label{thm:improvements}
Consider the standard local polynomial estimator $\hat{f}^{\mathrm{LP}}$ and outrigger estimator  $\hat{f}^{\mathrm{Outrig}}$ in Algorithm~\ref{alg:outrigger}, each with the same degree $p\in\mathbb{N}_0$ and kernel $K$. Suppose that Assumptions~\ref{ass:DGM}~and~\ref{ass:bandwidth-kernels-and-co} hold and let $x_0 \in \mathcal{X}$. Then 
\begin{enumerate}[label=(\roman*)]
    \item ${r}_{\cH'}(P_{\varepsilon|x_0}) \leq 1$ with equality if and only if $P_{\varepsilon|x_0}$ is Gaussian;
    \label{thm:improved-ratio-2}
    \item For arbitrarily small $\varsigma\in(0,1)$, there exist a distribution $P^*_{\varepsilon|x_0}$ and $\lambda^* > 0$ such that ${r}_{\cH'}(P^*_{\varepsilon|x_0},\underbar{\lambda})\leq \varsigma$ for all $\underbar{\lambda}\geq\lambda^*$.
    \label{thm:improved-ratio-3}
\end{enumerate}

\end{theorem}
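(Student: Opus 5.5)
The plan is to reduce both parts to an explicit comparison of asymptotic local risks. Theorem~\ref{thm:decomp} gives the outrigger decomposition, and~\eqref{eq:lp-decomp} gives the standard one, which holds uniformly in the same sense (it is the special case $\lambda=\lambda_0(K)$, or can be read off the same proof). Write $\tau:=n^{1/(2\beta^*+d)}h$, which stays bounded on $\cH_n'$, and let $\tau$ range over a subsequence with $\tau\to\tau^*\in[0,\infty)$. Since the bias term $B$ is common to both estimators, the truncated local worst-case risk of each estimator converges, uniformly over $\tilde f\in\cH^{\mathrm{loc}}_{f,\epsilon}$, to
\[
s_{n,h}^{-1}\Bigl\{B^2h^{2\beta^*}+\frac{R_2(K)V}{p_X(x_0)nh^d}\Bigr\}.
\]
Here $V=V_P^{(\lambda)}(x_0)$ for the outrigger and $V=\sigma_P^2(x_0)$ for the standard estimator. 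To justify this I would use bounded convergence for $x\mapsto x^2\wedge M$ together with the uniform Gaussian approximation, and then let $M\to\infty$. Uniform convergence in distribution handles the truncated second moment of $(bZ_n+a)^2\wedge M$.

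Write $b^2:=\sup_{\tilde f}B^2$. The risk ratio then becomes
\[
\frac{b^2\tau^{2\beta^*+d}+cV}{b^2\tau^{2\beta^*+d}+c\sigma_P^2(x_0)}, \qquad c:=\frac{R_2(K)}{p_X(x_0)}.
\]
This is at most $1$ because $V_P^{(\lambda)}\le\sigma_P^2$, by Lemma~\ref{lem:score-min} and $R_2(\kappa_\lambda)/R_2(K)<1$.

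For part~\ref{thm:improved-ratio-2}, if $P_{\varepsilon|x_0}$ is Gaussian then $i_P=1/\sigma_P^2$. Hence $V_P^{(\lambda)}=\sigma_P^2$, the ratio is identically $1$, and $r_{\cH'}=1$. If it is not Gaussian, Lemma~\ref{lem:score-min} gives $\sigma_P^2>1/i_P$ strictly. Then $V_P^{(\lambda)}\le \sigma_P^2-\delta_0$ uniformly over $\lambda\in\Lambda_n$, because $\inf\Lambda_n>\lambda_0(K)$ and $R_2(\kappa_\lambda)$ is strictly decreasing. Since $\tau$ is bounded, and $b$ is bounded uniformly over $\cH(\beta,L)$ and $\cP_X$, the ratio is bounded by some constant below $1$. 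The main obstacle is controlling the supremum over $P_X$ and $f$. For this I would use $c\ge R_2(K)/C_X>0$ together with $B=O(1)$ uniformly, which keeps the bias contribution bounded relative to the variance term.

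For part~\ref{thm:improved-ratio-3}, with $\lambda_n$ fixed the ratio is at most
\[
\frac{b^2\tau^{2\beta^*+d}+cV^{(\underbar\lambda)}}{b^2\tau^{2\beta^*+d}+c\sigma^2}.
\]
This is not small unless the bias is negligible, and bias is present since $\tau$ is only bounded. The resolution is to choose $P^*_{\varepsilon|x_0}$ with $\sigma^2$ huge relative to both $1/i$ and the bias scale. Take, for example, a Gaussian location mixture or a smoothed heavy-tailed law: fix $i_P(x_0)$ large while $\sigma_P^2$ grows, subject to the moment bounds in~\ref{ass:exp-bounds}.

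The class $\cP$ can depend on this choice, since $C_3$ and the other constants are free. The remaining work is as follows.
\begin{itemize}
\item Bound $b^2\tau^{2\beta^*+d}\le C'$ using only $L$, $K$ and $\sup\tau$.
\item Bound $c$ from below.
\item Pick $\sigma^2\ge C'/(\varsigma c)$ and $1/i\le \varsigma\sigma^2/4$.
\item Pick $\lambda^*$ with $R_2(\kappa_{\lambda^*})/R_2(K)\le\varsigma/4$.
\end{itemize}
The ratio is then at most $\varsigma$. A suitable $P^*$ is a two-component normal scale mixture, with small-variance mass near $1-\eta$ and a large-variance component. This has small $1/i$ and large $\sigma^2$, and its score has the required smoothness.
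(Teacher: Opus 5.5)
Your proposal is correct and follows essentially the paper's route. Both arguments use Theorem~\ref{thm:decomp}, the uniform Gaussian approximation of the truncated loss and monotonicity in the bias to reduce to comparing $b^2\tau^{2\beta^*+d}+cV_P^{(\lambda)}(x_0)$ with $b^2\tau^{2\beta^*+d}+c\sigma_P^2(x_0)$. Part (i) then uses boundedness of $\tau$ on $\cH_n'$ and the uniform gap $\gamma\{\sigma_P^2(x_0)-1/i_P(x_0)\}$. Part (ii) takes an error law with $\sigma_P^2(x_0)\,i_P(x_0)$ arbitrarily large together with large $\underline{\lambda}$. Your Gaussian scale mixture plays the role of the paper's density proportional to $1/(1+|u|^{s})$ with $s\searrow 3$, and the paper handles $M\to\infty$ through the explicit sandwich of Lemma~\ref{lem:G-ratio-bound} rather than a direct limit.

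Two small slips. First, $\hat f^{\mathrm{LP}}$ is not the outrigger estimator at $\lambda=\lambda_0(K)$: only the variance formulas agree there, and $\lambda_0(K)\notin\Lambda_n$. Its expansion therefore has to come from the separate standard argument, as in Lemma~\ref{lem:LP-rate}. Second, your choices in (ii) only give $\frac{C'+cV}{C'+c\sigma^2}\le\frac{3\varsigma/2}{1+\varsigma}$ rather than $\le\varsigma$; this is fixed by making the same choices with $\varsigma/2$ in place of $\varsigma$.
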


It is immediate from the definitions that 
\[
{r}_{\cH'}(P_{\varepsilon|x_0},\underbar{\lambda}) \leq {r}_{\cH'}(P_{\varepsilon|x_0}) \leq {r}_{\cH}
\]
for all $\underbar{\lambda} > 0$ and error distributions $P_{\varepsilon|x_0}$.  Theorem~\ref{thm:improvements}\ref{thm:improved-ratio-2} reveals that the first form of asymptotic least-favourable risk ratio at $x_0$ for error distribution $P_{\varepsilon|x_0}$ is strictly less than 1 for every non-Gaussian error distribution $P_{\varepsilon|x_0}$.  In combination with Theorem~\ref{thm:do-no-worse}, this provides a sense in which the outrigger estimator strictly dominates the standard local polynomial estimator.  In fact, from Theorem~\ref{thm:improvements}\ref{thm:improved-ratio-3}, we see that the second form of least-favourable risk ratio at $x_0$ can be arbitrarily small.

\subsection{Minimax optimality with constants}\label{sec:minimax}

Global minimax optimality is often considered as a gold standard for statistical procedures~\citep[e.g.][]{tsybakov,samworth-shah}. 
However  
this notion of optimality in terms of a `best-case estimator' over a `worst-case distribution' ignores the possibility that there may exist estimators that perform equally well in the worst case but where one adapts to provide improvements over another in more favourable settings. In our case, where we are interested in adaptivity to the conditional distribution of $\varepsilon$ given $X$, the `worst-case distribution' is (at least almost) that of conditionally Gaussian errors, and as such minimising a least squares data fidelity term as in standard local polynomial estimation naturally yields an estimator that attains the global minimax rate.  However, Theorem~\ref{thm:improvements} demonstrates the potential for arbitrarily large improvements outside this worst-case Gaussian errors scenario.  In this subsection, therefore, we study the instance-optimality of the outrigger estimator for different error distributions.  Specifically, for a truncation level $M \in (0,\infty]$, as well as $x_0 \in \mathcal{X}$, $P_X \in \mathcal{P}_X$ and error distribution $P_{\varepsilon|x_0}$, we define the \emph{worst-case normalised mean squared error} of an estimator $\hat{f}_n$ at $x_0$ by
\[
\mathrm{MSE}_{n,M}(\hat{f}_n):=
\sup_{f\in\cH(\beta,L)}
\E_P\Bigl\{n^{2\beta/(2\beta+d)}\bigl(\hat{f}_n(x_0)-f(x_0)\bigr)^2\,\wedge M\Bigr\},
\]
as well as $\mathrm{MSE}_n(\hat{f}_n) :=\mathrm{MSE}_{n,\infty}(\hat{f}_n)$. 

\begin{theorem}\label{thm:UB}
    Under Assumptions~\ref{ass:DGM},~\ref{ass:kappa} and~\ref{ass:cross-fitting}, there exist a sequence of bandwidths and kernel~$K$ such that for any sequence $(\lambda_n)$ of outrigger parameters with $\lambda_n \to \infty$ and $\lambda_n^{2\beta+d}/n \to0$, and polynomial degree $p \geq \lceil \beta \rceil - 1$, the outrigger estimator from Algorithm~\ref{alg:outrigger}  satisfies
    \begin{equation}\label{eq:LAM-UB}
        \limsup_{M\to\infty}\limsup_{n\to\infty}\mathrm{MSE}_{n,M}\bigl(\hat{f}^{\mathrm{Outrig}}\bigr)
        \leq
        C_{\beta,d}\,\bigg(\frac{L^{d/\beta}}{p_X(x_0) i_P(x_0)}\bigg)^{2\beta/(2\beta+d)},
    \end{equation}
    where $C_{\beta,d} > 0$ depends only on $(\beta,d)$.   
    In the case $\beta\in(0,2]$, we can take
    \begin{equation}
    \label{Eq:Cbetad}
        C_{\beta,1} = 
        \biggl(\frac{(\beta+1)^{2\beta}}{(2\beta)^{2\beta}(2\beta+1)(1\vee\beta)^2}\biggr)^{1/(2\beta+1)}.
    \end{equation}
\end{theorem}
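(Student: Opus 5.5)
The plan is to obtain \eqref{eq:LAM-UB} from the asymptotic decomposition of Theorem~\ref{thm:decomp}, choosing the bandwidth and kernel so as to minimise the resulting bias--variance tradeoff with explicit constants. First I fix $p\ge\lceil\beta\rceil-1$, so that $\beta^*=\beta$. I take $h_n = \tau n^{-1/(2\beta+d)}$ for a constant $\tau>0$ to be optimised, and let $\mathcal{H}_n,\Lambda_n$ be the singletons $\{h_n\},\{\lambda_n\}$. The hypotheses $\lambda_n\to\infty$ and $\lambda_n^{2\beta+d}/n\to0$ give $\lambda_n h_n\to0$ and $nh_n^d\to\infty$. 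The remaining part of~\eqref{ass:smoothness}, namely $L_\cE(\lambda h)^{\beta_\cE\wedge\beta_X}/h^{\beta}\to 0$, I will have to either inherit from Assumption~\ref{ass:DGM} (e.g.\ via~\eqref{eq:suff-smoothness}) or obtain by letting $\lambda_n$ grow slowly enough. This step needs checking against the exact hypotheses.

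With these choices, Theorem~\ref{thm:decomp} gives
\[
n^{\beta/(2\beta+d)}\bigl(\hat f^{\mathrm{Outrig}}(x_0)-f(x_0)\bigr)=B\tau^\beta+\tau^{-d/2}\bigl\{R_2(K)V_P^{(\lambda_n)}(x_0)/p_X(x_0)\bigr\}^{1/2}Z_n+o_P(1),
\]
uniformly over $f\in\mathcal{H}(\beta,L)$, and $V_P^{(\lambda_n)}(x_0)\to 1/i_P(x_0)$ because $R_2(\kappa_{\lambda_n})\to0$. On the event where the remainder is small, truncation at $M$ makes the squares uniformly integrable: $x\mapsto x^2\wedge M$ is bounded and continuous, and $Z_n$ converges uniformly in Kolmogorov distance. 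Taking $n\to\infty$ and then $M\to\infty$ therefore gives
\[
\limsup_{M}\limsup_n \mathrm{MSE}_{n,M}\le \sup_{f}B(f,x_0,K,h)^2\tau^{2\beta}+\tau^{-d}\,\frac{R_2(K)}{p_X(x_0)i_P(x_0)} .
\]
The only ingredient I need beyond Theorem~\ref{thm:decomp} is the bias bound $\sup_{f\in\mathcal{H}(\beta,L)}|B|\le L\,b_\beta(K)$, where $b_\beta(K)$ is a kernel functional. For the effective (equivalent) kernel of order $\lceil\beta\rceil$, the standard Taylor/H\"older argument gives $b_\beta(K)\lesssim\int|K^*(\nu)|\,\|\nu\|^\beta\,d\nu/\beta_0!$. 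I would read this off from the explicit form of $B$ in Appendix~\ref{appsec:proof-decomp}.

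Next I minimise over $\tau$. The function $\tau\mapsto a\tau^{2\beta}+c\tau^{-d}$ has minimum
\[
\Bigl(\tfrac{2\beta+d}{2\beta}\Bigr)\Bigl(\tfrac{2\beta}{d}\Bigr)^{d/(2\beta+d)}a^{d/(2\beta+d)}c^{2\beta/(2\beta+d)}.
\]
With $a=L^2b_\beta(K)^2$ and $c=R_2(K)/(p_Xi_P)$ this equals a $(\beta,d,K)$-constant times $\bigl(L^{d/\beta}/(p_X i_P)\bigr)^{2\beta/(2\beta+d)}$, since $(L^2)^{d/(2\beta+d)}=(L^{d/\beta})^{2\beta/(2\beta+d)}$. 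For general $\beta$ I simply fix any admissible $K$, which gives some constant $C_{\beta,d}$.

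The main obstacle is the explicit constant~\eqref{Eq:Cbetad} for $d=1$, $\beta\le2$, since it requires choosing $K$ optimally. For $\beta\le1$ (local constant, $p=0$) the worst-case bias is $L\int|K(\nu)||\nu|^\beta\,d\nu$, so I must minimise $\bigl(\int|\nu|^\beta K\bigr)^{2/(2\beta+1)}\bigl(\int K^2\bigr)^{2\beta/(2\beta+1)}$ over kernels $K\ge0$. The minimiser is the classical kernel $K(\nu)\propto(1-|\nu|^\beta)_+$ (Legostaeva--Shiryaev/Sacks--Ylvisaker type), and I would verify by direct integration that the result equals~\eqref{Eq:Cbetad} with $1\vee\beta=1$. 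For $\beta\in(1,2]$ with $p=1$, the worst-case bias involves $\int|\nu|^\beta K/\beta$. The same optimisation with a symmetric kernel, which makes the local linear equivalent kernel equal to $K$, gives the extra $\beta^{-2}$ factor. I expect checking the exact constants and confirming that this kernel satisfies Assumption~\ref{ass:kernel} ($\mu_\beta(K)\ne0$, $s_1(K)$ invertible) to be the delicate part.
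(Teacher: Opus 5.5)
Your proposal matches the paper's proof essentially step for step. You invoke Theorem~\ref{thm:decomp} with $h_n=\tau n^{-1/(2\beta+d)}$ and pass to the truncated risk as in the proof of Theorem~\ref{thm:do-no-worse}, using $V_P^{(\lambda_n)}(x_0)\to 1/i_P(x_0)$. You bound $|B|$ by the H\"older/Beta-integral argument, which gives the factor $\Gamma(\beta-\beta_0+1)/\Gamma(\beta+1)=1/(1\vee\beta)$ for $\beta\le 2$. You then optimise over $\tau$ and take $K\propto(1-\|\nu\|^\beta)_+$; for $d=1$ this gives $R_2(K)=(\beta+1)/(2\beta+1)$ and $\int|\nu|^\beta K=1/(2\beta+1)$, which reproduces~\eqref{Eq:Cbetad} exactly.

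On the point you flag: the third condition in~\eqref{ass:smoothness} does not follow from the stated hypotheses for an arbitrary $(\lambda_n)$, and letting $\lambda_n$ grow slowly is ruled out by the ``for any sequence'' quantifier. The paper's proof does not address this either; it simply applies Theorem~\ref{thm:decomp}, implicitly assuming that the rest of Assumption~\ref{ass:bandwidth-kernels-and-co} holds for the chosen $(h_n,\lambda_n)$.
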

The upper bound in Theorem~\ref{thm:UB} is complemented by the lower bound in Theorem~\ref{thm:LAM-LB} below. 

\begin{theorem}
\label{thm:LAM-LB}
    Under Assumption~\ref{ass:DGM}, for any sequence $(\hat{f}_n)_{n\in\mathbb{N}}$ of Borel measurable estimators,
    \begin{equation*}
\liminf_{n\to\infty} 
\mathrm{MSE}_{n}(\hat{f}_n) \geq c_{\beta,d} \bigg(\frac{L^{d/\beta}}{p_X(x_0) i_P(x_0)}\bigg)^{2\beta/(2\beta+d)}.
    \end{equation*}
    where $c_{\beta,d} > 0$ depends only on $(\beta,d)$.  In fact,
\begin{equation}
\label{Eq:Cratio}
    1 = \lim_{\beta \searrow 0} \sup_{d \in \mathbb{N}} \frac{C_{\beta,d}}{c_{\beta,d}} \leq \sup_{\beta\in(0,1]} \sup_{d\in\N}
    \frac{C_{\beta,d}}{c_{\beta,d}} \leq 1.69.
\end{equation}
\end{theorem}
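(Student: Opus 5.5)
The lower bound will come from a two-point (or Bayesian) local asymptotic minimax argument in which the error distribution is fixed at $P_{\varepsilon|X}$ and only the regression function is perturbed. Because $P_{\varepsilon|X}$ can be non-Gaussian, the relevant information is the conditional Fisher information $i_P(x_0)$ rather than $1/\sigma_P^2(x_0)$. The first step is the construction. Fix a bump $g\in\cH(\beta,1)$ on $\R^d$, supported on $\mathcal{B}_0(1)$, with $g(0)=\|g\|_\infty$ normalised. Then set
\[
f_{\theta}(x):=\theta L h^\beta g\bigl((x-x_0)/h\bigr), \qquad h=c\,n^{-1/(2\beta+d)},
\]
for $\theta$ in a bounded interval, so that $f_\theta\in\cH(\beta,L)$ and $f_\theta(x_0)=\theta L h^\beta g(0)$.

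Second, I will compute the local asymptotic normality structure of the resulting one-dimensional parametric submodel $\{P_{f_\theta}\}$. The likelihood ratio is
\[
\sum_i \bigl\{\log p(\varepsilon_i - f_\theta(X_i)\given X_i)-\log p(\varepsilon_i\given X_i)\bigr\}.
\]
Using \ref{ass:score}, \ref{ass:exp-bounds} and the continuity of $x\mapsto i_P(x)$, its Fisher information in $\theta$ is
\[
n\,\E\bigl\{i_P(X)L^2h^{2\beta}g^2((X-x_0)/h)\bigr\} = (1+o(1))\,n h^{2\beta+d}L^2 p_X(x_0)\,i_P(x_0)\,\|g\|_2^2 .
\]
With the choice of $h$ this is a constant, so the submodel converges (LAN / Hellinger differentiability in quadratic mean, argued as in \citet{vandervaart}) to a Gaussian shift experiment in $\theta$. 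Third, I will apply the van Trees inequality, or equivalently the Bayes risk of the limiting Gaussian experiment under a smooth prior on $[-1,1]$ scaled by a free parameter. Since the sup over $\cH(\beta,L)$ dominates the Bayes risk over the submodel, this gives
\[
\liminf_n \mathrm{MSE}_n(\hat f_n) \geq \sup_{c,g,\text{prior}} \frac{L^2 c^{2\beta} g(0)^2 \,a^2}{a^2 I + 1},
\]
with $I = c^{2\beta+d}L^2 p_X(x_0)i_P(x_0)\|g\|_2^2$ after rescaling. Optimising over $c$ yields exactly the scaling $\bigl(L^{d/\beta}/(p_X(x_0)i_P(x_0))\bigr)^{2\beta/(2\beta+d)}$ times a constant $c_{\beta,d}$, which is a variational quantity of the form $\sup_g g(0)^2/\|g\|_2^{4\beta/(2\beta+d)}$ over the H\"older unit ball, up to explicit factors. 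No truncation is needed since we lower bound the untruncated MSE.

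The main obstacle is \eqref{Eq:Cratio}, because it requires a matching numerical comparison with $C_{\beta,d}$ from Theorem~\ref{thm:UB}. For $\beta\in(0,1]$ I will choose the cone bump $g(x)=(1-\|x\|^\beta)_+$, which is in $\cH(\beta,1)$ for the $\ell_q$ norm. Its $\|g\|_2^2$ is computable in polar coordinates as a Beta function times $v_{d,q}$. I will also choose the van Trees prior optimally, for example the $\cos^2$ prior with information $\pi^2$, or better, compare directly with the Gaussian-experiment minimax risk over a bounded interval. The upper bound constant presumably arises from the matching kernel $K(\nu)\propto(1-\|\nu\|^\beta)_+$, so $C_{\beta,d}$ and $c_{\beta,d}$ share the same Beta-function factors. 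Their ratio should then reduce to a one-variable function of $\beta/(2\beta+d)$, namely the ratio of the bias-variance optimum to the bounded-normal-mean minimax risk. I will show that this ratio tends to $1$ as $\beta\searrow0$ uniformly in $d$, because the exponent $2\beta/(2\beta+d)\to0$ and the bounded normal mean problem becomes essentially unconstrained. For the bound $1.69$, I will bound the ratio by its worst case over $t=\beta/(2\beta+d)\in(0,1/3]$ and verify that worst case numerically. The delicate parts are ensuring uniformity in $d$ and making sure the constants in the two bounds genuinely align.
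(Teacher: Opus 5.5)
Your overall architecture matches the paper's, and it does give the first claim ($c_{\beta,d}>0$). The ingredients are the perturbation $f_t(x)=tLh^\beta K\bigl((x-x_0)/h\bigr)/\|K\|_{C_\beta}$ with $|t|\le1$, Fisher information tending to a constant $\lambda$ set by the bandwidth constant, a Bayesian lower bound, and optimisation over $\lambda$. For $\beta\le1$ you also take the cone $K^*(\nu)=(1-\|\nu\|^\beta)_+$, which makes $c_{\beta,d}$ and $C_{\beta,d}$ share the factor $K^*(0)^2\big/\bigl(\|K^*\|_{C_\beta}^{2d}R_2(K^*)^{2\beta}\bigr)^{1/(2\beta+d)}$. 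The ratio then depends only on $\gamma:=2\beta/(2\beta+d)$.

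The gap is in the step that must deliver \eqref{Eq:Cratio}. With the common factor removed, $C_{\beta,d}$ corresponds to $\sup_{\lambda>0}\lambda^\gamma/(1+\lambda)=\gamma^\gamma(1-\gamma)^{1-\gamma}$. The prior must live on $[-1,1]$, since $|t|>1$ leaves $\cH(\beta,L)$. Every density on $[-1,1]$ has Fisher information at least $\pi^2$, with equality for your $\cos^2(\pi t/2)$ prior. So the classical van Trees inequality gives at best Bayes risk $\ge 1/(\lambda+\pi^2)$. Shrinking the support to $[-a,a]$ only raises this to $\pi^2/a^2$. Changing the bump does not help either: for $\beta\le1$, $K^*$ already minimises $R_2$ among bumps with value $1$ at $0$ and unit H\"older seminorm.

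Your route therefore yields at best $C_{\beta,d}/c_{\beta,d}=\pi^{2(1-\gamma)}$. This is at least $\pi^{2/3}\approx2.14$ on $\beta\in(0,1]$ and tends to $\pi^2$ as $\beta\searrow0$, so it gives neither $1.69$ nor the limit $1$.

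Your display $a^2/(a^2I+1)$ is the Bayes risk under a $N(0,a^2)$ prior, which is not supported in $[-1,1]$. If it were admissible it would give ratio exactly $1$ for every $\beta$, which is too strong. Also, van Trees is not ``equivalent'' to the Bayes risk of the limit experiment; it is a lower bound on it, and it loses exactly this factor.

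The paper gets around this with the augmented van Trees inequality (Theorem~\ref{thm:avt}). It uses the weight $\alpha(t)=(1-|t|)^m$, which vanishes at $\pm1$, and a prior density proportional to $\sqrt{\lambda\alpha^2+(\alpha')^2}$. This produces the hypergeometric bound, hence $1/a_{\beta,d}\le1.69$ (numerically), and $a_{\beta,d}\to1$ via $\sqrt{m^2+\lambda t}\ge(1-\delta)m+\delta\sqrt{\lambda t}$.

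Your parenthetical alternative is a genuinely different and viable route. It compares directly with the minimax risk $R^*(\lambda)$ of the bounded normal mean $N(t,1/\lambda)$, $|t|\le1$. Since $C_{\beta,d}/c_{\beta,d}\le\sup_\lambda\{(1+\lambda)R^*(\lambda)\}^{-1}$, the Donoho--Liu--MacGibbon bound on the linear-to-minimax ratio in that problem would even give $1.25$ in place of $1.69$.

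However, that is not what your display computes, and it needs ingredients the proposal does not supply:
\begin{itemize}
\item Le Cam's asymptotic minimax theorem for the triangular-array LAN submodel on the compact set $[-1,1]$. This needs the loss truncated at $M$, so your remark that no truncation is needed no longer applies.
\item Either that known constant or a numerical computation of $R^*$, which has no closed form because least-favourable priors are discrete.
\end{itemize}

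Finally, your explanation of the $\beta\searrow0$ limit is backwards. The maximiser of $\lambda^\gamma/(1+\lambda)$ is $\lambda^*=\gamma/(1-\gamma)\to0$. That is the low-information, constraint-dominated regime, where both $R^*(\lambda)$ and $1/(1+\lambda)$ tend to $1$ (compare the zero estimator with a two-point prior on $\pm1$). The essentially unconstrained regime is $\gamma\to1$ instead.
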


Since the bounds in Theorems~\ref{thm:UB} and~\ref{thm:LAM-LB} hold for every error distribution $P_{\varepsilon|x_0}$ satisfying the relevant parts of Assumption~\ref{ass:DGM} (and every $P_X \in \mathcal{P}_X$),~\eqref{Eq:Cratio} indicates that, even at the level of constants, the outrigger estimator is almost instance optimal across error distributions, and is asymptotically optimal even up to constants in the low smoothness limit.

\section{Numerical experiments}\label{sec:sims}

\subsection{Implementation details}\label{sec:numerical-implementation}

In all of our numerical experiments, we implement the outrigger local polynomial estimator (and the standard local polynomial estimator) as follows.  We take the local constant $(p=0)$ version of the estimator with the Epanechnikov kernel $K(\nu)=\tfrac{3}{4}\max(1-\nu^2,0)$ as the primary kernel, and the uniform kernel $\kappa_\lambda(\nu)\propto\ind_{\mathcal{B}_{x_0}(\lambda)\setminus\mathcal{B}_{x_0}(1)}(\nu)$ as the outrigger kernel. 

Algorithm~\ref{alg:outrigger} depends on an estimator of the conditional score function.  Our practical choice is motivated by the observation that for $x\in\mathcal{X}^\circ$, the conditional score function $\rho(\cdot\given x)$ minimises the (conditional) score matching objective~\citep{dennis-cox, scorematching}
\begin{equation*}
    \mathcal{L}(\varrho) := \E\bigl\{\varrho^2(\varepsilon\given X)+2\varrho'(\varepsilon\given X)\bigr\}
    ,
\end{equation*}
over measurable functions $\varrho:\R\times\mathcal{X}^\circ\to\R$ satisfying $\max\bigl(\E\bigl\{\varrho(\varepsilon\given X)^2\given X=x\bigr\},\,\E\bigl\{\bigl|\varrho'(\varepsilon\given X)\bigr|\given X=x\bigr\}\bigr)<\infty$. 
For $x_0\in\cX$ and $t>0$, define the  local region $\mathrm{loc}_t(x_0):=\{x\in\cX^\circ:\|x-x_0\|\leq t\}$. The  score matching spline $\hat{\rho}_{\mathrm{SM}}(\cdot\given x)$ for $x\in\mathrm{loc}_t(x_0)$ then minimises the penalised empirical score matching loss
\begin{equation*}
    \hat{\mathcal{L}}_n(\varrho) :=
    \frac{1}{|\{i:X_i\in\mathrm{loc}_t(x_0)\}|}\sum_{i:X_i\in\mathrm{loc}_t(x_0)} \bigl\{\varrho^2(\hat\varepsilon_i\given x) 
    +
    2\varrho'(\hat\varepsilon_i\given x)\bigr\} 
    +
    \eta \int_\R\varrho''(\varepsilon\given x)^2 w_{x_0}(\varepsilon)\,d\varepsilon,
\end{equation*}
where $w_{x_0}:\mathbb{R} \rightarrow [0,\infty)$ is measurable with $\int_\R w_{x_0}<\infty$, and where $\hat{\varepsilon}_1,\ldots,\hat{\varepsilon}_n$ are the residuals from the outrigger estimator in Algorithm~\ref{alg:outrigger}. 
The regularisation parameter $\eta>0$ is chosen by 10-fold cross-validation with the score matching loss. In simulations, and as recommended by~\citet{dennis-cox} and \citet{ng}, we take $w_{x_0}(\cdot):=\ind_{[a_0,b_0]}(\cdot)$ where $a_0:=\min_{i:X_i\in\mathrm{loc}_t(x_0)}\hat\varepsilon_i$ and $b_0:=\max_{i:X_i\in\mathrm{loc}_t(x_0)}\hat\varepsilon_i$. 
Consistency results for the resulting (conditional) score estimator and its derivative follow from~\citet[Corollary 7]{dennis-cox}.

\subsection{Uniform improvements over bandwidths}\label{sec:sim-indeperrors}

We take the data generating mechanism~\eqref{eq:y=f(x)+e} with $n=2000$, $X\sim\text{Unif}[-2,2]$, $f(x)=4\cos(\pi x)$, and $\varepsilon\given X\sim P_\varepsilon$ with five choices of error distribution:
\begin{enumerate}[label=(\roman*)]
    \item Standard Gaussian: $P_\varepsilon=N(0,1)$;
    \item 
    Gaussian scale mixture: $P_\varepsilon=\tfrac{1}{2}N(0,1)+\tfrac{1}{2}N(0,16)$;
    \label{item:scale-mix}
    \item Gaussian location mixture: $P_\varepsilon=\tfrac{1}{2}N\bigl(1,\tfrac{1}{10}\bigr)+\tfrac{1}{2}N\bigl(-1,\tfrac{1}{10}\bigr)$;
    \label{item:loc-mix}
    \item Smoothed exponential: $P_\varepsilon$ is the distribution of $W-1+\tfrac{\sqrt{3}}{10}Z$ where $W\sim\text{Exp}(1)$ and $Z\sim N(0,1)$ are independent;
    \item Cubed Gaussian: $P_\varepsilon$ is the distribution of $Z^3$ where $Z\sim N(0,1)$. 
\end{enumerate}
In the penultimate case, the oracle score function has a large Lipschitz constant of $100/3$, while in the final one it has a discontinuity at the origin, and moreover has infinite Fisher information, so does not satisfy Assumption~\ref{ass:DGM}.  These cases, particularly~(v), should therefore be considered as challenging examples.  We compare the outrigger estimator (Algorithm~\ref{alg:outrigger}) with $\lambda=8$, to the standard 
Nadaraya--Watson ($p=0$) estimator~\eqref{eq:lp}, and the oracle local likelihood estimator that is given access to the true score function~\eqref{eq:lp-ora-conditional}.

In Figure~\ref{fig:indep-errors} we plot the mean squared errors~$\E\bigl\{\bigl(\hat{f}(0)-f(0)\bigr)^2\bigr\}$ of our estimators of $f(0)$, as a function of the bandwidth $h$, averaged over 1000 repetitions.  Table~\ref{tab:indep-errors} presents numerical values for these mean squared errors for the empirically optimal choice of bandwidth.  For all non-Gaussian error distributions, and for all bandwidths, our outrigger local polynomial estimator exhibits improvements over the local polynomial estimator, validating our theoretical findings in Theorem~\ref{thm:improvements}\ref{thm:improved-ratio-2}.  Moreover, in the Gaussian case we see indistinguishable performance compared with the standard local polynomial estimator, again in line with Theorem~\ref{thm:improvements}\ref{thm:improved-ratio-2}.  Even in settings~(iv) and~(v), where the performance of the outrigger and oracle estimators is not identical, the outrigger estimator still comfortably outperforms the standard local polynomial estimator, and in fact improves on the oracle estimator for some choices of bandwidth. 

\begin{figure}[H]
    \centering
    \includegraphics[width=0.99\linewidth]{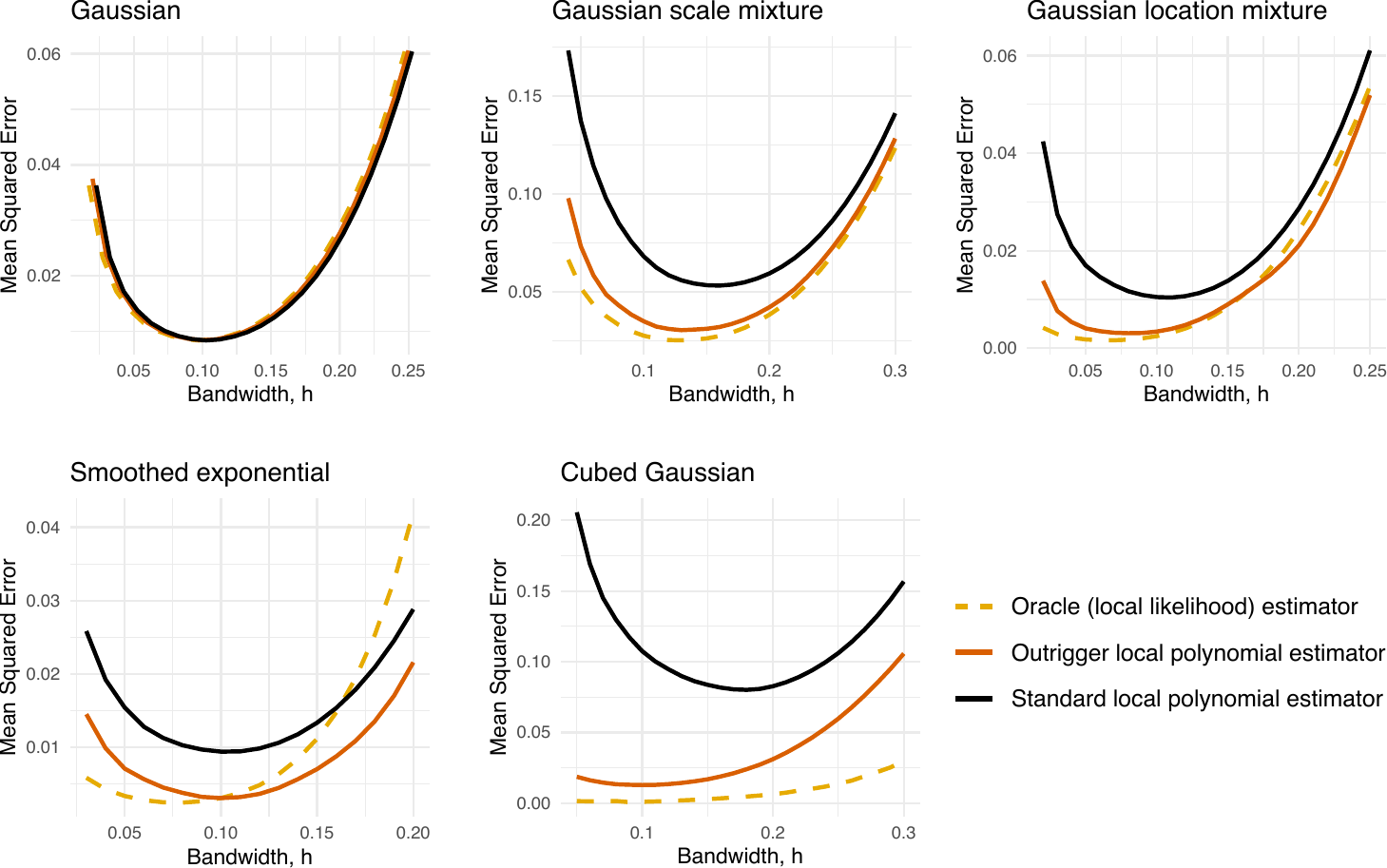}
    \caption{Pointwise mean squared error $\E\bigl\{\bigl(\hat{f}(0)-f(0)\bigr)^2\bigr\}$ in the numerical experiments of Section~\ref{sec:sim-indeperrors} over different bandwidths $h$.}
    \label{fig:indep-errors}
\end{figure}

\begin{table}[ht]
    \centering
    \begin{tabular}{|c|c|c|c|c|}
    \hline
        & Oracle & {\begin{tabular}{@{}c@{}c@{}}Outrigger\\estimator\end{tabular}} & {\begin{tabular}{@{}c@{}c@{}}Standard\\local polynomial\end{tabular}}
        & {\begin{tabular}{@{}c@{}c@{}}Ratio of MSE of outrigger\\to local polynomial\end{tabular}}
        \\
    \hline
        Standard Gaussian 
        & 8.41 & 8.45 & {\bf 8.41} & 100\%
        \\
        Gaussian scale mixture 
        & 25.3 & {\bf 30.5} & 53.3 & 57.3\%
        \\
        Gaussian location mixture
        & 1.64 & {\bf 3.07} & 10.39 & 29.6\%
        \\ 
        Smoothed exponential
        & 2.42 & {\bf 3.04} & 9.38 & 32.4\%
        \\
        Cubed Gaussian 
        & 0.69 & {\bf 17.31} & 80.13 & 21.6\%
        \\
        \hline
    \end{tabular}
    \caption{Pointwise mean squared error $\E\bigl\{\bigl(\hat{f}(0)-f(0)\bigr)^2\bigr\}$ ($\times 10^3$) in the numerical experiments of Section~\ref{sec:sim-indeperrors} for the optimal bandwidth.  The last column presents the ratio of the mean squared errors in the previous two columns.}
    \label{tab:indep-errors}
\end{table}

Figure~\ref{fig:lambda-dep} compares the  theoretical and empirical ratios of the mean squared errors for the two error distributions~\ref{item:scale-mix} and~\ref{item:loc-mix} as the outrigger parameter $\lambda$ is varied. In addition to providing a visualisation of the rate of improvement in these ratios as $\lambda$ increases, we note that in both cases the empirical and theoretical curves match well, providing reassurance that the asymptotic theory of Section~\ref{sec:theory} is reflected in empirical performance. 


\begin{figure}[h]
    \centering
    \includegraphics[width=0.9\linewidth]{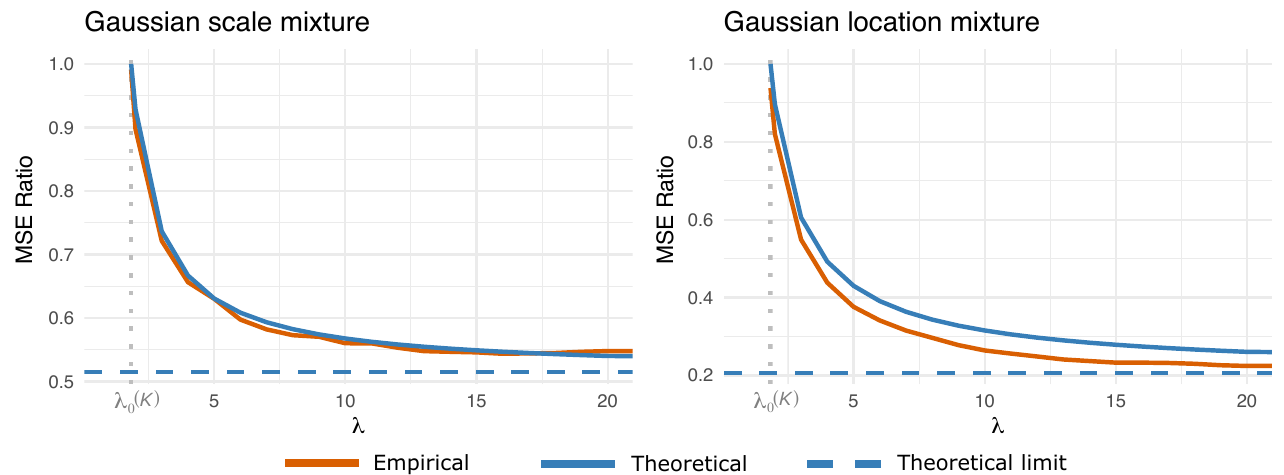}
    \caption{Empirical and theoretical comparison of the ratio of the mean squared error (MSE) of the outrigger estimator with parameter $\lambda\in[\lambda_0(K),20]$ and the standard local polynomial estimator for the experiments of Sections~\ref{sec:sim-indeperrors}\ref{item:scale-mix} and~\ref{sec:sim-indeperrors}\ref{item:loc-mix}. The empirical curve plots $\lambda\mapsto\frac{\mathrm{MSE} \, \hat{f}^{\mathrm{Outrig}}(x_0)}{\mathrm{MSE} \, \hat{f}^{\mathrm{LP}}(x_0)}$, with MSEs estimated over 1000 repetitions. 
    The theoretical MSE ratio anticipated by the bias and variance terms in Theorem~\ref{thm:decomp} is $\lambda\mapsto\bigl({V_{P}^{(\lambda)}(x_0)}/{\sigma_P^2(x_0)}\bigr)^{2/3}$, with theoretical limit $\lim_{\lambda\to\infty}\bigl({V_{P}^{(\lambda)}(x_0)}/{\sigma_P^2(x_0)}\bigr)^{2/3}=\Bigl(\frac{1/i_P(x_0)}{\sigma_P^2(x_0)}\Bigr)^{2/3}$.
    }
    \label{fig:lambda-dep}
\end{figure}

\subsection{Errors not independent of covariates}\label{sec:numerics-nonindep-errors}

In this subsection we study settings where the errors and covariates are dependent.  We consider $n=10^4$ independent copies of data $(X,Y)$ following
\begin{gather*}
    X \sim N(0,1),
    \qquad
    Y = 4\cos(\pi X)+\varepsilon,
\end{gather*}
where $\varepsilon\given X\sim P_{\varepsilon|X}$, with the following choices of error distribution:
\begin{enumerate}[label=(\roman*)]
    \item Gaussian scale mixture: $P_{\varepsilon|X}=\tfrac{\exp(X)}{1+\exp(X)}N(0,1)+\tfrac{1}{1+\exp(X)}N(0,16)$;
    \item Exponential-$t_3$ convolution: $P_{\varepsilon|X}$ is the distribution of $\tfrac{4}{4+X^2}(W-1)+\tfrac{1+X^2}{4+X^2}T$ where $W\sim\mathrm{Exp}(1)$ and $T\sim t_3$, with $W$, $T$ and $X$ independent;
    \label{item:exp-t3}
    \item Power of a Gaussian: $P_{\varepsilon|X}$ is the distribution of $|Z|^{\tfrac{1}{2}+X^2}\sgn(Z)$ where $Z\sim N(0,1)$ and $X$ are independent.
\end{enumerate}
Surface plots of the corresponding conditional score functions are given in Figure~\ref{fig:cond-scores}. The first two conditional scores are relatively well-behaved, while the third is much less so.  Figure~\ref{fig:errors-not-indep-cov} plots mean squared errors as the  outrigger parameter $\lambda\in[2,10]$ varies.  We take the same fixed bandwidth across all estimators of $h=0.11,0.07,0.07$ for the three error distributions respectively (which are the optimal bandwidths for the standard local polynomial estimator in each case) 
and $\mathcal{K}=2$ folds for cross-fitting.  While the fixed choice of bandwidth is not necessarily optimal for the outrigger estimator, we are still guaranteed asymptotic improvements over the local polynomial estimator; the same conclusion holds when taking any deterministic `rule of thumb' bandwidth and any choice of outrigger parameter $\lambda$. As indicated by~\eqref{eq:V-lambda}, the asymptotic mean squared error of the outrigger is comparable to that of the local polynomial estimator when $\lambda=\lambda_0(K)= 11/6$, and decreases in $\lambda$, at least up to a point where it is no longer reasonable to think of $\lambda h\to0$.  The Gaussian scale mixture is impervious to choosing $\lambda$ too large, while for the latter two cases we see the apparent necessity of the condition $\lambda h\to0$ in our asymptotic regime in Assumption~\ref{ass:h}.

\begin{figure}[h]
    \centering
    \includegraphics[width=0.99\linewidth]{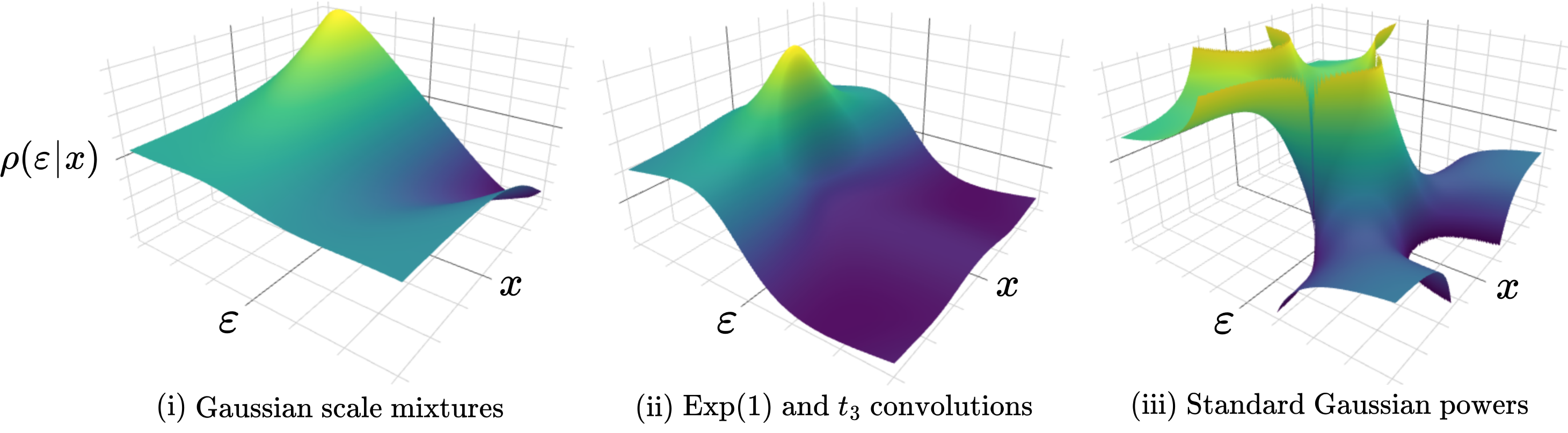}
    \caption{Surface plots of the conditional score functions $\rho(\varepsilon\given x)$ for the three data generating mechanisms in Section~\ref{sec:numerics-nonindep-errors}.}
    \label{fig:cond-scores}
\end{figure}

\begin{figure}[h]
    \centering
    \includegraphics[width=0.99\linewidth]{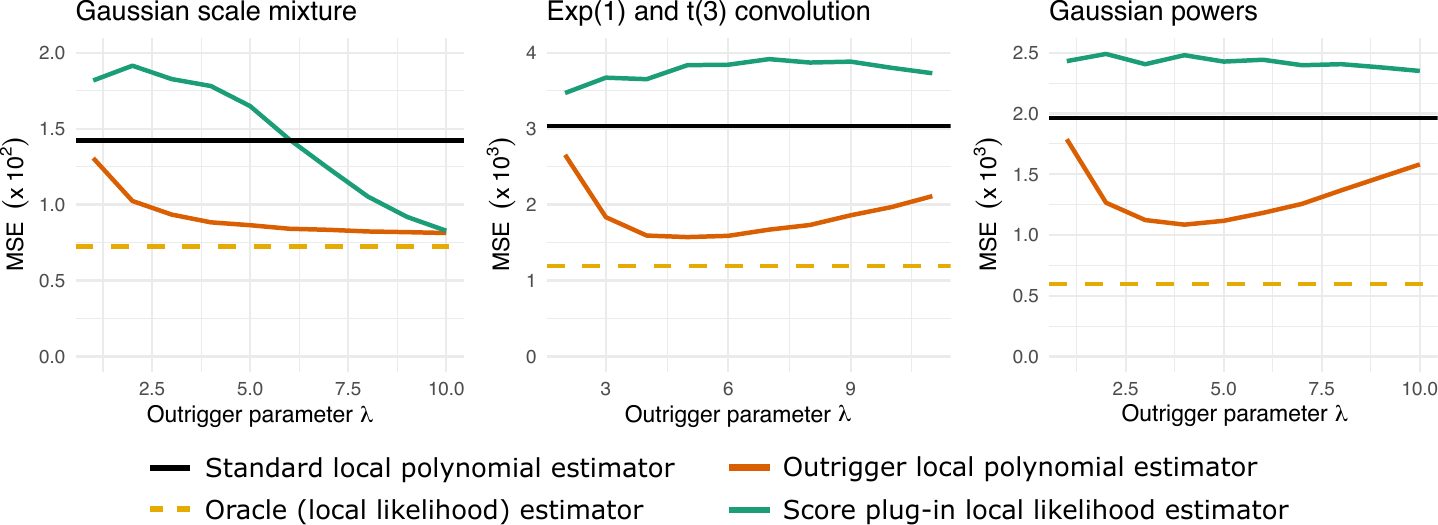}
    \caption{Mean squared errors $\E\bigl\{\bigl(\hat{f}(0)-f(0)\bigr)^2\bigr\}$ for fixed bandwidths $h=0.11,0.07,0.07$ over the three simulations respectively (optimal in the standard local polynomial case) and over different outrigger parameters $\lambda$ (1000 simulations).}
    \label{fig:errors-not-indep-cov}
\end{figure}

\subsection{Real data study}\label{sec:real-data} 

We study a Spotify tracks dataset, which can be found on Kaggle at~\url{www.kaggle.com/datasets/maharshipandya/-spotify-tracks-dataset}. Our interest is in studying the association between a popularity metric (calculated based on the number of recent streams, on a scale from $0$ to $100$), and a positivity metric (calculated based on the sentiment of the track, measured on a scale from $0$ to $1$). More details on these metrics are given in the aforementioned link.
We restrict attention to the $91{,}271$ tracks with popularity score exceeding 10.   
We fit the outrigger estimator with $\lambda=8$ and the standard local polynomial estimator on $100$ randomly selected subsamples of size $10{,}000$; see Figure~\ref{fig:spotify}. 
The same bandwidth was employed for both estimators, chosen by squared error cross-validation when fitting the standard local polynomial estimator.  
Conditional score estimation for the outrigger estimator was carried out using the score matching spline methodology of Section~\ref{sec:numerical-implementation} with $t$ equal to the bandwidth. 
The strong similarity of the empirical mean functions in the left and middle panels of Figure~\ref{fig:spotify} is an illustration of the similar bias of the outrigger and standard local polynomial estimators, but the outrigger estimator has visibly smaller empirical variance across these subsamples.  Further detail on this empirical variance reduction is given in the right panel of Figure~\ref{fig:spotify}; the average MSE ratio, over a uniform distribution with respect to positivity scores, of the outrigger to standard local polynomial is $0.53$. 
Figure~\ref{fig:spotify-score} presents an estimator of the conditional score function at different positivity levels using the score matching splines of Section~\ref{sec:numerical-implementation}.  We see that the conditional error distribution is not symmetric, and nor are the errors independent of the covariates. 

\begin{figure}[h]
    \centering
    \includegraphics[width=1.0\linewidth]{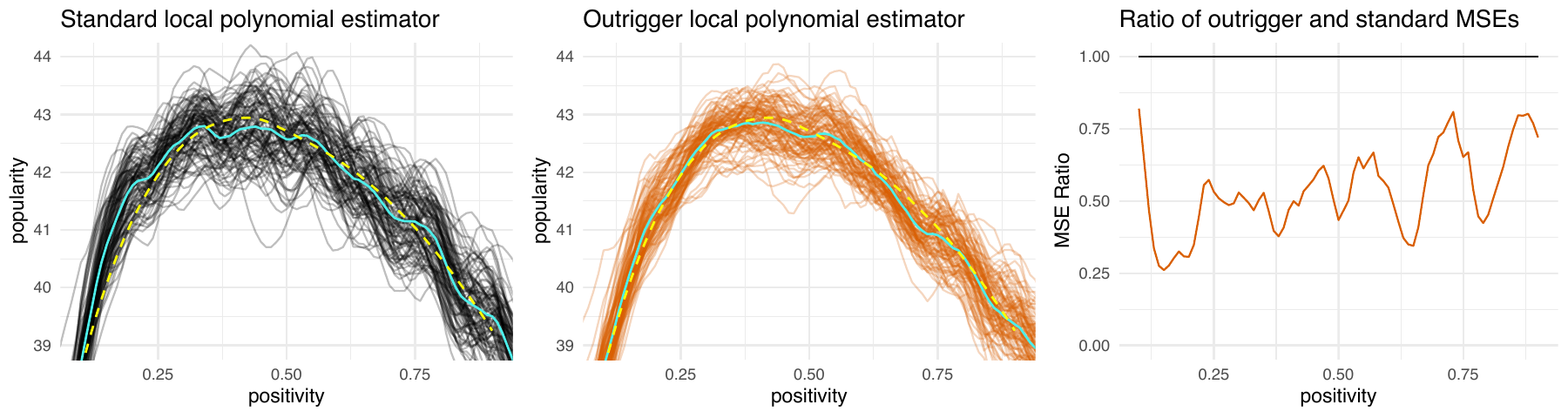}
    \caption{Estimates of the regression function on $100$ subsampled datasets of size $n=10^4$ for the local constant (left) and outrigger local constant estimator (middle).  The empirical mean of the estimators are plotted in cyan, and the standard local quadratic estimator fitted on the entire dataset is plotted in dashed yellow. The right panel presents the estimated MSE with respect to the `semi-oracle' standard local quadratic estimator that uses the entire dataset.}
    \label{fig:spotify}
\end{figure}

\begin{figure}[h]
    \centering
    \includegraphics[width=0.6\linewidth]{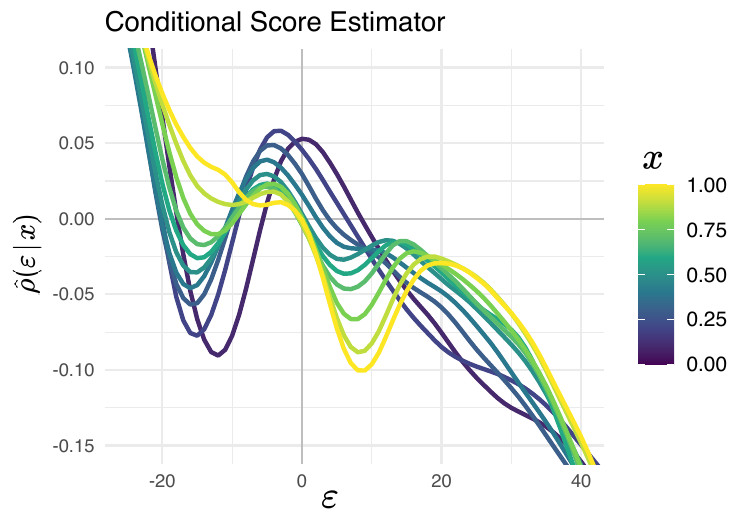}
    \caption{Estimated conditional score functions $\hat{\rho}(\varepsilon\given x)$ of the errors $\varepsilon:=Y-\E(Y\given X=x)$, where~$Y$ represents the Spotify popularity score of a track, $X$ represents its positivity rating and $x\in\{r/10:r\in[10]\}$.}
    \label{fig:spotify-score}
\end{figure}

\section{Extensions}\label{sec:extensions}

Here we discuss two possible extensions of our ideas.

\medskip\noindent{\bf Beyond conditional score estimation: } 
Suppose that the conditional score function $\rho$ does not satisfy Assumption~\ref{ass:DGM}, e.g.~because the conditional density of the errors is not sufficiently smooth.  Theorem~\ref{thm:decomp} extends to allow us to replace $\rho$ and $\hat{\rho}$ with an alternative function $\varrho$ satisfying~\ref{ass:score} and estimator $\hat{\varrho}$ satisfying~\ref{ass:score-estimation}, provided that $\E_P\{\varrho(\varepsilon\given X)\given X\}=0$ and $\bigl|\E_P\bigl(\varrho'(\varepsilon\given X)\biggiven X=x\bigr)\bigr|\geq c_1$ for $P_X$-almost all $x\in\cX^\circ$ and all $P\in\cP$; under~\ref{ass:score} these latter conditions are automatically satisfied by the conditional score, i.e.~when $\varrho = \rho$.  Moreover, we can even relax the requirement that the conditional expectation of $\varrho$ is exactly zero to requiring only that 
\begin{equation}\label{eq:approx-soln}
\sup_{h\in\mathcal{H}_n}\sup_{P\in\cP}\sup_{x\in\cX^\circ}\frac{\bigl|\E_P\bigl\{\varrho(\varepsilon\given X)\biggiven X=x\bigr\}\bigr|}{h^{\beta^*}}\to0.
\end{equation}
\begin{theorem}\label{thm:decomp-ext}
    Suppose that $\mathcal{P}$ satisfies Assumption~\ref{ass:DGM} with $\varrho$ in place of~$\rho$ and $\hat{\varrho}$ in place of $\hat{\rho}$.  Suppose further that $\varrho$ satisfies~\eqref{eq:approx-soln} and that  $\bigl|\E_P\bigl(\varrho'(\varepsilon\given X)\biggiven X=x\bigr)\bigr|\geq c_1$ for $P_X$-almost all $x\in\cX^\circ$ and all $P\in\cP$. Finally suppose that the estimator $\hat{f}^{\mathrm{Outrig}}$ of Algorithm~\ref{alg:outrigger} satisfies Assumption~\ref{ass:bandwidth-kernels-and-co}. Then the decomposition~\eqref{eq:decomp} of Theorem~\ref{thm:decomp} holds for every $x_0 \in \mathcal{X}$ with $V_{P}^{(\lambda)}(x_0)$ replaced with
    \begin{equation*}
        V_P^{(\lambda)}(x_0)[\varrho]
        :=
        \frac{\E_P\bigl\{\varrho^2(\varepsilon\given X)\given X=x_0\bigr\}}{\bigl\{\E_P\bigl(\varrho'(\varepsilon\given X)\given X=x_0\bigr)\bigr\}^2}+\biggl(\sigma_P^2(x_0)-\frac{\E_P\bigl\{\varrho^2(\varepsilon\given X)\given X=x_0\bigr\}}{\bigl\{\E_P\bigl(\varrho'(\varepsilon\given X)\given X=x_0\bigr)\bigr\}^2}\biggr)\frac{R_2(\kappa_\lambda)}{R_2(K)}.
    \end{equation*}
\end{theorem}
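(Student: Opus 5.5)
The plan is to rerun the proof of Theorem~\ref{thm:decomp} with $\varrho$ in place of $\rho$, and to track exactly where the identities specific to the score were used. Only two such identities appear. The first is $\E_P\{\rho(\varepsilon\given X)\given X\}=0$, which centres the leading stochastic term of $g^{\mathrm{LL}}$. The second is $\E_P\{\rho'(\varepsilon\given X)\given X=x_0\}=-i_P(x_0)$, which gives the Jacobian in~\eqref{eq:Out-LL-2}. Everything else is generic to a smooth estimating function with a consistent, smooth estimate, and carries over verbatim under Assumption~\ref{ass:DGM} with $(\varrho,\hat\varrho)$ replacing $(\rho,\hat\rho)$. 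This covers the outrigger stabilisation argument giving~\eqref{eq:Out-LL-1}, the derivative comparison~\eqref{eq:Out-LL-3}, the cross-fitting and the control of $\hat c_k$ and $\hat\mu_k$. All of these rely only on~\ref{ass:score}--\ref{ass:exp-bounds} and Assumption~\ref{ass:bandwidth-kernels-and-co}.

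\emph{Step 1 (reduction to the oracle equation).} Define $g^{\varrho}(\theta):=n^{-1}\sum_i K_h(X_i-x_0)Q_h(X_i-x_0)\varrho(Y_i-Q_h(X_i-x_0)^\top\theta\given X_i)$, together with its outrigger analogue. As in the proof of Theorem~\ref{thm:decomp}, the outrigger equation should satisfy
\[
g^{\mathrm{Outrig}}(\theta)=g^{\varrho}(\theta)+G_n+o_P(r_n)
\]
uniformly over $\|\theta-\theta_0\|\le Cr_n$. Here $G_n$ is the mean-zero outer-region fluctuation $-n^{-1}\sum_i\mu(x_0)\kappa_{h,\lambda}(X_i-x_0)\{\varrho(\varepsilon_i\given X_i)-\varepsilon_i\E_P(\varrho'(\varepsilon\given X)\given X_i)\}$, or whatever form this term takes in Theorem~\ref{thm:decomp} (the term producing the $\sigma_P^2$ versus $1/i_P$ interpolation with weight $R_2(\kappa_\lambda)/R_2(K)$). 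In that proof, the only step using $\E\{\rho\given X\}=0$ exactly is the one discarding terms of the form $\E_P\{K_hQ_h\,\varrho(\varepsilon\given X)\}$ and $\E_P\{\kappa_{h,\lambda}\varrho(\varepsilon\given X)\}$. Under~\eqref{eq:approx-soln} these are bounded by $\sup_x|\E\{\varrho\given X=x\}|\int|K|$ (respectively $\sup_{\lambda}\int|\kappa_\lambda|$ times the same supremum, which is finite by~\ref{ass:kappa}). Both are $o(h^{\beta^*})=o(r_n)$ uniformly, so they are absorbed into the remainder $R_n$. The centring of the empirical sums around their means should then be handled with the same uniform CLT/Lindeberg argument, using the $2+\delta$ moments.

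\emph{Step 2 (linearisation).} The Jacobian becomes $p_X(x_0)\E_P\{\varrho'(\varepsilon\given X)\given X=x_0\}s_1(K)+o_P(1)$. This is uniformly invertible because $|\E\{\varrho'\given X=x_0\}|\ge c_1$ and $s_1(K)$ is invertible. Assumption~\ref{ass:unique-root-ish} then localises the root, giving $\hat\theta-\theta_0=-\{Dg\}^{-1}\{g^{\varrho}(\theta_0)+G_n\}+o_P(r_n)$. The bias term is unchanged. The reason is that $\varrho(\varepsilon_i+f(X_i)-Q_h^\top\theta_0\given X_i)$ expands to first order as $\varrho(\varepsilon_i\given X_i)+\varrho'(\varepsilon_i\given X_i)(f(X_i)-Q_h^\top\theta_0)$. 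The factor $\E\{\varrho'\given X=x_0\}$ multiplies the bias contribution and then cancels with the inverse Jacobian, leaving the same $B(f,x_0,K,h)h^{\beta^*}$ as before.

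\emph{Step 3 (variance).} After division by $\E\{\varrho'\given X=x_0\}$, the inner-region contribution has conditional variance $\E\{\varrho^2\given X=x_0\}/\{\E\varrho'\}^2$, weighted by $R_2(K)$. The outer-region contribution plays the role of a pilot-residual correction, so the earlier computation gives the same interpolation formula with $1/i_P$ replaced by this ratio. Concretely, the cross term between $\varrho(\varepsilon)/\E\varrho'$ and $\varepsilon$ equals $-\E\{\varepsilon\varrho\given X=x_0\}/\E\{\varrho'\given X=x_0\}$.

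I expect this cross term to be the main obstacle. For the true score, integration by parts gives $\E\{\varepsilon\rho(\varepsilon)\given X\}=-1=\E\{\rho'\}$ up to sign conventions, which makes the cross term trivial. For general $\varrho$ it is not automatic. To obtain exactly the stated $V_P^{(\lambda)}[\varrho]$, I will have to show that the outer-region term in Theorem~\ref{thm:decomp}'s proof enters as $\varepsilon\,\E\{\varrho'\given X\}$ rather than through $\E\{\varepsilon\varrho\}$. That is, the outrigger cancellation should be driven by the derivative $\varrho'$ acting on the pilot error $\tilde f-f$, so the cross term becomes $\E\{\varepsilon\varrho\}/\E\{\varrho'\}$ paired against the same quantity, and cancels in the form displayed. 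My first step will be to recheck the outer-region expansion in the original proof, writing the cross-covariance terms symbolically rather than substituting score identities. I will then verify that the covariance of the two pieces vanishes because they live on the disjoint supports $\mathcal{B}_{x_0}(h)$ and $\outerr$, up to the common pilot component. Once the variance is identified, the uniform Gaussian approximation for $Z_n$ follows exactly as in Theorem~\ref{thm:decomp} via a Berry--Esseen bound.
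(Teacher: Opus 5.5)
The gap is in Step~3, and it cannot be closed. Your Steps~1--2 track the paper's own argument, which reruns Theorem~\ref{thm:decomp} and uses \eqref{eq:approx-soln} to control the non-centred $\varrho$ in the bound on $\hat\phi_{\mathrm{I.i}.k}$ and in the Lyapunov condition. The problem is that the displayed formula is not what the expansion produces.

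Write $\zeta:=\E_P\{\varrho'(\varepsilon\given X)\given X=x_0\}$, $A:=\E_P\{\varrho^2(\varepsilon\given X)\given X=x_0\}/\zeta^2$ and $B:=\E_P\{\varepsilon\,\varrho(\varepsilon\given X)\given X=x_0\}/\zeta$. Your $G_n$ from Step~1 is the correct outer term. After applying the inverse Jacobian, the $i$th outer-region summand is $p_X(x_0)^{-1}\kappa_{h,\lambda}(X_i-x_0)\{\varepsilon_i-\varrho(\varepsilon_i\given X_i)/\zeta\}$.

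The $\varepsilon_i$ part does enter through $\hat c_k$ multiplied by $\varrho'$, as you guessed. But the $\varrho(\varepsilon_i\given X_i)$ part comes from the outer half of $\hat\phi_{\mathrm{I}}$ at the same observation, so the cross moment $\E_P\{\varepsilon\varrho\given X=x_0\}$ survives. Disjoint supports only remove the inner--outer covariance, not this within-outer cross term. The variance factor is therefore
\[
A+\frac{R_2(\kappa_\lambda)}{R_2(K)}\,\E_P\Bigl\{\Bigl(\frac{\varrho(\varepsilon\given X)}{\zeta}-\varepsilon\Bigr)^2\Biggiven X=x_0\Bigr\}
=A+\frac{R_2(\kappa_\lambda)}{R_2(K)}\bigl(\sigma_P^2(x_0)-2B+A\bigr).
\]
This coincides with $V_P^{(\lambda)}(x_0)[\varrho]$ if and only if $A=B$.

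For $\varrho=\rho$ the identity holds because $\E_P\{\varepsilon\rho\given X=x_0\}=-1$ and $\zeta=-i_P(x_0)$, so $A=B=1/i_P(x_0)$. Note that $\E\rho'=-i_P$, not $-1$ as you wrote. The identity also holds for $\varrho(e)=e$, but not in general.

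In fact, for bounded $\lambda$ the statement fails whenever $A\neq B$. Since the inner and outer summands are uncorrelated, the variance factor is at least $A$. Yet the displayed formula lies strictly below $A$ whenever $A>\sigma_P^2(x_0)$.

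For a concrete case, take $\varepsilon\sim N(0,1)$ independent of $X$, $f$ constant, $p=0$ and $\varrho=\hat\varrho=\tanh$; all the hypotheses can be met. Stein's lemma gives $\E(\varepsilon\tanh\varepsilon)=\E(\mathrm{sech}^2\varepsilon)=\zeta$, so $B=1=\sigma_P^2$. Lemma~\ref{lem:score-min} gives $A>1$, because $\tanh$ is not proportional to the Gaussian score. The true factor is therefore $A+(A-1)R_2(\kappa_\lambda)/R_2(K)$, while the displayed one is $A-(A-1)R_2(\kappa_\lambda)/R_2(K)$.

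The paper's proof has the same hole. Its two amendments leave untouched the variance identification inherited from Lemma~\ref{lem:clt}. There the outer coefficient $\E_P\{(\rho/\zeta_{\rho'}-\varepsilon)^2\given X=x_0\}$ equals $\sigma_P^2(x_0)-1/i_P(x_0)$ only by the score identities.

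What your Steps~1--2, plus a direct variance computation, actually prove is the decomposition with the corrected variance displayed above. It agrees with the stated formula only when $A=B$ at $x_0$, or when $\lambda_n\to\infty$, where both reduce to $A$.
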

Theorem~\ref{thm:decomp} is recovered by recognising that when $\varrho = \rho$,
\begin{equation*}
    \frac{\E_P\bigl\{\varrho^2(\varepsilon\given X)\given X=x_0\bigr\}}{\bigl\{\E_P\bigl(\varrho'(\varepsilon\given X)\given X=x_0\bigr)\bigr\}^2}=\frac{1}{i_P(x_0)};
\end{equation*}
see Lemma~\ref{lem:score-min}.  

As a first example of such an alternative function $\varrho$, suppose we pre-specify integrable functions $M_1,\ldots,M_J:\R\times\mathcal{X}^\circ\to\R$ with $M_1(\varepsilon,x)=\varepsilon$ and moments
\begin{equation*}
    m_j(x):=\E_P\bigl(M_j(\varepsilon,X)\given X=x\bigr),\qquad j\in[J]
\end{equation*} 
to be estimated, except that $m_1(x)=0$.  For arbitrary measurable functions $a_1,\ldots,a_J:\mathcal{X}^\circ \rightarrow \mathbb{R}$, we can introduce the \emph{pseudo-score}
\begin{equation}
    \varrho(\varepsilon\given x)
    :=
    \sum_{j=1}^J a_j(x)\bigl\{M_j(\varepsilon,x)-m_j(x)\bigr\}.
\end{equation}
A natural estimator would then be
\begin{equation}
    \hat{\varrho}(\varepsilon\given x)
    :=
    \sum_{j=1}^J \hat{a}_j(x)\bigl\{M_j(\varepsilon,x)-\hat{m}_j(x)\bigr\},
\end{equation}
where $\hat{m}_j$ is estimated via a nonparametric regression of $M_j(\varepsilon,X)$ on $X$, and $\hat{a}_j$ is an appropriately chosen weight function. 
The functions $M_1,\ldots,M_J$ can act as a set of candidate (potentially misspecified) score functions that allow the practitioner to gain partial distributional adaptivity within this `candidate score basis'.  The weight functions $\hat{a}_j$ may be chosen via a data-adaptive procedure~\citep{rose}. 

As a second example, in $M$-estimation, minimising a convex loss function is convenient computationally.  Regarding the loss as a negative log-likelihood, the loss is convex if and only if the error density is log-concave, or equivalently, if and only if its score function is antitonic (decreasing).  \citet{asm} characterise the optimal projection of the score function onto the class of antitonic functions in terms of minimising $\varrho \mapsto V_{P}^{\infty}(x_0)[\varrho]$.  This antitonic projection can be incorporated within outrigger regression; when the conditional density of $\varepsilon$ given $X$ is log-concave, this projection comes at no cost in terms of asymptotic variance of the resulting estimator, while in general the cost is quantifiable.

\medskip\noindent{\bf Beyond conditional mean estimation: } 
Our outrigger proposal extends naturally beyond the estimation of the conditional mean of the response, e.g.~to distributionally adaptive quantile regression.  For $\tau\in(0,1)$, the $\tau$th level \emph{conditional quantile} of the response $Y$ is
\begin{equation*}
    q_\tau(x):=\inf\bigl\{y \in \mathbb{R}:\Pr_P(Y\leq y\given X=x)\geq\tau\bigr\}.
\end{equation*}
In fact, we may write
\begin{equation}\label{eq:qantile-reg}
    Y = q_\tau(X)+\varepsilon_\tau, 
    \qquad \text{where }
    \E_P\bigl(\tau-\ind_{(-\infty,0]}(\varepsilon_\tau)\given X\bigr)=0.
\end{equation}
Under the analogous smoothness assumptions on $\varepsilon_\tau$ (taking the place of $\varepsilon$ in Assumption~\ref{ass:DGM}) and using the moment condition of~\eqref{eq:qantile-reg} for the pilot estimation, our outrigger proposal may be adapted to achieve distributional adaptivity in quantile regression. 

\section{Discussion}

The minimisation of a (penalised) least squares objective remains both foundational and common practice in statistics.  Despite known optimality results in the special case of Gaussian errors, one of the main findings of this work is that, in the context of nonparametric regression, considerable gains are achievable outside this setting.  In fact, our new outrigger local polynomial estimator adapts to an unknown, potentially non-Gaussian, conditional error distribution.  The method, which requires no additional structural assumptions on the error distribution, yields uniform improvements on standard local polynomial estimators, and comes within small constant factors of minimax lower bounds.  We believe the work opens the door to an exploration of the phenomenon of distributional adaptivity in related statistical problems.

\paragraph*{Funding.} EHY and RJS were supported by European Research Council Advanced Grant 101019498.

\bibliography{scorelp.bib}
\newpage
\appendix 

\section{Proof of Theorem~\ref{thm:decomp}}\label{appsec:proof-decomp}


We introduce the following additional notation for the proof of Theorem~\ref{thm:decomp}. 
For $\alpha = (\alpha_1,\ldots,\alpha_d)^\top \in\N_0^d$, define the \emph{multi-index factorial} $\alpha!:=\prod_{r=1}^d\alpha_r!$. 
Given a symmetric matrix $M \in \R^{m\times m}$, we write its minimum eigenvalue as $\Lambda_{\min}(M)$. Recall that our cross-fitted estimator (Algorithm~\ref{alg:outrigger}) is over $\mathcal{K}\geq2$ folds, for which we define the number of observations per fold as $n_k:=|\cI_k| = \lfloor n/\mathcal{K} \rfloor$ for $k\in[\mathcal{K}]$.  It will be convenient to introduce the shorthands $\iota_h(\cdot):=K_h(\cdot)Q_h(\cdot)$ and $\pi_h^{(t)}(\cdot):=K_h^t(\cdot)Q_h(\cdot)Q_h(\cdot)^\top$, for $t\in\{1,2\}$ and $p_0:=p_X(x_0)$.  For $m \in [\bar{p}]$, we write $\mathrm{e}_m\in \mathbb{R}^{\bar{p}}$ for the $m$th standard basis vector, and define the shorthands $\beta_0^*:=\ceil{\beta^*}-1$ and 
\begin{gather}
    \zeta_{\rho^2}(x):=\E_P\{\rho^2(\varepsilon\given X)\given X=x\},
    \qquad
    \zeta_{\rho'}(x):=\E_P\{\rho'(\varepsilon\given X)\given X=x\},
    \\
    \zeta_{(\rho')^2}(x):=\E_P\{(\rho'(\varepsilon\given X))^2\given X=x\},
    \\
    \cE_{\hat\rho,k}(x):=\E_P\big\{(\hat\rho_k-\rho)(\varepsilon\given X)\biggiven X=x, \hat\rho_k\big\},
    \quad
    \cE_{\hat\rho',k}(x):=\E_P\big\{(\hat\rho_k'-\rho')(\varepsilon\given X)\biggiven X=x, \hat\rho_k\big\},
    \\
    \cE_{\hat\rho,k,\text{sq}}(x) := \E_P\big\{(\hat\rho_k-\rho)^2(\varepsilon\given X)\biggiven X=x, \hat\rho_k\big\},
    \quad
    \cE_{\hat\rho',k,\text{sq}}(x) := \E_P\big\{(\hat\rho_k'-\rho')^2(\varepsilon\given X) \biggiven X=x, \hat\rho_k\big\},
    \label{eq:cE-rho'-abs}
\end{gather}
for $x \in \mathbb{R}^d$ and $k \in [\mathcal{K}]$.
For a Borel measurable function $\kappa:\R^d\to\R$ and $t\geq1$, it will help to define
\begin{equation*}
    R_t(\kappa):=\int_{\R^d}|\kappa(\nu)|^t \, d\nu.
\end{equation*}


\begin{proof}[Proof of Theorem~\ref{thm:decomp}]
    Fix $x_0 \in \mathcal{X}$.  Define coefficients $\theta_0=\bigl(f(x_0),0,\ldots,0\bigr)^\top \in\R^{\bar{p}}$, the local error term $e_0(x):=f(x)-f(x_0)$, the population level orthogonaliser
    \begin{equation*}
        \mu(x_0) := \frac{\E_P\big(\iota_h(X-x_0)\big)}{\E_P\big(\kappa_{h,\lambda}(X-x_0)\big)},
    \end{equation*}
    and, for $k \in [\mathcal{K}]$, 
    \begin{equation*}
        c_k(x_0) := \frac{\E_P\big\{\kappa_{h,\lambda}(X-x_0)\big(f(X)-\hat{f}^{\mathrm{LP}}_k(X)\big)\biggiven\hat{f}_k^{\mathrm{LP}}\big\}}{\E_P\big(\kappa_{h,\lambda}(X-x_0)\big)}.
    \end{equation*}
    Let $a_n:=\inf\cH_n$ and $b_n:=\sup\cH_n$. Note that by Assumption~\ref{ass:h} these sequences satisfy $0 < a_n\leq b_n$, $b_n\to0$ and $na_n^d\to \infty$.
    
    For $k\in[\mathcal{K}]$, define functions on $\R^d$ given by
    \begin{equation*}
        \hat\varphi_k(\cdot):=\iota_h(\cdot-x_0)-\hat\mu_k(x_0)\kappa_{h,\lambda}(\cdot-x_0),
    \end{equation*}
    where $\hat{\mu}_k$ is defined in Algorithm~\ref{alg:outrigger}, and the analogous population level version
    \begin{equation*}
        \varphi(\cdot):=\iota_h(\cdot-x_0)-\mu(x_0)\kappa_{h,\lambda}(\cdot-x_0).
    \end{equation*}
Finally, let 
\begin{align*}
\varrho_i := \ind_{\innerr}(X_i)Q_h(X_i-x_0)^\top(\hat\theta-\theta_0)
        &-\ind_{\innerr}(X_i)e_0(X_i) \\
        &+\bigl(\hat{f}^{\mathrm{LP}}_k(X_i)-f(X_i)+\hat{c}_k(x_0)\bigr)\ind_{\outerr}(X_i)
        \end{align*}
        for $k \in [\mathcal{K}]$ and $i \in \mathcal{I}_k$.  Then, by a Taylor expansion with the mean value form of the remainder, there exists $\boldsymbol{\tau}=(\tau_i)_{i\in[n]} \in[0,1]^n$ such that $\hat\theta$ in Algorithm~\ref{alg:outrigger} satisfies
\begin{align}\label{eq:est-eqn}
      0 &=  \frac{1}{n}\sum_{k=1}^{\mathcal{K}}\sum_{i\in\cI_k}\hat\varphi_k(X_i)\hat\rho_k\Big(Y_i-\ind_{\innerr}(X_i)Q_h(X_i \!-\! x_0)^\top\hat\theta-\big(\hat{f}^{\mathrm{LP}}_k(X_i)+\hat{c}_k(x_0)\big)\ind_{\outerr}(X_i)\Biggiven X_i\Big) \nonumber \\
      &= \frac{1}{n}\sum_{k=1}^{\mathcal{K}}\sum_{i\in\cI_k}\hat\varphi_k(X_i)\hat\rho_k(\varepsilon_i - \varrho_i\given X_i) \nonumber \\
      &= \frac{1}{n}\sum_{k=1}^{\mathcal{K}}\sum_{i\in\cI_k}\hat\varphi_k(X_i)\hat\rho_k(\varepsilon_i\given X_i) - \frac{1}{n}\sum_{k=1}^{\mathcal{K}}\sum_{i\in\cI_k}\hat\varphi_k(X_i)\hat\rho_k'(\varepsilon_i-\tau_i\varrho_i\given X_i)\varrho_i.
    \end{align}
    Rearranging, we obtain 
    \begin{align}
    \label{eq:gam-gam}
        \hat{\Omega}(\boldsymbol{\tau})(\hat\theta-\theta_0)
        =
        \hat\phi_{\RN{1}} + \hat\phi_{\RN{2}}(\boldsymbol{\tau}) + \hat\phi_{\RN{3}}(\boldsymbol{\tau}) + \hat\phi_{\RN{4}}(\boldsymbol{\tau}),
    \end{align}
    where
    \begin{align}
        \hat\Omega(\boldsymbol{\tau}) &:= \frac{1}{n}\sum_{k=1}^{\mathcal{K}}\sum_{i\in\cI_k}{\pi_h}(X_i-x_0)\hat\rho_k'(\varepsilon_i-\tau_i\varrho_i\given X_i),
        \\
        \hat\phi_{\RN{1}} &:= \frac{1}{n}\sum_{k=1}^{\mathcal{K}}\sum_{i\in\cI_k}\hat\varphi_k(X_i)\hat\rho_k(\varepsilon_i\given X_i),
        \label{eq:phi-1}
        \\
        \hat\phi_{\RN{2}}(\boldsymbol{\tau}) &:= \frac{1}{n}\sum_{k=1}^{\mathcal{K}}\sum_{i\in\cI_k}\iota_h(X_i-x_0)e_0(X_i)\hat\rho_k'(\varepsilon_i-\tau_i\varrho_i\given X_i),
        \label{eq:phi-2}
        \\
        \hat\phi_{\RN{3}}(\boldsymbol{\tau}) &:= \frac{1}{n}\sum_{k=1}^{\mathcal{K}}\sum_{i\in\cI_k}\bigl\{\hat\rho_k'(\varepsilon_i-\tau_i\varrho_i\given X_i)-\hat\rho_k'(\varepsilon_i\given X_i)\bigr\}\kappa_{h,\lambda}(X_i-x_0)\big(\hat{f}^{\mathrm{LP}}_k(X_i)-f(X_i)+\hat{c}_k(x_0)\big)\hat{\mu}_k(x_0)
        \label{eq:phi-3}
        \\
        \hat\phi_{\RN{4}} &:= \frac{1}{n}\sum_{k=1}^{\mathcal{K}}\sum_{i\in\cI_k}\hat\rho_k'(\varepsilon_i\given X_i)\kappa_{h,\lambda}(X_i-x_0)\big(\hat{f}^{\mathrm{LP}}_k(X_i)-f(X_i)+\hat{c}_k(x_0)\big)\hat\mu_k(x_0),
        \label{eq:phi-4}
    \end{align}
    and where in the above rearrangement we used the identities
    \begin{align*}
        \hat\varphi_k(\cdot)\ind_{\innerr}(\cdot) &=\iota_h(\cdot-x_0), \\
        \hat\varphi_k(\cdot)\ind_{\outerr}(\cdot)&=-\hat\mu_k(x_0)\kappa_{h,\lambda}(\cdot-x_0),\\
        \iota_h(\cdot)Q_h(\cdot)^\top &={\pi_h}(\cdot).
    \end{align*}
    By Lemma~\ref{lem:omega2}, $\hat\Omega(\boldsymbol{\tau})$ is invertible with probability $1-\supoP(1)$. Also by Lemma~\ref{lem:omega2},
    \begin{align}
        \hat{f}(x_0)-f(x_0)
        &= 
        \mathrm{e}_1^\top(\hat\theta-\theta_0)
        \notag
        \\
        &= \mathrm{e}_1^\top\hat\Omega(\boldsymbol{\tau})^{-1}\big( \hat\phi_{\RN{1}} + \hat\phi_{\RN{2}}(\boldsymbol{\tau}) + \hat\phi_{\RN{3}}(\boldsymbol{\tau}) + \hat\phi_{\RN{4}}(\boldsymbol{\tau}) \big)
        \notag
        \\
        &= \mathrm{e}_1^\top\hat\Omega(\boldsymbol{\tau})^{-1}\hat\phi_{\RN{2}}(\boldsymbol{\tau}) + (1+\op(1))\,\mathrm{e}_1^\top\Omega^{-1}\big(\hat\phi_{\RN{1}}+\hat\phi_{\RN{3}}(\boldsymbol{\tau})+\hat\phi_{\RN{4}}(\boldsymbol{\tau})\big)
        ,
        \label{eq:first-expansion}
    \end{align}
    where~$\Omega:=p_0s_1(K)\zeta_{\rho'}(x_0)$. 
    Then each error term may be dealt with in turn:
    \begin{enumerate}[label=(\roman*)]
        \item By Lemma~\ref{lem:phi-1},
        \begin{equation*}
            \mathrm{e}_1^\top\Omega^{-1}\hat\phi_{\RN{1}}
            =
            \mathrm{e}_1^\top\Omega^{-1}\phi_{\RN{1}}+\op\bigg(\frac{1}{\sqrt{nh^d}}+h^{\beta^*}\bigg)
            ,\qquad \phi_{\RN{1}} := \frac{1}{n}\sum_{i=1}^n\varphi(X_i)\rho(\varepsilon_i\given X_i).
        \end{equation*}
        \item By Lemma~\ref{lem:usual-bias},
        \begin{equation*}
            \mathrm{e}_1^\top\hat\Omega(\boldsymbol{\tau})^{-1}\hat\phi_{\RN{2}}(\boldsymbol{\tau}) = B(f,x_0,K,h)h^{\beta^*} + \op\big(h^{\beta^*}\big),
        \end{equation*}
        where 
        \begin{equation}\label{eq:B-def-beta01}
            B(f,x_0,K,h) := h^{-\beta^*}\int_{\mathcal{B}_0(1)}K(\nu)\bigl\{f(x_0+h\nu)-f(x_0)\bigr\}\,d\nu
        \end{equation}
        for $\beta_0^* = 0$ and
        \begin{align}
            &B(f,x_0,K,h):=h^{\beta_0^*-\beta^*} \sum_{\substack{\alpha\in\N_0^d:\\\|\alpha\|_1=\beta_0^*}}\frac{\beta_0^*}{\alpha!}\int_{\mathcal{B}_{0}(1)} K(\nu)\nu^\alpha  
            \notag
            \\
            &\hspace{6cm}\cdot\int_0^1(1-t)^{\beta_0^*-1}\bigl\{\partial^\alpha f(x_0+th\nu)-\partial^\alpha f(x_0)\bigr\} \,dt\,d\nu
            \label{eq:B-def}
        \end{align}
        otherwise.
        \item By Lemma~\ref{lem:phi-3},
        \begin{equation*}
            \mathrm{e}_1^\top\Omega^{-1}\hat\phi_{\RN{3}}(\boldsymbol{\tau}) = \op\bigg(\frac{1}{\sqrt{nh^d}}+h^{\beta^*}\bigg).
        \end{equation*}
        \item By Lemma~\ref{lem:phi-4},
        \begin{equation*}
            \mathrm{e}_1^\top\Omega^{-1}\hat\phi_{\RN{4}}
        =
        \frac{1}{n}\sum_{i=1}^n\kappa_{h,\lambda}(X_i-x_0)\varepsilon_i
        +
        \op\bigg(\frac{1}{\sqrt{nh^d}}+h^{\beta^*}\bigg).
        \end{equation*}
    \end{enumerate}
    Combining all of the above with~\eqref{eq:first-expansion},
    \begin{gather}
        \hat{f}(x_0)-f(x_0) = B(f,x_0,K,h)h^{\beta^*}+\frac{1}{n}\sum_{i=1}^nZ_{n,i}
        +\op\bigg(\frac{1}{\sqrt{nh^d}}+h^{\beta^*}\bigg),
        \label{eq:Zi-decomp}
        \\
        Z_{n,i} := \frac{1}{p_0\zeta_{\rho'}(x_0)}\mathrm{e}_1^\top\varphi(X_i)\rho(\varepsilon_i\given X_i)
         +
        \frac{1}{p_0}\kappa_{h,\lambda}(X_i-x_0)\varepsilon_i.
        \label{eq:Zi}
    \end{gather}
    By Lemma~\ref{lem:clt},
    \begin{equation}
        \lim_{n\to\infty}\sup_{P\in\cP}\sup_{x_0\in\cX}\sup_{h\in\cH_n}\sup_{t\in\R}\bigg| \Pr_P\biggl(\bigg(\frac{1}{nh^d}\bigg\{\frac{R_2(K)V_{P,x_0,\lambda}^{\mathrm{finite}}(\rho)}{p_X(x_0)}\bigg\}\bigg)^{-1/2}\bigg(\frac{1}{n}\sum_{i=1}^nZ_{n,i}\bigg)\leq t
        \biggr)- \Phi(t) \bigg| = 0,
    \end{equation}
    where
    \begin{equation*}
        V_{P,x_0,\lambda}^{\mathrm{finite}}(\rho) := V_P(\rho) + \frac{R_2(\kappa_\lambda)}{R_2(K)}\E_P\bigg\{\biggl(\frac{\rho(\varepsilon\given X)}{\zeta_{\rho'}(x_0)}-\varepsilon\bigg)^2\bigggiven X=x_0\biggr\}.
    \end{equation*}
    Finally, as $\lambda\to\infty$, $R_2(\kappa_\lambda)\to0$, thus 
    $V_{P,x_0,\lambda}^{\mathrm{finite}}(\rho) = V_P(\rho)+\op(1)$. 
    \end{proof}

    \begin{lemma}\label{lem:phi-1}
        Adopting the setup of Theorem~\ref{thm:decomp}, and with $\hat\phi_{\RN{1}}$ as defined in~\eqref{eq:phi-1},
        \begin{equation*}
            \mathrm{e}_1^\top\Omega^{-1}\hat\phi_{\RN{1}}
            =
            \mathrm{e}_1^\top\Omega^{-1}\phi_{\RN{1}}+\op\bigg(\frac{1}{\sqrt{nh^d}}+h^{\beta^*}\bigg),
        \end{equation*}
        where $\phi_{\RN{1}} := \frac{1}{n}\sum_{i=1}^n\varphi(X_i)\rho(\varepsilon_i\given X_i)$.
    \end{lemma}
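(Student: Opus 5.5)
Write $\hat\phi_{\RN{1}}-\phi_{\RN{1}}$ fold by fold as
\begin{align*}
\frac{1}{n}\sum_{k=1}^{\mathcal{K}}\sum_{i\in\cI_k}\varphi(X_i)(\hat\rho_k-\rho)(\varepsilon_i\given X_i)
&+\frac{1}{n}\sum_{k=1}^{\mathcal{K}}\bigl(\mu(x_0)-\hat\mu_k(x_0)\bigr)\sum_{i\in\cI_k}\kappa_{h,\lambda}(X_i-x_0)\rho(\varepsilon_i\given X_i)\\
&+\frac{1}{n}\sum_{k=1}^{\mathcal{K}}\bigl(\mu(x_0)-\hat\mu_k(x_0)\bigr)\sum_{i\in\cI_k}\kappa_{h,\lambda}(X_i-x_0)(\hat\rho_k-\rho)(\varepsilon_i\given X_i),
\end{align*}
and show each term is $\op\bigl((nh^d)^{-1/2}+h^{\beta^*}\bigr)$. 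Since $\mathcal{K}$ is bounded, it suffices to work with a single fold $k$, conditioning on the out-of-fold data, which determine $\hat\rho_k$ and $\hat\mu_k$. These are then independent of the in-fold sample $(X_i,\varepsilon_i)_{i\in\cI_k}$.

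\textbf{Term 2.} Since $\E_P\{\rho(\varepsilon\given X)\given X\}=0$, the inner sum has conditional mean zero. Its variance is of order $n\lambda^{0}h^{-d}R_2(\kappa_\lambda)C_1$, using $\zeta_{\rho^2}\le C_1^{2/(2+\delta)}$ and bounded $p_X$. Thus the $n^{-1}$-normalised sum is $O((nh^d)^{-1/2})$. We also need $\hat\mu_k(x_0)-\mu(x_0)=\op(1)$, which follows from a ratio-of-sample-means argument with denominators bounded below by $c_X$ (the order-$\ge1$ kernel integrates to 1). The product is then $\op((nh^d)^{-1/2})$. Term 3 is handled the same way, once the mean part of term 1's argument is applied to it.

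\textbf{Term 1 (the crux).} Split $(\hat\rho_k-\rho)(\varepsilon_i\given X_i)=\cE_{\hat\rho,k}(X_i)+\xi_i$, where $\xi_i$ has conditional mean zero given $X_i$ and $\hat\rho_k$.

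\emph{Fluctuation part.} The $\xi_i$ part has conditional variance bounded by
\[
\frac{1}{n}\E\bigl\{\|\varphi(X)\|^2\cE_{\hat\rho,k,\mathrm{sq}}(X)\bigr\}\lesssim \frac{1}{nh^d}\sup_x\cE_{\hat\rho,k,\mathrm{sq}}(x)\bigl(R_2(K)+|\mu|^2R_2(\kappa_\lambda)\bigr),
\]
which is $\op(1/(nh^d))$ by (A1.4)(ii). Chebyshev's inequality, applied conditionally, then gives $\op((nh^d)^{-1/2})$.

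\emph{Mean part.} The mean part contributes the deterministic (given $\hat\rho_k$) quantity $\E_P\{\varphi(X)\cE_{\hat\rho,k}(X)\}$ plus a centred fluctuation controlled as above. This is the main obstacle, where the outrigger must do its work. Because $\E\varphi(X)=0$, we may write
\[
\E_P\{\varphi(X)\cE_{\hat\rho,k}(X)\}=\E_P\bigl[\varphi(X)\{\cE_{\hat\rho,k}(X)p_X(X)/p_X(X)-\cE_{\hat\rho,k}(x_0)\}\bigr].
\]
Change variables $X=x_0+h\nu$ to get
\[
\int\bigl\{K(\nu)Q(\nu)-\mu\kappa_\lambda(\nu)\bigr\}\bigl\{g(x_0+h\nu)-g(x_0)\bigr\}\,d\nu,
\]
where $g:=\cE_{\hat\rho,k}\,p_X$, after absorbing the $p_X(x_0)$ normalisation in $\mu$. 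By (A1.4)(i) and (A1.1), $g$ is Hölder of order $\beta_\cE\wedge\beta_X$ with constant of order $L_\cE$ plus the contribution from $p_X$. Since $\cE_{\hat\rho,k}$ is $\op(1)$ uniformly by (A1.4)(ii), the $p_X$ contribution is harmless. Taylor-expand $g$ to order $\lceil\beta_\cE\wedge\beta_X\rceil-1$. The polynomial terms vanish up to the kernel orders in (A2.1)–(A2.2); terms hitting the $Q(\nu)$ monomials for $K$ appear matched on both sides because $\mu$ is chosen as a ratio of the same expectations. The remainder is bounded by
\[
L_\cE(\lambda h)^{\beta_\cE\wedge\beta_X}\Bigl(\int|K|+|\mu|\sup_\lambda\int|\kappa_\lambda|\Bigr),
\]
which is $o(h^{\beta^*})$ by (A2.3). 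The care needed here is in verifying that the Taylor terms cancel exactly. Where they do not, we will bound them using the uniform $\op(1)$ size of $\cE_{\hat\rho,k}$ together with the extra factor $h$.

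Finally, multiply by the fixed $\mathrm{e}_1^\top\Omega^{-1}$. Its norm is bounded because $|\zeta_{\rho'}|=\zeta_{\rho^2}\ge c_1$, $p_0\ge c_X$ and $s_1(K)$ is invertible. All bounds above are uniform over $\cP,\cX,\cH,\Lambda$, so the conclusion follows.
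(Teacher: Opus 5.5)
Your three-term decomposition, the fold-wise conditioning, the treatment of Term~2 and the fluctuation bound in Term~1 all agree with the paper's proof. The gap is in the mean part of Term~1.

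You bound the vector $\E_P\{\varphi(X)\cE_{\hat\rho,k}(X)\given\hat\rho_k\}$ and apply $\mathrm{e}_1^\top\Omega^{-1}$ only at the end, through its norm. To justify this you assert that the Taylor terms hitting $K(\nu)Q(\nu)$ are ``matched'' because $\mu(x_0)$ is a ratio of the same expectations. That is false. The choice of $\mu(x_0)$ only enforces $\E_P\varphi(X)=0$, and moments $\int K(\nu)\nu^{\alpha+\alpha'}\,d\nu$ with $\|\alpha+\alpha'\|_1>p$ do not vanish.

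For example, take $d=1$ and $p=1$. The second coordinate then contains $h\,\cE_{\hat\rho,k}'(x_0)\,p_0\int K(\nu)\nu^2\,d\nu$. The outrigger part cannot cancel it, because there the second coordinate of $\mu(x_0)$ is $O(h)$ and it multiplies an integral that is $O\bigl(L_\cE(\lambda h)^{\beta_\cE\wedge\beta_X}\bigr)$. Since no rate is assumed for $\hat\rho_k$, this term need not be $o_P(h^{\beta^*})$ when $\beta^*=2$. Your fallback, using the $o_P(1)$ size of $\cE_{\hat\rho,k}$ with an extra factor of $h$, gives only $o_P(h)$ here. So the argument fails whenever $\beta^*>1$.

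The missing observation is that $K$ has order $p+1$, so $u(K)=\mathrm{e}_1$. Hence $s_1(K)\mathrm{e}_1=\mathrm{e}_1$, and $\mathrm{e}_1^\top\Omega^{-1}\iota_h(x-x_0)=K_h(x-x_0)/\{p_0\zeta_{\rho'}(x_0)\}$. You must project first. Then only two scalar kernels appear:
\begin{itemize}
\item $K$, with vanishing moments of orders $1,\ldots,p$;
\item $\kappa_\lambda$, with vanishing moments of orders $1,\ldots,\lceil\beta_\cE\wedge\beta_X\rceil-1\leq p$.
\end{itemize}
Every polynomial Taylor term therefore vanishes exactly. Only the remainder $O\bigl(L_\cE(\lambda h)^{\beta_\cE\wedge\beta_X}\bigr)=o(h^{\beta^*})$ survives.

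There are two smaller points.

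First, your centring identity is not exact. The quantities $\E_P[\varphi(X)\{\cE_{\hat\rho,k}(X)-\cE_{\hat\rho,k}(x_0)\}]$ and $\int\{K(\nu)Q(\nu)-\mu(x_0)\kappa_\lambda(\nu)\}\{g(x_0+h\nu)-g(x_0)\}\,d\nu$ differ by $\cE_{\hat\rho,k}(x_0)\,p_0\{u(K)-\mu(x_0)\}$. This is because $\E_P\varphi(X)=0$ says the $p_X(x_0+h\nu)$-weighted integral of $K(\nu)Q(\nu)-\mu(x_0)\kappa_\lambda(\nu)$ vanishes, not the unweighted one. After projecting onto $\mathrm{e}_1$ the discrepancy is $o_P(1)\cdot O\bigl((\lambda h)^{\beta_\cE\wedge\beta_X}\bigr)$ and so is small. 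The clean route, however, is the paper's: expand $\{\cE_{\hat\rho,k}-\cE_{\hat\rho,k}(x_0)\}p_X$. This function vanishes at $x_0$ and its H\"older constants are multiples of $L_\cE$, so (A2.3) applies directly.

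Second, in Term~3, $\hat\mu_k(x_0)-\mu(x_0)=o_P(1)$ is not enough. Since $\int\kappa_\lambda=1$ there is no cancellation, and the inner average has conditional mean of order $p_0\,\cE_{\hat\rho,k}(x_0)$, which is merely $o_P(1)$. You need $\hat\mu_k(x_0)-\mu(x_0)=O_P\bigl((nh^d)^{-1/2}\bigr)$. Your ratio-of-means argument yields this once you keep track of the rate.
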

    \begin{proof}
        We decompose
        \begin{align}
            \hat\phi_{\RN{1}} &= \frac{1}{n}\sum_{i=1}^n
            \varphi(X_i)\rho(\varepsilon_i\given X_i) + \sum_{k=1}^{\mathcal{K}}\frac{1}{n}\sum_{i\in\cI_k}\big(\hat\varphi_k(X_i)-\varphi(X_i)\big)\rho(\varepsilon_i\given X_i)
            \notag
            \\
            &\hspace{1cm} +
            \sum_{k=1}^{\mathcal{K}}\frac{1}{n}\sum_{i\in\cI_k}\varphi(X_i)(\hat\rho_k-\rho)(\varepsilon_i\given X_i)
            +
            \sum_{k=1}^{\mathcal{K}}\frac{1}{n}\sum_{i\in\cI_k}\big(\hat\varphi_k(X_i)-\varphi(X_i)\big)(\hat\rho_k-\rho)(\varepsilon_i\given X_i)
            \notag
            \\
            &=\underbrace{\frac{1}{n}\sum_{i=1}^n
            \varphi(X_i)\rho(\varepsilon_i\given X_i)}_{=:\phi_{\RN{1}}} - \sum_{k=1}^{\mathcal{K}}\underbrace{\big(\hat\mu_k(x_0)-\mu(x_0)\big)\frac{1}{n}\sum_{i\in\cI_k}\kappa_{h,\lambda}(X_i-x_0)\rho(\varepsilon_i\given X_i)}_{=:\hat\phi_{\RN{1}.\mathrm{i}.k}}
            \notag
            \\
            &\hspace{0.5cm}
            +
            \sum_{k=1}^{\mathcal{K}}\underbrace{\frac{1}{n}\sum_{i\in\cI_k}\varphi(X_i)(\hat\rho_k\!-\!\rho)(\varepsilon_i\given X_i)}_{=:\hat\phi_{\RN{1}.\mathrm{ii}.k}}
            \!-\!
            \sum_{k=1}^{\mathcal{K}}\underbrace{\big(\hat\mu_k(x_0)\!-\!\mu(x_0)\big)\frac{1}{n}\sum_{i\in\cI_k}\kappa_{h,\lambda}(X_i\!-\!x_0)(\hat\rho_k\!-\!\rho)(\varepsilon_i\given X_i)}_{=:\hat\phi_{\RN{1}.\mathrm{iii}.k}}.
            \label{eq:phi-1-decomp}
        \end{align}
        We show each of the terms $\hat{\phi}_{\RN{1}.\mathrm{i}.k}, \hat{\phi}_{\RN{1}.\mathrm{ii}.k}, \hat{\phi}_{\RN{1}.\mathrm{iii}.k}$ is $\supoP(1)$ in turn. Define $\eta:=\Omega^{-1}\mathrm{e}_1 \in \mathbb{R}^{\bar{p}}$.
        
        \medskip\noindent{\bf Term $\boldsymbol{\hat\phi_{\RN{1}.\mathrm{i}.k}}$:} 
        For $n$ large enough that $\mathcal{X} + b_n\mathcal{B}_0(1) \subseteq \mathcal{X}^\circ$,
        \begin{align}
                        \E_P\bigg\{\bigg(\frac{1}{n}\sum_{i\in\cI_k}&\kappa_{h,\lambda}(X_i -x_0)\rho(\varepsilon_i\given X_i)\bigg)^2\bigg\}
                        \notag
            \\
            &\leq
            \E_P\bigg\{\frac{1}{n^2}\sum_{i\in\cI_k}\kappa_{h,\lambda}^2(X_i-x_0)\rho^2(\varepsilon_i\given X_i)\bigg\}
            +
            \biggl(\underbrace{\E_P\bigg\{\frac{1}{n}\sum_{i\in\cI_k}\kappa_{h,\lambda}(X_i-x_0)\rho(\varepsilon_i\given X_i)\bigg\}}_{=0\text{ as }\E_P\{\rho(\varepsilon\given X)\given X\}=0}\biggr)^2
            \label{eq:thmdecompext-1}
            \\
            &\leq\frac{1}{n}\E_P\big(\kappa_{h,\lambda}^2(X-x_0)\rho^2(\varepsilon\given X)\big)
            \notag
            \\
            &=
            \frac{1}{nh^d}\int_{\mathcal{B}_{0}(1)} \kappa_{\lambda}^2(\nu) \zeta_{\rho^2}(x_0+h\nu)p_X(x_0+h\nu) \, d\nu
            \notag
            \\
            &\leq \frac{C_1^{2/(2+\delta)}C_X}{nh^d}R_2(\kappa_\lambda) = \supOP\bigg(\frac{R_2(\kappa_\lambda)}{nh^d}\bigg)
            .
            \notag
        \end{align}
        Thus $\frac{1}{n}\sum_{i\in\cI_k}\kappa_{h,\lambda}(X_i-x_0)\rho(\varepsilon_i\given X_i)=\supOP\big(\frac{1}{\sqrt{nh^d}}R_2^{1/2}(\kappa_\lambda)\big)$, and so together with Lemma~\ref{lem:mu(x0)},
        \begin{equation*}
            \eta^\top\hat\phi_{\RN{1}.\mathrm{i}.k} = \supOP\bigg(\frac{R_2^{1/2}(\kappa_\lambda)}{\sqrt{nh^d}}\bigg)\supOP\bigg(\frac{1}{\sqrt{nh^d}}\bigg)
            = \supOP\bigg(\frac{R_2^{1/2}(\kappa_\lambda)}{nh^d}\bigg)
            =\op\bigg(\frac{1}{\sqrt{nh^d}}\bigg).
        \end{equation*}
    
    \medskip\noindent{\bf Term $\boldsymbol{\hat\phi_{\RN{1}.\mathrm{ii}.k}}$:} 
    We have
        \begin{align}
            \E_P\big\{&(\eta^\top\hat\phi_{\RN{1}.\mathrm{ii}.k})^2\biggiven\hat\rho_k\big\}
            \notag
            \\
            &=
            \frac{1}{n^2}\sum_{i\in\cI_k}\E_P\big\{\bigl(\eta^\top\varphi(X_i)\bigr)^2(\hat\rho_k-\rho)^2(\varepsilon_i\given X_i)\biggiven\hat\rho_k\big\}
            \notag
            \\
            &\qquad\qquad+
            \frac{1}{n^2}\sum_{i\in\cI_k}\sum_{j\in\cI_k\setminus\{i\}}\E_P\big\{\eta^\top\varphi(X_i)(\hat\rho_k-\rho)(\varepsilon_i\given X_i)\biggiven\hat\rho_k\big\}\E_P\big\{\eta^\top\varphi(X_j)(\hat\rho_k-\rho)(\varepsilon_j\given X_j)\biggiven\hat\rho_k\big\}
            \notag
            \\
            &\leq
            \frac{1}{n}\E_P\big\{\bigl(\eta^\top\varphi(X)\bigr)^2(\hat\rho_k-\rho)^2(\varepsilon\given X)\biggiven\hat\rho_k\big\}
            +
            \Big(\E_P\big\{\eta^\top\varphi(X)(\hat\rho_k-\rho)(\varepsilon\given X)\biggiven\hat\rho_k\big\}\Big)^2
            \notag
            \\
            &=
            \frac{1}{n}\E_P\big\{\bigl(\eta^\top\varphi(X)\bigr)^2\,\cE_{\hat\rho,k,\text{sq}}(X)\biggiven\hat\rho_k\big\}
            +
            \Big(\E_P\big\{\eta^\top\varphi(X)\,\cE_{\hat\rho,k}(X)\biggiven\hat\rho_k\big\}\Big)^2.
            \label{eq:phi-1-2-decomp}
        \end{align}
        For the first term in~\eqref{eq:phi-1-2-decomp}, 
        \begin{align*}
            \E_P\bigl\{\bigl(\eta^\top\varphi(X)\bigr)^2\bigr\}
            &\leq
            \|\eta\|_2^2\,\E_P\bigl(\|\varphi(X)\|_2^2\bigr)
            \\
            &=
            \|\eta\|_2^2\,\E_P\Bigl(K_h^2(X-x_0)\|Q_h(X-x_0)\|_2^2+\|\mu(x_0)\|_2^2\,\kappa_{h,\lambda}^2(X-x_0)\Bigr)
            \\
            &\leq \frac{1}{c_X^2 c_1^2 \Lambda_{\min}^2\bigl(s_1(K)\bigr)}\Bigl(e^d\,\E_P\bigl(K_h^2(X-x_0)\bigr) + \|\mu(x_0)\|_2^2\,\E_P\bigl(\kappa_{h,\lambda}^2(X-x_0)\bigr)\Bigr)
            \\
            &=\supOP\bigl(h^{-d}\bigr)
            ,
        \end{align*}
        where we applied Lemma~\ref{lem:Q} in the penultimate line. 
        Thus
        \begin{align}
            \E_P\big\{\bigl(\eta^\top\varphi(X)\bigr)^2\,\cE_{\hat\rho,k,\text{sq}}(X)\biggiven\hat\rho_k\big\}
            &\leq
            \E_P\bigl\{(\eta^\top\varphi(X))^2\bigr\}\sup_{x\in\cX+\mathcal{B}_0(b_n)}\cE_{\hat\rho,k,\mathrm{sq}}(x)
            =\op\big(h^{-d}\big).
            \label{eq:phi-1-2-var}
        \end{align}
        For the second term in~\eqref{eq:phi-1-2-decomp}, define $\bar\cE_{\hat\rho,k}(x):=\cE_{\hat\rho,k}(x)-\cE_{\hat\rho,k}(x_0)$. 
        We remark that in the special case where $L_\cE=0$ (which holds if $\varepsilon\indep X$), we simply obtain $\bar\cE_{\hat\rho,k}(x)=0$ and so $\cE_{\hat\rho,k}(\cdot)$ is constant, which together with $\E_P\{\varphi(X)\}=0$ means that the second term in~\eqref{eq:phi-1-2-decomp} is zero. 
        In general, however, we accommodate  $\varepsilon\not\hspace{-0.1em}\indep X$ (i.e.~$L_\cE > 0$). 
        
        Observing that $\E_P\{\varphi(X)\}=0$ by definition of $\mu(\cdot)$, we have
        \begin{align*}
            \E_P\big\{\eta^\top\varphi(X)&\cE_{\hat\rho,k}(X)\biggiven\hat\rho_k\big\}
            \\
            &=\E_P\big\{\eta^\top\varphi(X)\bar\cE_{\hat\rho,k}(X)\biggiven\hat\rho_k\big\}
            \\
            &=
            \eta^\top\E_P\bigl\{\iota_h(X-x_0)\bar\cE_{\hat\rho,k}(X)\biggiven\hat\rho_k\bigr\}-\eta^\top\mu(x_0)\E_P\bigl\{\kappa_{h,\lambda}(X-x_0)\bar\cE_{\hat\rho,k}(X)\biggiven\hat\rho_k\bigr\}
            \\
            &=
            \frac{1}{p_0\zeta_{\rho'}(x_0)}\int_{\mathcal{B}_{0}(1)} K(\nu)\bar\cE_{\hat\rho,k}(x_0+h\nu)p_X(x_0+h\nu) \, d\nu \\
            &\hspace{3cm}
            - \eta^\top\mu(x_0)\int_{\mathcal{B}_{0}(\lambda)\setminus\mathcal{B}_{0}(1)}\kappa_{\lambda}(\nu)\bar\cE_{\hat\rho,k}(x_0+h\nu)p_X(x_0+h\nu) \, d\nu
            \\
            &=
            \frac{1}{p_0\zeta_{\rho'}(x_0)}\int_{\mathcal{B}_{0}(1)} K(\nu)\bar\cE_{\hat\rho,k}(x_0+h\nu)p_X(x_0+h\nu)d\nu \\
            &\hspace{3cm}
            - \eta^\top\mu(x_0)\int_{\mathcal{B}_{0}(\lambda)\setminus\mathcal{B}_{0}(1)}\kappa_{\lambda}(\nu)\bar\cE_{\hat\rho,k}(x_0+h\nu)p_X(x_0+h\nu) \, d\nu,
        \end{align*}
        where the third equality makes use of the fact that $s_1(K)\mathrm{e}_1=\mathrm{e}_1$ for our order $p+1$ kernel, so that 
        \begin{equation*}
            \eta^\top Q(\nu)
            =
            \frac{\mathrm{e}_1^\top s_1(K)^{-1}Q(\nu)}{p_0\zeta_{\rho'}(x_0)}
            =
            \frac{\mathrm{e}_1^\top Q(\nu)}{p_0\zeta_{\rho'}(x_0)}
            =
            \frac{1}{p_0\zeta_{\rho'}(x_0)}.
        \end{equation*}

        Define $\bar\beta:=\beta_X\wedge\beta_\cE$ and  $\bar\beta_0:=\ceil{\bar\beta}-1$. Then $\bar\cE_{\hat\rho,k}p_X \in \cH(\bar\beta,\bar{L})$ on $\mathcal{X}^\circ$ for some $\bar{L}\in(0,\infty)$ that depends only on $L_X,L_\cE,\beta_X,\beta_\cE$. 
        Moreover, $\bar\cE_{\hat\rho,k}(x_0)p_X(x_0)=0$.  By a Taylor expansion, for any $x\in\R^d$, there exists $t_x\in[0,1]$ such that
        \begin{align*}
            \bar\cE_{\hat\rho,k}(x)p_X(x)
            &=
            \sum_{\substack{\alpha\in\N_0^d:\\1\leq\|\alpha\|_1\leq\bar\beta_0}}\frac{1}{\alpha!}\big(\partial^\alpha(\bar\cE_{\hat\rho,k}p_X)(x_0)\big)(x-x_0)^{\alpha}
            \\
            &\hspace{1cm}
            +
            \sum_{\substack{\alpha\in\N_0^d:\\\|\alpha\|_1=\bar\beta_0}}\frac{1}{\alpha!}
            \Big(\partial^\alpha\bigl(\bar\cE_{\hat\rho,k}p_X)\bigl(x_0+t_x(x-x_0)\bigr)-\partial^\alpha(\bar\cE p_X)(x_0)\Bigr)(x-x_0)^{\alpha}.
        \end{align*}
        Then since $K$ is a kernel of order $p+1\geq\ceil{\bar{\beta}}$, and making use of~\eqref{eq:multinom-expansion},
        \begin{align*}
            \bigg|\int_{\mathcal{B}_{0}(1)} &K(\nu)\bar\cE_{\hat\rho,k}(x_0+h\nu)p_X(x_0+h\nu) \, d\nu\bigg|
            \\
            &= 
h^{\bar{\beta}_0}\biggl|\sum_{\substack{\alpha\in\N_0^d:\\\|\alpha\|_1=\bar\beta_0}}\frac{1}{\alpha!}
            \int_{\mathcal{B}_{0}(1)} K(\nu)\big\{\partial^\alpha(\bar\cE_{\hat\rho,k}p_X)(x_0+t_{x_0+h\nu}h\nu)-\partial^\alpha(\bar\cE_{\hat\rho,k}p_X)(x_0)\big\}
            \nu^{\alpha} \, d\nu\biggr|
            \\
            &\leq \bar{L}\,h^{\bar{\beta}_0}\sum_{\substack{\alpha\in\N_0^d:\\\|\alpha\|_1=\bar\beta_0}}\frac{1}{\alpha!}
            \int_{\mathcal{B}_{0}(1)} |K(\nu)\nu^\alpha|\cdot\|h\nu\|^{\bar{\beta}-\bar{\beta}_0} \,
            d\nu 
            \leq \frac{\bar{L}R_1(K)d^{\bar{\beta}_0}}{\bar{\beta}_0!}h^{\bar{\beta}}
            = \supOP\big(h^{\bar{\beta}}\big).
        \end{align*}
        Furthermore, by similar arguments, and since 
        $\kappa_{\lambda}$ is a kernel of order $\ceil{\bar\beta}$, 
        it follows that
        \begin{align*}
            \bigg|\int_{\mathcal{B}_{0}(\lambda)\setminus\mathcal{B}_{0}(1)} \kappa_{\lambda}(\nu)&\bar\cE_{\hat\rho,k}(x_0+h\nu)p_X(x_0+h\nu) \, d\nu\bigg|
            \\
            &\leq 
\bar{L}\,h^{\bar{\beta}}\sum_{\substack{\alpha\in\N_0^d:\\\|\alpha\|_1=\bar\beta_0}}\frac{1}{\alpha!}
            \int_{\mathcal{B}_{0}(\lambda)\setminus\mathcal{B}_{0}(1)} |\kappa_{\lambda}(\nu)\nu^\alpha|\cdot\|\nu\|^{\bar{\beta}-\bar{\beta}_0} \, 
            d\nu \leq
            \frac{\bar{L}R_1(\kappa)d^{\bar{\beta}_0}}{\bar{\beta}_0!}(\lambda h)^{\bar{\beta}}.
        \end{align*}
        Therefore
        \begin{equation*}
            \E_P\big\{\eta^\top\varphi(X)\,\cE_{\hat\rho,k}(X)\biggiven\hat\rho_k\big\} = \supOP\big((\lambda h)^{\bar{\beta}}\big).
        \end{equation*}
        Combining this with~\eqref{eq:phi-1-2-decomp}~and~\eqref{eq:phi-1-2-var},
        \begin{equation*}
            \E_P\big\{(\eta^\top\hat\phi_{\RN{1}.\mathrm{ii}.k})^2\biggiven\hat\rho_k\big\} 
            = \op\bigg(\frac{1}{nh^d}\bigg) + \supOP\big((\lambda h)^{2\bar{\beta}}\big)
            =
            \op\bigg(\frac{1}{nh^d} + h^{2\beta^*}\bigg).
        \end{equation*}

    \medskip\noindent{\bf Term $\boldsymbol{\hat\phi_{\RN{1}.\mathrm{iii}.k}}$:} First,
    \begin{align*}
            \bigg|\E_P\bigg\{\frac{1}{n}\sum_{i\in\cI_k}&\kappa_{h,\lambda}(X_i-x_0)(\hat\rho_k-\rho)(\varepsilon_i\given X_i)\bigggiven\hat\rho_k,(X_i)_{i\in\cI_k}\bigg\}\bigg|
            \\
            &=\bigg|\frac{1}{n}\sum_{i\in\cI_k}\kappa_{h,\lambda}(X_i-x_0)\cE_{\hat\rho,k}(X_i)\bigg|
            \\
            &\leq
            \frac{1}{n}\sum_{i\in\cI_k} \bigl|\kappa_{h,\lambda}(X_i-x_0)\bigr|\cdot|\cE_{\hat\rho,k}(x_0)|
            +
            \frac{1}{n}\sum_{i\in\cI_k}|\kappa_{h,\lambda}(X_i-x_0)|\cdot\big|\cE_{\hat\rho,k}(X_i)-\cE_{\hat\rho,k}(x_0)\big|
            \\
            &\leq
            \frac{1}{n}\sum_{i\in\cI_k} \bigl|\kappa_{h,\lambda}(X_i-x_0)\bigr|\cdot|\cE_{\hat\rho,k}(x_0)|
            +
            L_{\cE}\bigg(\frac{1}{n}\sum_{i\in\cI_k}|\kappa_{h,\lambda}(X_i-x_0)|\cdot\|X_i-x_0\|^{\beta_{\cE} \wedge 1}\bigg)
            \\
            &\leq 
            \bigl\{
            |\cE_{\hat\rho,k}(x_0)| + L_{\cE}(\lambda h)^{\beta_{\cE} \wedge 1}
            \bigr\}
            \frac{1}{n}\sum_{i\in\cI_k} \bigl|\kappa_{h,\lambda}(X_i-x_0)\bigr|
            .
        \end{align*}
        Now for $n$ large enough that $\mathcal{X} + \lambda b_n\mathcal{B}_0(1) \subseteq \mathcal{X}^\circ$,
        \begin{equation*}
            \E_P\bigg\{\frac{1}{n}\sum_{i\in\cI_k}\bigl|\kappa_{h,\lambda}(X_i-x_0)\bigr|\bigg\} \leq \E_P\bigl\{|\kappa_{h,\lambda}(X-x_0)|\bigr\} = \int_{\mathcal{B}_{0}(\lambda)\setminus\mathcal{B}_{0}(1)} |\kappa_{\lambda}(\nu)|p_X(x_0+h\nu) \, d\nu
            \leq
            C_XR_1(\kappa_\lambda),
        \end{equation*}
        and so $\frac{1}{n}\sum_{i\in\cI_k}\kappa_{h,\lambda}(X_i-x_0)=\supOP\big(R_1(\kappa_\lambda)\big)$. 
        Thus, together with Assumption~\ref{ass:DGM},
        \begin{equation*}
            \bigg|\E_P\bigg\{\frac{1}{n}\sum_{i\in\cI_k}\kappa_{h,\lambda}(X_i-x_0)(\hat\rho_k-\rho)(\varepsilon_i\given X_i)\bigggiven\hat\rho_k,(X_i)_{i\in\cI_k}\bigg\}\bigg| 
            =
            \op\bigl(R_1(\kappa_\lambda)\bigr),
        \end{equation*}
        and so
        \begin{equation*}
            \frac{1}{n}\sum_{i\in\cI_k}\kappa_{h,\lambda}(X_i-x_0)(\hat\rho_k-\rho)(\varepsilon_i\given X_i)=\op\bigl(R_1(\kappa_\lambda)\bigr).
        \end{equation*}
        Finally, then, we conclude by Lemma~\ref{lem:mu(x0)} that
        \begin{equation*}
            \hat\phi_{\RN{1}.\mathrm{iii}.k}
            =\supOP\bigg(\frac{1}{\sqrt{nh^d}}\bigg)\op\big(R_1(\kappa_\lambda)\big)
            =\op\bigg(\frac{R_1(\kappa_\lambda)}{\sqrt{nh^d}}\bigg)
            =\op\bigg(\frac{1}{\sqrt{nh^d}}\bigg),
        \end{equation*}
as required.
    \end{proof}

    \begin{lemma}\label{lem:mu(x0)}
        Adopting the setup of Theorem~\ref{thm:decomp}, for each $k \in [\mathcal{K}]$, 
        \begin{equation*}
            \hat\mu_k(x_0) - \mu(x_0) = \supOP\bigg(\frac{1}{\sqrt{nh^d}}\bigg).
        \end{equation*}
    \end{lemma}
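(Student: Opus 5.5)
Write $\hat\mu_k(x_0)=\hat A_k/\hat D_k$ and $\mu(x_0)=A/D$, where
\[
\hat A_k:=\frac{1}{|\cI_k^c|}\sum_{i\in\cI_k^c}\iota_h(X_i-x_0),\qquad \hat D_k:=\frac{1}{|\cI_k^c|}\sum_{i\in\cI_k^c}\kappa_{h,\lambda}(X_i-x_0),
\]
and $A:=\E_P\{\iota_h(X-x_0)\}$, $D:=\E_P\{\kappa_{h,\lambda}(X-x_0)\}$. The normalisation $|\cI_k^c|$ cancels in the ratio, so this is consistent with Algorithm~\ref{alg:outrigger}. We then use the algebraic identity
\[
\hat\mu_k-\mu=\frac{(\hat A_k-A)-\mu(\hat D_k-D)}{\hat D_k}.
\]
This reduces the lemma to four facts: (a) $\hat A_k-A=\supOP\bigl((nh^d)^{-1/2}\bigr)$; (b) $\hat D_k-D=\supOP\bigl((nh^d)^{-1/2}\bigr)$; (c) $\|\mu(x_0)\|$ is bounded uniformly; (d) $\hat D_k$ is bounded away from zero with probability $1-\supoP(1)$.

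For (a) and (b), Chebyshev's inequality componentwise suffices, since $|\cI_k^c|\geq n/2$ because $\mathcal{K}\ge2$ is bounded. For $n$ large enough that $\mathcal{X}+\lambda b_n\mathcal{B}_0(1)\subseteq\mathcal{X}^\circ$ (possible since $\lambda h\to0$ uniformly by~\eqref{ass:smoothness}), we have
\[
\Var\bigl(\kappa_{h,\lambda}(X-x_0)\bigr)\le \E_P\{\kappa_{h,\lambda}^2(X-x_0)\}=h^{-d}\int\kappa_\lambda^2(\nu)p_X(x_0+h\nu)\,d\nu\le C_XR_2(\kappa_\lambda)h^{-d}.
\]
Here $R_2(\kappa_\lambda)\le R_2(K)$ because $\lambda>\lambda_0(K)$ and $\lambda\mapsto R_2(\kappa_\lambda)$ is decreasing. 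Similarly, $\E_P\{K_h^2(X-x_0)\|Q_h(X-x_0)\|^2\}\le C_X e^d R_2(K)h^{-d}$, using Lemma~\ref{lem:Q} (or simply $\|Q(\nu)\|$ bounded on $\mathcal{B}_0(1)$) and boundedness of $K$. All constants here are uniform over $\cP,\cX,\cH,\Lambda$, so the variances of $\hat A_k,\hat D_k$ are $O(1/(nh^d))$, giving (a) and (b).

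For (c) and (d), change variables: $D=\int\kappa_\lambda(\nu)p_X(x_0+h\nu)\,d\nu$. Since $\kappa_\lambda$ integrates to one, $\sup_\lambda R_1(\kappa_\lambda)<\infty$, and $p_X$ is uniformly continuous (being Hölder on $\mathcal{X}^\circ$), we get $|D-p_X(x_0)|\le C_X'\sup_{\|\nu\|\le\lambda h}|p_X(x_0+\nu)-p_X(x_0)|\sup_\lambda R_1(\kappa_\lambda)\to0$ uniformly. Hence $D\ge c_X/2$ eventually. Likewise $A=\int K(\nu)Q(\nu)p_X(x_0+h\nu)\,d\nu$ is bounded by $C_XR_1(K)\sup_{\mathcal{B}_0(1)}\|Q\|$, which gives (c). Combining (b) with $D\ge c_X/2$ and $(nh^d)^{-1}\to0$ yields $\hat D_k\ge c_X/4$ with probability $1-\supoP(1)$, which is (d). Substituting (a)--(d) into the identity then gives the claimed rate.

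The main point requiring care is uniformity. The bounds must hold simultaneously over $P\in\cP$, $x_0\in\cX$ and $(h,\lambda)\in\cH_n\times\Lambda_n$, including the control of $D$ away from zero when $\lambda$ may diverge. This is handled by the uniform bound $\sup_{\lambda}R_1(\kappa_\lambda)<\infty$ in Assumption~\ref{ass:kappa}, together with the uniform convergence $\lambda h\to0$. Everything else is routine second-moment computation.
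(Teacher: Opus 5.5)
Your proof is correct and follows essentially the same route as the paper: separate second-moment (Chebyshev) bounds of order $1/(nh^d)$ for the numerator and denominator of $\hat\mu_k(x_0)$, using Lemma~\ref{lem:Q}, $R_2(\kappa_\lambda)\le R_2(K)$ and $p_X\le C_X$, which are then combined through the ratio. Your exact identity $\hat\mu_k-\mu=\{(\hat A_k-A)-\mu(\hat D_k-D)\}/\hat D_k$, together with the explicit bounds $D\ge c_X/2$ and $\|\mu(x_0)\|=O(1)$, is a cleaner and more careful version of the final step than the paper's multiplicative $\{1+\supOP(\cdot)\}$ manipulation of a vector-valued numerator.
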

    \begin{proof}
        We consider the numerator and denominator term in $\hat\mu_k(x_0)$ separately. For the numerator, by Lemma~\ref{lem:Q},
        \begin{multline*}
            \E_P\bigg\{\bigg\|\frac{1}{|\cI_k^c|}\sum_{i\in\cI_k^c}\iota_h(X_i-x_0)-\E_P\bigl\{\iota_h(X-x_0)\bigr\}\bigg\|_2^2\bigg\}
            \leq
            \frac{1}{|\cI_k^c|}\E_P\big\{\|\iota_h(X-x_0)\|_2^2\big\}
            \\
            =
            \frac{1}{|\cI_k^c|h^d}\int_{\mathcal{B}_0(1)} K^2(\nu)\|Q(\nu)\|_2^2\,p_X(x_0+h\nu) \, d\nu
            \leq
            \frac{C_Xe^dR_2(K)}{|\cI_k^c|h^d}
            =
            \supOP\bigg(\frac{1}{nh^d}\bigg),
        \end{multline*}
        and so
        \begin{equation*}
            \frac{1}{|\cI_k^c|}\sum_{i\in\cI_k^c}\iota_h(X_i-x_0) = \E_P\bigl\{\iota_h(X-x_0)\bigr\} + \supOP\bigg(\frac{1}{\sqrt{nh^d}}\bigg).
        \end{equation*}
        For the denominator,
        \begin{multline*}
            \E_P\bigg\{\bigg(\frac{1}{|\cI_k^c|}\sum_{i\in\cI_k^c}\kappa_{h,\lambda}(X_i-x_0)-\E_P\bigl\{\kappa_{h,\lambda}(X-x_0)\bigr\}\bigg)^2\bigg\}
            \leq
            \frac{1}{|\cI_k^c|}\E_P\big\{\kappa_{h,\lambda}^2(X-x_0)\big\}
            \\
            \leq
            \frac{1}{|\cI_k^c|h^d}\int_{\mathcal{B}_0(\lambda)\setminus\mathcal{B}_0(1)}\kappa_{\lambda}^2(\nu)p_X(x_0+h\nu) \, d\nu
            \leq
            \frac{C_X}{|\cI_k^c|h^d}R_2(\kappa_\lambda)
            =
            \supOP\bigg(\frac{R_2(\kappa_\lambda)}{nh^d}\bigg),
        \end{multline*}
        so
        \begin{equation}\label{eq:mu-or-c-den}
            \frac{1}{|\cI_k^c|}\sum_{i\in\cI_k^c}\kappa_{h,\lambda}(X_i-x_0) = \E_P\bigl\{\kappa_{h,\lambda}(X-x_0)\bigr\}+\supOP\bigg(\frac{R_2^{1/2}(\kappa_\lambda)}{\sqrt{nh^d}}\bigg).
        \end{equation}
        Therefore, using the fact that $\sup_{\lambda \geq \lambda_0(K)} R_2(\kappa_\lambda) = R_2(\kappa_{\lambda_0(K)}) = R_2(K) < \infty$, we have
        \[
            \hat\mu_k(x_0)=\frac{\E_P\{\iota_h(X-x_0)\}\big\{1+\supOP\big((nh^d)^{-1/2}\big)\big\}}{\E_P\{\kappa_{h,\lambda}(X-x_0)\}\big\{1+\supOP\big((nh^d)^{-1/2}R_2^{1/2}(\kappa_\lambda)\big)\big\}}=\mu(x_0)\big\{1+\supOP\big((nh^d)^{-1/2}\big)\big\}.
        \]
        Finally, as
        \begin{equation}\label{eq:mu(x_0)bound}
            \mu(x_0)=\frac{\int_{\mathcal{B}_0(1)} K(\nu)Q(\nu) \, d\nu}{\int_{\mathcal{B}_0(\lambda)\setminus\mathcal{B}_0(1)}\kappa_{\lambda}(\nu) \, d\nu}\big\{1+\op(1)\big\}
            =
            u(K)\big\{1+\op(1)\big\},
        \end{equation}
        we have
        \begin{equation*}
            \hat\mu_k(x_0)-\mu(x_0)=\supOP\bigg(\frac{1}{\sqrt{nh^d}}\bigg),
        \end{equation*}
        as required.
    \end{proof}

    \begin{lemma}\label{lem:c(x0)}
        Adopting the setup of Theorem~\ref{thm:decomp}, for each $k \in [\mathcal{K}]$, 
        \begin{equation*}
            \hat{c}_k(x_0) - c_k(x_0) = \supOP\bigg(\frac{1}{\sqrt{nh^d}}R_2^{1/2}(\kappa_\lambda)\bigg).
        \end{equation*}
    \end{lemma}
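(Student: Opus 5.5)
Write $\hat{c}_k(x_0)=\hat N_k/\hat D_k$, where
\[
\hat N_k:=\frac{1}{n_k}\sum_{i\in\cI_k}\kappa_{h,\lambda}(X_i-x_0)\bigl(Y_i-\hat f_k^{\mathrm{LP}}(X_i)\bigr),\qquad
\hat D_k:=\frac{1}{n_k}\sum_{i\in\cI_k}\kappa_{h,\lambda}(X_i-x_0).
\]
Similarly, $c_k(x_0)=N_k/D$ with $N_k:=\E_P\{\kappa_{h,\lambda}(X-x_0)(f(X)-\hat f_k^{\mathrm{LP}}(X))\given\hat f_k^{\mathrm{LP}}\}$ and $D:=\E_P\{\kappa_{h,\lambda}(X-x_0)\}$. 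The key structural point is that $\hat f_k^{\mathrm{LP}}$ is built from $\cI_k^c$, so conditionally on the out-of-fold data the summands indexed by $\cI_k$ are i.i.d. We handle the denominator and numerator separately and then combine them, mirroring the proof of Lemma~\ref{lem:mu(x0)}.

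\textbf{Denominator.} Exactly as in~\eqref{eq:mu-or-c-den} (with $\cI_k$ in place of $\cI_k^c$), we get $\hat D_k=D+\supOP\bigl(R_2^{1/2}(\kappa_\lambda)/\sqrt{nh^d}\bigr)$. Also $D=\int\kappa_\lambda(\nu)p_X(x_0+h\nu)\,d\nu=p_0\{1+o(1)\}\geq c_X/2$ for large $n$, uniformly, using $\lambda h\to0$, the continuity of $p_X$ and $\sup_\lambda R_1(\kappa_\lambda)<\infty$.

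\textbf{Numerator.} Split $Y_i-\hat f_k^{\mathrm{LP}}(X_i)=\varepsilon_i+\bigl(f(X_i)-\hat f_k^{\mathrm{LP}}(X_i)\bigr)$.

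The noise part $n_k^{-1}\sum_{i\in\cI_k}\kappa_{h,\lambda}(X_i-x_0)\varepsilon_i$ has mean zero and variance at most $n_k^{-1}\E\{\kappa_{h,\lambda}^2(X-x_0)\sigma_P^2(X)\}\le C_3^{2/(2+\delta)}C_XR_2(\kappa_\lambda)/(n_kh^d)$. Hence it is $\supOP\bigl(R_2^{1/2}(\kappa_\lambda)/\sqrt{nh^d}\bigr)$.

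For the pilot part, conditionally on $\hat f_k^{\mathrm{LP}}$, it is a centred average (after subtracting $N_k$) with conditional variance at most
\[
n_k^{-1}\E\bigl\{\kappa_{h,\lambda}^2(X-x_0)\bigl(f(X)-\hat f_k^{\mathrm{LP}}(X)\bigr)^2\bigmid \hat f_k^{\mathrm{LP}}\bigr\}\le \frac{C_XR_2(\kappa_\lambda)}{n_kh^d}\sup_{x\in\mathcal{B}_{x_0}(\lambda h)}\bigl(f-\hat f_k^{\mathrm{LP}}\bigr)^2(x).
\]
If the pilot is uniformly consistent on $\mathcal{B}_{x_0}(\lambda b_n)$, this contribution is even $\op\bigl(R_2^{1/2}(\kappa_\lambda)/\sqrt{nh^d}\bigr)$. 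Conditional Chebyshev then gives $\hat N_k=N_k+\supOP\bigl(R_2^{1/2}(\kappa_\lambda)/\sqrt{nh^d}\bigr)$.

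\textbf{Main obstacle.} The hardest step is controlling this pilot error without a uniform sup-norm rate, which the paper has not established so far. I would first try to avoid the sup norm altogether. I would bound $\E\{\kappa_{h,\lambda}^2(X-x_0)(f-\hat f_k^{\mathrm{LP}})^2(X)\}$ by $h^{-d}\sup|\kappa_\lambda|\cdot\E\{|\kappa_{h,\lambda}(X-x_0)|(f-\hat f_k^{\mathrm{LP}})^2(X)\}$. I would then show the latter integrated squared error is $\supOP(1)$, using the pointwise bias--variance bound for local polynomials ($h^{2\beta^*}+1/(nh^d)\to0$) together with Fubini, on the event that the local Gram matrices are uniformly well conditioned (Lemma~\ref{lem:Q}-type arguments). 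If the $\sup|\kappa_\lambda|$ factor does not match $R_2(\kappa_\lambda)$, I would fall back on a crude uniform boundedness of $\hat f_k^{\mathrm{LP}}$ on the outrigger support, which suffices for an $\supOP$ statement.

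\textbf{Combining.} Note that $|N_k|\le R_1(\kappa_\lambda)C_X\sup_{\mathcal{B}_{x_0}(\lambda h)}|f-\hat f_k^{\mathrm{LP}}|=\supOP(1)$. Then
\[
\hat c_k(x_0)-c_k(x_0)=\frac{\hat N_k-N_k}{\hat D_k}-\frac{N_k(\hat D_k-D)}{\hat D_kD}.
\]
Since $\hat D_k\ge c_X/4$ with probability $1-\supoP(1)$, both terms are $\supOP\bigl(R_2^{1/2}(\kappa_\lambda)/\sqrt{nh^d}\bigr)$, which proves the claim.
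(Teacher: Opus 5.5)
Your overall structure is sound: the numerator/denominator split, the conditioning on out-of-fold data, the separate noise term, the lower bound on $D$, and the identity $\hat c_k-c_k=(\hat N_k-N_k)/\hat D_k-N_k(\hat D_k-D)/(\hat D_kD)$. The gap is in how you control the pilot error. Your bound on the conditional variance of the pilot part and your claim $|N_k|=\supOP(1)$ both go through $\sup_{x\in\mathcal{B}_{x_0}(\lambda h)}|f-\hat f_k^{\mathrm{LP}}|(x)$. That requires uniform consistency, or at least uniform boundedness, of the pilot over the whole outrigger support. The paper does not establish this, and it does not follow from the assumptions. Assumption~\ref{ass:h} restricts $\lambda$ only through $\lambda h\to0$, so $\lambda^d$ may be large relative to $nh^d$. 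With only $2+\delta$ conditional moments, $\max\{|\varepsilon_i|:X_i\in\mathcal{B}_{x_0}(\lambda h)\}/(nh^d)$ then need not stay bounded. Uniform invertibility of the local Gram matrices over $\asymp\lambda^d$ disjoint windows can also fail. So your ``crude uniform boundedness'' fallback is not available. Your other fallback, $\kappa_{h,\lambda}^2\leq h^{-d}\sup|\kappa_\lambda|\,|\kappa_{h,\lambda}|$ followed by Fubini, is valid. However, it produces $\sup|\kappa_\lambda|\,R_1(\kappa_\lambda)$ where the lemma needs $R_2(\kappa_\lambda)$. Assumption~\ref{ass:kappa} gives no uniform control of that ratio; it happens to equal $1$ for the uniform outrigger kernel, but not in general.

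The missing step, which is what the paper does, is to apply Fubini with the weight $\kappa_{h,\lambda}^2$ itself and to work with expectations rather than sup-norms. Because an in-fold $X$ is independent of $\hat f_k^{\mathrm{LP}}$,
\[
\E_P\bigl\{\kappa_{h,\lambda}^2(X-x_0)\bigl(f(X)-\hat f_k^{\mathrm{LP}}(X)\bigr)^2\bigr\}\leq\frac{C_XR_2(\kappa_\lambda)}{h^d}\sup_{x\in\mathcal{B}_{x_0}(\lambda h)}\E_P\bigl\{\bigl(\hat f_k^{\mathrm{LP}}(x)-f(x)\bigr)^2\bigr\}.
\]
The supremum on the right is $O\bigl(h^{2\beta^*}+1/(nh^d)\bigr)$ uniformly, by the pointwise bound of Lemma~\ref{lem:LP-rate}. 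The centred pilot average has conditional mean zero, so its unconditional second moment is at most this quantity divided by $n_k$. Plain Chebyshev then gives $\op\bigl(R_2^{1/2}(\kappa_\lambda)/\sqrt{nh^d}\bigr)$, with no conditioning event on the Gram matrices. Similarly, Jensen and Fubini with weight $|\kappa_{h,\lambda}|$ give
\[
\E_P|N_k|\leq C_XR_1(\kappa_\lambda)\sup_x\bigl\{\E_P(\hat f_k^{\mathrm{LP}}(x)-f(x))^2\bigr\}^{1/2}\to0,
\]
so $N_k=\op(1)$ by Markov, which is all your combining identity needs. With these two substitutions your argument matches the paper's. The paper simply merges your noise and pilot terms via $\E_P\{(Y-\hat f_k^{\mathrm{LP}}(X))^2\given X,\hat f_k^{\mathrm{LP}}\}=\sigma_P^2(X)+(f-\hat f_k^{\mathrm{LP}})^2(X)$.
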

    \begin{proof}
        We consider the numerator and denominator in $\hat{c}_k(x_0)$ separately. 
        For the numerator,
        \begin{align*}
            &\quad\;
            \E_P\bigg\{\bigg(\frac{1}{|\cI_k|}\sum_{i\in\cI_k}\kappa_{h,\lambda}(X_i-x_0)\big\{Y_i-\hat{f}^{\mathrm{LP}}_k(X_i)\big\}
            -
            \E_P\Big(\kappa_{h,\lambda}(X-x_0)\big\{Y-\hat{f}^{\mathrm{LP}}_k(X)\big\}\Biggiven\hat{f}^{\mathrm{LP}}_k\Big)\bigg)^2\bigggiven \hat{f}^{\mathrm{LP}}_k\bigg\}
            \\
            &\leq
            \frac{1}{|\cI_k|}\E_P\Big\{\kappa_{h,\lambda}^2(X-x_0)\big\{Y-\hat{f}^{\mathrm{LP}}_k(X)\big\}^2\Biggiven\hat{f}^{\mathrm{LP}}_k\Big\}
            \\
            &=
            \frac{1}{|\cI_k|}\bigg\{
            \E_P\big(\kappa_{h,\lambda}^2(X-x_0)\E_P(\varepsilon^2\given X)\big)
            +
            \E_P\big(\kappa_{h,\lambda}^2(X-x_0)\big\{\hat{f}^{\mathrm{LP}}_k(X)-f(X)\big\}^2\biggiven\hat{f}^{\mathrm{LP}}_k\big)
            \bigg\}
            \\
            &\leq \frac{1}{|\cI_k|}\bigg\{\bigl(C_3+\supoP(1)\bigr)\E_P\bigl\{\kappa_{h,\lambda}^2(X-x_0)\bigr\} 
            + \E_P\big(\kappa_{h,\lambda}^2(X-x_0)\big\{\hat{f}^{\mathrm{LP}}_k(X)-f(X)\big\}^2\biggiven\hat{f}^{\mathrm{LP}}_k\big)
            \bigg\}
            \\
            &=
            \supOP\bigg(\frac{R_2(\kappa_\lambda)}{nh^d}\bigg),
        \end{align*}
        where the final line follows using the fact that 
        \begin{align*}
            \E_P\Bigl\{\kappa_{h,\lambda}^2(X-x_0)\bigl\{\hat{f}^{\mathrm{LP}}_k(X)-f(X)\bigr\}^2
            \Bigr\} &\leq
            \E_P\big\{\kappa_{h,\lambda}^2(X-x_0)\big\}\sup_{x\in\cX}\E_P\Big\{\bigl(\hat{f}^{\mathrm{LP}}_k(x)-f(x)\bigr)^2\Big\},
        \end{align*}
        and Lemma~\ref{lem:LP-rate}. 
        Therefore
        \begin{equation*}
            \frac{1}{|\cI_k|}\sum_{i\in\cI_k}\kappa_{h,\lambda}(X_i-x_0)\big\{Y_i-\hat{f}^{\mathrm{LP}}_k(X_i)\big\}
            =
            \E_P\Big(\kappa_{h,\lambda}(X-x_0)\big\{Y-\hat{f}^{\mathrm{LP}}_k(X)\big\}\Biggiven\hat{f}^{\mathrm{LP}}_k\Big)
            +
            \supOP\bigg(\frac{R_2^{1/2}(\kappa_\lambda)}{\sqrt{nh^d}}\bigg).
        \end{equation*}
        The denominator is the same as the denominator in~$\hat\mu_k(x_0)$; see~\eqref{eq:mu-or-c-den}. 
        Thus,
        \begin{equation*}
            \hat{c}_k(x_0)=c_k(x_0)\bigg\{1+\supOP\bigg(\frac{1}{\sqrt{nh^d}}R_2^{1/2}(\kappa_\lambda)\bigg)\bigg\}.
        \end{equation*}
        Now,
        \begin{align}
            \E_P\big(|c_k(x_0)|\big) &\leq \frac{\E_P\big\{|\kappa_{h,\lambda}(X-x_0)|\cdot\bigl|\hat{f}^{\mathrm{LP}}_k(X)-f(X)\bigr|\big\}}{|\E_P\{\kappa_{h,\lambda}(X-x_0)\}|}
            \notag
            \\
            &\leq
            \frac{\E_P\{|\kappa_{h,\lambda}(X-x_0)|\}}{|\E_P\{\kappa_{h,\lambda}(X-x_0)\}|}\cdot\sup_{x\in\mathcal{B}_{x_0}(\lambda h)}\E_P\bigl\{\bigl|\hat{f}^{\mathrm{LP}}_k(x)-f(x)\bigr|\bigr\}
            \notag
            \\
            &=\supOP\bigg(\frac{1}{\sqrt{nh^d}}+h^{\beta^*}\bigg)
            ,
            \label{eq:c_k}
        \end{align}
       where we use Jensen's inequality and Lemma~\ref{lem:LP-rate}. 
        Thus 
        \begin{equation*}
            \hat{c}_k(x_0)-c_k(x_0) = \supOP\bigg(\frac{1}{\sqrt{nh^d}}R_2^{1/2}(\kappa_\lambda)\bigg),
        \end{equation*}
        as required.
    \end{proof}

    \begin{lemma}\label{lem:usual-bias}
        Adopting the setup of Theorem~\ref{thm:decomp}, and with $\hat\phi_{\RN{2}}(\boldsymbol{\tau})$ defined in~\eqref{eq:phi-2} and $B(f,x_0,K,h)$ defined as in~\eqref{eq:B-def},
        \begin{equation*}
\mathrm{e}_1^\top\hat\Omega(\boldsymbol{\tau})^{-1}\hat\phi_{\RN{2}}(\boldsymbol{\tau}) = \bigl(1+\op(1)\bigr)B(f,x_0,K,h)h^{\beta^*}.
        \end{equation*}
    \end{lemma}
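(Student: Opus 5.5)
The plan is to split $\hat\phi_{\RN{2}}(\boldsymbol{\tau})$ into a deterministic leading part and negligible fluctuations, and to combine this with the asymptotics of $\hat\Omega(\boldsymbol{\tau})$ from Lemma~\ref{lem:omega2}.

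\emph{Step 1: replace the derivative factor.} Write
\[
\hat\rho_k'(\varepsilon_i-\tau_i\varrho_i\given X_i)=\rho'(\varepsilon_i\given X_i)+\Delta_i ,
\]
where
\[
\Delta_i:=\{\hat\rho_k'(\varepsilon_i-\tau_i\varrho_i\given X_i)-\hat\rho_k'(\varepsilon_i\given X_i)\}+(\hat\rho_k'-\rho')(\varepsilon_i\given X_i).
\]
On the support of $\iota_h(\cdot-x_0)$ we have $|e_0(X_i)|\le L h^{\beta\wedge1}$. The main term of $\hat\phi_{\RN{2}}$ is then $n^{-1}\sum_i\iota_h(X_i-x_0)e_0(X_i)\rho'(\varepsilon_i\given X_i)$, and I will handle it in two parts.

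\emph{Step 2: the $\rho'$ term.} Its mean is
\[
\E_P\{\iota_h(X-x_0)e_0(X)\zeta_{\rho'}(X)\}=\int K(\nu)Q(\nu)\,e_0(x_0+h\nu)\,\zeta_{\rho'}(x_0+h\nu)\,p_X(x_0+h\nu)\,d\nu .
\]
By the uniform continuity in~\ref{ass:score} and the continuity of $p_X$, this equals $(1+o(1))\,p_0\,\zeta_{\rho'}(x_0)\int K(\nu)Q(\nu)e_0(x_0+h\nu)\,d\nu$, up to an error of order $o(h^{\beta^*})$.

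For the mean-zero part, the variance bound uses~\ref{ass:exp-bounds}:
\[
\frac{1}{n}\E_P\{\|\iota_h\|^2e_0^2\rho'^2\}\lesssim \frac{h^{2(\beta\wedge1)}}{nh^d},
\]
so the fluctuation is $o_P((nh^d)^{-1/2})$. I need to reconcile this with the claimed multiplicative form $(1+o(1))Bh^{\beta^*}$. I would state the result as $Bh^{\beta^*}+\op(h^{\beta^*}+(nh^d)^{-1/2})$, which is all that Theorem~\ref{thm:decomp} uses.

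For the leading term, apply a Taylor expansion of $f$ to order $\beta_0^*$ with integral remainder. The polynomial part $\sum_{\|\alpha\|_1\le\beta_0^*}$ is exactly representable in the span of $Q$, so after multiplying by $\Omega^{-1}$ its $\mathrm{e}_1$ component vanishes. Here I use $\mathrm{e}_1^\top s_1(K)^{-1}Q(\nu)=1$ from the kernel order $p+1$, together with the vanishing moments of $K$ up to order $p\ge\beta_0^*$. What remains is the Hölder remainder, which gives exactly~\eqref{eq:B-def}, or~\eqref{eq:B-def-beta01} when $\beta_0^*=0$. The bound $|B|\lesssim L\,R_1(K)$ follows from the Hölder condition.

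\emph{Step 3: the $\Delta_i$ terms.} The second component of $\Delta_i$ is bounded by Cauchy--Schwarz, conditionally on $\hat\rho_k$, using~\ref{ass:rho'-consistency}:
\[
\frac{1}{n}\sum_i|\iota_h||e_0||(\hat\rho'_k-\rho')(\varepsilon_i\given X_i)|\le h^{\beta\wedge1}\cdot O_P(1)\cdot o_P(1).
\]
This bound is the crux, and I expect it to be the main obstacle. When $\beta>1$, bounding $|e_0|$ by $Lh$ does not give $o(h^{\beta^*})$, because the cancellation of the polynomial part happens only after projecting by $\mathrm{e}_1^\top\Omega^{-1}$. To handle this, I will perform the Step 2 decomposition with $\rho'$ replaced by $\hat\rho_k'(\varepsilon_i-\tau_i\varrho_i\given X_i)$ as a whole. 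Writing $e_0(X_i)=Q_h(X_i-x_0)^\top\gamma+r(X_i)$, with $\gamma$ the Taylor coefficients and $|r|\lesssim h^{\beta^*}$ on $\mathcal{B}_{x_0}(h)$, the polynomial part contributes exactly $\hat\Omega(\boldsymbol{\tau})\gamma$. Since $\mathrm{e}_1^\top\gamma=0$, this gives $\mathrm{e}_1^\top\hat\Omega^{-1}\hat\Omega\gamma=0$ exactly. This exact identity is the key trick.

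The remainder $r$ then carries the factor $h^{\beta^*}$. Its contribution is handled by Lemma~\ref{lem:omega2}, which gives $\hat\Omega(\boldsymbol{\tau})=\Omega+o_P(1)$, together with the consistency in~\ref{ass:rho'-consistency} and the uniform continuity of $\hat\rho'$ (with $\varrho_i=o_P(1)$ uniformly). These yield $\mathrm{e}_1^\top\hat\Omega^{-1}n^{-1}\sum\iota_h r\,\hat\rho'=\mathrm{e}_1^\top\Omega^{-1}p_0\zeta_{\rho'}(x_0)\int KQr+o_P(h^{\beta^*})$. Since $\mathrm{e}_1^\top s_1(K)^{-1}Q(\nu)=1$, the last expression equals $\int K r=B h^{\beta^*}$, which completes the argument.
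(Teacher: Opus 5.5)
Your final argument is correct and is essentially the paper's proof. You cancel the Taylor polynomial exactly at the sample level via $\mathrm{e}_1^\top\hat\Omega(\boldsymbol{\tau})^{-1}\hat\Omega(\boldsymbol{\tau})\gamma=\mathrm{e}_1^\top\gamma=0$, and only afterwards replace $\hat\Omega(\boldsymbol{\tau})$ by $\Omega$ and $\hat\rho_k'(\varepsilon_i-\tau_i\varrho_i\given X_i)$ by $\rho'(\varepsilon_i\given X_i)$ in the remainder term carrying $h^{\beta^*}$; the paper does the same through the identity~\eqref{eq:destroy}, the three-term split~\eqref{eq:phi2-decomp} and Lemma~\ref{lem:B(f)}. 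With the cancellation done this way, the remainder fluctuates only at order $O_P\bigl(h^{\beta^*}(nh^d)^{-1/2}\bigr)=o_P(h^{\beta^*})$, so drop both the weaker fallback $Bh^{\beta^*}+o_P\bigl(h^{\beta^*}+(nh^d)^{-1/2}\bigr)$ and the population-level cancellation from Step 2, which leaves an $o(h)$ error when $\beta>1$.
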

    \begin{proof}
    For all $\alpha\in\N_0^d$ with $\|\alpha\|_1\leq \beta_0^*$, and writing $m_\alpha\in[\bar{p}]$ as the index such that $\{Q(u)\}_{m_\alpha}=\frac{1}{\alpha!}u^\alpha$,
    \begin{align*}
        &\quad
        \bigg(\frac{1}{n}\sum_{k=1}^{\mathcal{K}}\sum_{i\in\cI_k}{\pi_h}(X_i-x_0)\hat\rho_k'(\varepsilon_i-\tau_i\varrho_i\given X_i)\bigg)^{-1}\bigg(\frac{1}{n}\sum_{k=1}^{\mathcal{K}}\sum_{i\in\cI_k}\iota_h(X_i-x_0)(X_i-x_0)^\alpha\hat\rho_k'(\varepsilon_i-\tau_i\varrho_i\given X_i)\bigg)
        \\
        &=\alpha!h^{\|\alpha\|_1}\bigg(\frac{1}{n}\sum_{k=1}^{\mathcal{K}}\sum_{i\in\cI_k}{\pi_h}(X_i-x_0)\hat\rho_k'(\varepsilon_i-\tau_i\varrho_i\given X_i)\bigg)^{-1}\bigg(\frac{1}{n}\sum_{k=1}^{\mathcal{K}}\sum_{i\in\cI_k}{\pi_h}(X_i-x_0)\hat\rho_k'(\varepsilon_i-\tau_i\varrho_i\given X_i)\bigg)\mathrm{e}_{m_\alpha}
        \\
        &=\alpha!h^{\|\alpha\|_1}\mathrm{e}_{m_\alpha},
    \end{align*}
    where we use the fact that $u^\alpha=\alpha!h^{\|\alpha\|_1}Q_h(u)^\top\mathrm{e}_{m_\alpha}$. As the components of $Q$ are in lexicographical order, for any $\alpha\in\N_0^d$ with $1\leq\|\alpha\|_1\leq\beta_0^*$, 
    \begin{equation}\label{eq:destroy}
        \mathrm{e}_1^\top \bigg(\frac{1}{n}\sum_{k=1}^{\mathcal{K}}\sum_{i\in\cI_k}{\pi_h}(X_i-x_0)\hat\rho_k'(\varepsilon_i-\tau_i\varrho_i\given X_i)\bigg)^{-1}\bigg(\frac{1}{n}\sum_{k=1}^{\mathcal{K}}\sum_{i\in\cI_k}\iota_h(X_i-x_0)(X_i-x_0)^\alpha\hat\rho_k'(\varepsilon_i-\tau_i\varrho_i\given X_i)\bigg)=0.
    \end{equation}
    By taking a Taylor expansion with an integral form of remainder, and noticing that $e_0(x_0)=0$, we obtain the decomposition
    \begin{align*}
        e_0(X_i)
        &=
        \sum_{\substack{\alpha\in\N_0^d:\\1\leq\|\alpha\|_1\leq\beta_0^*}}\frac{1}{\alpha!}\,\partial^\alpha f(x_0)(X_i-x_0)^\alpha
        + e^*_0(X_i)
        \end{align*}
        where
        \begin{align}\label{eq:e0*}
        e_0^*(X_i) &:= \sum_{\substack{\alpha\in\N_0^d:\\\|\alpha\|_1=\beta_0^*}}\frac{\beta_0^*}{\alpha!}\bigg(\int_0^1 (1-t)^{\beta_0^*-1}\bigl\{\partial^\alpha f\bigl(x_0+t(X_i-x_0)\bigr)-\partial^\alpha f(x_0)\bigr\} \, dt\bigg)(X_i-x_0)^{\alpha}.
    \end{align}
    In combination with~\eqref{eq:destroy}, this yields
    \begin{equation}\label{eq:de-staring}
        \mathrm{e}_1^\top\hat\Omega(\boldsymbol{\tau})^{-1}\hat\phi_{\RN{2}}(\boldsymbol{\tau}) = \mathrm{e}_1^\top\hat\Omega(\boldsymbol{\tau})^{-1}\hat\phi_{\RN{2}}^*(\boldsymbol{\tau}),
    \end{equation}
    where 
    \begin{equation*}
        \hat\phi_{\RN{2}}^*(\boldsymbol{\tau}) := \frac{1}{n}\sum_{k=1}^{\mathcal{K}}\sum_{i\in\cI_k}\iota_h(X_i-x_0)e_0^*(X_i)\hat\rho_k'(\varepsilon_i-\tau_i\varrho_i\given X_i).
    \end{equation*}
    Now, by Lemma~\ref{lem:omega2},
    \begin{equation*}
        \mathrm{e}_1^\top\hat\Omega(\boldsymbol{\tau})^{-1}\hat\phi_{\RN{2}}^*(\boldsymbol{\tau})
        =
        \bigl(1+\op(1)\bigr)\mathrm{e}_1^\top\Omega^{-1}\hat\phi_{\RN{2}}^*(\boldsymbol{\tau}),
    \end{equation*}
    and
    \begin{align}
        \mathrm{e}_1^\top\Omega^{-1}\hat\phi_{\RN{2}}^*(\boldsymbol{\tau})
        &=
        \underbrace{\frac{1}{n}\sum_{i=1}^n\big(\mathrm{e}_1^\top\Omega^{-1}\iota_h(X_i-x_0)\big)e_0^*(X_i)\rho'(\varepsilon_i\given X_i)}_{=:\phi_{\RN{2}.\mathrm{i}}^*}
        \notag
        \\
        &\qquad\qquad
        +
        \sum_{k=1}^{\mathcal{K}}\underbrace{\frac{1}{n}\sum_{i\in\cI_k}\big(\mathrm{e}_1^\top\Omega^{-1}\iota_h(X_i-x_0)\big)e_0^*(X_i)\big\{\hat\rho_k'(\varepsilon_i\given X_i)-\rho'(\varepsilon_i\given X_i)\big\}}_{=:\hat\phi_{\RN{2}.\mathrm{ii}.k}}
        \notag
        \\
        &\qquad\qquad
        +
        \sum_{k=1}^{\mathcal{K}}\underbrace{\frac{1}{n}\sum_{i\in\cI_k}\big(\mathrm{e}_1^\top\Omega^{-1}\iota_h(X_i-x_0)\big)e_0^*(X_i)\big\{\hat\rho_k'(\varepsilon_i-\tau_i\varrho_i\given X_i)-\hat\rho_k'(\varepsilon_i\given X_i)\big\}}_{=:\hat\phi_{\RN{2}.\mathrm{iii}.k}}
        .
        \label{eq:phi2-decomp}
    \end{align}
    We proceed by treating each term in the above decomposition separately.

    \medskip\noindent{\bf Term $\boldsymbol{\phi_{\RN{2}.\mathrm{i}}^*}$:} 
    Recalling the definition of $C_{e^*}$ from Lemma~\ref{lem:e-bound}, we have
    \begin{align*}
        \Var_P(\phi_{\RN{2}.\mathrm{i}}^*)
        &=
        \frac{1}{n}\Var_P\Big(\big(\mathrm{e}_1^\top\Omega^{-1}\iota_h(X-x_0)\big)e_0^*(X)\rho'(\varepsilon\given X)\Big)
        \\
        &\leq
        \frac{1}{n}\E_P\Big(\big(\mathrm{e}_1^\top\Omega^{-1}\iota_h(X-x_0)\big)^2\bigl\{e_0^*(X)\bigr\}^2\bigl\{\rho'(\varepsilon\given X)\bigr\}^2\Big)
        \\
        &\leq \frac{1}{n}C_{e^*}^2(L,d,\beta^*)h^{2\beta^*}\E_P\Big(\big(\mathrm{e}_1^\top\Omega^{-1}\iota_h(X-x_0)\big)^2\zeta_{(\rho')^2}(X)\Big)
        \\
        &\leq 
        \frac{C_{e^*}^2(L,d,\beta^*)h^{2\beta^*}}{p_0^2 \zeta_{\rho'}^2(x_0) nh^d} 
        \int_{\mathcal{B}_0(1)} K^2(\nu) \zeta_{(\rho')^2}(x_0+h\nu) p_X(x_0+h\nu) \, d\nu
        \\
        &\leq 
        \frac{C_XC_2R_2(K)C_{e^*}^2(L,d,\beta^*)h^{2\beta^*}}{c_1^2p_0^2 \zeta_{\rho'}^2(x_0) nh^d}
        \bigl(1+\supoP(1)\bigr) = \supOP\bigg(\frac{h^{2\beta^*}}{nh^d}\bigg),
    \end{align*}
    by Lemmas~\ref{lem:e-bound}~and~\ref{lem:Q}.  
    Therefore, by Chebychev's inequality,
    \begin{align*}
        \phi_{\RN{2}.\mathrm{i}}^*
        &=
        \E_P\big\{\big(\mathrm{e}_1^\top\Omega^{-1}\iota_h(X-x_0)\big)e_0^*(X)\rho'(\varepsilon\given X)\big\}
        +
        \supOP\bigg(\frac{h^{\beta^*}}{\sqrt{nh^d}}\bigg)
        \\
        &=
        \frac{1}{p_0\zeta_{\rho'}(x_0)}\mathrm{e}_1^\top s_1(K)^{-1}\E_P\big\{\iota_h(X-x_0) e_0^*(X)\rho'(\varepsilon\given X)\big\}
        +
        \supOP\bigg(\frac{h^{\beta^*}}{\sqrt{nh^d}}\bigg)
        \\
        &=
        \frac{1}{p_0}\E_P\bigl\{K_h(X-x_0) e_0^*(X)\bigr\}
        +
        \op\big(h^{\beta^*}\big)
        ,
    \end{align*}
    where the final line follows by uniform continuity of~$\zeta_{\rho'}$.

    \medskip\noindent{\bf Term $\boldsymbol{\phi_{\RN{2}.\mathrm{ii}.k}^*}$:} 
    Recalling the definition of~$\cE_{\hat\rho,k,\mathrm{abs}}$ from~\eqref{eq:cE-rho'-abs}, we have
    \begin{align*}
\E_P\bigl(|\hat\phi_{\RN{2}.\mathrm{ii}.k}&|\biggiven\hat\rho_k\bigr)
        \leq
        \E_P\Big\{\big|\mathrm{e}_1^\top\Omega^{-1}\iota_h(X-x_0)\big|\cdot|e_0^*(X)|\cdot \bigl|(\hat\rho_k'-\rho')(\varepsilon\given X)\bigr|\Biggiven\hat\rho_k\Big\}
        \\
        &\leq
        \frac{C_{e^*}(L,d,\beta^*)h^{\beta^*}}{p_0\zeta_{\rho'}(x_0)}\E_P\{|K_h(X-x_0)|\cE_{\hat\rho,k,\text{sq}}^{1/2}(X)\}
        \\
        &\leq
        \frac{C_X R_1(K) C_{e^*}(L,d,\beta^*)h^{\beta^*}}{p_0\zeta_{\rho'}(x_0)}(1+\supoP(1))\sup_{x\in\cX}\E_P\big\{(\hat\rho_k'-\rho')^2(\varepsilon\given X)\biggiven\hat\rho_k,X=x\big\}^{1/2}
        \\
        &=\op\big(h^{\beta^*}\big),
    \end{align*}
    by Lemma~\ref{lem:e-bound} and Assumption~\ref{ass:DGM}, so $\hat{\phi}_{\RN{2}.\mathrm{ii}.k}=\op\big(h^{\beta^*}\big)$.

    \medskip\noindent{\bf Term $\boldsymbol{\phi_{\RN{2}.\mathrm{iii}.k}^*}$:} 
    As $\hat\rho_k'$ is uniformly continuous by Assumption~\ref{ass:DGM}, the function $\omega(\Delta):=\sup\{|\hat\rho_k'(x)-\hat\rho_k'(x')|:x,x'\in\R^d,\,\|x-x'\|\leq\Delta\}$ satisfies $\lim_{\Delta\searrow0}\omega(\Delta)=0$.  Hence
    \begin{align}
        |\hat\phi_{\RN{2}.\mathrm{iii}.k}|
        &=\bigg|\frac{1}{n}\sum_{i\in\cI_k}\big(\mathrm{e}_1^\top\Omega^{-1}\iota_h(X_i-x_0)\big)e_0^*(X_i)\big\{\hat\rho_k'(\varepsilon_i-\tau_i\varrho_i\given X_i)-\hat\rho_k'(\varepsilon_i\given X_i)\big\}\bigg|
        \notag
        \\
        &\leq
        \frac{1}{n}\sum_{i\in\cI_k}\big|\mathrm{e}_1^\top\Omega^{-1}\iota_h(X_i-x_0)\big|\cdot |e_0^*(X_i)| \cdot \bigl|\hat\rho_k'(\varepsilon_i-\tau_i\varrho_i\given X_i)-\hat\rho_k'(\varepsilon_i\given X_i)\bigr|
        \notag
        \\
        &\leq
        \frac{ C_{e^*}(L,d,\beta^*)h^{\beta^*}}{p_0\zeta_{\rho'}(x_0)}\cdot\frac{1}{n}\sum_{i\in\cI_k}|K_h(X_i-x_0)|\cdot\omega\bigl(|\tau_i\varrho_i|\bigr) 
        \notag
        \\
        &\leq
        \frac{ C_{e^*}(L,d,\beta^*)h^{\beta^*}}{p_0\zeta_{\rho'}(x_0)} \omega\Bigl(\max_{i:X_i\in\mathcal{B}_{x_0}(h)}|\varrho_i|\Bigr)\cdot\frac{1}{n}\sum_{i\in\cI_k}|K_h(X_i-x_0)|.
        \label{eq:modulus-continuity}
    \end{align}
    Now, $\frac{1}{n}\sum_{i\in\cI_k}|K_h(X_i-x_0)|=\supOP(1)$, and
    \begin{align}
        \max_{i:X_i\in\mathcal{B}_{x_0}(h)}|\varrho_i| 
        &=
        \max_{i:X_i\in\mathcal{B}_{x_0}(h)}\bigl|Q_h(X_i-x_0)^\top(\hat\theta-\theta_0)-e_0(X_i)\bigr|
        \notag
        \\
        &\leq
        e^{d/2} \|\hat\theta-\theta_0\|_2 + C_e(L,d,\beta^*,h)h^{1\vee\beta^*}
        =
        \op(1),
        \label{eq:modulus-continuity-op(1)}
    \end{align}
    by Cauchy--Schwarz and Lemmas~\ref{lem:e-bound},~\ref{lem:Q},~and~\ref{lem:gamma0-consistency}. It follows that $\hat\phi_{\RN{2}.\mathrm{iii}.k}=\supoP\big(h^{\beta^*}\big)$.

    \medskip
    We conclude from~\eqref{eq:de-staring} and~\eqref{eq:phi2-decomp} that
    \begin{equation*}
        \mathrm{e}_1^\top\hat\Omega(\boldsymbol{\tau})^{-1}\hat\phi_{\RN{2}}(\boldsymbol{\tau})=\frac{1}{p_0}\E_P\bigl\{K_h(X-x_0) e_0^*(X)\bigr\} + \op\big(h^{\beta^*}\big).
    \end{equation*}
    Finally, applying Lemma~\ref{lem:B(f)} gives the result.
    \end{proof}
    
    \begin{lemma}\label{lem:phi-3}
        Adopting the setup of Theorem~\ref{thm:decomp}, and with $\hat\phi_{\RN{3}}$ as defined in~\eqref{eq:phi-3},
        \begin{equation*}
            \mathrm{e}_1^\top\Omega^{-1}\hat\phi_{\RN{3}}(\boldsymbol{\tau}) = \op\bigg(\frac{1}{\sqrt{nh^d}}+h^{\beta^*}\bigg).
        \end{equation*}
    \end{lemma}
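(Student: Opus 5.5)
The term $\hat\phi_{\RN{3}}(\boldsymbol{\tau})$ is a product of three factors: the increment of $\hat\rho_k'$ caused by the perturbation $\tau_i\varrho_i$, the pilot residual $\hat{f}^{\mathrm{LP}}_k(X_i)-f(X_i)+\hat c_k(x_0)$ on the outer annulus, and $\hat\mu_k(x_0)$. The plan is to show that the product of the last two, averaged with weight $|\kappa_{h,\lambda}|$, is already $\supOP\bigl((nh^d)^{-1/2}+h^{\beta^*}\bigr)$. Uniform continuity of $\hat\rho_k'$ then supplies an extra $\op(1)$ factor, giving the claim.

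\textbf{Step 1 (reduction).} Write $\omega_k$ for the modulus of continuity of $\hat\rho_k'(\cdot\given x)$, uniform in $x$ by Assumption~\ref{ass:DGM}\ref{ass:rho'-consistency}. On the annulus $\outerr$,
\[
\varrho_i = \hat{f}^{\mathrm{LP}}_k(X_i)-f(X_i)+\hat c_k(x_0).
\]
Hence
\[
|\mathrm{e}_1^\top\Omega^{-1}\hat\phi_{\RN{3}}| \le \|\Omega^{-1}\mathrm{e}_1\|_2\,\|\hat\mu_k(x_0)\|_2\, \omega_k\Bigl(\max_{i:X_i\in\outerr}|\varrho_i|\Bigr)\,\frac{1}{n}\sum_{i\in\cI_k}|\kappa_{h,\lambda}(X_i-x_0)|\,|\varrho_i|.
\]
We control the factors as follows. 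By~\eqref{eq:mu(x0)bound} and Lemma~\ref{lem:mu(x0)}, $\|\hat\mu_k(x_0)\|_2=\supOP(1)$. The vector $\Omega^{-1}\mathrm{e}_1$ is bounded because $|\zeta_{\rho'}|=\zeta_{\rho^2}\ge c_1$ and $s_1(K)$ is invertible.

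\textbf{Step 2 (the weighted residual average).} Split $|\varrho_i|\le|\hat{f}^{\mathrm{LP}}_k(X_i)-f(X_i)|+|\hat c_k(x_0)|$.

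For the first piece, condition on the out-of-fold data used to build $\hat{f}^{\mathrm{LP}}_k$; the points $X_i$ with $i\in\cI_k$ are independent of it. Taking expectations and using Lemma~\ref{lem:LP-rate} uniformly over $\mathcal{B}_{x_0}(\lambda h)\subseteq\mathcal{X}^\circ$ gives a bound of
\[
C_XR_1(\kappa_\lambda)\sup_x\E_P|\hat f^{\mathrm{LP}}_k(x)-f(x)| = \supOP\bigl((nh^d)^{-1/2}+h^{\beta^*}\bigr),
\]
since $\sup_\lambda R_1(\kappa_\lambda)<\infty$ by~\ref{ass:kappa}.

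For the second piece, Lemma~\ref{lem:c(x0)} and~\eqref{eq:c_k} give $|\hat c_k(x_0)|=\supOP\bigl((nh^d)^{-1/2}+h^{\beta^*}\bigr)$. This multiplies $\frac1n\sum|\kappa_{h,\lambda}|=\supOP(1)$.

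\textbf{Step 3 (main obstacle: the $\op(1)$ modulus factor).} It remains to show $\omega_k\bigl(\max_{i:X_i\in\outerr}|\varrho_i|\bigr)=\op(1)$. This needs a uniform-in-$x$ bound on the pilot error over the annulus, not just a pointwise one, and it is the main difficulty.

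I would first try to obtain $\sup_{x\in\mathcal{B}_{x_0}(\lambda h)}|\hat f^{\mathrm{LP}}_k(x)-f(x)|=\op(1)$. This would follow from uniform consistency of local polynomial estimators via a standard sup-norm (chaining or union-bound over a grid) argument, since $nh^d\to\infty$ up to logs. A grid argument could need $nh^d/\log n\to\infty$, which is not assumed.

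If that fails, I would avoid the max altogether. Replace $\omega_k(\max_i|\varrho_i|)$ termwise by $\omega_k(|\varrho_i|)\wedge 2\sup|\hat\rho_k'|$, and bound $\frac1n\sum|\kappa_{h,\lambda}||\varrho_i|\,\omega_k(|\varrho_i|)$ by splitting on the event $|\varrho_i|\le\delta$. On that event the factor is at most $\omega_k(\delta)$. On the complement, Markov's inequality and the second-moment bound of Lemma~\ref{lem:LP-rate} show that the contribution is $\supOP\bigl(\delta^{-1}((nh^d)^{-1}+h^{2\beta^*})\bigr)$, which is of smaller order. Letting $\delta\to0$ slowly then gives the $\op(1)$ factor and hence $\op\bigl((nh^d)^{-1/2}+h^{\beta^*}\bigr)$ overall, as required.
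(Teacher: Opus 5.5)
Your proposal is correct (after one small repair, below). Steps~1 and~2 are the paper's argument essentially verbatim. You bound $\hat\phi_{\RN{3}.k}$ by a modulus-of-continuity factor, times $\|\hat\mu_k(x_0)\|_2$, times the $|\kappa_{h,\lambda}|$-weighted average of $|\hat f^{\mathrm{LP}}_k(X_i)-f(X_i)|+|\hat c_k(x_0)|$. Each piece is then controlled through Lemmas~\ref{lem:mu(x0)} and~\ref{lem:c(x0)}, \eqref{eq:c_k} and Lemma~\ref{lem:LP-rate}.

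Step~3 is where you diverge, and your route is more careful than the paper's. The paper disposes of the modulus factor in one line by citing \eqref{eq:modulus-continuity-op(1)}. That display only controls $\max|\varrho_i|$ over the inner ball $\innerr$, where $\varrho_i$ involves $\hat\theta-\theta_0$ and $e_0$. On the annulus $\outerr$, the only region contributing to $\hat\phi_{\RN{3}}$, $\varrho_i$ is the pilot residual $\hat f^{\mathrm{LP}}_k(X_i)-f(X_i)+\hat c_k(x_0)$. Smallness of its maximum is a uniform-consistency statement that does not follow from the pointwise second-moment bound of Lemma~\ref{lem:LP-rate}. You identify this correctly, and your first option would indeed need conditions that are not assumed. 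Your termwise truncation avoids the maximum entirely and uses only second moments already available, so it actually closes the step; keep that and drop the sup-norm route.

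The repair: $\hat\rho_k'$ is only assumed uniformly continuous, not bounded, so capping at $2\sup|\hat\rho_k'|$ is unjustified. Use subadditivity of the modulus instead: $\omega_k(t)\le(1+t/\delta)\,\omega_k(\delta)$ for all $t,\delta>0$. You also have $|\hat\rho_k'(\varepsilon_i-\tau_i\varrho_i\given X_i)-\hat\rho_k'(\varepsilon_i\given X_i)|\le\omega_k(|\varrho_i|)$, since $\tau_i\in[0,1]$ and $\omega_k$ is nondecreasing. Together these give
\[
\frac{1}{n}\sum_{i\in\cI_k}|\kappa_{h,\lambda}(X_i-x_0)|\,|\varrho_i|\,\omega_k(|\varrho_i|)
\le
\omega_k(\delta)\biggl\{\frac{1}{n}\sum_{i\in\cI_k}|\kappa_{h,\lambda}(X_i-x_0)|\,|\varrho_i|
+\frac{1}{\delta n}\sum_{i\in\cI_k}|\kappa_{h,\lambda}(X_i-x_0)|\,\varrho_i^2\biggr\}.
\]
Use $\varrho_i^2\le 2\{\hat f^{\mathrm{LP}}_k(X_i)-f(X_i)\}^2+2\hat c_k^2(x_0)$ and the same conditioning-on-out-of-fold-data argument as in your Step~2. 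This shows the last average is $\supOP(\bar r_n^2)$, where $\bar r_n:=(nh^d)^{-1/2}+h^{\beta^*}$. Taking $\delta=\bar r_n$ gives $\omega_k(\bar r_n)\,\supOP(\bar r_n)=\op(\bar r_n)$ directly, with no need to let $\delta\to0$ slowly. Like the paper, this still tacitly needs $\omega_k(\bar r_n)\to0$ uniformly over $P$, $n$ and the conditioning covariate, which Assumption~\ref{ass:DGM} does not literally state.
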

    \begin{proof}
    We decompose
    \begin{equation*}
        \hat\phi_{\RN{3}}(\boldsymbol{\tau}) = \sum_{k=1}^{\mathcal{K}}\underbrace{\frac{1}{n}\sum_{i\in\cI_k}\bigl\{\hat\rho_k'(\varepsilon_i-\tau_i\varrho_i\given X_i)-\hat\rho_k'(\varepsilon_i)\bigr\}\kappa_{h,\lambda}(X_i-x_0)\big(\hat{f}^{\mathrm{LP}}_k(X_i)-f(X_i)+\hat{c}_k(x_0)\big)\hat{\mu}_k(x_0)}_{=:\hat\phi_{\RN{3}.k}}.
    \end{equation*}

    \medskip\noindent{\bf Term $\boldsymbol{\hat\phi_{\RN{3}.k}}$:} 
        For $i\in\cI_k$,
        \begin{equation*}
            \varrho_i\kappa_{h,\lambda}(X_i) = \big(\hat{f}^{\mathrm{LP}}_k(X_i)-f(X_i)+\hat{c}_k(x_0)\big)\kappa_{h,\lambda}(X_i).
        \end{equation*}
        For the modulus of continuity $\omega$ of $\hat\rho_k'$ as in~\eqref{eq:modulus-continuity} it follows that
        \begin{align*}
            \|\hat\phi_{\RN{3}.k}\|_2 
            &\leq
            \omega\Bigl(\max_{i:X_i\in\mathcal{B}_{x_0}(\lambda h)}|\varrho_i|\Bigr)
            \bigg(\frac{1}{n}\sum_{i\in\cI_k}\big|\hat{f}^{\mathrm{LP}}_k(X_i)-f(X_i)
            +\hat{c}_k(x_0)\big|\cdot|\kappa_{h,\lambda}(X_i)|\bigg)\|\hat\mu_k(x_0)\|_2
            \\
            &\leq \omega\Bigl(\max_{i:X_i\in\mathcal{B}_{x_0}(\lambda h)}|\varrho_i|\Bigr)\bigg\{\bigg(\frac{1}{n}\sum_{i\in\cI_k}\big|\hat{f}^{\mathrm{LP}}_k(X_i)-f(X_i)\big|\cdot|\kappa_{h,\lambda}(X_i)|\bigg)
            \\
            &\hspace{6cm}
            +
            \bigg(\frac{1}{n}\sum_{i\in\cI_k}|\kappa_{h,\lambda}(X_i)|\bigg)|\hat{c}_k(x_0)|
            \bigg\}
            \|\hat\mu_k(x_0)\|_2.
        \end{align*}
        By Lemma~\ref{lem:mu(x0)},~\eqref{eq:mu(x_0)bound} and Lemma~\ref{lem:Q}, and then Lemma~\ref{lem:c(x0)}~and~\eqref{eq:c_k},
        \begin{equation*}
            \|\hat\mu_k(x_0)\|_2 = \supOP\big(R_1(K)\big),
            \qquad
            |\hat{c}_k(x_0)|=\supOP\bigg(\bigg\{\frac{1}{\sqrt{nh^d}}+h^{\beta^*}\bigg\}R_1(\kappa_\lambda)\bigg).
        \end{equation*}
        Further, $\frac{1}{n}\sum_{i\in\cI_k}|\kappa_{h,\lambda}(X_i)|=\supOP\big(R_1(\kappa_\lambda)\big)$ and for $n$ large enough that $\mathcal{X} + \mathcal{B}_0(\lambda b_n) \subseteq \mathcal{X}^\circ$,
        \begin{align*}
            \E_P\bigg\{\frac{1}{n}\sum_{i\in\cI_k}\big|\hat{f}^{\mathrm{LP}}_k(X_i)-f(X_i)\big|\cdot|\kappa_{h,\lambda}(X_i)|\bigg\}
            &\leq
            \E_P\Big\{\big|\hat{f}^{\mathrm{LP}}_k(X)-f(X)\big|\cdot|\kappa_{h,\lambda}(X)|\Big\}
            \\
            &\leq
            \E_P\{|\kappa_{h,\lambda}(X)|\}
            \sup_{x\in\cX+\mathcal{B}_0(\lambda b_n)}\E_P\Big\{\big|\hat{f}^{\mathrm{LP}}_k(x)-f(x)\big|\Big\}
            \\
            &\leq
            \E_P\{|\kappa_{h,\lambda}(X)|\}
            \bigg(\sup_{x\in\cX^\circ}\E_P\Big\{\bigl(\hat{f}^{\mathrm{LP}}_k(x)-f(x)\bigr)^2\Big\}\bigg)^{1/2}
            \\
            &=\supOP\bigg(\bigg\{\frac{1}{\sqrt{nh^d}}+h^{\beta^*}\bigg\}R_1(\kappa_\lambda)\bigg),
        \end{align*}
        where we applied Lemma~\ref{lem:LP-rate} in the final line. 
        Thus
        \[
        \frac{1}{n}\sum_{i\in\cI_k}\big|\hat{f}^{\mathrm{LP}}_k(X_i)-f(X_i)\big|\cdot|\kappa_{h,\lambda}(X_i)|= \supOP\bigg(\bigg\{\frac{1}{\sqrt{nh^d}}+h^{\beta^*}\bigg\}R_1(\kappa_\lambda)\bigg).
        \]
        By Assumption~\ref{ass:DGM}~\eqref{eq:modulus-continuity-op(1)} we also have $\omega\bigl(\max_{i:X_i\in\mathcal{B}_{x_0}(h)}|\varrho_i|\bigr)=\op(1)$, 
        so $\hat\phi_{\RN{3}.k}=\op\big(\frac{1}{\sqrt{nh^d}}+h^{\beta^*}\big)$, as required.
\end{proof}

    \begin{lemma}\label{lem:phi-4}
        Adopting the setup of Theorem~\ref{thm:decomp},
    \begin{equation*}
        \mathrm{e}_1^\top\Omega^{-1}\hat\phi_{\RN{4}} = \bigl(1+\op(1)\bigr)\cdot\frac{1}{n}\sum_{i=1}^n\kappa_{h,\lambda}(X_i-x_0)\varepsilon_i.
        \end{equation*}
    \end{lemma}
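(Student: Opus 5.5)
We analyse $\hat\phi_{\RN{4}}$ by isolating the part driven by the conditional mean of $\hat\rho_k'$, which cancels against $\hat c_k$, and then identifying the leftover with a weighted average of the errors. Write $\hat\phi_{\RN{4}}=\sum_{k}\hat\phi_{\RN{4}.k}$. For $i\in\cI_k$ we have $\hat{c}_k(x_0)=\bar{c}_k-\bar{d}_k$, where
\[
\bar{c}_k:=\frac{\sum_{j\in\cI_k}\kappa_{h,\lambda}(X_j-x_0)\varepsilon_j}{\sum_{j\in\cI_k}\kappa_{h,\lambda}(X_j-x_0)},\qquad
\bar{d}_k:=\frac{\sum_{j\in\cI_k}\kappa_{h,\lambda}(X_j-x_0)\Delta_k(X_j)}{\sum_{j\in\cI_k}\kappa_{h,\lambda}(X_j-x_0)},
\]
and $\Delta_k:=\hat f^{\mathrm{LP}}_k-f$. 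Hence $\Delta_k(X_i)+\hat c_k(x_0)=\bigl(\Delta_k(X_i)-\bar{d}_k\bigr)+\bar{c}_k$. We also split $\hat\rho_k'(\varepsilon_i\given X_i)=\zeta_{\rho'}(x_0)+\xi_{i}$, where
\[
\xi_i:=\bigl\{\hat\rho_k'(\varepsilon_i\given X_i)-\rho'(\varepsilon_i\given X_i)\bigr\}+\bigl\{\rho'(\varepsilon_i\given X_i)-\zeta_{\rho'}(X_i)\bigr\}+\bigl\{\zeta_{\rho'}(X_i)-\zeta_{\rho'}(x_0)\bigr\}.
\]

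\textbf{Leading term.} The constant part contributes
\[
\zeta_{\rho'}(x_0)\hat\mu_k(x_0)\frac1n\sum_{i\in\cI_k}\kappa_{h,\lambda}(X_i-x_0)\bigl\{(\Delta_k(X_i)-\bar d_k)+\bar c_k\bigr\}.
\]
By the definition of $\bar d_k$ the first bracket sums to zero exactly. What remains is $\zeta_{\rho'}(x_0)\hat\mu_k(x_0)\frac1n\sum_{i\in\cI_k}\kappa_{h,\lambda}(X_i-x_0)\varepsilon_i$, since the normalising denominator cancels. This in-fold exact debiasing is precisely why $\hat c_k$ is computed in-fold. By Lemma~\ref{lem:mu(x0)} and \eqref{eq:mu(x_0)bound}, $\hat\mu_k(x_0)=u(K)\{1+\op(1)\}$. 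Since $K$ has order $p+1$ we have $u(K)=\mathrm{e}_1=s_1(K)\mathrm{e}_1$, so
\[
\mathrm{e}_1^\top\Omega^{-1}\zeta_{\rho'}(x_0)u(K)=\frac{1}{p_0}.
\]
Summing over $k$ gives the displayed main term, with $p_0$ absorbed as in \eqref{eq:Zi}.

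\textbf{Remainder term.} It remains to show that
\[
\frac1n\sum_{i\in\cI_k}\xi_i\,\kappa_{h,\lambda}(X_i-x_0)\bigl\{\Delta_k(X_i)+\hat c_k(x_0)\bigr\}=\op\Bigl(\frac{1}{\sqrt{nh^d}}+h^{\beta^*}\Bigr).
\]
I would treat the three pieces of $\xi_i$ separately.

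The third piece is deterministic given $X_i$ and is bounded by the modulus of continuity of $\zeta_{\rho'}$ over $\mathcal{B}_{x_0}(\lambda h)$, which is $o(1)$ because $\lambda h\to0$. It therefore multiplies a sum of size $\supOP\bigl(R_1(\kappa_\lambda)\{(nh^d)^{-1/2}+h^{\beta^*}\}\bigr)$, using the bounds from the proof of Lemma~\ref{lem:phi-3}. That lemma's bounds on $\sum_i|\kappa_{h,\lambda}||\Delta_k|$ and on $|\hat c_k|$ rest on Lemma~\ref{lem:LP-rate} and Lemma~\ref{lem:c(x0)}.

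The first piece is handled by Cauchy--Schwarz together with $\sup_x\cE_{\hat\rho',k,\mathrm{sq}}(x)=\op(1)$ from \ref{ass:rho'-consistency}, applied conditionally on $\hat\rho_k$ and on $\hat f^{\mathrm{LP}}_k$.

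The second piece has conditional mean zero given the $X$'s. However, $\hat c_k$ depends on the in-fold $\varepsilon_j$, so it cannot be treated as fixed. I would resolve this by splitting $\hat c_k$ into $\bar c_k-\bar d_k$ again. The part involving $\Delta_k(X_i)-\bar d_k$ is a function of out-of-fold data and the in-fold $X$'s only, so a conditional second-moment bound gives order $\bigl(n^{-1}\E[\kappa_{h,\lambda}^2\Delta_k^2]\bigr)^{1/2}=o\bigl((nh^d)^{-1/2}\bigr)$. The part involving $\bar c_k=\supOP\bigl((nh^d)^{-1/2}R_2^{1/2}(\kappa_\lambda)\bigr)$ multiplies $\frac1n\sum_i\kappa_{h,\lambda}(X_i-x_0)\bigl(\rho'-\zeta_{\rho'}\bigr)(\varepsilon_i\given X_i)=\supOP\bigl((nh^d)^{-1/2}\bigr)$, so the product is negligible.

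\textbf{Main obstacle.} The hardest step is this dependence of $\hat c_k$ on the in-fold errors, which breaks naive conditioning on the outer-window fit. The decomposition $\hat c_k=\bar c_k-\bar d_k$, combined with the exact cancellation in the leading term, is the device I expect to resolve it. Care is also needed to keep all bounds uniform in $\lambda$, which is ensured by $\sup_\lambda R_1(\kappa_\lambda)<\infty$ in \ref{ass:kappa}.
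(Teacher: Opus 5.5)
Your proof is correct, and it takes a genuinely different and somewhat cleaner route than the paper. The paper introduces the population centring constant $c_k(x_0)$ and splits $\hat\phi_{\RN{4}}$ into three terms. The first, $\hat\phi_{\RN{4}.\mathrm{i}.k}$, has $c_k$ in place of $\hat c_k$ and is controlled by a conditional second-moment bound. The leading term is then extracted from $\hat\phi_{\RN{4}.\mathrm{iii}.k}$, using a law of large numbers for $\frac1n\sum_{i\in\cI_k}\rho'(\varepsilon_i\given X_i)\kappa_{h,\lambda}(X_i-x_0)$ together with the expansion $\hat c_k-c_k=c_k^{(1)}+c_k^{(2)}+\{c_k^{(3)}-c_k\}$.

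You instead freeze $\hat\rho_k'$ at the constant $\zeta_{\rho'}(x_0)$. The exact in-fold identity $\sum_{i\in\cI_k}\kappa_{h,\lambda}(X_i-x_0)\{\Delta_k(X_i)+\hat c_k(x_0)\}=\sum_{i\in\cI_k}\kappa_{h,\lambda}(X_i-x_0)\varepsilon_i$ then delivers the leading term with no ratio expansion and no law of large numbers. All the work moves into the fluctuation $\xi_i$. There you need only crude $\supOP(r_n)$ bounds on $\hat c_k$, consistency of $\hat\rho_k'$, and continuity of $\zeta_{\rho'}$; the last follows from \ref{ass:score} via $\zeta_{\rho'}=-\zeta_{\rho^2}$. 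You do not need the H\"older condition on $\E_P\{\hat\rho'(\varepsilon\given X)\given X=\cdot\}$.

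Your decomposition also handles explicitly a point the paper passes over. In bounding $\hat\phi_{\RN{4}.\mathrm{i}.k}$, the paper drops the cross terms on the grounds that $\kappa_{h,\lambda}(X-x_0)\{\Delta_k(X)+c_k(x_0)\}$ has conditional mean zero. After multiplication by $\hat\rho_k'(\varepsilon\given X)$, however, the mean vanishes only if $\E_P\{\hat\rho_k'(\varepsilon\given X)\given X,\hat\rho_k\}$ is constant near $x_0$. The leftover is $o(r_n)$, and your first and third pieces capture exactly this.

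Two small tightenings:
\begin{itemize}
\item The supremum of $\cE_{\hat\rho',k,\mathrm{sq}}$ over $\mathcal{B}_{x_0}(\lambda h)$ is $\op(1)$ by combining part (iii) of \ref{ass:score-estimation} at $x_0$ with the uniform continuity in part (iv). This is how the paper treats $\xi_{\RN{2}.k}$ in Lemma~\ref{lem:omega2}.
\item Your remainder display drops the factor $\hat\mu_k(x_0)=\supOP(1)$, which is harmless.
\end{itemize}

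What you actually establish is
\[
\mathrm{e}_1^\top\Omega^{-1}\hat\phi_{\RN{4}}=p_0^{-1}\,\frac1n\sum_{i=1}^n\kappa_{h,\lambda}(X_i-x_0)\varepsilon_i+\op\bigl((nh^d)^{-1/2}+h^{\beta^*}\bigr),
\]
and this is the form used in \eqref{eq:Zi} and in the proof of Theorem~\ref{thm:decomp}. The displayed statement differs in two ways. First, it omits the factor $1/p_0$. The paper loses it in \eqref{eq:c-pt1}, where $\bigl(\frac{1}{|\cI_k|}\sum_{i\in\cI_k}\kappa_{h,\lambda}(X_i-x_0)\bigr)^{-1}$ tends to $1/p_0$ rather than $1$. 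Second, it writes the error as a multiplicative $\{1+\op(1)\}$ factor. Neither argument yields that, since the remainders include terms of order $h^{\beta^*}$ that are not small relative to the main term. Your additive version is the correct one.
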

    \begin{proof}
        We decompose
        \begin{align*}
            \hat\phi_{\RN{4}} &= \sum_{k=1}^{\mathcal{K}}\underbrace{\frac{1}{n}\sum_{i\in\cI_k}\hat\rho_k'(\varepsilon_i\given X_i)\kappa_{h,\lambda}(X_i-x_0)\big(\hat{f}^{\mathrm{LP}}_k(X_i)-f(X_i)+c_k(x_0)\big)\hat\mu_k(x_0)}_{=:\hat\phi_{\RN{4}.\mathrm{i}.k}}
            \\
            &\qquad\qquad +
            \sum_{k=1}^{\mathcal{K}}\underbrace{\frac{1}{n}\sum_{i\in\cI_k}\kappa_{h,\lambda}(X_i-x_0)\big(\hat{c}_k(x_0)-c_k(x_0)\big)\big\{\hat\rho_k'(\varepsilon_i\given X_i)\hat\mu_k(x_0)-\rho'(\varepsilon_i\given X_i)\mu(x_0)\big\}}_{=:\hat\phi_{\RN{4}.\mathrm{ii}.k}}
            \\
            &\qquad\qquad+
            \sum_{k=1}^{\mathcal{K}}\underbrace{\frac{1}{n}\sum_{i\in\cI_k}\rho'(\varepsilon_i\given X_i)\kappa_{h,\lambda}(X_i-x_0)\big(\hat{c}_k(x_0)-c_k(x_0)\big)\mu(x_0)}_{=:\hat\phi_{\RN{4}.\mathrm{iii}.k}}.
        \end{align*}

        \medskip\noindent{\bf Term $\boldsymbol{\hat\phi_{\RN{4}.\mathrm{i}.k}}$:} 
        By construction of $c_k(x_0)$, for any $i\in\cI_k$, 
        \begin{equation*}
            \E_P\big\{\kappa_{h,\lambda}(X_i-x_0)\big(\hat{f}^{\mathrm{LP}}_k(X_i)-f(X_i)+c_k(x_0)\big)\biggiven (X_{i'},Y_{i'})_{i'\in\cI_k^c}\big\}
            =0.
        \end{equation*}
        Hence
        \begin{align*}
            \E_P&\bigg\{\bigg\|\frac{1}{n}\sum_{i\in\cI_k}\hat\rho_k'(\varepsilon_i\given X_i)\kappa_{h,\lambda}(X_i-x_0)\big(\hat{f}^{\mathrm{LP}}_k(X_i)-f(X_i)+c_k(x_0)\big)\hat\mu_k(x_0)\bigg\|_2^2\bigggiven(X_{i'},Y_{i'})_{i'\in\cI_k^c}\bigg\}
            \\
            &\leq \frac{1}{n}\E_P\Big\{\E\big(\bigl\{\hat\rho_k'(\varepsilon\given X)\bigr\}^2\biggiven \hat\rho_k, X\big)\kappa_{h,\lambda}^2(X-x_0)\big(\hat{f}^{\mathrm{LP}}_k(X)-f(X)+c_k(x_0)\big)^2\Biggiven(X_{i'},Y_{i'})_{i'\in\cI_k^c}\Big\}
            \|\hat\mu_k(x_0)\|_2^2
            \\
            &\leq
            \frac{1}{n}\bigl(2C_2+\op(1)\bigr)\E_P\Big\{\kappa_{h,\lambda}^2(X-x_0)\big(\hat{f}^{\mathrm{LP}}_k(X)-f(X)+c_k(x_0)\big)^2\Biggiven\hat{f}^{\mathrm{LP}}_k\Big\}
            \|\hat\mu_k(x_0)\|_2^2
            \\
            &\leq
            \frac{4C_2+\op(1)}{n}\biggl(\E_P\Big\{\kappa_{h,\lambda}^2(X - x_0)\big(\hat{f}^{\mathrm{LP}}_k(X) - f(X)\big)^2\Biggiven\hat{f}^{\mathrm{LP}}_k\Big\} \\
            &\hspace{8cm}+\E_P\bigl\{\kappa_{h,\lambda}^2(X - x_0)\bigr\}c_k^2(x_0)\biggr)
            \|\mu(x_0)\|_2^2,
        \end{align*}
        where the final line follows by Lemma~\ref{lem:mu(x0)}. 
        Further, for $n$ large enough that $\mathcal{X} + \mathcal{B}_0(\lambda b_n) \subseteq \mathcal{X}^\circ$,
        \begin{align*}
            \E_P\Big\{\kappa_{h,\lambda}^2(X-x_0)\big(\hat{f}^{\mathrm{LP}}_k(X)-f(X)&\big)^2\Big\} = \E_P\big\{\kappa_{h,\lambda}^2(X-x_0)\big\}\sup_{x\in\cX+\mathcal{B}_0(\lambda b_n)}\E_P\Big\{\big(\hat{f}^{\mathrm{LP}}_k(x)-f(x)\big)^2\Big\}
            \\
            &\leq
            \frac{1}{h^d}C_X R_2(\kappa_\lambda) \sup_{x\in\cX^\circ}\E_P\Big\{\big(\hat{f}^{\mathrm{LP}}_k(x)-f(x)\big)^2\Big\} =\op\bigg(\frac{R_2(\kappa_\lambda)}{h^d}\bigg),
        \end{align*}
        and so $\E_P\big\{\kappa_{h,\lambda}^2(X-x_0)(\hat{f}^{\mathrm{LP}}_k(X)-f(X))^2\biggiven\hat{f}^{\mathrm{LP}}_k\big\}=\op\big(h^{-d}R_2(\kappa_\lambda)\big)$.  Moreover, $\E_P\bigl\{\kappa_{h,\lambda}^2(X-x_0)\bigr\}=\supOP\bigl(h^{-d}R_2(\kappa_\lambda)\bigr)$, and  by~\eqref{eq:c_k} and Markov's inequality, $c_k(x_0)=\supOP\bigl(\frac{1}{\sqrt{nh^d}}+h^{\beta^*}\bigr)$. Therefore,
        \begin{align*}
            \E_P\bigg\{\bigg\|\frac{1}{n}\sum_{i\in\cI_k} & \hat\rho_k'(\varepsilon_i\given X_i)\kappa_{h,\lambda}(X_i-x_0)\big(\hat{f}^{\mathrm{LP}}_k(X_i)-f(X_i)+c_k(x_0)\big)\hat\mu_k(x_0)\bigg\|_2^2\bigggiven(X_i,Y_i)_{i\in\cI_k^c}\bigg\}
            \\
            &=
            \frac{1}{n}\biggl\{\op\bigg(\frac{R_2(\kappa_\lambda)}{h^d}\bigg)
            +
            \supOP\biggl(\frac{R_2(\kappa_\lambda)}{h^d}\biggr)\supOP\biggl(\frac{1}{nh^d}+h^{2\beta^*}\biggr)\biggr\}\supOP(1)
            \\
            &=
            \op\bigg(\frac{1}{nh^d}\bigg),
        \end{align*}
        and so $\hat\phi_{\RN{4}.\mathrm{i}.k}=\op\big(\frac{1}{\sqrt{nh^d}}\big)$.

        \medskip\noindent{\bf Term $\boldsymbol{\hat\phi_{\RN{4}.\mathrm{ii}.k}}$:} 
        Applying Lemmas~\ref{lem:mu(x0)}~and~\ref{lem:c(x0)},
        \begin{align*}
            \hat\phi_{\RN{4}.\mathrm{ii}.k} &= \bigg(\underbrace{\frac{1}{n}\sum_{i\in\cI_k}\rho'(\varepsilon_i\given X_i)\kappa_{h,\lambda}(X_i-x_0)}_{=\supOP(1)}\bigg)
            \underbrace{\big(\hat{c}_k(x_0)-c_k(x_0)\big)}_{=\supOP\big(\frac{1}{nh^d}R_2^{1/2}(\kappa_\lambda)\big)}\underbrace{\big(\hat\mu_k(x_0)-\mu(x_0)\big)}_{=\supOP\big(\frac{1}{\sqrt{nh^d}}\big)}
            \\
            &\qquad
            +
            \bigg(\underbrace{\frac{1}{n}\sum_{i\in\cI_k}\big\{\hat\rho_k'(\varepsilon_i\given X_i)-\rho'(\varepsilon_i\given X_i)\big\}\kappa_{h,\lambda}(X_i-x_0)}_{=\op(R_1(\kappa_\lambda))}\bigg)
            \underbrace{\big(\hat{c}_k(x_0)-c_k(x_0)\big)}_{=\supOP\big(\frac{1}{nh^d}R_2^{1/2}(\kappa_\lambda)\big)}\underbrace{\hat\mu_k(x_0)}_{=\supOP(1)}
            \\
            &=\op\bigg(\frac{1}{nh^d}R_2^{1/2}(\kappa_\lambda)\bigg).
        \end{align*}

        \medskip\noindent{\bf Term $\boldsymbol{\hat\phi_{\RN{4}.\mathrm{iii}.k}}$:} 
        We have, for $n$ large enough that $\mathcal{X} + \lambda b_n\mathcal{B}_0(1) \subseteq \mathcal{X}^\circ$,
        \begin{equation*}
            \Var_P\bigg(\frac{1}{n}\sum_{i\in\cI_k}\rho'(\varepsilon_i\given X_i)\kappa_{h,\lambda}(X_i-x_0)\bigg) 
            = \frac{|\cI_k|}{n^2}\,\E_P\big\{\zeta_{(\rho')^2}(X)\kappa_{h,\lambda}^2(X-x_0)\big\}
            \leq \frac{C_2C_X}{nh^d}R_2(\kappa_\lambda),
        \end{equation*}
        and
        \begin{equation*}
            \big|\E_P\bigl\{\zeta_{\rho'}(X)\kappa_{h,\lambda}(X-x_0)\bigr\}-p_0\zeta_{\rho'}(x_0)\big|
            =\supoP(1),
        \end{equation*}
        by dominated convergence, using the fact that $\E_P\bigl\{\bigl|\zeta_{\rho'}(X)\kappa_{h,\lambda}(X-x_0)\bigr|\bigr\} 
            \leq C_2^{1/2}C_XR_1(\kappa_{\lambda})$.
        Thus
        \begin{equation*}
            \frac{1}{n}\sum_{i\in\cI_k}\rho'(\varepsilon_i\given X_i)\kappa_{h,\lambda}(X_i-x_0)
            =
            \bigl(1+\op(1)\bigr)\frac{|\cI_k|}{n}p_0\zeta_{\rho'}(x_0).
        \end{equation*}
        Moreover, as $\mu(x_0)=u(K)\bigl(1+\op(1)\bigr)$,
        \begin{equation}\label{eq:phi-4-simplif}
            \hat\phi_{\RN{4}.\mathrm{iii}}
            :=             \sum_{k=1}^{\mathcal{K}}\hat\phi_{\RN{4}.\mathrm{iii}.k}
            =
            \bigl(1+\op(1)\bigr)u(K)\zeta_{\rho'}(x_0)p_0\sum_{k=1}^{\mathcal{K}}\frac{|\cI_k|}{n}\big(\hat{c}_k(x_0)-c_k(x_0)\big).
        \end{equation}
        We further decompose
        \begin{equation}\label{eq:c-exp}
            \hat{c}_k(x_0)-c_k(x_0) = {c}_k^{(1)}(x_0)
            +
            c_k^{(2)}(x_0)
            +
            \bigl\{{c}_k^{(3)}(x_0)-c_k(x_0)\bigr\} ,
        \end{equation}
        where
        \begin{align*}
            {c}_k^{(1)}(x_0)&:=\frac{1}{\frac{1}{|\cI_k|}\sum_{i\in\cI_k}\kappa_{h,\lambda}(X_i-x_0)}\bigg(\frac{1}{|\cI_k|}\sum_{i\in\cI_k}\kappa_{h,\lambda}(X_i-x_0)\big(Y_i-f(X_i)\big)\bigg),
            \\
            {c}_k^{(2)}(x_0) &:= 
            \biggl\{\frac{1}{\frac{1}{|\cI_k|}\sum_{i\in\cI_k}\kappa_{h,\lambda}(X_i\!-\!x_0)}
            \!-\!
            \frac{1}{\E_P\bigl(\kappa_{h,\lambda}(X\!-\!x_0)\bigr)}
            \biggr\}\biggl(\frac{1}{|\cI_k|}\sum_{i\in\cI_k}\kappa_{h,\lambda}(X_i\!-\!x_0)\bigl(f(X_i) \!-\!\hat{f}^{\mathrm{LP}}_k(X_i)\bigr)\biggr),
            \\
            {c}_k^{(3)}(x_0) &:= 
            \frac{1}{\E_P\bigl(\kappa_{h,\lambda}(X-x_0)\bigr)}\bigg(\frac{1}{|\cI_k|}\sum_{i\in\cI_k}\kappa_{h,\lambda}(X_i-x_0)\big(f(X_i)-\hat{f}^{\mathrm{LP}}_k(X_i)\big)\bigg)
            .
        \end{align*}
        Now,
        \begin{align}
            {c}_k^{(1)}(x_0)&=
            \bigg(\frac{1}{|\cI_k|}\sum_{i\in\cI_k}\kappa_{h,\lambda}(X_i-x_0)\bigg)^{-1}\bigg(\frac{1}{|\cI_k|}\sum_{i\in\cI_k}\kappa_{h,\lambda}(X_i-x_0)\varepsilon_i\bigg)
            \notag
            \\
            &= \bigl(1+\op(1)\bigr)\bigg(\frac{1}{|\cI_k|}\sum_{i\in\cI_k}\kappa_{h,\lambda}(X_i-x_0)\varepsilon_i\bigg).
            \label{eq:c-pt1}
        \end{align}
        Next, since
        \begin{equation*}
            \frac{1}{|\cI_k|}\sum_{i\in\cI_k}\kappa_{h,\lambda}(X_i-x_0) = \E_P\bigl(\kappa_{h,\lambda}(X-x_0)\bigr)
            +
            \supOP\biggl(\frac{R_2^{1/2}(\kappa_\lambda)}{\sqrt{nh^d}}\biggr),
        \end{equation*}
        and $\E_P\bigl(\kappa_{h,\lambda}(X-x_0)\bigr)\geq c_X$, we have that
        \begin{equation*}
            \frac{1}{\frac{1}{|\cI_k|}\sum_{i\in\cI_k}\kappa_{h,\lambda}(X_i-x_0)} - \frac{1}{\E_P\bigl(\kappa_{h,\lambda}(X-x_0)\bigr)}
            =
            \supOP\biggl(\frac{R_2^{1/2}(\kappa_\lambda)}{\sqrt{nh^d}}\biggr).
        \end{equation*}
        Moreover,
        \begin{align*}
            \E_P\biggl(\frac{1}{|\cI_k|}\biggl|\sum_{i\in\cI_k} &\kappa_{h,\lambda}(X_i-x_0)\big(f(X_i)-\hat{f}^{\mathrm{LP}}_k(X_i)\big)\biggr|\biggr)
            \\
            &\leq
            \E_P\bigl(\bigl|\kappa_{h,\lambda}(X-x_0)\bigl(f(X)-\hat{f}_k^{\mathrm{LP}}(X)\bigr)\bigr|\bigr)
            \\
            &\leq
            \E_P\bigl(|\kappa_{h,\lambda}(X-x_0)|\bigr)\bigl(1+\supoP(1)\bigr)\sup_{x\in\cX^\circ}\Bigl(\E_P\bigl\{f(x)-\hat{f}_k^{\mathrm{LP}}(x)\bigr)\bigr\}^2\Bigr)
            \\
            &= \supOP\biggl(\frac{1}{\sqrt{nh^d}}+h^{\beta^*}\biggr),
        \end{align*}
        by Lemma~\ref{lem:LP-rate}. 
        Thus
        \begin{equation}\label{eq:c-pt2}
            c_k^{(2)}(x_0) = \supoP\biggl(\biggl\{\frac{1}{\sqrt{nh^d}}+h^{\beta^*}\biggr\}R_2^{1/2}(\kappa_\lambda)\biggr).
        \end{equation}
        Finally, for the third term, $\E_P\bigl\{{c}_k^{(3)}(x_0)-c_k(x_0)\biggiven\hat{f}^{\mathrm{LP}}\bigr\}=0$, and
        \begin{align*}
            {c}_k^{(3)}(x_0)-c_k(x_0) &= \frac{\frac{1}{|\cI_k|}\sum_{i\in\cI_k}\kappa_{h,\lambda}(X_i-x_0)\big\{\hat{f}^{\mathrm{LP}}_k(X_i)-f(X_i)-\E_P\bigl(\hat{f}^{\mathrm{LP}}_k(X_i)-f(X_i)\given\hat{f}^{\mathrm{LP}}_k\bigr)\big\}}{\E_P\bigl(\kappa_{h,\lambda}(X-x_0)\bigr)}
            =:\frac{c_N}{c_D}.
        \end{align*}
        Now
        \begin{equation*}
            c_D=\bigl(1+\op(1)\bigr)p_0,
        \end{equation*}
        and
        \begin{align*}
            \E_P\big(c_N^2\big) &\leq
            \frac{1}{|\cI_k|}\E_P\Big\{\kappa_{h,\lambda}^2(X-x_0)\big(\hat{f}^{\mathrm{LP}}_k(X)-f(X)\big)^2\Big\}
            \\
            &\leq
            \frac{1}{|\cI_k|}\E_P\big(\kappa_{h,\lambda}^2(X-x_0)\big)
            \bigl(1+\supoP(1)\bigr)
            \,\sup_{x\in\cX^\circ}\E_P\Big(\big\{\hat{f}^{\mathrm{LP}}_k(x)-f(x)\big\}^2\Big)
            \\
            &=\supOP\bigg(\frac{R_2(\kappa_\lambda)}{nh^d}\bigg\{\frac{1}{nh^d}+h^{2\beta^*}\bigg\}\bigg)
        \end{align*}
        by Lemma~\ref{lem:LP-rate}.  Hence
        \begin{equation}\label{eq:c-pt3}
            {c}_k^{(3)}(x_0)-c_k(x_0)=\supoP\biggl(\biggl\{\frac{1}{\sqrt{nh^d}}+h^{\beta^*}\biggr\}R_2^{1/2}(\kappa_\lambda)\biggr).
        \end{equation}
        Combining~\eqref{eq:c-pt1},~\eqref{eq:c-pt2}, and~\eqref{eq:c-pt3}, we conclude that
        \begin{equation*}
            \hat\phi_{\RN{4}.\mathrm{iii}} = (1+\op(1))u(K)\zeta_{\rho'}(x_0)p_0\cdot\frac{1}{n}\sum_{i=1}^n\kappa_{h,\lambda}(X_i-x_0)\varepsilon_i,
        \end{equation*}
        as required. The result then follows as $\zeta_{\rho'}(x_0)p_0\,\mathrm{e}_1^\top\Omega^{-1}u(K)=\mathrm{e}_1^\top s_1(K)^{-1}u(K)=1$.
    \end{proof}

    \begin{lemma}\label{lem:omega2}
        Adopting the setup of Theorem~\ref{thm:decomp},
        \begin{equation*}
            \hat\Omega(\boldsymbol{\tau}) := \frac{1}{n}\sum_{k=1}^{\mathcal{K}}\sum_{i\in\cI_k}{\pi_h}(X_i-x_0)\hat\rho_k'(\varepsilon_i-\tau_i\varrho_i\given X_i) 
            = \Omega+\op(1),
        \end{equation*}
        where
        \begin{equation}\label{eq:Omega}
            \Omega := p_0\,s_1(K)\zeta_{\rho'}(x_0).
        \end{equation}
    \end{lemma}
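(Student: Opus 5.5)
We split $\hat\Omega(\boldsymbol{\tau})-\Omega$ into four pieces and show each is $\op(1)$ in operator norm:
\begin{align*}
\hat\Omega(\boldsymbol{\tau})-\Omega
&= \underbrace{\frac{1}{n}\sum_{i=1}^n\pi_h(X_i-x_0)\rho'(\varepsilon_i\given X_i)-\Omega}_{=:T_1}
+\sum_{k=1}^{\mathcal{K}}\underbrace{\frac{1}{n}\sum_{i\in\cI_k}\pi_h(X_i-x_0)(\hat\rho_k'-\rho')(\varepsilon_i\given X_i)}_{=:T_{2,k}}
\\
&\qquad+\sum_{k=1}^{\mathcal{K}}\underbrace{\frac{1}{n}\sum_{i\in\cI_k}\pi_h(X_i-x_0)\bigl\{\hat\rho_k'(\varepsilon_i-\tau_i\varrho_i\given X_i)-\hat\rho_k'(\varepsilon_i\given X_i)\bigr\}}_{=:T_{3,k}}.
\end{align*}
Since $\bar p$ is fixed, it suffices to bound each entry. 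The constants involved are uniform over $\cP,\cX,\cH,\Lambda$ by Assumption~\ref{ass:DGM}.

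\textbf{Term $T_1$.} Compute the mean and variance entrywise. The mean is $\int_{\mathcal{B}_0(1)}K(\nu)Q(\nu)Q(\nu)^\top\zeta_{\rho'}(x_0+h\nu)p_X(x_0+h\nu)\,d\nu$. Uniform continuity of $\zeta_{\rho'}$ follows from $\zeta_{\rho'}=-\zeta_{\rho^2}$ via~\eqref{Eq:FITwoForms} together with~\ref{ass:score}. Combined with Hölder continuity of $p_X$ and boundedness of $K$ and $Q$ on the unit ball (Lemma~\ref{lem:Q}), this gives a mean of $p_0\zeta_{\rho'}(x_0)s_1(K)+o(1)$, uniformly. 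The variance of each entry is at most $n^{-1}\E_P\{K_h^2(X-x_0)\|Q_h\|^4\zeta_{(\rho')^2}(X)\}\le C_XC_2e^{2d}R_2(K)/(nh^d)\to0$. Chebyshev's inequality then gives $T_1=\op(1)$.

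\textbf{Term $T_{2,k}$.} Condition on the out-of-fold data, so that $\hat\rho_k$ is fixed and the summands are i.i.d. By Cauchy--Schwarz,
\[
\E_P\bigl(\|T_{2,k}\|\biggiven\hat\rho_k\bigr)\lesssim \E_P\bigl\{|K_h(X-x_0)|\,\cE_{\hat\rho',k,\mathrm{sq}}^{1/2}(X)\biggiven\hat\rho_k\bigr\}\le C_XR_1(K)\sup_{x}\cE_{\hat\rho',k,\mathrm{sq}}^{1/2}(x).
\]
The right-hand side is $\op(1)$ by~\ref{ass:rho'-consistency}, and conditional Markov's inequality gives $T_{2,k}=\op(1)$.

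\textbf{Term $T_{3,k}$.} Here $\pi_h$ vanishes off $\mathcal{B}_{x_0}(h)$, so only inner points contribute. Using the modulus of continuity $\omega_k$ of $\hat\rho_k'$ (finite by~\ref{ass:rho'-consistency}, uniform continuity),
\[
\|T_{3,k}\|\lesssim\omega_k\Bigl(\max_{i:X_i\in\mathcal{B}_{x_0}(h)}|\varrho_i|\Bigr)\frac{1}{n}\sum_i|K_h(X_i-x_0)|.
\]
The average is $\supOP(1)$. For inner points, $|\varrho_i|\le e^{d/2}\|\hat\theta-\theta_0\|_2+\sup_{x\in\mathcal{B}_{x_0}(h)}|f(x)-f(x_0)|$. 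The second term is $O(h^{\beta\wedge1})$, and the first needs consistency $\hat\theta-\theta_0=\op(1)$, which is Lemma~\ref{lem:gamma0-consistency}.

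The main obstacle is making the $T_{3,k}$ step rigorous. The modulus $\omega_k$ is random, since it depends on $\hat\rho_k$. Uniform continuity holds for each realisation, but uniformity of $\omega_k(\delta)\to0$ across realisations and over $\cP$ is not directly given. I would resolve this using the stated uniform continuity of $\rho'$ in~\ref{ass:score} and split
\[
\hat\rho_k'(\varepsilon-t)-\hat\rho_k'(\varepsilon)=(\hat\rho_k'-\rho')(\varepsilon-t)-(\hat\rho_k'-\rho')(\varepsilon)+\rho'(\varepsilon-t)-\rho'(\varepsilon).
\]
The $\rho'$ difference is controlled by its deterministic uniform modulus. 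The $(\hat\rho_k'-\rho')$ terms are handled in $L^2$ as in $T_{2,k}$, using the shifted-error consistency in~\ref{ass:score-estimation} with $|\tau|\le\epsilon$; this requires an $L^2$ shift-consistency for $\hat\rho'$ analogous to (ii), which I would take as implied by (iii)--(iv). This route only needs $|\varrho_i|\le\epsilon$ with probability $1-\op(1)$. Lemma~\ref{lem:gamma0-consistency} supplies this, and it should be proved independently of the present lemma, using the identifiability condition~\ref{ass:unique-root-ish}, to avoid circularity.
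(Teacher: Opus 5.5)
Your decomposition is the paper's, up to grouping. The paper splits your $T_1$ into a centred term $\xi_{\RN{3}}$, controlled by the same $C_2C_Xe^{2d}R_2(K)/(nh^d)$ variance bound, and a bias term $\xi_{\RN{4}}$, handled by dominated convergence and continuity of $\zeta_{\rho'}p_X$ at $x_0$. Your $T_{3,k}$ argument, before the ``obstacle'' paragraph, is exactly the paper's treatment of $\xi_{\RN{1}.k}$, with $\max_{i:X_i\in\mathcal{B}_{x_0}(h)}|\varrho_i|=\op(1)$ taken from Lemma~\ref{lem:gamma0-consistency}. That lemma's proof goes through Poincar\'e--Miranda and \ref{ass:unique-root-ish} and does not invoke the present lemma, so the circularity you worry about does not arise.

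Two small points. First, in $T_{2,k}$, condition (iii) read with the $o_{\cP,\cX}$ convention only controls $\cE_{\hat\rho',k,\mathrm{sq}}(x)$ at each fixed $x$. To pass to the supremum over $\mathcal{B}_{x_0}(h)$ that your Markov bound needs, you should invoke (iv). The paper does this via $\cE_{\hat\rho',k,\mathrm{sq}}(X_i)\le\cE_{\hat\rho',k,\mathrm{sq}}(x_0)+\bar\omega(h)$. Second, your derivation of continuity of $\zeta_{\rho'}$ from $\zeta_{\rho'}=-\zeta_{\rho^2}$ is a useful justification of something the paper simply asserts.

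The genuine problem is the patch you propose for the randomness of $\omega_k$. The evaluation point $\varepsilon_i-\tau_i\varrho_i$ is not $\varepsilon_i$ plus a shift independent of $\varepsilon_i$, for two reasons:
\begin{itemize}
\item $\varrho_i$ involves $\hat\theta$, which is computed from the whole sample including $(X_i,Y_i)$.
\item The mean-value coefficient $\tau_i$ is itself a function of $\varepsilon_i$.
\end{itemize}
So you cannot condition on $(\hat\rho_k,X_i)$ and treat the shift as fixed. A condition such as $\sup_{|\tau|\le\epsilon}\E_P\{(\hat\rho_k'-\rho')^2(\varepsilon+\tau\given X)\biggiven\hat\rho_k,X=x\}=\op(1)$, with the supremum outside the expectation, does not bound $\frac1n\sum_{i\in\cI_k}|K_h(X_i-x_0)|\,|(\hat\rho_k'-\rho')(\varepsilon_i-\tau_i\varrho_i\given X_i)|$. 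You would need the supremum inside, e.g.\ $\E_P\{\sup_{|t|\le\delta}(\hat\rho_k'-\rho')^2(\varepsilon+t\given X)\biggiven\hat\rho_k,X=x\}=\op(1)$ for small $\delta$.

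Nor is even the weaker, supremum-outside condition implied by (iii)--(iv):
\begin{itemize}
\item The only shifted condition on the estimator, (ii), concerns $\hat\rho$ rather than $\hat\rho'$.
\item Condition (iv) is continuity in $x$, not in the error argument.
\item Transferring $L^2(P_{\varepsilon|x})$ convergence to the shifted law would require a bound on the density ratio $p(u-t\given x)/p(u\given x)$, i.e.\ on $\rho$, which is not assumed.
\end{itemize}
So either stop at the modulus-of-continuity bound, as the paper does (implicitly reading $\omega_k(\op(1))=\op(1)$ off the uniform continuity in \ref{ass:rho'-consistency}), or state the needed local-uniform consistency of $\hat\rho'$ as an explicit extra assumption.
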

    \begin{proof}
        We decompose
        \begin{equation}\label{eq:Omega-decomp}
            \hat\Omega(\boldsymbol{\tau})-\Omega = \sum_{k=1}^\mathcal{K}\xi_{\RN{1}.k}+\sum_{k=1}^\mathcal{K}\xi_{\RN{2}.k}+\xi_{\RN{3}}+\xi_{\RN{4}},
        \end{equation}
        where
        \begin{align*}
            \xi_{\RN{1}.k} &:= \frac{1}{n}\sum_{i\in\cI_k}{\pi_h}(X_i-x_0)\big\{\hat\rho_k'(\varepsilon_i-\tau_i\varrho_i\given X_i)-\hat\rho_k'(\varepsilon_i\given X_i)\big\},
            \\
            \xi_{\RN{2}.k} &:= \frac{1}{n}\sum_{i\in\cI_k}{\pi_h}(X_i-x_0)\big\{\hat\rho_k'(\varepsilon_i\given X_i)-\rho'(\varepsilon_i\given X_i)\big\},
            \\
            \xi_{\RN{3}} &:= \frac{1}{n}\sum_{i=1}^n\Big({\pi_h}(X_i-x_0)\rho'(\varepsilon_i\given X_i)-\E_P\big\{{\pi_h}(X-x_0)\rho'(\varepsilon\given X)\big\}\Big),
            \\
            \xi_{\RN{4}} &:= \E_P\big\{{\pi_h}(X-x_0)\rho'(\varepsilon\given X)\big\} - p_0\,s_1(K)\zeta_{\rho'}(x_0).
        \end{align*}

        \medskip\noindent{\bf Term $\boldsymbol{\xi_{\RN{1}.k}}$:} 
        As in~\eqref{eq:modulus-continuity}, and adopting the notation of the same modulus of continuity $\omega$,
        \begin{align*}
\|\xi_{\RN{1}.k}\|_{\mathrm{F}} = \tr^{1/2}\big(\xi_{\RN{1}.k}\xi_{\RN{1}.k}^\top\big)
            &\leq
            e^d \bigg(\frac{1}{n}\sum_{i\in\cI_k}|K_h(X_i-x_0)|\cdot\big|\hat\rho_k'(\varepsilon_i-\tau_i\varrho_i\given X_i)-\hat\rho_k'(\varepsilon_i\given X_i)\big|\bigg)
            \\
            &\leq
            e^d \,\omega\Bigl(\max_{i:X_i\in\innerr}|\varrho_i|\Bigr)\frac{1}{n}\sum_{i\in\cI_k}|K_h(X_i-x_0)|
            =\op(1).
        \end{align*}

        \medskip\noindent{\bf Term $\boldsymbol{\xi_{\RN{2}.k}}$:} 
        Since $\cE_{\hat\rho',k,\mathrm{sq}}$ is uniformly continuous, the modulus of continuity $\bar{\omega}(\Delta):=\sup\{|\cE_{\hat\rho',k,\mathrm{sq}}(x)-\cE_{\hat\rho',k,\mathrm{sq}}(x')|:x,x'\in\R^d, \|x-x'\|\leq\Delta\}$ satisfies $\lim_{\Delta\searrow0}\bar{\omega}(\Delta)=0$. Hence
        \begin{align*}
            \E_P\bigl(\tr(\xi_{\RN{2}.k}\xi_{\RN{2}.k}^\top)\biggiven (X_i)_{i\in\cI_k},\hat\rho_k\bigr) 
            &=
            \frac{1}{n^2}\sum_{i\in\cI_k}\sum_{j\in\cI_k}K_h(X_i-x_0)K_h(X_j-x_0)\bigl(Q_h(X_i-x_0)^\top Q_h(X_j-x_0)\bigr)^2
            \\
            &\hspace{2.5cm}\cdot
            \E_P\bigl\{(\hat\rho_k'-\rho')(\varepsilon_i\given X_i)(\hat\rho_k'-\rho')(\varepsilon_j\given X_j)\biggiven X_i,X_j,\hat\rho_k\bigr\}
            \\
            &\leq
            e^{2d}\biggl(\frac{1}{n^2}\sum_{i\in\cI_k}\sum_{j\in\cI_k}|K_h(X_i-x_0)K_h(X_j-x_0)| 
            \\
            &\hspace{2.5cm}\cdot
            \big|\E_P\bigl\{(\hat\rho_k'-\rho')(\varepsilon_i\given X_i)(\hat\rho_k'-\rho')(\varepsilon_j\given X_j)\biggiven X_i,X_j,\hat\rho_k\bigr\}\big|\bigg)
            \\
            &\leq
            e^{2d}\biggl(\frac{1}{n}\sum_{i\in\cI_k}|K_h(X_i-x_0)| \cdot \cE_{\hat\rho',k,\mathrm{sq}}^{1/2}(X_i) \biggr)^2
            \\
            &\leq
            2e^{2d}|\cE_{\hat\rho',k,\mathrm{sq}}(x_0)|\biggl(\frac{1}{n}\sum_{i\in\cI_k}|K_h(X_i-x_0)| \biggr)^2
            \\
            &\hspace{1cm}
            +
            2e^{2d}\biggl(\frac{1}{n}\sum_{i\in\cI_k}|K_h(X_i-x_0)|\cdot |\cE_{\hat\rho',k,\mathrm{sq}}(X_i)-\cE_{\hat\rho',k,\mathrm{sq}}(x_0)|^{1/2} \biggr)^2
            \\
            &\leq
            2e^{2d}\bigl(| \cE_{\hat\rho',k,\mathrm{sq}}(x_0)| + \bar{\omega}(h) \bigr)\biggl(\frac{1}{n}\sum_{i\in\cI_k}|K_h(X_i-x_0)| \biggr)^2 =\supoP(1),
        \end{align*}
        by Assumption~\ref{ass:DGM}. Therefore by e.g.~\citet[Lemma S6]{lundborg}, it follows that $\xi_{\RN{2}.k}\xi_{\RN{2}.k}^\top=\supoP(1)$.
        
        \medskip\noindent{\bf Term $\boldsymbol{\xi_{\RN{3}}}$:} 
        For $n$ large enough that $\mathcal{X} + b_n\mathcal{B}_0(1) \subseteq \mathcal{X}^\circ$, 
        \begin{align*}
            \E_P\bigl(\tr(\xi_{\RN{3}}\xi_{\RN{3}}^\top)\bigr)
            &\leq
            \frac{1}{n}\E_P\Big\{\tr\big({\pi_h}(X-x_0){\pi_h}(X-x_0)^\top\big) \bigl\{\rho'(\varepsilon\given X)\bigr\}^2\Big\}
            \\
            &=\frac{1}{n}\E_P\Big\{K_h^2(X-x_0)\|Q_h(X-x_0)\|_2^4\,\E_P\big(\bigl\{\rho'(\varepsilon\given X)\bigr\}^2\biggiven X\big)\Big\}
            \\
            &\leq
            \frac{1}{n} C_2e^{2d} \E_P\bigl\{K_h^2(X-x_0)\bigr\}
            \leq C_2C_Xe^{2d} R_2(K) \cdot \frac{1}{nh^d}.
        \end{align*}
        Therefore $\xi_{\RN{3}}=\supOP\big(\frac{1}{\sqrt{nh^d}}\big)=\op(1)$.

        \medskip\noindent{\bf Term $\boldsymbol{\xi_{\RN{4}}}$:} For $n$ large enough that $\mathcal{X} + b_n\mathcal{B}_0(1) \subseteq \mathcal{X}^\circ$, 
        \begin{align*}
            \xi_{\RN{4}}
            &=
            \int_{\mathcal{B}_0(1)} K(\nu)Q(\nu)Q(\nu)^\top \big\{\zeta_{\rho'}(x_0+h\nu)p_X(x_0+h\nu)-\zeta_{\rho'}(x_0)p_X(x_0)\big\} \, d\nu.
        \end{align*}
        Thus, by the continuity of $\zeta_{\rho'}$ and $p_X$ at $x_0$, and since
        \begin{equation*}
            \biggl\|\int_{\mathcal{B}_0(1)}K(\nu)Q(\nu)Q(\nu)^\top\zeta_{\rho'}(x_0+h\nu)p_X(x_0+h\nu) \, d\nu\biggr\|_{\mathrm{F}}^2
            \leq
            C_2^2C_X^2e^{2d} R_1^2(K),
        \end{equation*}
        it follows by the dominated convergence theorem that $\xi_{\RN{4}}=\op(1)$.

        We conclude that  $\hat\Omega(\boldsymbol{\tau})=\Omega+\op(1)$, as required.       
    \end{proof}

    \begin{lemma}\label{lem:B(f)}
    We adopt the setup of Theorem~\ref{thm:decomp}.  
    \begin{enumerate}[label=(\roman*)]
        \label{lem:bias1}
        \item We have
        \[
        \frac{1}{p_0}\E_P\bigl\{K_h(X-x_0) e_0^*(X)\bigr\}
        =
        h^{\beta^*}\,B(f,x_0,K,h)
        + \supoP\bigl(h^{\beta^*}\bigr),
        \]
        where
        \begin{equation*}
        B(f,x_0,K,h):=h^{\beta_0^*-\beta^*} \sum_{\substack{\alpha\in\N_0^d:\\\|\alpha\|_1=\beta_0^*}}\frac{1}{\alpha!}\int_{\mathcal{B}_{0}(1)} K(\nu)\nu^\alpha  g_{\alpha,\beta_0^*,x_0,h}(\nu) \,d\nu
        \end{equation*}
        and
        \[
        g_{\alpha,\beta_0^*,x_0,h}(\nu) := \beta_0^* \int_0^1(1-t)^{\beta_0^*-1}\bigl\{\partial^\alpha f(x_0+th\nu)-\partial^\alpha f(x_0)\bigr\} \,dt.
        \]
        \label{lem:bias2}
        \vspace{-1.5em}
        \item There exists $C_B(K,d,\beta_0^*,L) > 0$ such that 
        \[        \sup_{f\in\cH(\beta,L)} \sup_{x_0\in\cX}\sup_{h\in\cH_n}|B(f,x_0,K,h)| \leq C_B(K,d,\beta_0^*,L),
        \]
        and hence
        \[
        \sup_{f\in\cH(\beta,L)}\sup_{x_0\in\cX}\sup_{h\in\cH_n}\frac{1}{p_0}\E_P\bigl\{K_h(X-x_0) e_0^*(X)\bigr\} = O\big(h^{\beta^*}\big). \label{lem:bias3}
        \]
    \end{enumerate}
\end{lemma}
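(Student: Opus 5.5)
The plan is a change of variables followed by a direct bound on the integrand.

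For part (i), substitute $x = x_0 + h\nu$ in $\E_P\{K_h(X-x_0)e_0^*(X)\}$. For $n$ large enough that $\mathcal{X}+b_n\mathcal{B}_0(1)\subseteq\mathcal{X}^\circ$, this gives
\[
\frac{1}{p_0}\E_P\bigl\{K_h(X-x_0)e_0^*(X)\bigr\} = \frac{1}{p_0}\int_{\mathcal{B}_0(1)} K(\nu)\,e_0^*(x_0+h\nu)\,p_X(x_0+h\nu)\,d\nu .
\]
From~\eqref{eq:e0*} we have $e_0^*(x_0+h\nu) = h^{\beta_0^*}\sum_{\|\alpha\|_1=\beta_0^*}\frac{1}{\alpha!}\nu^\alpha g_{\alpha,\beta_0^*,x_0,h}(\nu)$, using $(h\nu)^\alpha = h^{\beta_0^*}\nu^\alpha$. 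In the case $\beta_0^*=0$ we read this as $e_0^*(x_0+h\nu)=f(x_0+h\nu)-f(x_0)$, consistent with~\eqref{eq:B-def-beta01}. Next, write $p_X(x_0+h\nu)=p_0+\{p_X(x_0+h\nu)-p_0\}$. The $p_0$ part yields exactly $h^{\beta^*}B(f,x_0,K,h)$, once we note that $h^{\beta_0^*}=h^{\beta^*}\cdot h^{\beta_0^*-\beta^*}$.

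For the remainder, first bound $g$ using the H\"older condition. When $\beta_0^*<\beta_0$, the derivative $\partial^\alpha f$ has order strictly below $\beta_0$, so it is Lipschitz with constant $dL$ via the next derivative. Otherwise $\partial^\alpha f$ is $(\beta-\beta_0)$-H\"older with constant $L$. In both cases
\[
|g_{\alpha,\beta_0^*,x_0,h}(\nu)|\le C\,L\,(h\|\nu\|)^{\beta^*-\beta_0^*},
\]
since $\beta_0^*\int_0^1(1-t)^{\beta_0^*-1}t^{\gamma}\,dt\le 1$. Hence, uniformly over $f\in\cH(\beta,L)$ and $x_0$,
\[
|e_0^*(x_0+h\nu)| \le C_{e^*}(L,d,\beta^*)\,h^{\beta^*}\|\nu\|^{\beta^*},
\]
which is essentially Lemma~\ref{lem:e-bound}. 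The remainder term is then at most
\[
\frac{C_{e^*}h^{\beta^*}}{c_X}\int_{\mathcal{B}_0(1)}|K(\nu)|\,\bigl|p_X(x_0+h\nu)-p_0\bigr|\,d\nu .
\]
Since $p_X\in\cH(\beta_X,L_X)$ is uniformly continuous with a modulus that is uniform over $\cP$, we get $|p_X(x_0+h\nu)-p_0|\le L_X(h\|\nu\|)^{\beta_X\wedge1}\to0$ uniformly (taking $\beta_X\wedge 1$ and using the bound on the first derivatives when $\beta_X>1$). With $R_1(K)<\infty$, because $K$ is bounded with compact support, this term is $\supoP(h^{\beta^*})$, uniformly over $h\in\cH_n$ since $b_n\to0$.

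For part (ii), the same bound on $g$ gives
\[
|B(f,x_0,K,h)| \le h^{\beta_0^*-\beta^*}\sum_{\|\alpha\|_1=\beta_0^*}\frac{1}{\alpha!}\int_{\mathcal{B}_0(1)}|K(\nu)|\,|\nu^\alpha|\,C L\,h^{\beta^*-\beta_0^*}\|\nu\|^{\beta^*-\beta_0^*}\,d\nu .
\]
The powers of $h$ cancel, and $\|\nu\|\le1$ on the support, so this is at most $C_B:=C L R_1(K)\,d^{\beta_0^*}/\beta_0^*!$ (with $C_B=LR_1(K)$ when $\beta_0^*=0$). The constant is free of $f$, $x_0$ and $h$. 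Combining this with (i), or bounding directly via $C_{e^*}$ and $C_X/c_X$, gives the $O(h^{\beta^*})$ statement.

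The main obstacle is the case $\beta>p+1$, where $\beta^*=p+1$ is an integer and $\beta_0^*=p<\beta_0$. There one must verify that the H\"older class gives a Lipschitz bound on the order-$p$ derivatives with the exponent $\beta^*-\beta_0^*=1$, which is what makes $h^{\beta_0^*-\beta^*}$ cancel exactly. The rest is routine bookkeeping on the uniformity of the $\supoP$ terms.
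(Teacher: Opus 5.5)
Your proposal is correct and follows essentially the same route as the paper: you change variables, split $p_X(x_0+h\nu)=p_0+\{p_X(x_0+h\nu)-p_0\}$ so that the $p_0$ part gives exactly $h^{\beta^*}B(f,x_0,K,h)$, bound the remainder by combining the H\"older bound on $\partial^\alpha f$ with $L_X(h\|\nu\|)^{\beta_X\wedge1}$, and reuse the same bound on $g_{\alpha,\beta_0^*,x_0,h}$ for (ii). You are also more careful than the paper in two places: for $\beta>p+1$ you derive the Lipschitz bound on the order-$p$ derivatives from the next-order derivatives, and you treat $\beta_0^*=0$ separately, whereas the paper silently uses constant $L$ with exponent $\beta^*-\beta_0^*$ and writes its constant with $(\beta_0^*-1)!$.
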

\begin{proof}
    \emph{(i)} Recalling the definition of $e_0^*$ from~\eqref{eq:e0*}, we have 
    \begin{align*}
         \frac{1}{p_0}\E_P\bigl\{K_h(X&-x_0) e_0^*(X)\bigr\}
        =
        \frac{h^{\beta_0^*}}{p_0}\sum_{\substack{\alpha\in\N_0^d: \\ \|\alpha\|_1=\beta_0^*}}\frac{1}{\alpha!}\int_{\mathcal{B}_{0}(1)} K(\nu)Q(\nu) \nu^\alpha \,
        g_{\alpha,\beta_0^*,x_0,h}(\nu) \,p_X(x_0+h\nu) \,d\nu.
    \end{align*}
    Now, for $n$ large enough that $\mathcal{X} + b_n\mathcal{B}_0(1) \subseteq \mathcal{X}^\circ$, 
    \begin{align*}
        \bigg|&\frac{h^{\beta_0^*}}{p_0}\sum_{\substack{\alpha\in\N_0^d:\\\|\alpha\|_1=\beta_0^*}}\frac{1}{\alpha!}\int_{\mathcal{B}_0(1)}K(\nu)\nu^\alpha g_{\alpha,\beta_0^*,x_0,h}(\nu)\bigl\{p_X(x_0+h\nu)-p_X(x_0)\bigr\} \,d\nu\bigg|
        \\
        &\leq
        \frac{h^{\beta_0^*}}{p_0}\sum_{\substack{\alpha\in\N_0^d:\\\|\alpha\|_1=\beta_0^*}}\frac{\beta_0^*}{\alpha!}\int_{\mathcal{B}_{0}(1)} |K(\nu)| \cdot \big|\nu^\alpha\big| \int_0^1 \bigl|\partial^\alpha f(x_0+th\nu)-\partial^\alpha f(x_0)\bigr| \, dt
        \cdot\bigl|p_X(x_0+h\nu)-p_X(x_0)\bigr| \,d\nu 
        \\
        &\leq
        \frac{L_X L h^{\beta^*+(\beta_X\wedge 1)}}{p_0} \sum_{\substack{\alpha\in\N_0^d: \\ \|\alpha\|_1=\beta_0^*}}\frac{\beta_0^*}{\alpha!} \int_{\mathcal{B}_{0}(1)} |K(\nu)|\|\nu\|^{\beta^*-\beta_0^*+(\beta_X\wedge1)} \, d\nu 
        \\
        &\leq
        \frac{L_X L d^{\beta_0^*}R_1(K)}{c_X(\beta_0^*-1)!} \, h^{\beta^*+(\beta_X\wedge 1)} 
        =\supoP\big(h^{\beta^*}\big),
    \end{align*}
    which establishes~\emph{(i)}. 
    
    \medskip
    \noindent\emph{(ii)} 
    By identical arguments to those employed in~\emph{(i)},
    \begin{align}
\sup_{f\in\cH(\beta,L)}\sup_{x_0\in\cX}\sup_{h\in\cH_n}\Bigg|h^{\beta_0^*-\beta^*} \sum_{\substack{\alpha\in\N_0^d: \\ \|\alpha\|_1=\beta_0^*}}\frac{1}{\alpha!}\int_{\mathcal{B}_{0}(1)} &K(\nu) \nu^\alpha g_{\alpha,\beta_0^*,x_0,h}(\nu) \,d\nu\Bigg|
\notag
        \\
        &\leq 
        \frac{Ld^{\beta_0^*}R_1(K)}{(\beta_0^*-1)!}
        =: C_B(K,d,\beta_0^*,L),
        \label{eq:CB}
    \end{align}
    so~\emph{(ii)} follows.
    \end{proof}

\begin{lemma}\label{lem:Q}
    Let $B_q:=\{\nu\in\R^d:\|\nu\|_q\leq1\}$.  For $\nu \in B_q$, 
    \[
    \|Q(\nu)\|_1 \vee \|Q(\nu)\|_2^2 \leq e^d.
    \] 
\end{lemma}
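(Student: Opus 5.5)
The plan is a direct comparison with the exponential series, using only the fact that every coordinate of $\nu$ is bounded by one in absolute value. First observe that for $\nu\in B_q$ with $q\geq1$ we have $|\nu_j|\leq\|\nu\|_\infty\leq\|\nu\|_q\leq 1$ for each $j\in[d]$. Consequently, for every multi-index $\alpha\in\N_0^d$, $|\nu^\alpha|=\prod_{j=1}^d|\nu_j|^{\alpha_j}\leq1$.

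For the $\ell_1$ bound, recall that the components of $Q(\nu)$ are $\nu^\alpha/\alpha!$ for $\|\alpha\|_1\leq p$. Dropping the constraint $\|\alpha\|_1\leq p$ only adds non-negative terms, so
\[
\|Q(\nu)\|_1=\sum_{\alpha:\|\alpha\|_1\leq p}\frac{|\nu^\alpha|}{\alpha!}\leq\sum_{\alpha\in\N_0^d}\prod_{j=1}^d\frac{|\nu_j|^{\alpha_j}}{\alpha_j!}.
\]
The right-hand side factorises coordinatewise as $\prod_{j=1}^d\sum_{m=0}^\infty|\nu_j|^m/m! = \prod_{j=1}^d e^{|\nu_j|} = e^{\|\nu\|_1}$. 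Since $|\nu_j|\leq1$ for each $j$, this is at most $e^d$.

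For the squared $\ell_2$ bound, each component satisfies $|\nu^\alpha|/\alpha!\leq1$, because $|\nu^\alpha|\leq1$ and $\alpha!\geq1$. Hence $(\nu^\alpha/\alpha!)^2\leq|\nu^\alpha|/\alpha!$, and summing over $\alpha$ gives $\|Q(\nu)\|_2^2\leq\|Q(\nu)\|_1\leq e^d$.

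There is no real obstacle here. The only point requiring care is to justify that passing to the full multi-index sum is legitimate; this is immediate because every term in that sum is non-negative.
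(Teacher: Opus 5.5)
Your proof is correct and follows essentially the same route as the paper: both compare $\|Q(\nu)\|_1$ with the exponential series using $|\nu_j|\le 1$. The paper sums over total degree via the multinomial theorem to get $\sum_{\ell=0}^p d^\ell/\ell!\le e^d$, whereas you factorise over coordinates to get $e^{\|\nu\|_1}\le e^d$, and both obtain the squared $\ell_2$ bound by comparing each squared component with $1/\alpha!$. Your justification $|\nu_j|\le\|\nu\|_\infty\le\|\nu\|_q\le 1$ is also the right one; the paper's stated inequality $|\nu_r|\le\|\nu\|_q^q$ is false in general, for example when $\nu=(1/2,0,\ldots,0)$ and $q=2$, although its conclusion $|\nu_r|\le 1$ still holds.
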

\begin{proof}
    We have $|\nu_r|\leq \|\nu\|_q^q\leq 1$ for every $r\in[d]$, and recall that the multinomial theorem states that
    \begin{equation}\label{eq:multinom-expansion}
    \biggl(\sum_{r=1}^d x_r\biggr)^\ell = \sum_{\alpha\in\N_0^d:\|\alpha\|_1 = \ell} \frac{\ell!}{\alpha!} x^\alpha
    \end{equation}
    for $x = (x_1,\ldots,x_d)^\top \in \mathbb{R}^d$ and $\ell \in \mathbb{N}_0$.  Thus, by H\"older's inequality,
    \begin{align*}
        \|Q(\nu)\|_1 
        =
        \sum_{\alpha\in\N_0^d:\|\alpha\|_1\leq p}\frac{1}{\alpha!}\prod_{r=1}^d|\nu_r|^{\alpha_r}
        \leq
        \sum_{\ell=0}^p \sum_{\alpha\in\N_0^d:\|\alpha\|_1 = \ell}\frac{1}{\alpha!}
        =
        \sum_{\ell=0}^{ p}\frac{d^\ell}{\ell!}
        \leq
        e^d.
    \end{align*}
    Moreover, 
    \begin{align*}
        \|Q(\nu)\|_2^2
        =
        \sum_{\alpha\in\N_0^d:\|\alpha\|_1\leq p}\frac{1}{(\alpha!)^2}\prod_{r=1}^d|\nu_r|^{2\alpha_r}
        \leq
        \sum_{\alpha\in\N_0^d:\|\alpha\|_1\leq p}\frac{1}{\alpha!}
        \leq
        e^d,
    \end{align*}
    as required.
\end{proof}
 
    \begin{lemma}\label{lem:e-bound}
        For $x\in\R^d$ with $\|x-x_0\|\leq h$, we have
        \begin{align*}
            &|e_0^*(x)|\leq \frac{Ld^{\beta_0^*}}{\beta_0^*!}h^{\beta^*} =: C_{e^*}(L,d,\beta^*)h^{\beta^*},
            \\
            &|e_0(x)|\leq L(e^{hd}-1)+\frac{Ld^{\beta_0^*}}{\beta_0^*!}h^{\beta^*}
             =: C_e(L,d,\beta^*,h)h^{1\vee\beta^*}.
        \end{align*}
    \end{lemma}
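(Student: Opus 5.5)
The plan is to bound $e_0^*$ directly from its integral-remainder form \eqref{eq:e0*}. We then bound $e_0$ by the triangle inequality applied to the Taylor decomposition $e_0 = \sum_{1\leq\|\alpha\|_1\leq\beta_0^*}\frac{1}{\alpha!}\partial^\alpha f(x_0)(x-x_0)^\alpha + e_0^*$ used in the proof of Lemma~\ref{lem:usual-bias}. Throughout, fix $x$ with $\|x-x_0\|\leq h$. Since $\|\cdot\|$ is an $\ell_q$ norm with $q\geq1$, each coordinate satisfies $|x_r-x_{0,r}|\leq\|x-x_0\|\leq h$. Hence $|(x-x_0)^\alpha|\leq h^{\|\alpha\|_1}$ for every multi-index $\alpha$.

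\emph{Bound on $e_0^*$.} For $\|\alpha\|_1=\beta_0^*$ we need a modulus estimate
\[
\bigl|\partial^\alpha f\bigl(x_0+t(x-x_0)\bigr)-\partial^\alpha f(x_0)\bigr|\leq L\,(th)^{\beta^*-\beta_0^*}\leq L h^{\beta^*-\beta_0^*}.
\]
When $\beta^*=\beta$ (so $\beta_0^*=\beta_0$), this is exactly the H\"older condition in the definition of $\cH(\beta,L)$. When $\beta^*=p+1<\beta$, we have $\beta^*-\beta_0^*=1$ and $\beta_0^*=p<\beta_0$. In this case $\partial^\alpha f$ is differentiable with first derivatives bounded by $L$, so a mean-value argument gives a Lipschitz bound. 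I will check that the constant can be kept at $L$; otherwise it must be absorbed into the stated constant, and this bookkeeping is the only point needing care. We also use $\beta_0^*\int_0^1(1-t)^{\beta_0^*-1}\,dt=1$ and the multinomial identity \eqref{eq:multinom-expansion}, which gives $\sum_{\|\alpha\|_1=\beta_0^*}1/\alpha! = d^{\beta_0^*}/\beta_0^*!$. Together these yield
\[
|e_0^*(x)|\leq \sum_{\|\alpha\|_1=\beta_0^*}\frac{1}{\alpha!}\,L h^{\beta^*-\beta_0^*}\,h^{\beta_0^*}=\frac{Ld^{\beta_0^*}}{\beta_0^*!}h^{\beta^*}.
\]
The degenerate case $\beta_0^*=0$ is read with the convention that $e_0^*(x)=f(x)-f(x_0)$. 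There the bound $Lh^{\beta^*}$ is immediate from the H\"older (or Lipschitz) condition on $f$ itself.

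\emph{Bound on $e_0$.} Each Taylor coefficient satisfies $|\partial^\alpha f(x_0)|\leq L$ by the sup-norm part of the H\"older class. Combining this with the coordinate bound and \eqref{eq:multinom-expansion},
\[
\sum_{1\leq\|\alpha\|_1\leq\beta_0^*}\frac{L h^{\|\alpha\|_1}}{\alpha!}=L\sum_{\ell=1}^{\beta_0^*}\frac{(dh)^\ell}{\ell!}\leq L\bigl(e^{hd}-1\bigr).
\]
Adding the bound on $e_0^*$ gives the second inequality. The quantity $C_e(L,d,\beta^*,h)$ is simply defined by dividing this sum by $h^{1\vee\beta^*}$. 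Note that $e^{hd}-1=O(h)$ and $h^{\beta^*}\leq h^{1\vee\beta^*}$ when $\beta^*\geq 1$, so $C_e$ stays bounded as $h\to0$ whenever $\beta^*\geq1$. This boundedness is all that is used later in \eqref{eq:modulus-continuity-op(1)}, where we only need $\max_i|\varrho_i|=o(1)$.

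The only real obstacle is the case split between $\beta^*=\beta$ and $\beta^*=p+1<\beta$ in the modulus estimate for $\partial^\alpha f$. Everything else is the multinomial theorem.
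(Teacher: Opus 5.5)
Your argument matches the paper's proof step for step: the integral remainder is bounded via the modulus of $\partial^\alpha f$, then $|(x-x_0)^\alpha|\le h^{\|\alpha\|_1}$ and $\beta_0^*\int_0^1(1-t)^{\beta_0^*-1}\,dt=1$ are combined with the multinomial identity, and finally $\sum_{\ell\ge1}(dh)^\ell/\ell!\le e^{hd}-1$ handles $e_0$. The case $\beta^*=p+1<\beta$ that you flag is passed over silently in the paper, and your suspicion is justified: the mean-value argument only gives $|\partial^\alpha f(y)-\partial^\alpha f(x_0)|\le L\|y-x_0\|_1\le Ld^{1-1/q}\|y-x_0\|$, so for $q>1$ the stated constant needs an extra factor $d^{1-1/q}$ (e.g.\ take $d=2$, $q=\infty$, $p=0$, $\beta=2$, $x_0=0$, and $f(x)=L(x_1+x_2)$ on a small neighbourhood of $x_0$; then $|f(x)-f(x_0)|=2Lh$ at $x=(h,h)$), which is harmless because only the order $h^{\beta^*}$ is used downstream.
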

\begin{proof}
        By analogous arguments to Lemma~\ref{lem:Q}, 
        \begin{equation*}
            |e_0^*(x)| \leq
            L\sum_{\substack{\alpha\in\N_0^d: \\ \|\alpha\|_1=\beta_0^*}}\frac{\beta_0^*}{\alpha!}\|x-x_0\|^{\beta^*-\beta_0^*} |(x-x_0)^\alpha|\int_0^1(1-t)^{\beta_0^*-1}\,dt
            \leq Lh^{\beta^*}\sum_{\substack{\alpha\in\N_0^d: \\ \|\alpha\|_1=\beta_0^*}}\frac{1}{\alpha!}
            = \frac{Ld^{\beta_0^*}}{\beta_0^*!}h^{\beta^*}.
        \end{equation*} 
        Further, 
        \begin{align*}
            |e_0(x)| \leq \sum_{\substack{\alpha\in\N_0^d: \\ 1\leq\|\alpha\|_1\leq\beta_0^*}} \! \! \frac{1}{\alpha!}\,\bigl|\partial^\alpha f(x_0)(x-x_0)^\alpha\bigr| + |e_0^*(x)|
            &\leq
            L\!\sum_{\substack{\alpha\in\N_0^d: \\ \|\alpha\|_1\geq1}}\frac{h^{\|\alpha\|_1}}{\alpha!} + |e_0^*(x)|
            = 
            L(e^{hd}-1) + |e_0^*(x)|,
        \end{align*}
        as required.
\end{proof}

\begin{lemma}\label{lem:clt}
    Recall the triangular array $Z_{n,1},\ldots,Z_{n,n}$ defined in the proof of Theorem~\ref{thm:decomp} (see~\eqref{eq:Zi}), define $\sigma_P^2 \equiv \sigma_{P,n}^2 := \mathrm{Var}_P(Z_{n,1})$.  Then
    \begin{equation}\label{eq:lyapunov}
\sup_{P\in\cP}\sup_{x_0\in\cX}\sup_{h\in\cH_n}\sup_{t\in\R}\bigg| \Pr_P\biggl(\frac{1}{n^{1/2}\sigma_P}\sum_{i=1}^nZ_{n,i} \leq t\biggr) - \Phi(t) \bigg| \rightarrow 0
    \end{equation}
as $n \rightarrow \infty$.
\end{lemma}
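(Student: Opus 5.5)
The plan is to apply the Lindeberg--Feller central limit theorem in its quantitative Lyapunov (Berry--Esseen) form, and to show that the Lyapunov ratio tends to zero uniformly over $P\in\cP$, $x_0\in\cX$ and $h\in\cH_n$. The variables $Z_{n,1},\ldots,Z_{n,n}$ are i.i.d.\ for fixed $n$.

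\textbf{Step 1: the summands are centred.} We have $\E_P\{\rho(\varepsilon\given X)\given X\}=0$ and $\E_P(\varepsilon\given X)=0$, so $\E_P(Z_{n,1})=0$. The Berry--Esseen bound for i.i.d.\ summands with $2+\delta$ moments (e.g.\ the Katz--Petrov form, taking $\delta\le 1$ without loss of generality) then gives
\[
\sup_{t}\Bigl|\Pr_P\Bigl(\tfrac{1}{n^{1/2}\sigma_P}\textstyle\sum_i Z_{n,i}\le t\Bigr)-\Phi(t)\Bigr|\le C_\delta\,\frac{\E_P|Z_{n,1}|^{2+\delta}}{n^{\delta/2}\sigma_P^{2+\delta}},
\]
with $C_\delta$ universal. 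It therefore suffices to bound the right-hand side uniformly.

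\textbf{Step 2: lower bound on the variance.} Write $\varphi=\iota_h-\mu(x_0)\kappa_{h,\lambda}$. Because $K$ and $\kappa_\lambda$ have disjoint supports, the inner and outer regions do not interact, and
\[
\sigma_P^2=\E_P\Bigl[\bigl(\mathrm{e}_1^\top s_1(K)^{-1}K_h\bigr)^2\tfrac{\zeta_{\rho^2}}{p_0^2\zeta_{\rho'}(x_0)^2}\Bigr]+\E_P\Bigl[\kappa_{h,\lambda}^2(X-x_0)\,\E_P\bigl\{(\varepsilon-\rho/\zeta_{\rho'}(x_0))^2\given X\bigr\}\Bigr]/p_0^2 .
\]
In the outer term, the contribution of $\mu(x_0)$ satisfies $\mathrm{e}_1^\top s_1^{-1}\mu(x_0)\to1$, because $\mathrm{e}_1^\top s_1(K)^{-1}u(K)=1$. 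This produces the $(\varepsilon-\rho/\zeta_{\rho'})$ combination, up to $o(1)$ corrections from that limit and from $\E_P(\varphi)=0$, which together are of smaller order. Substituting $x=x_0+h\nu$ and using $\mathrm{e}_1^\top s_1^{-1}Q=1$ (as in Lemma~\ref{lem:phi-1}), the first term is at least a constant times $h^{-d}$. The lower bound uses $p_X\ge c_X$, $\zeta_{\rho^2}\ge c_1$, $|\zeta_{\rho'}|=\zeta_{\rho^2}\le C_1^{2/(2+\delta)}$ and $R_2(K)>0$. The outer term is nonnegative. Hence $\sigma_P^2\ge c\,h^{-d}$ uniformly once $\lambda h$ is small enough that $\mathcal{X}+\mathcal{B}_0(\lambda b_n)\subseteq\mathcal{X}^\circ$.

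\textbf{Step 3: upper bound on the $(2+\delta)$th moment.} By Minkowski's inequality,
\[
\E_P|Z_{n,1}|^{2+\delta}\lesssim \E_P\bigl(|\mathrm{e}_1^\top s_1^{-1}\varphi(X)|^{2+\delta}|\rho|^{2+\delta}\bigr)+\E_P\bigl(|\kappa_{h,\lambda}|^{2+\delta}|\varepsilon|^{2+\delta}\bigr).
\]
Conditioning on $X$ and applying \ref{ass:exp-bounds}, both terms are at most a constant times $h^{-d(1+\delta)}\bigl(R_{2+\delta}(K)+\|\mu\|^{2+\delta}R_{2+\delta}(\kappa_\lambda)\bigr)$. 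The factor $\|\mu(x_0)\|$ is bounded by \eqref{eq:mu(x_0)bound} and Lemma~\ref{lem:Q}.

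\textbf{Step 4: combine.} The Lyapunov ratio is at most a constant times $(nh^d)^{-\delta/2}\,\sup_{\lambda}\bigl(1+R_{2+\delta}(\kappa_\lambda)\bigr)$, which tends to zero uniformly since $\inf\cH_n=a_n$ satisfies $na_n^d\to\infty$.

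The main obstacle is showing that $R_{2+\delta}(\kappa_\lambda)$ stays bounded uniformly in $\lambda\ge\lambda_0(K)$. Assumption~\ref{ass:kappa} only controls $R_1$ and $R_2$. I would first use boundedness of $\kappa_\lambda$, which gives $R_{2+\delta}(\kappa_\lambda)\le\|\kappa_\lambda\|_\infty^\delta R_2(\kappa_\lambda)\le\|\kappa_\lambda\|_\infty^\delta R_2(K)$. If the sup-norms are not uniformly bounded, I would fall back on a truncation (Lindeberg) argument that needs only $R_2(\kappa_\lambda)$ together with the uniform integrability supplied by the $2+\delta$ moments of $\rho$ and $\varepsilon$. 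A secondary point is keeping every constant in Step 2 uniform over $P$ and $x_0$, which comes directly from the uniform bounds in Assumption~\ref{ass:DGM}.
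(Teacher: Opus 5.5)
Your proposal follows the paper's proof. You reduce to a uniform Lyapunov ratio (via the Katz--Petrov Berry--Esseen bound, where the paper uses Lemma~\ref{lem:unif-norm}), lower-bound $\sigma_P^2\gtrsim h^{-d}$ from the inner-region term (your direct nonnegativity argument is slightly cleaner than the paper's asymptotic expansion of $h^d\sigma_P^2$), and upper-bound $\E_P|Z_{n,1}|^{2+\delta}\lesssim h^{-d(1+\delta)}\{R_{2+\delta}(K)+R_{2+\delta}(\kappa_\lambda)\}$.

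On the obstacle you flag, the paper simply treats $R_{2+\delta}(\kappa_\lambda)$ as a bounded constant, which is exactly what your primary fix $\sup_\lambda\|\kappa_\lambda\|_\infty<\infty$ supplies. Your fallback Lindeberg argument using only $R_2(\kappa_\lambda)$ would not work for bounded $\lambda$: a spike in $\kappa_\lambda$ of height $\gg\sqrt{nh^d}$ carrying a fixed share of $R_2(\kappa_\lambda)$ can violate the Lindeberg condition. That argument does suffice when $R_2(\kappa_{\lambda_n})\to0$.
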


\begin{proof}
    For $x_0 \in \mathcal{X}$, define 
    \begin{equation*}
        a:=\frac{1}{p_0\zeta_{\rho'}(x_0)}\,\mathrm{e}_1 \in \mathbb{R}^{\bar{p}},
        \qquad
        b := 
        \frac{1}{p_0} \in \mathbb{R}
    \end{equation*}
    so that 
    \begin{align}
        Z_{n,i} &:= a^\top \varphi(X_i)\rho(\varepsilon_i\given X_i)+b\,\kappa_{h,\lambda}(X_i-x_0)\varepsilon_i \notag
        \\
        &\phantom{:}=a^\top\iota_h(X_i-x_0)\rho(\varepsilon_i\given X_i)
        +\big\{-a^\top\mu(x_0)\rho(\varepsilon_i\given X_i)+b\varepsilon_i\big\}\kappa_{h,\lambda}(X_i-x_0)
        \label{eq:Z_ni}
    \end{align}
    for $i \in [n]$.  Note $\E_P(Z_{n,i})=0$, and since $\iota_h(\cdot)\kappa_{h,\lambda}(\cdot)=0$,
    \begin{align*}
        h^d\sigma_P^2 &=
        h^d\E_P\big\{\bigl(a^\top\iota_h(X-x_0)\bigr)^2\rho^2(\varepsilon\given X)\big\}
        +
        h^d\E_P\big\{\big(a^\top\mu(x_0)\rho(\varepsilon\given X)-b\varepsilon\big)^2\kappa_{h,\lambda}^2(X-x_0)\big\}
        \\
        &=
        \big(a^\top s_2(K) a\big)\,p_0\zeta_{\rho^2}(x_0)
        +
        \E_P\big\{\big(a^\top\mu(x_0)\rho(\varepsilon\given X)-b\varepsilon\big)^2\biggiven X=x_0\big\}p_0R_2(\kappa_\lambda)
        + \op(1)
        \\
        &=
        \frac{R_2(K)}{p_0}\cdot\frac{\zeta_{\rho^2}(x_0)}{\zeta_{\rho'}^2(x_0)}
        +
        \frac{R_2(\kappa_\lambda)}{p_0}\,\E_P\bigg\{\bigg(\frac{\rho(\varepsilon\given X)}{\zeta_{\rho'}(x_0)}-\varepsilon\bigg)^2\bigggiven X=x_0 \bigg\}
        +\op(1),
    \end{align*}
    where we have used the fact that $\zeta_{\rho'}(x_0)\big(a^\top\mu(x_0)\big)=b+\op(1)$. 
    Thus
    \begin{equation}\label{eq:VarZi}
        \sigma_P^2=\frac{1}{p_0h^d}\bigg\{R_2(K)V_P(\rho)+
        \E_P\bigg\{
        \bigg(\frac{\rho(\varepsilon\given X)}{\zeta_{\rho'}(x_0)}-\varepsilon\bigg)^2\bigggiven X=x_0\bigg\}
        R_2(\kappa_\lambda)
        \bigg\}
        + \op\bigg(\frac{1}{h^d}\bigg).
    \end{equation}
    By Lemma~\ref{lem:unif-norm}, it suffices to verify the uniform Lyapunov condition
    \begin{equation}\label{eq:lyapunov-cond}
        \frac{\E_P\big(|Z_{n,1}|^{2+\delta}\big)}{n^{\delta/2}\sigma_P^{2+\delta}}
        = \op(1).
    \end{equation}
    Note first that
    \begin{equation*}
        \inf_{x_0\in\cX}\inf_{P\in\cP}\frac{R_2(K)V_P(\rho)}{p_0}\geq
        \frac{c_1R_2(K)}{C_2C_X},
    \end{equation*}
    and so
    \begin{equation}\label{eq:lyapunov-var-LB}
        \frac{1}{\sigma_P^2}
        =\supOP(h^d).
    \end{equation}
    For the $(2+\delta)$th absolute moment upper bound, first note the inequalities $|a^\top\iota_h(\cdot)|\leq K(\cdot)/(c_Xc_1)$ and $|b|\leq 1/c_X$. Thus by Lemma~\ref{lem:Q},
    \begin{align}
        \E_P\big(&|Z_{n,1}|^{2+\delta}\big)
        \notag
        \\
        &\leq
        2^{1+\delta}\E_P\Big\{|a^\top\iota_h(X-x_0)|^{2+\delta}|\rho(\varepsilon\given X)|^{2+\delta}
        + \big|a^\top\mu(x_0)\rho(\varepsilon\given X)-b\varepsilon\big|^{2+\delta}|\kappa_{h,\lambda}(X-x_0)|^{2+\delta}
        \Big\}
        \notag
        \\
        &\leq 2^{1+\delta}\E_P\Big\{|a^\top\iota_h(X-x_0)|^{2+\delta} 
        |\rho(\varepsilon \given X)|^{2+\delta}
        \notag
        \\
        &\qquad\qquad
        +2^{1+\delta}\big(\|a\|_2^{2+\delta}\|\mu(x_0)\|_2^{2+\delta}|\rho(\varepsilon\given X)|^{2+\delta}+|b|^{2+\delta}|\varepsilon|^{2+\delta}\big)|\kappa_{h,\lambda}(X-x_0)|^{2+\delta}
        \Big\}
        \notag
        \\
        &\leq
        2^{1+\delta}\biggl[\biggl(\frac{1}{c_Xc_1}\biggr)^{2+\delta}C_XC_1R_{2+\delta}(K)
        +2^{1+\delta}\biggl\{\biggl(\frac{e^{d/2}}{c_Xc_1\Lambda_{\min}\bigl(s_1(K)\bigr)}\biggr)^{2+\delta}C_XC_1\|u(K)\|_2^{2+\delta}
        \notag
        \\
        &\qquad\qquad
        +
         \frac{C_3|\mathrm{e}_1^\top s_1(K)^{-1}u(K)|^{2+\delta}}{c_X^{2+\delta}}
        \biggr\}
        R_{2+\delta}(\kappa_\lambda) + \supoP(1)
        \biggr]\cdot\frac{1}{h^{d(1+\delta)}}
        \notag
        \\
        &=\supOP\bigg(\frac{1}{h^{d(1+\delta)}}\bigg).
        \label{eq:lyapunov-2+d-UB}
    \end{align}
    Combining~\eqref{eq:lyapunov-var-LB}~and~\eqref{eq:lyapunov-2+d-UB}, the uniform Lyapunov condition~\eqref{eq:lyapunov-cond} holds, and this completes the proof.
\end{proof}

\begin{lemma}\label{lem:unif-norm}
    Let $(\psi_{P,n,i})_{P\in\cP,n\in\N,i\in[n]}$ be a triangular array of real-valued random functions  $\psi_{P,n,i}:\cX\times\cH_n\to\R$ such that for any $P\in\cP$ and $n\in\N$, we have that $(\psi_{P,n,i})_{i\in[n]}$ are independent,  $\E_P\bigl(\psi_{P,n,i}(x_0,h)\bigr)=0$ for all $i\in[n]$ and $(P,x_0,h)\in\cP\times\cX\times\cH_n$ and there exists $\delta > 0$ such that the uniform Lyapunov condition
    \begin{equation*}
\sup_{P\in\cP}\sup_{x_0\in\cX}\sup_{h\in\cH_n}\frac{\E_P\big(|\psi_P(x_0,h)|^{2+\delta}\big)}{n^{\delta/2}\Var_P^{1+\delta/2}\psi_P(x_0,h)} \rightarrow 0
    \end{equation*}
    holds. For each $n\in\N$, let $S_{P,n}:=\frac{1}{\sqrt{n}}\sum_{i=1}^n\psi_{P,n,i}$. Then
    \begin{equation*}
\sup_{P\in\cP}\sup_{x_0\in\cX}\sup_{h\in\cH_n}\sup_{t\in\R}\biggl|\Pr_P\biggl(\frac{S_{P,n}(x_0,h)}{\Var_P^{1/2} S_{P,n}(x_0,h)}\leq t\biggr)-\Phi(t)\biggr| \rightarrow 0.
    \end{equation*}
\end{lemma}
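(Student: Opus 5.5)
The plan is to reduce the uniform statement to the classical Berry--Esseen-type bound for independent, non-identically distributed summands. Then we take suprema, which is possible because that bound is explicit in the Lyapunov ratio.

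Fix $P\in\cP$, $n\in\N$, $x_0\in\cX$ and $h\in\cH_n$. Write $\psi_i:=\psi_{P,n,i}(x_0,h)$ and $s_n^2:=\sum_{i=1}^n\Var_P(\psi_i)=n\Var_P(S_{P,n}(x_0,h))$. I read the hypothesis as in its intended use (Lemma~\ref{lem:clt}), where the $\psi_{P,n,i}$ are identically distributed copies of $\psi_P$. In that case $s_n^2=n\Var_P\psi_P(x_0,h)$. If the array is not identically distributed, I would interpret $\E_P|\psi_P|^{2+\delta}$ and $\Var_P\psi_P$ as averages over $i$, and the argument is unchanged.

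The key tool is the Lyapunov-form Berry--Esseen inequality for $\delta\in(0,1]$ (Katz/Petrov): there is a universal constant $C_0$ such that
\[
\sup_{t\in\R}\biggl|\Pr_P\biggl(\frac{\sum_{i=1}^n\psi_i}{s_n}\le t\biggr)-\Phi(t)\biggr|\le C_0\,\frac{\sum_{i=1}^n\E_P|\psi_i|^{2+\delta}}{s_n^{2+\delta}}.
\]
If $\delta>1$, we first replace $\delta$ by $1$. This is legitimate because, by Lyapunov's (Hölder's) inequality,
\[
\frac{\E|\psi|^{3}}{n^{1/2}\Var^{3/2}\psi}\le\biggl(\frac{\E|\psi|^{2+\delta}}{n^{\delta/2}\Var^{1+\delta/2}\psi}\biggr)^{1/\delta}.
\]
This follows from $\E|\psi|^3\le(\E|\psi|^{2+\delta})^{1/\delta}(\E\psi^2)^{(\delta-1)/\delta}$, which is log-convexity of moments. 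So the hypothesis with $\delta$ implies the hypothesis with $\delta=1$.

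In the i.i.d.-type case, the right-hand side of the Berry--Esseen bound equals
\[
C_0\,\frac{n\,\E_P|\psi_P|^{2+\delta}}{n^{1+\delta/2}\Var_P^{1+\delta/2}\psi_P}=C_0\,\frac{\E_P|\psi_P(x_0,h)|^{2+\delta}}{n^{\delta/2}\Var_P^{1+\delta/2}\psi_P(x_0,h)}.
\]
Note also that $\sum_i\psi_i/s_n=S_{P,n}/\Var_P^{1/2}S_{P,n}$.

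Since $C_0$ does not depend on $(P,x_0,h,n)$, we take the supremum over $P\in\cP$, $x_0\in\cX$ and $h\in\cH_n$ on both sides. The uniform Lyapunov condition then sends the right-hand side to $0$, which gives the claim.

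The only real obstacle is having a normal approximation whose error is controlled solely by the Lyapunov ratio, uniformly. A Lindeberg–Feller argument applied along arbitrary sequences $(P_n,x_{0,n},h_n)$ would also work, with a subsequence/contradiction argument, but the explicit Berry--Esseen route avoids this entirely.
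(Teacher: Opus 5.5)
Your proof is correct, and it takes a different route from the paper. The paper gives no self-contained argument. It states that the result follows by the same reasoning as an existing uniform CLT (Lemma~17 of \citet{sandwich-boosting}), combined with a uniform version of Slutsky's lemma \citep[Lemma~20]{shahpeters}.

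You instead use the quantitative Katz--Petrov form of the Berry--Esseen inequality. Its bound on the Kolmogorov distance is an absolute constant times exactly the Lyapunov ratio in the hypothesis, after you correctly replace $\delta$ by $\delta\wedge 1$ using log-convexity of moments. As a result, uniformity over $(P,x_0,h,t)$ is immediate: you need no subsequence or Slutsky step, and you even get a rate. The price is importing a heavier classical theorem, whereas the paper's one-line reduction reuses machinery already built for its uniform-$o_P$ framework. Since the lemma normalises by the exact standard deviation of $S_{P,n}$, nothing in your route needs a Slutsky-type correction.

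One small point concerns the non-identically distributed reading. There, the reduction from $\delta>1$ to $\delta=1$ needs H\"older's inequality over $i$ as well as log-convexity for each summand. With $\bar V:=n^{-1}\sum_i\Var_P\psi_i$, this gives $n^{-1}\sum_i\E_P|\psi_i|^3\le \bar V^{(\delta-1)/\delta}\bigl(n^{-1}\sum_i\E_P|\psi_i|^{2+\delta}\bigr)^{1/\delta}$. So your claim that the argument is ``unchanged'' is true, but it takes this extra step.
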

\begin{proof}
This follows by identical arguments to~\citet[Lemma~17]{sandwich-boosting} 
    alongside a uniform version of Slutsky's lemma~\citep[e.g.][Lemma 20]{shahpeters}.
\end{proof}

\begin{lemma}\label{lem:gamma0-consistency}
    Adopting the setup of Theorem~\ref{thm:decomp}, the measurable sequence $\hat{\theta} = (\hat{\theta}_n)$ from Algorithm~\ref{alg:outrigger} satisfies
    \[
    \hat\theta-\theta_0=\op(1).
    \]
\end{lemma}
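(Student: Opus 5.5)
Because $\hat\theta$ is defined as the zero of~\eqref{eq:algorithm-est-eqn} closest to $\hat\theta^{\mathrm{LP}}$, my plan is a two-step argument: first locate $\hat\theta$ inside the identifiability ball of Assumption~\ref{ass:unique-root-ish}, then show that the only zero there lies within $\epsilon$ of $\theta_0$.

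\emph{Step 1: consistency of the pilot.} Standard local polynomial theory (Lemma~\ref{lem:LP-rate}) gives
\[
\hat\theta^{\mathrm{LP}}-\theta_0=\supOP\bigl((nh^d)^{-1/2}+h^{\beta^*}\bigr)=\op(1),
\]
because $\sup_{\cH_n}(h\vee (nh^d)^{-1})\to0$. Write $A_n$ for the event of Assumption~\ref{ass:unique-root-ish}, which has probability $1-\op(1)$. On $A_n$ there is a zero $\theta^\dagger$ with $\|\theta^\dagger-\theta_0\|_\infty\le\eta_1$. Take $\|\hat\theta^{\mathrm{LP}}-\theta_0\|_\infty\le(\eta_2-\eta_1)/2$, which holds with probability $1-\op(1)$. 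Then $\hat\theta$ is at least as close to $\hat\theta^{\mathrm{LP}}$ as $\theta^\dagger$ is, so the triangle inequality gives $\|\hat\theta-\theta_0\|_\infty\le\eta_1+(\eta_2-\eta_1)=\eta_2$. By uniqueness on the $\eta_2$-ball, $\hat\theta=\theta^\dagger$.

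\emph{Step 2: the unique zero in the ball is near $\theta_0$.} It now suffices to show the following. For each fixed $\epsilon\in(0,\eta_2)$, with probability $1-\op(1)$ the normalised estimating function
\[
\hat G(\theta):=n^{-1}\sum_{k}\sum_{i\in\cI_k}\hat\varphi_k(X_i)\hat\rho_k(\cdots)
\]
has no zero on $\{\epsilon\le\|\theta-\theta_0\|_\infty\le\eta_2\}$. I will compare $\hat G$ with a deterministic limit $G(\theta)$. Substitute $\theta=\theta_0+\Delta$. For $X_i$ in the inner ball the argument of $\hat\rho_k$ is $\varepsilon_i+e_0(X_i)-Q_h(X_i-x_0)^\top\Delta$, and $|e_0(X_i)|=O(h^{1\wedge\beta^*})$ by Lemma~\ref{lem:e-bound}. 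The outer-region terms do not depend on $\theta$. The analysis of $\hat\phi_{\RN{1}}$, $\hat\phi_{\RN{3}}$ and $\hat\phi_{\RN{4}}$ in Lemmas~\ref{lem:phi-1}, \ref{lem:phi-3} and~\ref{lem:phi-4} (which only used the outer residuals and not the value of $\hat\theta$) shows that their contribution is $\op(1)$. This leaves the inner-region part
\[
n^{-1}\sum \iota_h(X_i-x_0)\bigl\{\hat\rho_k(\varepsilon_i-Q_h^\top\Delta+e_0\given X_i)-\hat\rho_k(\varepsilon_i\given X_i)\bigr\}.
\]
I will replace $\hat\rho_k$ by $\rho$ using~\ref{ass:score-estimation}(ii), which gives control over shifts $|\tau|\le\epsilon$. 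Beyond that range I will use the mean value theorem with~\ref{ass:score-estimation}\ref{ass:rho'-consistency} instead. A conditional Chebyshev argument, as for $\xi_{\RN{3}}$ in Lemma~\ref{lem:omega2}, then shows that the inner part is uniformly close to
\[
G(\Delta)=p_0\int_{\mathcal{B}_0(1)}K(\nu)Q(\nu)\,\E_P\{\rho(\varepsilon-Q(\nu)^\top\Delta\given X)\given X=x_0\}\,d\nu .
\]
Uniformity over the compact $\Delta$-set comes from a finite net combined with the equicontinuity in $\Delta$ supplied by the moment bound on $\rho'$ in~\ref{ass:exp-bounds}.

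\emph{Main obstacle.} The hard step is showing that $\inf_{\epsilon\le\|\Delta\|_\infty\le\eta_2}\|G(\Delta)\|>0$. A local expansion gives $G(\Delta)=\Omega\Delta+o(\|\Delta\|)$ with $\Omega$ invertible, since $\zeta_{\rho'}(x_0)=-i_P(x_0)\ne0$ and $s_1(K)$ is invertible. This settles only small $\Delta$; for $\Delta$ of order $\eta_2$ global injectivity need not hold. I would first try to take $\epsilon$ small and use $\Omega$-invertibility for $\|\Delta\|_\infty\le\epsilon$ directly. For the annulus, I would argue that any zero $\tilde\theta$ of the empirical function there must be an asymptotic zero of $G$, and would lean on Assumption~\ref{ass:unique-root-ish} in its intended reading: the empirical zero in the $\eta_1$-ball is unique in the $\eta_2$-ball, and a zero near $\theta_0$ exists by the local invertibility and a Brouwer/Newton–Kantorovich argument on the small ball. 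The zero produced on the $\epsilon$-ball is therefore the unique one, and so $\hat\theta$ equals it. This gives $\hat\theta-\theta_0=\op(1)$ without needing a global lower bound on $G$.
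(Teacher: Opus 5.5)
Your final argument is correct and is essentially the paper's. Consistency of $\hat\theta^{\mathrm{LP}}$ and the closest-to-pilot selection identify $\hat\theta$ with the root from Assumption~\ref{ass:unique-root-ish}; invertibility of $\Omega=p_0s_1(K)\zeta_{\rho'}(x_0)$ gives a root of the empirical estimating function in an arbitrarily small ball around $\theta_0$; and uniqueness on the $\eta_2$-ball makes these the same root, so (as you realised) no lower bound on $G$ over the annulus is needed. The differences are minor: the paper gets the small-ball root from the Poincar\'e--Miranda theorem using only pointwise convergence $\hat\Psi_n(\theta)=\Psi(\theta)+\op(1)$ at the $2\bar{p}$ points $\theta_0\pm\epsilon t_m$, whereas your Brouwer/Newton--Kantorovich version asks for uniform convergence on the small ball (which such an argument genuinely needs when $\bar{p}\geq 2$); and Lemma~\ref{lem:LP-rate} only controls $\mathrm{e}_1^\top\hat\theta^{\mathrm{LP}}$ (the other coordinates estimate $h^{\|\alpha\|_1}\partial^\alpha f(x_0)$ and need not be $O(h^{\beta^*})$), so in Step~1 you should claim only $\hat\theta^{\mathrm{LP}}-\theta_0=\op(1)$, which is all you use.
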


\begin{proof}
Define $\Psi:\mathbb{R}^{\bar{p}} \rightarrow \mathbb{R}^{\bar{p}}$ and $\hat\Psi_n:\mathbb{R}^{\bar{p}} \rightarrow \mathbb{R}^{\bar{p}}$ by
    \begin{equation*}
        \Psi(\theta) := p_X(x_0)\int_{\mathcal{B}_0(1)} K(\nu)Q(\nu)\,\E_P\bigl\{\rho\bigl(\varepsilon+Q(\nu)^\top(\theta_0-\theta)\given X\bigr)\biggiven X=x_0\bigr\}\,d\nu,
    \end{equation*}
    and
    \begin{multline*}
        \hat\Psi_n({\theta}) 
        :=
        \frac{1}{n}\sum_{k=1}^{\mathcal{K}}\sum_{i\in\cI_k}\hat\varphi_k(X_i)\hat\rho_k\Bigl(\varepsilon_i+\bigl(e_0(X_i)+Q_h(X_i-x_0)^\top(\theta_0-\theta)\bigr)\ind_{\innerr}(X_i) 
        \\
        - \bigl(\hat{f}_k^{\mathrm{LP}}(X_i)-f(X_i)+\hat{c}_k(x_0)\bigr)\ind_{\outerr}(X_i) \Biggiven X_i\Bigr).
    \end{multline*}
Taking $\eta > 0$ from Assumption~\ref{ass:DGM}, let $\hat\theta^*$ denote the unique root of $\hat{\Psi}_n$ in $\{\theta\in\R^{\bar{p}}:\|\theta-\theta_0\|_\infty \leq \eta\}$. 
Since $s_1(K)$ is invertible, there exists an orthonormal basis $(t_m)_{m\in[\bar{p}]}$ for $\mathbb{R}^{\bar{p}}$ with $t_m^\top s_1(K)t_m\neq0$ for all $m\in[\bar{p}]$. 
For $m\in[\bar{p}]$, define $\psi_m:[0,\infty) \rightarrow \mathbb{R}$ by
    \begin{equation*}
\psi_m(\lambda):=t_m^\top\Psi(\theta_0+\lambda t_m).
    \end{equation*}
    Then $\psi_m(0)=0$ and, using Assumption~\ref{ass:exp-bounds}, we may differentiate under the integral sign to obtain
    \begin{equation*}
        |\psi_m'(0)| = p_X(x_0)\bigl|t_m^\top s_1(K)t_m\E_P\bigl(\rho'(\varepsilon\given X)\given X=x_0\bigr)\bigr| \geq c_X c_1 |t_m^\top s_1(K) t_m| > 0.
    \end{equation*}
    Therefore, there exists $\lambda^*>0$ such that for all $\lambda\in(0,\lambda^*]$ and $m\in[\bar{p}]$,
    \begin{equation*}
        \bigl(t_m^\top\Psi(\theta_0+\lambda t_m)\bigr)
        \bigl(t_m^\top\Psi(\theta_0-\lambda t_m)\bigr) < 0.
    \end{equation*} 
    Take arbitrary $\epsilon,\eta'\in(0,\eta)$. By the Poincaré--Miranda theorem~\citep[Theorem 17.1.1]{miranda}, if $\|\hat\theta^*-\theta_0\|_\infty>\epsilon$ then there exists $m\in[\bar{p}]$ such that
    \begin{equation*}
        \bigl(t_m^\top\hat\Psi_n(\theta_0+\epsilon t_m)\bigr)\bigl( t_m^\top\hat\Psi_n(\theta_0-\epsilon t_m)\bigr)>0.
    \end{equation*}
    Taking a union bound over these events, for any $\epsilon, \eta_1 >0$, and defining $\epsilon':=\epsilon\wedge\lambda^*\wedge \eta'$,
    \begin{align*}
        \Pr_P&(\|\hat\theta^* -\theta_0\|_\infty>\epsilon)
        \leq \Pr_P(\|\hat\theta^*-\theta_0\|_\infty>\epsilon')
        \\
        &\leq
        \sum_{m=1}^{\bar{p}}\Bigl\{
        \Pr_P\bigl( t_m^\top\hat\Psi_n(\theta_0+\epsilon'\, t_m)>0,\; t_m^\top\hat\Psi_n(\theta_0-\epsilon'\, t_m)>0\bigr)
        \\
        &\qquad
        +
        \Pr_P\bigl( t_m^\top\hat\Psi_n(\theta_0+\epsilon'\, t_m)<0,\; t_m^\top\hat\Psi_n(\theta_0-\epsilon'\, t_m)<0\bigr)
        \Bigr\}
        \\
        &\leq 
        \sum_{m=1}^{\bar{p}}\Bigl\{
            \Pr_P\bigl(\bigl| t_m^\top\bigl\{\hat\Psi_n(\theta_0+\epsilon'\, t_m)-\Psi(\theta_0+\epsilon'\, t_m)\bigr\}\bigr|>\eta_1\bigr) +
            \ind_{\{t_m^\top\Psi(\theta_0+\epsilon'\, t_m)>-\eta_1\}} \ind_{\{t_m^\top\Psi(\theta_0-\epsilon'\, t_m)>-\eta_1\}}
            \\
            &\qquad
            +
            \Pr_P\bigl(\bigl| t_m^\top\bigl\{\hat\Psi_n(\theta_0-\epsilon'\, t_m)-\Psi(\theta_0-\epsilon'\, t_m)\bigr\}\bigr|>\eta_1\bigr)
            +
            \ind_{\{t_m^\top\Psi(\theta_0+\epsilon'\, t_m) <\eta_1\}} \ind_{\{t_m^\top\Psi(\theta_0-\epsilon'\, t_m)<\eta_1\}}\Bigr\}
            .
    \end{align*}
    Similarly to the proof of Theorem~\ref{thm:decomp}, for $\theta \in \mathbb{R}^{\bar{p}}$ with $\|\theta-\theta_0\|_\infty\leq \eta'$,
    \begin{align*}
        \hat\Psi_n(\theta)
        &=\frac{1}{n}\sum_{i=1}^n \varphi(X_i)\rho\bigl(\varepsilon_i+\bigl\{e_0(X_i)+Q_h(X_i-x_0)^\top(\theta_0-\theta)\bigr\}\ind_{\innerr}(X_i)\biggiven X_i\bigr) + \op(1)
        \\
        &=\frac{1}{n}\sum_{i=1}^n \varphi(X_i)\rho\bigl(\varepsilon_i+Q_h(X_i-x_0)^\top(\theta_0-\theta)\ind_{\innerr}(X_i)\biggiven X_i\bigr) + \op(1)
        \\
        &=
        \E_P\bigl\{K_h(X-x_0)\rho\bigl(\varepsilon+Q_h(X-x_0)^\top(\theta_0-\theta)\biggiven X\bigr)\biggiven X=x_0\bigr\} + \op(1)
        \\
        &=
        \Psi(\theta) + \op(1).
    \end{align*}
Hence the unique root $\hat\theta^*$ of $\hat{\Psi}_n$ in $\{\theta \in \R^{\bar{p}}:\|\theta -\theta_0\|_\infty \leq \eta_2\}$ satisfies $\hat\theta^*=\theta_0+\supoP(1)$. 
    It remains to show that $\hat\theta=\hat\theta^*$ with probability at least $1-\supoP(1)$. 
    Recall from Algorithm~\ref{alg:outrigger} that $\hat\theta^{\mathrm{LP}}$ is the unique vector of coefficients that solves the local polynomial estimating equations, and let $\hat\theta^c$ be a root on the outer region~$\{\theta\in\R^{\bar{p}}:\|\theta-\theta_0\|_\infty > \eta_2\}$ if at least one exists. Then 
    \begin{equation*}
        \|\hat\theta^c-\hat\theta^{\mathrm{LP}}\|_\infty
        \geq
        \|\hat\theta^c-\theta_0\|_\infty 
        -
        \|\hat\theta^{\mathrm{LP}}-\theta_0\|_\infty
        > \eta_2 - \supoP(1).
    \end{equation*}
    On the other hand, by assumption
    \begin{equation*}
        \|\hat\theta-\hat\theta^{\mathrm{LP}}\|_\infty \leq \eta_1,
    \end{equation*}
 so on a sequence of events of probability at least $1-\supoP(1)$, we have $\hat\theta^*=\hat\theta$, as required.
\end{proof}

\begin{lemma}\label{lem:LP-rate}
        Under Assumptions~\ref{ass:DGM} and~\ref{ass:bandwidth-kernels-and-co}, define a standard local polynomial estimator $\hat{f}^{\mathrm{LP}}$ by 
        \[
        \hat{f}^{\mathrm{LP}}(x):=\biggl(\sum_{i=1}^nK_h(X_i-x)Q_h(X_i-x)Q_h(X_i-x)^\top\biggr)^{-1}\sum_{i=1}^nK_h(X_i-x)Q_h(X_i-x)Y_i.
        \]
        Then 
        \begin{equation*}
            \E_P\Bigl\{\bigl(\hat{f}^{\mathrm{LP}}(x_0)-f(x_0)\bigr)^2\Bigr\}
            = \supOP\biggl(h^{2\beta^*}+\frac{1}{nh^d}\biggr).
        \end{equation*}
    \end{lemma}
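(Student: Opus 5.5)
We will prove Lemma~\ref{lem:LP-rate} by the standard bias--variance decomposition of the local polynomial estimator, written in the same notation as in the proof of Theorem~\ref{thm:decomp}. Let $\hat S:=\frac{1}{n}\sum_{i=1}^n \pi_h(X_i-x_0)$ and write $Y_i=f(X_i)+\varepsilon_i$. By the reproducing property already used in~\eqref{eq:destroy}, polynomials of degree at most $\beta_0^*\le p$ are reproduced exactly. Hence, on the event that $\hat S$ is invertible,
\[
\hat f^{\mathrm{LP}}(x_0)-f(x_0)=\mathrm{e}_1^\top\hat S^{-1}\frac1n\sum_{i=1}^n\iota_h(X_i-x_0)e_0^*(X_i)+\mathrm{e}_1^\top\hat S^{-1}\frac1n\sum_{i=1}^n\iota_h(X_i-x_0)\varepsilon_i ,
\]
with $e_0^*$ as in~\eqref{eq:e0*}.

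\textbf{Step 1 (design matrix).} Exactly as in the treatment of $\xi_{\RN{3}}$ and $\xi_{\RN{4}}$ in Lemma~\ref{lem:omega2}, with $\rho'$ replaced by $1$, we get $\hat S=p_0s_1(K)+\op(1)$, uniformly. This uses $\E\|\hat S-\E\hat S\|_{\mathrm F}^2\le C_Xe^{2d}R_2(K)/(nh^d)$ (via Lemma~\ref{lem:Q}) and continuity of $p_X$. Hence $\|\hat S^{-1}\|_{\mathrm{op}}\le 2/(c_X\Lambda_{\min}(s_1(K)))$ on an event $E_n$ of probability $1-\supoP(1)$.

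\textbf{Step 2 (bias and variance on $E_n$).} On $E_n$, the first term is bounded by a constant times $\frac1n\sum_i |K_h(X_i-x_0)|e^{d/2}C_{e^*}h^{\beta^*}$, using Lemma~\ref{lem:e-bound}. Its second moment is $O(h^{2\beta^*})$, since $\E(\frac1n\sum_i|K_h|)^2\le (C_XR_1(K))^2+C_XR_2(K)/(nh^d)$. For the second term, condition on the design. Since $\hat S$ is $X$-measurable,
\[
\E\bigl\{(\mathrm{e}_1^\top\hat S^{-1}\tfrac1n\textstyle\sum_i\iota_h\varepsilon_i)^2\given X_{1:n}\bigr\}\le \|\hat S^{-1}\|_{\mathrm{op}}^2\,\frac{C_3^{2/(2+\delta)}}{n^2}\sum_i K_h^2(X_i-x_0)e^d .
\]
Taking expectations, this is $O(1/(nh^d))$. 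Both bounds are uniform over $\cP,\cX,\cH$.

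\textbf{Step 3 (the main obstacle).} The difficulty is the exceptional event $E_n^c$, where $\hat S$ may be near-singular, so that an unconditional second moment could blow up. The statement is in $\supOP$ form, and the way it is used (e.g.\ in Lemma~\ref{lem:c(x0)}) needs only a conditional bound, so I would present the result as follows. The conditional mean squared error, given the fold's design, is $\supOP(h^{2\beta^*}+1/(nh^d))$, because on $E_n$ it is bounded by the deterministic rate and $\Pr(E_n^c)=\supoP(1)$.

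If a genuine unconditional moment is needed, I would instead use the standard regularisation: replace $\hat S^{-1}$ by $0$ (or truncate $\hat f^{\mathrm{LP}}$ at level $2L$) off $E_n$. The contribution of $E_n^c$ is then at most $(2L)^2\Pr(E_n^c)$. To make this $O(1/(nh^d))$ one needs an exponential bound on $\Pr(E_n^c)$, which would come from a matrix Bernstein inequality, since the entries of $\pi_h$ are bounded by $\|K\|_\infty e^d h^{-d}$. That gives $\Pr(E_n^c)\le 2\bar p\exp(-cnh^d)=o(1/(nh^d))$.

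The same argument holds with $x_0$ replaced by any $x\in\cX^\circ$ at distance at least $\lambda b_n$ from the boundary, which covers the suprema over $x$ used in earlier lemmas. Applying it to the out-of-fold estimator $\hat f^{\mathrm{LP}}_k$ only changes $n$ to $|\cI_k^c|\asymp n$, by Assumption~\ref{ass:cross-fitting}.
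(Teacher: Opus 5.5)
Your Steps~1--2 follow the paper's argument, written out directly. The paper specialises the decomposition from the proof of Theorem~\ref{thm:decomp} to $\hat\rho_k=\rho=\mathrm{id}$, $\kappa_\lambda\equiv0$, $\hat\mu_k=\hat c_k\equiv0$. Then $\hat\Omega(\boldsymbol{\tau})$ becomes your $\hat S$, and essentially only $\phi_{\RN{1}}$ (your noise term) and $\phi^*_{\RN{2}.\mathrm{i}}$ (your $e_0^*$ term after~\eqref{eq:destroy}) survive; these are handled as in Lemmas~\ref{lem:phi-1} and~\ref{lem:usual-bias}.

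You genuinely depart from the paper in Step~3, and there you are right. The reduction only gives $\hat f^{\mathrm{LP}}(x_0)-f(x_0)=\supOP\bigl(h^{\beta^*}+(nh^d)^{-1/2}\bigr)$. The paper then passes silently to a second-moment statement that need not even make sense for the untruncated estimator: $\hat S=0$ with positive probability, e.g.\ when no design point lies in $\mathcal{B}_{x_0}(h)$. Your truncation, combined with the exponential matrix-Bernstein bound on $\Pr(E_n^c)$ with constants uniform in the evaluation point, is what actually delivers the bounds on $\sup_x\E_P\{(\hat f^{\mathrm{LP}}_k(x)-f(x))^2\}$ used in Lemmas~\ref{lem:c(x0)},~\ref{lem:phi-3} and~\ref{lem:phi-4}. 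The price is that the pilot estimator is modified.

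Your alternative claim, that a conditional pointwise bound would suffice for those lemmas, is not correct as stated. Those lemmas integrate $(\hat f^{\mathrm{LP}}_k(x)-f(x))^2$ against $\kappa_{h,\lambda}^2(x-x_0)$ over $x\in\mathcal{B}_{x_0}(\lambda h)$. A high-probability event for each fixed $x$ is therefore not enough; you would need control of $\hat S(x)^{-1}$ simultaneously over that ball.

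Finally, a correction to Step~1. For $p\geq2$ the diagonal entries $\int K(\nu)\nu_j^2\,d\nu$ of $s_1(K)$ vanish while its $(1,1)$ entry is $1$. So $s_1(K)$ is indefinite and $\Lambda_{\min}(s_1(K))<0$. Use the smallest singular value $\|s_1(K)^{-1}\|_{\mathrm{op}}^{-1}$ instead; the paper makes the same slip.
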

    \begin{proof}
        Algorithm~\ref{alg:outrigger} can trivially be adapted to construct the simpler standard local polynomial estimator $\hat{f}^{\mathrm{LP}}$ by removing the for loop over $k\in[\mathcal{K}]$ in entirety, and instead taking $\hat{f}^{\mathrm{LP}}_k=\hat\mu_k=\hat{c}_k\equiv0$ for all $k\in[\mathcal{K}]$ and also taking $\kappa_\lambda\equiv0$ and $\hat\rho_k=\rho=\mathrm{id}$ for all $k\in[\mathcal{K}]$. In the decomposition~\eqref{eq:first-expansion},     $\mathrm{e}_1^\top\Omega^{-1}\hat\phi_{\RN{3}}(\boldsymbol{\tau})
        =
        {e}_1^\top\Omega^{-1}\hat\phi_{\RN{4}}(\boldsymbol{\tau})
        =0$. Further, in the proof of Lemma~\ref{lem:phi-1},  $\hat\phi_{\RN{1}.\mathrm{i}.k}=\hat\phi_{\RN{1}.\mathrm{ii}.k}=\hat\phi_{\RN{1}.\mathrm{iii}.k}=0$, and in the proof of Lemma~\ref{lem:usual-bias},  $\hat\phi_{\RN{2}.\mathrm{iii}.k}=0$ for all $k\in[\mathcal{K}]$. The remaining three terms $\phi_{\RN{1}}$, $\phi_{\RN{2}.\mathrm{i}}^*$, and $\hat\phi_{\RN{2}.\mathrm{ii}.k}$ are dealt with as in Lemmas~\ref{lem:phi-1}~and~\ref{lem:usual-bias}.
    \end{proof}

\begin{proof}[Proof of Theorem~\ref{thm:decomp-ext}]
    Theorem~\ref{thm:decomp-ext} follows by identical arguments to the proof of Theorem~\ref{thm:decomp}, with the exception of two small amendments that we now describe. In~\eqref{eq:thmdecompext-1}, by similar arguments we have
    \begin{equation*}
        \E_P\bigg\{\bigg(\frac{1}{n}\sum_{i\in\cI_k}\kappa_{h,\lambda}(X_i -x_0)\varrho(\varepsilon_i\given X_i)\bigg)^2\bigg\}
        =
        \supoP\biggl(\frac{1}{nh^d}+h^{2\beta^*}\biggr),
    \end{equation*}
    and so writing $\hat\phi_{\RN{1}.\mathrm{i}.k}[\varrho]$ for the quantity $\hat\phi_{\RN{1}.\mathrm{i}.k}$ in~\eqref{eq:phi-1-decomp} with $\rho$ replaced with $\varrho$, we find that $\eta^\top\hat\phi_{\RN{1}.\mathrm{i}.k}[\varrho]=\supoP\bigl(\frac{1}{\sqrt{nh^d}}+h^{\beta^*}\bigr)$. Second, writing $Z_{n,1}[\varrho]$ for the quantity $Z_{n,1}$ in~\eqref{eq:Z_ni} with $\rho$ replaced with $\varrho$, by~\eqref{eq:lyapunov-2+d-UB} and~\eqref{eq:approx-soln},
    \begin{equation*}
        \frac{\E_P\big(|Z_{n,1}[\varrho]|^{2+\delta}\big)}{n^{\delta/2}\sigma_P^{2+\delta}}
        =
        \supOP\biggl(\frac{1}{n^{\delta/2}}\biggl\{1+h^{(2\beta+d)(1+\delta/2)}\biggr\}\biggr) = \supoP(1).
    \end{equation*}
    The result follows.
\end{proof}

\section{Existence of higher order kernels}

From a methodological standpoint, it would be common to consider local polynomial estimators (e.g.~locally linear polynomials) with a second order kernel $K$.  However, from a theoretical perspective, our results allow for estimation of functions of any H\"older smoothness $\beta\in(0,\infty)$. Assumption~\ref{ass:kernel}, which is used in Theorem~\ref{thm:decomp}, asks for a $d$-dimensional kernel $K$ of order $p+1$ for arbitrary $d\in\N, p\in\N_0$ with $s_1(K)$ invertible.  We establish the existence of such a kernel below.  It is convenient to let $\bar{p}^*:=\binom{d+2p}{2p}$ and define $Q^{\mathrm{ext}}:\R^d\to\R^{\bar{p}^*}$ by $Q^{\mathrm{ext}}(u):= \bigl(\frac{1}{\alpha!}\prod_{r=1}^d u_r^{\alpha_r}\bigr)_{\|\alpha\|_1\leq 2p}$, with components ordered in increasing graded lexicographic order in $\alpha$~i.e.~for $\alpha_1,\alpha_2\in\N_0^d$ we say $u^{\alpha_1}<_{\mathrm{gr.lex}}u^{\alpha_2}$ if $\|\alpha_1\|_1<\|\alpha_2\|_1$ or $\|\alpha_1\|_1=\|\alpha_2\|_1$ and $\alpha_1<_{\mathrm{lex}}\alpha_2$, where $<_{\mathrm{lex}}$ denotes the lexicographic order. It will also help to define the shorthand $\mathrm{C}^n_r:={\binom{n}{r}}=\frac{n!}{r!(n-r)!}$. Our proof will also rest on a construction involving a set of algebraically independent numbers~\citep[e.g.][]{lang}. The real numbers $\lambda_1,\ldots,\lambda_n\in\R$ are \emph{algebraically independent over $\mathbb{Q}$} if no non-trivial polynomial in $\mathbb{Q}[x_1,\ldots,x_n]$ vanishes on $(\lambda_1,\ldots,\lambda_n)$.  
For a block matrix $M\in\R^{n\times n}$ of the form
\[
M := \begin{pmatrix}
    A & B^\top \\ B & C
\end{pmatrix},
\]
where $A\in\R^{n_1\times n_1}, B\in\R^{n_2\times n_1}$ and $C\in\R^{n_2\times n_2}$  with $n_1+n_2=n$ with $C$ invertible, define the Schur complement of the block $C$ with respect to $M$ is defined as $A-B^\top C^{-1}B$. We will use the result that $M$ is invertible if the Schur complement of the block $C$ with respect to $M$ is invertible~\citep[see e.g.][Proposition 10.10.2]{samworth-shah}.

\begin{proposition}\label{prop:kernel-existence}
    For any $d\in\N$, $p\in\N_0$ there exists a bounded kernel $K:\R^d\to\R$ of order $p+1$ with $\supp K\subseteq\mathcal{B}_0(1)$ such that $s_1(K)$ is invertible. 
\end{proposition}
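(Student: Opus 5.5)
The plan is to build $K$ in two stages. First, construct a base kernel $K_0$ of order $p+1$ by the usual moment-matching trick. Then perturb it by a function that is orthogonal to all polynomials of degree at most $p$, chosen so that $s_1(K)$ becomes invertible.

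\emph{Base kernel.} Let $w=\ind_{\mathcal{B}_0(1)}$, and let $G:=\int w(\nu)Q(\nu)Q(\nu)^\top\,d\nu$. The monomials $\nu^\alpha$ with $\|\alpha\|_1\le p$ are linearly independent on any set of positive Lebesgue measure, so $G$ is positive definite. Set $K_0(\nu):=w(\nu)\,\mathrm{e}_1^\top G^{-1}Q(\nu)$. Then $\int K_0 Q = G G^{-1}\mathrm{e}_1=\mathrm{e}_1$, so $\int K_0=1$ and all moments of orders $1,\ldots,p$ vanish. Hence $K_0$ is a bounded kernel of order $p+1$ with $\supp K_0\subseteq\mathcal{B}_0(1)$.

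\emph{Perturbation and what is free.} The entries of $s_1(K)$ are $(\alpha!\beta!)^{-1}\int K\nu^{\alpha+\beta}$ for $\|\alpha\|_1,\|\beta\|_1\le p$. These involve only the moments $m_\gamma(K):=\int K\nu^\gamma$ with $\|\gamma\|_1\le 2p$. For any kernel of order $p+1$, the moments with $\|\gamma\|_1\le p$ are pinned: $m_0=1$ and the rest are $0$. Now consider perturbations $K=K_0+H$, where $H=w\cdot P$ and $P$ is a polynomial of degree at most $2p$ constrained to satisfy $\int H\nu^\gamma=0$ for $\|\gamma\|_1\le p$. Apply linear independence to the extended vector $Q^{\mathrm{ext}}$: its Gram matrix $\int wQ^{\mathrm{ext}}(Q^{\mathrm{ext}})^\top$ is positive definite. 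Consequently the map $H\mapsto (m_\gamma(H))_{\|\gamma\|_1\le 2p}$ is onto $\R^{\bar p^*}$. In particular, the higher moments $m_\gamma(K)$ with $p<\|\gamma\|_1\le 2p$ can be prescribed arbitrarily while the order-$(p+1)$ constraints are kept. Boundedness and the support condition are automatic.

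\emph{Nonvanishing determinant.} It therefore suffices to show that $\det s_1(K)$, viewed as a polynomial $\Delta$ with rational coefficients in the free variables $(m_\gamma)_{p<\|\gamma\|_1\le 2p}$, is not identically zero. Once that holds, choosing these moments algebraically independent over $\mathbb{Q}$ (as the preamble suggests) forces $\Delta\neq 0$ at that point. Any point outside the zero set of $\Delta$ would do equally well.

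To show $\Delta\not\equiv0$, order $Q$ in graded lexicographic order. Write $s_1(K)$ in block form with rows and columns indexed first by monomials of degree $\le \lfloor p/2\rfloor$ (block $A$) and then by the rest (block $C$). Every entry of $C$ has total degree $\|\alpha\|_1+\|\beta\|_1>p$, so it is a free variable. In particular, each diagonal entry of $C$ is $m_{2\alpha}/(\alpha!)^2$, and $m_{2\alpha}$ occurs nowhere else in $s_1(K)$ off the diagonal. Taking these diagonal moments large and all other free moments of $C$ equal to $0$ makes $C$ diagonally dominant and invertible. Its inverse $C^{-1}$ then tends to $0$ as these diagonal entries grow, while $B$ is held fixed. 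The Schur complement $A-B^\top C^{-1}B$ therefore tends to $A$. So it suffices to make $A$ invertible, and then $M$ is invertible by the Schur-complement criterion.

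\emph{Main obstacle.} Block $A$ has pinned entries (degree $\le p$), namely the $\delta$-pattern, together with free entries only when $\|\alpha\|_1+\|\beta\|_1>p$. This cannot happen inside $A$, since both degrees are at most $\lfloor p/2\rfloor$. Hence $A$ equals the fixed matrix whose only nonzero entry is in the $(1,1)$ position. This is singular for $p\ge2$, so the naive split above fails. I expect this step to be the real difficulty.

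The first fix I would try is to iterate the Schur complement over the graded structure. Pair each degree-$a$ row with degree-$(p+1-a)$ columns, whose entries are free. Choose these "anti-diagonal" free moments large, so that $\det s_1(K)$ is dominated by a single monomial in the free variables arising from a perfect matching between low-degree and high-degree indices. Dimension counting of monomials by degree has to make such a matching exist, possibly after enlarging it with the diagonal free entries of degree $>p/2$. Exhibiting one permutation whose product monomial in $\det s_1(K)$ has a nonzero coefficient and cannot cancel would certify $\Delta\not\equiv0$. Verifying this combinatorial matching, i.e.\ that no other permutation produces the same monomial, is the crux.
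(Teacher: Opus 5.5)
Your reduction is correct, and tidier than the paper's. The kernel $K_0=w\,\mathrm{e}_1^\top G^{-1}Q$, together with positive definiteness of $\int wQ^{\mathrm{ext}}(Q^{\mathrm{ext}})^\top$, shows that the moments $m_\gamma$ with $p<\|\gamma\|_1\le 2p$ can be prescribed arbitrarily while keeping a bounded order-$(p+1)$ kernel supported in $\mathcal{B}_0(1)$. The paper gets the same with indicator bumps and Lemma~\ref{lem:G-inverse}. So it does suffice to find one moment vector with $\det s_1(K)\neq0$.

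\textbf{The gap.} Finding that moment vector is the whole nontrivial content of the proposition, and you do not supply it. The permutation/dominant-monomial idea is only a heuristic, for two reasons.
\begin{itemize}
\item The entries of $s_1(K)$ are not independent indeterminates: each $m_\gamma$ fills every position $(\alpha,\beta)$ with $\alpha+\beta=\gamma$. For example, $m_{(2,2)}$ sits at both $((1,1),(1,1))$ and $((2,0),(0,2))$, so your claim that $m_{2\alpha}$ appears only on the diagonal is false. Hence the monomial of one permutation is generally produced by others too, and non-cancellation is exactly what is at issue.
\item For $d\ge2$, the blocks pairing degree $a$ with degree $p+1-a$ have sizes $\binom{a+d-1}{d-1}\neq\binom{p-a+d}{d-1}$ unless $2a=p+1$. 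So no matching can stay on the block anti-diagonal.
\end{itemize}
The choice really is constrained: already for $d=1$, $p=2$ one has $\det s_1(K)=-m_3^2/4$, so every symmetric kernel fails.

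\textbf{How the paper closes it.} It takes the free moments algebraically independent over $\mathbb{Q}$ and uses the graded structure $s_1(K)=1\oplus s_1^{(-)}(K)$, where the degree-$(k,\ell)$ block $S_{k,\ell}$ vanishes when $k+\ell\le p$. It starts from the central block: $S_{(p+1)/2,(p+1)/2}$ for odd $p$, or the matrix $U$ for even $p$. It then adjoins the degree-$(q_0-1)$ and degree-$(q+1)$ blocks inductively via Schur complements (Lemma~\ref{lem:pos-def-inductive}). This uses two facts: the degree-$(2q+2)$ moments enter only the new diagonal block $S_{q+1,q+1}$, and $S_{q+1,q_0-1}$ has full column rank.

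\textbf{A shorter fix within your framework.}
\begin{enumerate}
\item Take nodes $x_1,\ldots,x_{\bar p}$ that are unisolvent for polynomials of degree at most $p$, with Lagrange polynomials satisfying $\ell_j(0)\neq0$ for all $j$. Generic nodes work: by Cramer's rule $\ell_j(0)$ is a ratio of two Vandermonde-type determinants. Each is a nonzero polynomial in the nodes, because translating any unisolvent set gives one containing $0$.
\item Set $m_\gamma:=\sum_j\ell_j(0)x_j^\gamma$ for $p<\|\gamma\|_1\le 2p$.
\item Since $\sum_j\ell_j(0)q(x_j)=q(0)$ whenever $\deg q\le p$, the same formula reproduces the pinned low-order moments. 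Hence $\int Kh=\sum_j\ell_j(0)h(x_j)$ for every polynomial $h$ of degree at most $2p$.
\item Applying this to the entries $Q_aQ_b$ gives $s_1(K)=V^\top\mathrm{diag}\bigl(\ell_1(0),\ldots,\ell_{\bar p}(0)\bigr)V$, where $V$ is the invertible matrix with rows $Q(x_j)^\top$. This is invertible, with no algebraic independence or matching argument needed.
\end{enumerate}
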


\begin{proof}
If $p=0$, then the kernel $K$ given by $K(\nu) \propto \max(1 - \|\nu\|,0)$ satisfies the requirements, since $s_1(K) = 1$ in that case.  Henceforth assume that $p \in \mathbb{N}$.  For $\epsilon > 0$ and $u \in \mathbb{R}^d$, let $\phi_{\epsilon}(u):=\ind_{\mathcal{B}_0(\epsilon)}(u)/\epsilon^d$. 
    By Lemma~\ref{lem:G-inverse}, there exists $\epsilon^*\in(0,1/2]$ and $x_1,\ldots,x_{\bar{p}^*}\in\mathcal{B}_0(1/2)$ such that
    \begin{equation*}
        G := 
        \begin{pmatrix}
            g_1(x_1) & \cdots & g_1(x_{\bar{p}^*})
            \\
            \vdots & \ddots & \vdots
            \\
            g_{\bar{p}^*}(x_1) & \cdots & g_{\bar{p}^*}(x_{\bar{p}^*})
        \end{pmatrix}
    \end{equation*}
    is invertible, where 
    \[
    g_m(x):=\int_{\mathcal{B}_0(\epsilon^*)} Q_m^{\mathrm{ext}}(\nu)\phi_{\epsilon^*}(x-\nu)\,d\nu=\int_{\mathcal{B}_0(1)} Q_m^{\mathrm{ext}}(x+\epsilon^* \nu)\,d\nu
    \]
    for $m\in[\bar{p}^*]$.  Given an arbitrary vector $(c_{\alpha'})_{\alpha' \in \mathbb{N}_0^d:p+1\leq \|\alpha'\|_1\leq 2p}$ and $\alpha\in\N_0^d$ with $\|\alpha\|_1\leq 2p$, define
    \begin{equation*}
        \mu_\alpha := \ind_{\{\alpha=\boldsymbol{0}\}}
        +
        \sum_{\substack{\alpha'\in\N_0^d,\\p+1\leq \|\alpha'\|_1\leq 2p}}\frac{c_{\alpha'}}{\alpha'!}\ind_{\{\alpha'=\alpha\}}.
    \end{equation*}
    Now set $\mu:=(\mu_\alpha)_{\alpha\in\N_0^d,\, \|\alpha\|_1\leq 2p}$, with components in graded lexicographical order.  Further, let $d = (d_1,\ldots,d_{\bar{p}^*})^\top := G^{-1}\mu$ and define a bounded function $K:\mathbb{R}^d \rightarrow \mathbb{R}$ with $\supp K\subseteq\mathcal{B}_0(1)$ by
    \begin{equation*}
        K(\nu) := \sum_{j=1}^{\bar{p}^*}d_j \phi_{\epsilon^*}(x_j - \nu).
    \end{equation*}
    For $\alpha\in\N_0^d$ with $\|\alpha\|_1\leq 2p$, let $m(\alpha)\in[\bar{p}^*]$ denote the index for which $Q^{\mathrm{ext}}(\nu)_{m(\alpha)}=\nu^\alpha$.  Then for any such $\alpha$,
    \begin{align}
        \int_{\R^d} \nu^{\alpha} K(\nu)\,d\nu
        &=
        \alpha! \mathrm{e}_{m(\alpha)}^\top \int_{\R^d} K(\nu)Q^{\mathrm{ext}}(\nu)\,d\nu
        =
        \alpha! \mathrm{e}_{m(\alpha)}^\top \sum_{j=1}^{\bar{p}^*}d_j\int_{\mathcal{B}_0(\epsilon^*)}\phi_{\epsilon^*}(x_j - \nu)Q^{\mathrm{ext}}(\nu)\,d\nu
        \notag
        \\
        &=
        \alpha! \mathrm{e}_{m(\alpha)}^\top Gd = \alpha!\mu_{\alpha}=\begin{cases}
            1 &\quad\text{if }\alpha=\boldsymbol{0}
            \\
            0 &\quad\text{if }1\leq\|\alpha\|_1\leq p
            \\
            c_{\alpha} &\quad\text{if }p+1\leq \|\alpha\|_1\leq 2p
        \end{cases}.
        \label{eq:kernel-terms}
    \end{align}
    Thus for arbitrary  $(c_{\alpha})_{\alpha\in\N_0^d,p+1\leq \|\alpha\|_1\leq 2p}$ there exists a bounded kernel $K$ of order $p+1$ with $\supp K\subseteq\mathcal{B}_0(1)$ satisfying~\eqref{eq:kernel-terms}. 
    
    It remains to show we can choose $(c_\alpha)$ such that $s_1(K)$ is invertible.  To see this, we choose $(c_{\alpha})_{\alpha\in\N_0^d,p+1\leq \|\alpha\|_1\leq 2p}$ to be a sequence of algebraically independent real numbers over $\mathbb{Q}$.  Then $s_1(K) \in \mathbb{R}^{\bar{p}^* \times \bar{p}^*}$ can be decomposed as
    \begin{gather*}
        s_1(K)=\begin{pmatrix}
            1 & {\boldsymbol{0}^\top}
            \\
            \boldsymbol{0} & s_1^{(-)}(K)
        \end{pmatrix},
        \quad \text{where} \ 
        s_1^{(-)}(K)
        :=
        \begin{pmatrix}
            {\boldsymbol{0}} & {\boldsymbol{0}} & \cdots & {\boldsymbol{0}}& {\boldsymbol{0}} & S_{1,p}
            \\
            {\boldsymbol{0}} & {\boldsymbol{0}} & \cdots & {\boldsymbol{0}} & S_{2,p-1} & S_{2,p}
            \\
            \vdots & \vdots & \ddots & \vdots & \vdots & \vdots 
            \\
            {\boldsymbol{0}} & {\boldsymbol{0}} & \cdots & S_{p-2,p-2} & S_{p-2,p-1} & S_{p-2,p}
            \\
            {\boldsymbol{0}} & S_{p-1,2} & \cdots & S_{p-1,p-2} & S_{p-1,p-1} & S_{p-1,p}
            \\
            S_{p,1} & S_{p,2} & \cdots & S_{p,p-2} & S_{p,p-1} & S_{p,p}
        \end{pmatrix},
    \end{gather*}
    and where 
    \[    S_{k,\ell} := (c_{\alpha_1+\alpha_2})_{\alpha_1,\alpha_2\in\N_0^d\,:\, \|\alpha_1\|_1=k, \|\alpha_2\|_1=\ell}\in\R^{\mathrm{C}^{k+d-1}_{d-1}\times\mathrm{C}^{\ell+d-1}_{d-1}},
    \]
    with components in lexicographic order.  Moreover, $S_{k,\ell}$ consists of $\mathrm{C}^{k+\ell+d-1}_{d-1}$ unique entries, such that each component of $(c_{\alpha})_{\alpha\in\N_0^d:\|\alpha\|_1=k+\ell}$ appears in exactly one of the `antidiagonals' of $S_{k,\ell}$. 
    Thus for any $k,\ell\in[p]$ with $p+1\leq k+\ell \leq 2p$, we see from Lemma~\ref{lem:alg-indep} that any $\min\bigl(\mathrm{C}^{k+d-1}_{d-1} , \mathrm{C}^{\ell+d-1}_{d-1}\bigr) \times \min\bigl(\mathrm{C}^{k+d-1}_{d-1} , \mathrm{C}^{\ell+d-1}_{d-1}\bigr)$ submatrix of $S_{k,\ell}$ is invertible, so 
    \begin{equation}\label{eq:rank}
        \mathrm{rank}(S_{k,\ell})=\min\bigl(\mathrm{C}^{k+d-1}_{d-1} , \mathrm{C}^{\ell+d-1}_{d-1}\bigr).
    \end{equation}
    
    If $p$ is odd, then $S_{(p+1)/2,(p+1)/2}$ is invertible by~\eqref{eq:rank}. If $p$ is even, then we claim that 
    \begin{equation*}
        U := \begin{pmatrix}
            {\boldsymbol{0}} & S_{p/2,p/2+1}  
            \\S_{p/2+1,p/2} & S_{p/2+1,p/2+1}
        \end{pmatrix},
    \end{equation*}
    is invertible.  Indeed, $S_{p/2+1,p/2+1}$ is invertible, and the Schur complement 
    of $S_{p/2+1,p/2+1}$ with respect to~$U$ is
    \begin{equation*}
        -S_{p/2+1,p/2}^\top S_{p/2+1,p/2+1}^{-1}S_{p/2+1,p/2} \in \mathbb{R}^{\mathrm{C}^{p/2+d-1}_{d-1} \times \mathrm{C}^{p/2+d-1}_{d-1}},
    \end{equation*}
    which is invertible as
    \begin{align*}
        \mathrm{rank}\bigl(S_{p/2+1,p/2}^\top S_{p/2+1,p/2+1}^{-1}S_{p/2+1,p/2}\bigr)
        =
        \mathrm{rank}\bigl(S_{p/2+1,p/2}\bigr)
        &=
        \mathrm{C}^{p/2+d-1}_{d-1}.
    \end{align*}

    We now apply Lemma~\ref{lem:pos-def-inductive} inductively.  Suppose that
    \begin{equation*}
        S_{[q,q_0]}:=\begin{pmatrix}
            {\boldsymbol{0}} & {\boldsymbol{0}} & \cdots & {\boldsymbol{0}}& {\boldsymbol{0}} & S_{q_0,q}
            \\
            {\boldsymbol{0}} & {\boldsymbol{0}} & \cdots & {\boldsymbol{0}} & S_{q_0+1,q-1} & S_{q_0+1,q}
            \\
            \vdots & \vdots & \ddots & \vdots & \vdots & \vdots 
            \\
            {\boldsymbol{0}} & {\boldsymbol{0}} & \cdots & S_{q-2,q-2} & S_{q-2,q-1} & S_{q-2,q}
            \\
            {\boldsymbol{0}} & S_{q-1,q_0+1} & \cdots & S_{q-1,q-2} & S_{q-1,q-1} & S_{q-1,q}
            \\
            S_{q,q_0} & S_{q,q_0+1} & \cdots & S_{q,q-2} & S_{q,q-1} & S_{q,q}
        \end{pmatrix},
    \end{equation*}
    is invertible for some $2\leq q_0\leq q \leq p-1$ with $q+q_0-1=p$. Then by Lemma~\ref{lem:pos-def-inductive} with
    \begin{gather*}
        A := S_{[q,q_0]},
        \quad
        B := S_{q+1,q_0-1},
        \quad
        D := \begin{pmatrix}
            S_{q+1,q_0}&\cdots&S_{q+1,q}
        \end{pmatrix},
    \end{gather*}
    and $M_{\Phi}:\R^{\mathrm{C}^{2q+d+3}_{d-1}}\to\R^{\mathrm{C}^{q+d+1}_{d-1}\times \mathrm{C}^{q+d+1}_{d-1}}$ of the form in Lemma~\ref{lem:alg-indep} with $M_{\Phi}\bigl((c_{\alpha})_{\alpha \in \mathbb{N}_0^d:\|\alpha\|_1=2q+2}\bigr)=S_{q+1,q+1}$, we have that $S_{[q+1,q_0-1]}$ is invertible.  By induction, we conclude that $s_1^{(-)}(K)=S_{[p,1]}$ is invertible, so $s_1(K)$ is invertible. 
    \end{proof}

\begin{lemma}\label{lem:G-inverse}
    There exist $\epsilon^* \in (0,1/2]$ and $x_1,\ldots,x_{\bar{p}^*}\in\mathcal{B}_0(1/2)$ such that the matrix $G\in\R^{\bar{p}^*\times\bar{p}^*}$ with $(m,j)$th entry
    \begin{equation*}
        G_{mj} := \int_{\mathcal{B}_0(1)} Q^{\mathrm{ext}}_m(x_j+\epsilon^* u)\,du
    \end{equation*}
    is invertible. 
\end{lemma}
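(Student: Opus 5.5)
The plan is to reduce invertibility of $G$ to non-vanishing of a polynomial-type determinant and then use the fact that a nonzero analytic function cannot vanish identically. For fixed $\epsilon\in(0,1/2]$, write
\[
g_m(x):=\int_{\mathcal{B}_0(1)}Q^{\mathrm{ext}}_m(x+\epsilon u)\,du .
\]
Each $Q^{\mathrm{ext}}_m$ is a monomial $x^\alpha/\alpha!$ with $\|\alpha\|_1\le 2p$. Expanding $(x+\epsilon u)^\alpha$ by the binomial theorem and integrating over $u$ shows that
\[
g_m(x)=\frac{v_{d,q}}{\alpha!}x^\alpha+\sum_{\beta<\alpha}a_{\alpha,\beta}(\epsilon)\,x^\beta ,
\]
where $v_{d,q}$ is the volume of the unit ball and $\beta<\alpha$ means componentwise dominance with $\beta\neq\alpha$. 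In other words, $g_m$ is a polynomial in $x$ whose leading part is a nonzero multiple of the monomial $x^{\alpha}$, plus strictly lower-degree monomials.

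The first step is to show that the $g_m$ are linearly independent as functions on $\mathcal{B}_0(1/2)$. Order the monomials by graded lexicographic order. The coefficient matrix expressing $(g_m)_m$ in the monomial basis $(x^\alpha/\alpha!)_{\|\alpha\|_1\le 2p}$ is then triangular. Its diagonal entries all equal $v_{d,q}\neq0$, so the matrix is invertible. Since monomials are linearly independent on any set with nonempty interior, so are the $g_m$. This holds for every $\epsilon$, so one may simply fix $\epsilon^*=1/2$.

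The second step passes from linear independence to an invertible evaluation matrix. Consider
\[
\Delta(x_1,\dots,x_{\bar p^*}):=\det\bigl(g_m(x_j)\bigr)_{m,j},
\]
which is a polynomial in $(x_1,\dots,x_{\bar p^*})\in(\mathbb{R}^d)^{\bar p^*}$. Factoring out the triangular change of basis gives $\Delta$ as a nonzero constant times the generalised Vandermonde determinant $\det\bigl(x_j^{\alpha}/\alpha!\bigr)$. This Vandermonde determinant is not identically zero. One can see this by induction on the number of points via cofactor expansion along the last column: the expansion is a linear combination of linearly independent monomials in $x_{\bar p^*}$, with coefficients given by smaller determinants that are nonzero by induction. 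Alternatively, it follows from the standard fact that for linearly independent functions $f_1,\dots,f_N$ one can choose points with $\det(f_m(x_j))\neq0$, proved by greedy selection of points. A nonzero polynomial cannot vanish on the open set $\mathcal{B}_0(1/2)^{\bar p^*}$, so there exist points $x_1,\dots,x_{\bar p^*}\in\mathcal{B}_0(1/2)$ with $\Delta\neq0$. For these points $G$ is invertible.

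The main obstacle is only the bookkeeping needed to check triangularity under the graded order, in particular that convolving with the ball never raises the degree of a monomial. The integration only produces terms $x^\beta$ with $\beta\le\alpha$ componentwise, and the terms with odd powers of $u$ vanish by symmetry of the ball. Triangularity therefore follows.
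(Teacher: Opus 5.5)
Your proof is correct, and it takes a genuinely different, somewhat sharper route than the paper.

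\emph{The paper's route.} It first invokes an external lemma to obtain points $x_1,\ldots,x_{\bar{p}^*}\in\mathcal{B}_0(1/2)$ at which the monomial evaluation matrix $A=\bigl(Q^{\mathrm{ext}}_m(x_j)\bigr)_{m,j}$ is invertible. It then argues by continuity: $\epsilon\mapsto\det G(\epsilon)$ is continuous and nonzero at $\epsilon=0$, so some small $\epsilon^*>0$ works.

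\emph{Your route.} You expand $(x+\epsilon u)^\alpha$ and observe that $G(\epsilon)=T(\epsilon)A$. Here $T(\epsilon)$ is triangular in graded order, with every diagonal entry equal to $|\mathcal{B}_0(1)|$. Hence $\det G(\epsilon)=|\mathcal{B}_0(1)|^{\bar{p}^*}\det A$ for every $\epsilon\geq 0$. You also produce suitable points yourself, via non-vanishing of the generalised Vandermonde polynomial on the open set $\mathcal{B}_0(1/2)^{\bar{p}^*}$.

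\emph{What each buys.} Your argument is self-contained and shows that $\det G(\epsilon)$ does not depend on $\epsilon$ at all. So any $\epsilon^*\in(0,1/2]$ works with the same points, and the perturbation step is unnecessary. It also shows that the paper's identity $\det G(0)=\det A$ should carry the harmless factor $|\mathcal{B}_0(1)|^{\bar{p}^*}$. The paper's argument is shorter and more general: it uses nothing about $Q^{\mathrm{ext}}$ beyond continuity and linear independence. It would therefore go through for any continuous basis, or any smoothing kernel in place of the ball indicator.

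\emph{Minor remarks.} Your ``first step'' on linear independence of the $g_m$ is redundant once you have $G=TA$. In the cofactor-expansion induction, state the hypothesis for arbitrary families of distinct multi-indices, since the minors correspond to subfamilies. The symmetry remark about odd powers of $u$ is not needed for triangularity.
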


\begin{proof}
Since $Q^{\mathrm{ext}}_1,\ldots,Q^{\mathrm{ext}}_{\bar{p}^*}$ are linearly independent, by~\citet[Lemma~11]{rose}, there exist $x_1,\ldots,x_{\bar{p}^*} \in \mathcal{B}_0(1/2)$ such that 
\[
A := \begin{pmatrix}
           Q^{\mathrm{ext}}_1(x_1) & \cdots & Q^{\mathrm{ext}}_1(x_{\bar{p}^*})
            \\
            \vdots&\ddots&\vdots
            \\
            Q^{\mathrm{ext}}_{\bar{p}^*}(x_1) & \cdots & Q^{\mathrm{ext}}_{\bar{p}^*}(x_{\bar{p}^*})
        \end{pmatrix}
\]
is invertible.  Now define the continuous function $G:[0,\infty) \to\R^{\bar{p}^*\times\bar{p}^*}$ by
    \begin{equation*}
        G_{mj}(\epsilon) := \int_{\mathcal{B}_0(1)} Q^{\mathrm{ext}}_m(x_j+\epsilon u)\,du.
    \end{equation*}
Then $\det G(0) = \det A \neq 0$, and $\det$ is a continuous function, so there exists $\epsilon^* \in (0,1/2]$ such that $\det G(\epsilon^*) \neq 0$, as required.
\end{proof}

\begin{lemma}\label{lem:alg-indep}
    Let $n \geq 2$, let $\Phi:[n]\to\mathcal{P}([m] \times [m])$ be such that $\Phi(1),\ldots,\Phi(n)$ partition $[m] \times [m]$, and such that for each $i \in [n]$, either $\Phi(i) \subseteq \{(j,m+1-j):j \in [m]\}$ or $\Phi(i) \subseteq \{(j,m+1-k):j,k \in [m] \text{ are distinct}\}$.  Now define $M_\Phi:\mathbb{R}^n \rightarrow \mathbb{R}^{m \times m}$ by $M_\Phi(a_1,\ldots,a_n)_{jk} := \sum_{i=1}^n a_i \cdot \mathbbm{1}_{\{(j,k) \in \Phi(i)\}}$.  Then for all $a\in\R^n$ with components consisting of algebraically independent numbers over $\mathbb{Q}$, the matrix $M_\Phi(a)$ is invertible.
\end{lemma}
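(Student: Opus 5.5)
The approach is to show that $\det M_\Phi(a)$, viewed as a polynomial in $a_1,\ldots,a_n$ with integer coefficients, is not the zero polynomial. Algebraic independence of $a_1,\ldots,a_n$ over $\mathbb{Q}$ then forces $\det M_\Phi(a)\neq 0$. Concretely, define $D(x_1,\ldots,x_n):=\det M_\Phi(x_1,\ldots,x_n)\in\mathbb{Z}[x_1,\ldots,x_n]\subseteq\mathbb{Q}[x_1,\ldots,x_n]$; this is legitimate because each entry of $M_\Phi(x)$ is a single variable $x_i$ (the sets $\Phi(i)$ partition $[m]\times[m]$, so every entry equals exactly one variable). If $D$ is a nontrivial polynomial, then by the definition of algebraic independence $D(a)\neq0$, and $M_\Phi(a)$ is invertible.

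To show $D\not\equiv0$, I will exhibit one monomial in the Leibniz expansion $D=\sum_{\sigma\in S_m}\mathrm{sgn}(\sigma)\prod_{j}M_\Phi(x)_{j\sigma(j)}$ whose coefficient cannot cancel. The natural candidate is the anti-diagonal permutation $\sigma_0(j)=m+1-j$. By hypothesis, each class $\Phi(i)$ lies either entirely in the anti-diagonal $\{(j,m+1-j)\}$ or entirely off it. Hence the anti-diagonal entries are filled by variables that never appear anywhere else in the matrix. Write the contribution of $\sigma_0$ as $\pm\prod_{i\in I}x_i^{k_i}$, where $I$ indexes the anti-diagonal classes and $k_i=|\Phi(i)|$.

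The key step is to check that no other permutation $\sigma\neq\sigma_0$ produces the same monomial. Any such $\sigma$ has some $j$ with $\sigma(j)\neq m+1-j$. The entry $(j,\sigma(j))$ is then off the anti-diagonal, so it contributes a variable $x_{i'}$ with $i'\notin I$. That variable does not divide the monomial $\prod_{i\in I}x_i^{k_i}$, so the two monomials differ. The coefficient of $\prod_{i\in I}x_i^{k_i}$ in $D$ is therefore exactly $\mathrm{sgn}(\sigma_0)=\pm1\neq0$, and $D$ is nonzero.

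I expect no serious obstacle. The only points needing care are confirming that the dichotomy hypothesis on $\Phi(i)$ really makes anti-diagonal variables disjoint from off-anti-diagonal ones, which it does by construction, and noting that an empty $\Phi(i)$ is harmless. I will also remark that $n\geq2$ is not actually needed for the argument, and the case $m=1$ is trivial.
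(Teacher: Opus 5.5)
Your proof is correct and is essentially the paper's argument: both show that $\det M_\Phi(x)$ is a nonzero integer polynomial by using the anti-diagonal together with the on/off-anti-diagonal dichotomy, and then invoke algebraic independence. The paper evaluates at the $0/1$ point that sets the anti-diagonal variables to $1$ and all others to $0$, which gives the exchange matrix with determinant $\pm1$ (the paper writes $1$). You instead extract the coefficient of the $\sigma_0$ monomial directly. The two are equivalent, since that evaluation kills exactly the terms containing an off-anti-diagonal variable.
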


\begin{proof}
    Define the determinant function $\mathcal{D}:\R^n\to\R$ by
    \begin{equation*}
        \mathcal{D}(a) := \det M_{\Phi}(a).
    \end{equation*}
    Then $\mathcal{D}$ is a polynomial with integer coefficients. Moreover, $\mathcal{D}\not\equiv 0$, because there exists $a^*\in\{0,1\}^n$ such that 
    $\mathcal{D}(a^*)=\det \bigl((\mathbbm{1}_{\{j+k=m+1\}})_{j,k \in [m]}\bigr)=1$. Therefore, for arbitrary $a\in\R^n$ whose components are algebraically independent numbers over $\mathbb{Q}$ (thus also over $\mathbb{Z}$), we have $\mathcal{D}(a)\neq 0$, so $M_\Phi(a)$ is invertible.
\end{proof}

\begin{lemma}\label{lem:pos-def-inductive}
    Let $m_1\leq m_2 \leq m_3$ and $n_c \geq 2$, and let $A\in\R^{m_2\times m_2}$, $B\in\R^{m_3\times m_1}$, $D\in\R^{m_3\times m_2}$, and $M:\R^{n_c}\to\R^{m_3\times m_3}$. Suppose that
    \begin{enumerate}[label=(\roman*)]
        \item $A$ is invertible;
        \item The matrix function $M$ is of the form $M_\Phi$ in Lemma~\ref{lem:alg-indep} (with $m$ replaced with $m_3$); 
        \item $B$ has linearly independent columns.
    \end{enumerate}
    Then for any $a\in\R^{n_c}$ with algebraically independent components, the matrix
    \begin{equation*}
        W := \begin{pmatrix}
            {\boldsymbol{0}_{m_1\times m_1}}
            &
            {\boldsymbol{0}_{m_1\times m_2}}
            &
            B^\top
            \\
            {\boldsymbol{0}_{m_2\times m_1}}
            & A & D^\top
            \\
            B & D & M(a)
        \end{pmatrix}
    \end{equation*}
    is invertible.
\end{lemma}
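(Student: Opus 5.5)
Since $A$ is invertible, I would eliminate the middle block first. Set
\[
T:=\begin{pmatrix}\boldsymbol{0}_{m_1\times m_1} & B^\top\\ B & M(a)-DA^{-1}D^\top\end{pmatrix},
\]
which is the Schur complement of $A$ in $W$. After conjugating $W$ by the permutation that moves the $A$-block to the top-left corner, we have $\det W=\det A\cdot\det T$. So it suffices to show that $T$ is invertible.

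The first concern is that $T$ involves the perturbation $DA^{-1}D^\top$, whose entries are arbitrary real numbers that need not be algebraically related to $a$. For this reason I would not apply Lemma~\ref{lem:alg-indep} directly. Instead, take an orthogonal $P\in\R^{m_3\times m_3}$ whose first $m_1$ columns span $\mathrm{col}(B)$; this is possible by (iii). Then $P^\top B=\begin{pmatrix}B_1\\ \boldsymbol{0}\end{pmatrix}$ with $B_1\in\R^{m_1\times m_1}$ invertible. Write $N:=P^\top\bigl(M(a)-DA^{-1}D^\top\bigr)P$ in blocks $N_{11},N_{12},N_{21},N_{22}$. Clearing $N_{11},N_{12},N_{21}$ by block elimination using $B_1$ gives
\[
\det T=\pm(\det B_1)^2\det N_{22}.
\]
Hence it suffices to show that $N_{22}=P_2^\top\bigl(M(a)-DA^{-1}D^\top\bigr)P_2$ is invertible, where $P_2$ spans $\mathrm{col}(B)^\perp$ and has size $m_3\times(m_3-m_1)$.

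This is the main obstacle. The function $\mathcal{D}(a):=\det\bigl(P_2^\top(M(a)-C)P_2\bigr)$, with $C:=DA^{-1}D^\top$ fixed and independent of $a$, is a polynomial in $a$, but its coefficients are real rather than rational. Algebraic independence of $a$ over $\mathbb{Q}$ therefore only excludes zeros of polynomials with coefficients in the field generated by the entries of $B,D,A,P$. My first plan is to use the structure of the application. In Proposition~\ref{prop:kernel-existence}, the matrices $A,B,D$ are themselves built from $(c_\alpha)$ with $\|\alpha\|_1<2q+2$, while $a=(c_\alpha)_{\|\alpha\|_1=2q+2}$ is a disjoint family of algebraically independent numbers. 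Since the whole family $(c_\alpha)$ is algebraically independent, $a$ remains algebraically independent over $F:=\mathbb{Q}(\text{entries of }A,B,D)$.

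To keep the argument inside $F$, I would avoid $P$ and instead take $P_2$ to be any basis of $\ker B^\top$ with entries in $F$, obtained by Gaussian elimination. The same determinant reduction holds with a non-orthogonal change of basis, so $\mathcal{D}\in F[a]$. It remains to show $\mathcal{D}\not\equiv0$. The top-degree homogeneous part of $\mathcal{D}$ in $a$ is $\det\bigl(P_2^\top M(a)P_2\bigr)$. I would choose $a^*\in\{0,1\}^{n_c}$ as in Lemma~\ref{lem:alg-indep}, so that $M(a^*)=J$ is the antidiagonal permutation matrix, and then scale, $a=ta^*$ with $t\to\infty$. The leading coefficient is $\det(P_2^\top JP_2)$, so the remaining task is to show this is nonzero. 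If it vanishes for the chosen $a^*$, I would instead pick $a^*$ generic within the family allowed by $\Phi$, and argue that $a\mapsto\det\bigl(P_2^\top M(a)P_2\bigr)$ is a nonzero polynomial. This should follow by exhibiting a symmetric $M(a)$ in the span that is positive definite on $\ker B^\top$. Here I expect to use that the span of $\{M_\Phi(a)\}$ contains a matrix congruent to a definite one, which is the step I would verify most carefully.

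Once $\mathcal{D}\not\equiv0$ is established, algebraic independence of $a$ over $F$ gives $\mathcal{D}(a)\neq0$. Then $N_{22}$ is invertible, so $T$ is invertible, and therefore $W$ is invertible.
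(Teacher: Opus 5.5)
Your reduction is correct, and it is sharper than the paper's argument. Since $A$ is invertible, $W$ is invertible if and only if $P_2^\top\bigl(M(a)-DA^{-1}D^\top\bigr)P_2$ is, where the columns of $P_2$ span $\ker B^\top$. The paper instead shows that $W_2=M(a)-DA^{-1}D^\top$ is invertible, via $M(a_\lambda^*)=\lambda\tilde{Q}$. It then asserts that the Schur complement $-B^\top W_2^{-1}B$ has rank $\mathrm{rank}(B)=m_1$. Your criterion shows this is not automatic, because $B^\top NB$ can be singular for an invertible indefinite symmetric $N$.

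You are also right about the field of coefficients. $\mathcal{D}\not\equiv0$ implies $\mathcal{D}(a)\neq0$ when $a$ is algebraically independent over a field containing the coefficients of $\mathcal{D}$, but not under independence over $\mathbb{Q}$ alone; the paper's $d(a)$ has the same problem. However, once you import the stronger independence from Proposition~\ref{prop:kernel-existence}, you are no longer proving the lemma as stated.

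The genuine gap is the step you flag yourself, $\mathcal{D}\not\equiv0$. It cannot be closed from (i)--(iii), because nothing forces any element of the span of $\{M_\Phi(a)\}$ to be nondegenerate on $\ker B^\top$. In fact the lemma is false. Take $m_1=m_2=1$, $m_3=4$, $n_c=2$, with $\Phi(1)$ the antidiagonal and $\Phi(2)$ its complement. Then $M(a)=(a_1-a_2)J+a_2\mathbf{1}\mathbf{1}^\top$, with $J$ the antidiagonal permutation matrix. Let $A=1$, $D=0$ and $B=(0,0,1,-1)^\top$.

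For $v=(1,-1,0,0)^\top$ we have $B^\top v=0$, $\mathbf{1}^\top v=0$ and $Jv=-B$. Hence $M(a)v=-(a_1-a_2)B$, and $(a_1-a_2,0,v^\top)^\top$ is a nonzero vector in $\ker W$ for every $a\in\R^2$. In your terms, $u^\top M(a)v=0$ for all $u\in\ker B^\top$, so $\mathcal{D}\equiv0$. Neither a special $a^*$ nor a definite element of the span can exist.

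This example also closes off the other routes:
\begin{itemize}
\item Since $A$, $B$ and $D$ are rational, requiring independence over $F$ does not help.
\item For algebraically independent $a$ we have $\det M(a)=(a_1-a_2)^3(a_1+3a_2)\neq0$, while $B^\top M(a)^{-1}B=0$. So the paper's rank step fails on the same example.
\end{itemize}
A correct argument must use information beyond (i)--(iii). One option is the joint algebraic independence of all the $c_\alpha$ in Proposition~\ref{prop:kernel-existence}: treat $\det S_{[q+1,q_0-1]}$ as a polynomial in all of them and exhibit one point at which it does not vanish.
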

\begin{proof}
    Take $a\in\R^{n_c}$. Since $A$ is invertible, the matrix
    \begin{equation*}
        W_1 := \begin{pmatrix}
        A & D^\top
        \\
        D & M(a)
    \end{pmatrix}
    \end{equation*}
    is invertible if and only if the Schur complement of $A$, namely
    \begin{equation*}
        W_2 := M(a)-DA^{-1}D^\top,
    \end{equation*}
    is invertible.  Now define the symmetric permutation matrix $\tilde{Q} = (\tilde{Q}_{ij}) \in \mathbb{R}^{m_3 \times m_3}$ by $\tilde{Q}_{ij} := \mathbbm{1}_{\{i+j=m_3+1\}}$.  The determinant function
    \begin{equation*}
        d(a) := \det\bigl(M(a)-DA^{-1}D^\top\bigr)
    \end{equation*}
    is a polynomial in $a$.  Moreover, for any $\lambda>0$,
    there exists $a_\lambda^*\in\R^{n_c}$ such that $M(a_\lambda^*)=\lambda \tilde{Q}$, and thus 
    \begin{equation*}
        d(a_{\lambda}^*) = \det\bigl(\lambda \tilde{Q} -DA^{-1}D^\top\bigr) = \det\bigl(\lambda I_{m_3} -DA^{-1}D^\top\tilde{Q}\bigr).
    \end{equation*}
    It follows that $d\not\equiv0$, since otherwise the matrix $DA^{-1}D^\top\tilde{Q}$ has infinitely many eigenvalues. 
    Therefore for any $a\in\R^{n_c}$ whose components are algebraically independent over $\mathbb{Q}$, we have $d(a)\neq 0$, so~$W_2$ and consequently~$W_1$ are both invertible.

    The Schur complement of~$W_1$ with respect to $W$ is then 
    \begin{equation*}
        W_3 :=
        -
        \begin{pmatrix}
            {\boldsymbol{0}}_{m_1\times m_2} & B^\top
        \end{pmatrix}
        \begin{pmatrix}
            A & D^\top 
            \\
            D & M(a)
        \end{pmatrix}^{-1}
        \begin{pmatrix}
            {\boldsymbol{0}}_{m_2\times m_1} \\ B
        \end{pmatrix} \in \mathbb{R}^{m_1 \times m_1}
        ,
    \end{equation*}
    whose rank is equal to $\mathrm{rank}(B)=m_1$, so~$W_3$ is full rank, so invertible. 
    We conclude that the matrix~$W$ is invertible.
\end{proof}

\section{Proof of Theorem~\ref{thm:do-no-worse}}\label{appsec:proof-noworse}


For $\omega:\R\to\R$ and $\beta>0$, define the $\beta$-H\"older seminorm $$\|\omega\|_{C_\beta}:=\sup_{x\neq y}\frac{|\omega^{(\beta_0)}(x)-\omega^{(\beta_0)}(y)|}{\|x-y\|^{\beta-\beta_0}},$$
where $\beta_0:=\ceil{\beta}-1$. 


\begin{proof}[Proof of Theorem~\ref{thm:do-no-worse}]

As shorthand, we write $\beta := \beta$ and $L := L$.  
For $x \in \mathcal{X}$, define $\sigma_P^2(x):=\Var_P(Y\given X=x)$. 
By Theorem~\ref{thm:decomp}, for $\bar{f} \in \cH(\beta,L)$, 
\begin{equation*}
    \hat{f}_n^{\text{Outrig}}(x_0)-\bar{f}(x_0) = X_n'
    + R_n',
    \qquad
    X_n' := B(\bar{f},x_0,K,h)h^{\beta^*}+\frac{1}{\sqrt{nh^d}}\bigg\{\frac{R_2(K)V_{P,x_0}^{(\lambda)}(\rho)}{p_X(x_0)}\bigg\}^{1/2}Z_n'
\end{equation*}
while by more standard arguments,
\begin{equation*}
    \hat{f}_n^{\mathrm{LP}}(x_0)-\bar{f}(x_0) = X_n''
    + R_n'',
    \qquad
    X_n'' := 
    B(\bar{f},x_0,K,h)h^{\beta^*}+\frac{1}{\sqrt{nh^d}}\bigg\{\frac{R_2(K)\sigma_P^2(x_0)}{p_X(x_0)}\bigg\}^{1/2}Z_n'', 
\end{equation*}
where $Z_n',Z_n'',R_n',R_n''$ are random variables satisfying
\begin{equation*}
    \sup_{P\in\cP}\sup_{x_0\in\cX}\sup_{(h,\lambda)\in\cH_n\times\Lambda_n}\sup_{t\in\R}\big|\Pr_P( Z_n' \leq t)-\Phi(t)\big|\to0
    ,\;
    \sup_{P\in\cP}\sup_{x_0\in\cX}\sup_{(h,\lambda)\in\cH_n\times\Lambda_n}\sup_{t\in\R}\big|\Pr_P( Z_n'' \leq t)-\Phi(t)\big| \to 0,
\end{equation*}
as $n\to\infty$, where the supremum is taken over $P=(P_X,P_{
\varepsilon|X},\bar{f})$, and
\begin{equation*}
    R_n',\,R_n'' = \op\bigg(h^{\beta^*}+\frac{1}{\sqrt{nh^d}}\bigg).
\end{equation*}
Take an arbitrary $\zeta>0$.  Then using the fact that $(a+b)^2 \leq (1+\zeta)(a^2 + b^2/\zeta)$ for $a,b \in \mathbb{R}$, we have on the event $\bigl\{|R_n'|\leq \zeta s_{n,h}^{1/2}\bigr\}$ that

\begin{align*}
    \frac{(X_n'+R_n')^2}{s_{n,h}} \wedge M
    &\leq
    \biggl\{\frac{(1+\zeta)}{s_{n,h}}\Big(X_n'^2+\frac{1}{\zeta}R_n'^2\Big)\biggr\}\wedge M
    \leq
    \biggl\{\frac{(1+\zeta)X_n'^2}{s_{n,h}} \wedge M\biggr\} + \frac{1+\zeta}{\zeta}\frac{R_n'^2}{s_{n,h}}
    \\
    &\leq
    (1+\zeta)\Big(\frac{X_n'^2}{s_{n,h}}\wedge M\Big) + \zeta(1+\zeta)
    \leq
    \biggl(\frac{X_n'^2}{s_{n,h}} \wedge M\biggr)
    + \zeta M + \zeta(1+\zeta).
\end{align*}
Thus
\begin{align*}
    \E_P\bigl\{ \bigl(s_{n,h}^{-1}(X_n'+R_n')^2\bigr)\wedge M \bigr\}
    &=
    \E_P\Bigl[\bigl\{ \bigl(s_{n,h}^{-1}(X_n'+R_n')^2\bigr)\wedge M\bigr\} \ind_{\{|R_n'|\leq \zeta s_{n,h}^{1/2}\}} \Bigr]
    \\
    &\hspace{2cm}
    +
    \E_P\Bigl[\bigl\{ \bigl(s_{n,h}^{-1}(X_n'+R_n')^2\bigr)\wedge M\bigr\} \ind_{\{|R_n'| > \zeta s_{n,h}^{1/2}\}} \Bigr]
    \\
    &\leq
    \E_P\biggl(\frac{X_n'^2}{s_{n,h}} \wedge M\biggr) 
    + \zeta M
    + \zeta(1+\zeta)
    + M\,\Pr_P\bigl(|R_n'|> \zeta s_{n,h}^{1/2}\bigr).
\end{align*}
Analogously,
\begin{equation*}
    \E_P\biggl(\frac{X_n'^2}{s_{n,h}} \wedge M\biggr) 
    \leq 
    \E_P\bigl\{ \bigl(s_{n,h}^{-1}(X_n'+R_n')^2\bigr)\wedge M \bigr\}
    + \zeta M
    + \zeta(1+\zeta)
    + M\,\Pr_P\bigl(|R_n'|> \zeta s_{n,h}^{1/2}\bigr).
\end{equation*}
Therefore
\begin{align*}
\limsup_{n\to\infty}\sup_{P\in\cP} \sup_{x_0\in\cX}&\sup_{(h,\lambda)\in\cH_n\times\Lambda_n}\Bigl|
    \E_P\bigl\{ \bigl(s_{n,h}^{-1}(X_n'+R_n')^2\bigr) \wedge M \bigr\}
    -
    \E_P\bigl\{ \bigl(s_{n,h}^{-1}X_n'^2\bigr)\wedge M \bigr\}
    \Bigr|
    \\
    &\leq
    \zeta M + \zeta(1+\zeta) + M\limsup_{n\to\infty}\sup_{P\in\cP}\sup_{x_0\in\cX}\sup_{(h,\lambda)\in\cH_n\times\Lambda_n}\Pr_P(|s_{n,h}^{-1/2}R_n'|>\zeta)
    \\
    &=
    \zeta M + \zeta(1+\zeta).
\end{align*}
Since $\zeta > 0$ was arbitrary,
\begin{equation}\label{eq:Xn'Rn'}
    \lim_{n\to\infty}\sup_{P\in\cP}\sup_{x_0\in\cX}\sup_{(h,\lambda)\in\cH_n\times\Lambda_n}\Bigl|
    \E_P\bigl\{ \bigl(s_{n,h}^{-1}(X_n'+R_n')^2\bigr)\wedge M \bigr\}
    -
    \E_P\bigl\{\bigl(s_{n,h}^{-1}X_n'^2\bigr)\wedge M \bigr\}
    \Bigr| = 0.
\end{equation}
By almost identical arguments,
\begin{equation}\label{eq:Xn''Rn''}
    \lim_{n\to\infty}\sup_{P\in\cP}\sup_{x_0\in\cX}\sup_{h\in\cH_n}\Bigl|
    \E_P\bigl\{ \bigl(s_{n,h}^{-1}(X_n''+R_n'')^2\bigr)\wedge M \bigr\}
    -
    \E_P\bigl\{ \bigl(s_{n,h}^{-1}X_n''^2\bigr)\wedge M \bigr\}
    \Bigr| = 0.
\end{equation}

For $M>0$, define the function $G_M:\R\times[0,\infty)\to[0,\infty)$ by
\begin{equation}\label{eq:G_M}
G_M(\mu,V):= \E_{Z\sim N(\mu,V)}(Z^2\wedge M),
\end{equation}
where we interpret $Z \sim N(\mu,0)$ as a random variable satisfying $\mathbb{P}(Z=\mu) = 1$.  Further, for $\delta > 0$ as in Lemma~\ref{lem:supB-delta}, define
\begin{equation}\label{eq:CG}
    C_G :=
    \inf_{t\in[0,1]}G_M\Bigl(\delta\sqrt{1-t},\,
    \frac{R_2(K)c_3t}{C_X} \Bigr).
\end{equation}
Note in particular that $C_G$ does not depend on $(P,x_0,h,\lambda)$. 
We also claim that $C_G>0$. Indeed, $C_G$ is an infimum of a continuous function on a compact set, so the infimum is attained at some $t^*\in[0,1]$, and $C_G=G_M\big(\delta\sqrt{1-t^*},\,\tfrac{R_2(K)c_3t^*}{C_X}\big) > 0$ by Lemma~\ref{lem:GM}\ref{lem:GM-3}. Now let $\eta \in \bigl(0,\frac{1}{4}C_G\bigr]$.  By~\eqref{eq:Xn'Rn'}~and~\eqref{eq:Xn''Rn''} there exists $N_1\in\mathbb{N}$, not depending on $(P,x_0,h,\lambda)$, such that for $n\geq N_1$,
\begin{equation*}
    \frac{\mathcal{R}_{n,h,P,x_0,\epsilon,M}(\hat{f}^{\mathrm{Outrig}})}{\mathcal{R}_{n,h,P,x_0,\epsilon,M}(\hat{f}^{\mathrm{LP}})}
    \leq
    \frac{
    \sup_{\tilde{f}\in\cH_{f,\epsilon}^{\mathrm{loc}}(\beta,L)}
        \E_{(P_X,P_{\varepsilon|X},\tilde{f})}\bigl\{\big(s_{n,h}^{-1}X_n'^2\big)\wedge M\bigr\}+\eta}{
        \sup_{\tilde{f}\in\cH_{f,\epsilon}^{\mathrm{loc}}(\beta,L)}
        \E_{(P_X,P_{\varepsilon|X},\tilde{f})}\bigl\{\big(s_{n,h}^{-1}X_n''^2\big)\wedge M\bigr\}-\eta}.
\end{equation*}
Now take $Y \sim N(0,1)$ and let
\begin{align*}
    Y_n' \equiv Y_n'(\bar{f}) &:= B(\bar{f},x_0,K,h)h^{\beta^*}+\frac{1}{\sqrt{nh^d}}\biggl(\frac{R_2(K)V_{P,x_0}^{(\lambda)}(\rho)}{p_X(x_0)}\biggr)^{1/2}Y,
    \\
    Y_n'' \equiv Y_n''(\bar{f}) &:= B(\bar{f},x_0,K,h)h^{\beta^*}+\frac{1}{\sqrt{nh^d}}\biggl(\frac{R_2(K)\sigma_P^2(x_0)}{p_X(x_0)}\biggr)^{1/2}Y.
\end{align*}
By Lemma~\ref{lem:unif-portmanteau}, there exists $N_2\in\mathbb{N}$, not depending on $(P,x_0,h,\lambda)$, such that for all $n\geq N_1 \vee N_2$,
\begin{equation*}
    \frac{\mathcal{R}_{n,h,P,x_0,\epsilon,M}(\hat{f}^{\mathrm{Outrig}})}{\mathcal{R}_{n,h,P,x_0,\epsilon,M}(\hat{f}^{\mathrm{LP}})}
    \leq
    \frac{
    \sup_{\tilde{f}\in\cH_{f,\epsilon}^{\mathrm{loc}}(\beta,L)}
        \E\bigl[\bigl\{s_{n,h}^{-1}\bigl(Y_n'(\tilde{f})\bigr)^2\bigr\}\wedge M\bigr]+2\eta}{
        \sup_{\tilde{f}\in\cH_{f,\epsilon}^{\mathrm{loc}}(\beta,L)}
        \E\bigl[\bigl\{s_{n,h}^{-1}\bigl(Y_n'(\tilde{f})\bigr)^2\bigr\}\wedge M\bigr]-2\eta}.
\end{equation*}
From hereon we will take $n\geq N_1\vee N_2$. 
Noting that $G_M(\mu,V)=G_M(-\mu,V)$,
\begin{align*}
    \E\bigl[\bigl\{s_{n,h}^{-1}\bigl(Y_n'(\tilde{f})\bigr)^2\bigr\}\wedge M\bigr] &=
    G_M\bigg(|B(\tilde{f},x_0,K,h)|s_{n,h}^{-1/2}h^{\beta^*},\,
    \frac{R_2(K)V_{P,x_0}^{(\lambda)}(\rho)}{p_X(x_0)}\frac{1}{s_{n,h}nh^d}
    \bigg),
    \\
    \E\bigl[\bigl\{s_{n,h}^{-1}\bigl(Y_n'(\tilde{f})\bigr)^2\bigr\}\wedge M\bigr]
    &=
    G_M\bigg(|B(\tilde{f},x_0,K,h)|s_{n,h}^{-1/2}h^{\beta^*},\,
    \frac{R_2(K)\sigma_P^2(x_0)}{p_X(x_0)}\frac{1}{s_{n,h}nh^d}
    \bigg).
\end{align*}
By Lemma~\ref{lem:GM}\ref{lem:GM-1}, $\mu\mapsto G_M(\mu,V)$ is strictly increasing in $|\mu|$, for each $M,V > 0$. Therefore, for any $V>0$,
\begin{equation*}
    \sup_{\tilde{f}\in\cH_{f,\epsilon}^{\mathrm{loc}}(\beta,L)}G_M\bigl(|B(\tilde{f},x_0,K,h)|s_{n,h}^{-1/2}h^{\beta^*},\,V
    \bigr)
    =
    G_M\bigl(\bar{B}_{f,\epsilon,x_0,K,h}\,s_{n,h}^{-1/2}h^{\beta^*},\,V
    \bigr)
    ,
\end{equation*}
where we use the shorthand
$
    \bar{B}_{f,\epsilon,x_0,K,h} := 
    \sup_{\tilde{f}\in\cH_{f,\epsilon}^{\mathrm{loc}}(\beta,L)}|B(\tilde{f},x_0,K,h)|.
$
Therefore
\begin{equation*}
    \frac{
    \sup_{\tilde{f}\in\cH_{f,\epsilon}^{\mathrm{loc}}(\beta,L)}
        \E\big[\big(s_{n,h}^{-1}\{Y_n'(\tilde{f})\}^2\big)\wedge M\big]+2\eta}{
        \sup_{\tilde{f}\in\cH_{f,\epsilon}^{\mathrm{loc}}(\beta,L)}
        \E\big[\big(s_{n,h}^{-1}\{Y_n''(\tilde{f})\}^2\big)\wedge M\big]-2\eta}
        =
        \frac{G_M\Big(\bar{B}_{f,\epsilon,x_0,K,h}\,s_{n,h}^{-1/2}h^{\beta^*},\,
    \frac{R_2(K)V_{P,x_0}^{(\lambda)}(\rho)}{p_X(x_0)}\frac{1}{s_{n,h}nh^d}\Big)+2\eta}{G_M\Big(\bar{B}_{f,\epsilon,x_0,K,h}\,s_{n,h}^{-1/2}h^{\beta^*},\,
    \frac{R_2(K)\sigma_P^2(x_0)}{p_X(x_0)}\frac{1}{s_{n,h}nh^d}\Big)-2\eta},
\end{equation*}
and so
\begin{align}
    \sup_{P\in\cP}
        &\sup_{x_0\in\cX}
        \sup_{(h,\lambda)\in\cH_n\times\Lambda_n}
        \frac{\mathcal{R}_{n,h,P,x_0,\epsilon,M}(\hat{f}^{\mathrm{Outrig}})}{\mathcal{R}_{n,h,P,x_0,\epsilon,M}(\hat{f}^{\mathrm{LP}})}
        \notag
        \\
        &\leq
        \sup_{P\in\cP}\sup_{x_0\in\cX}\sup_{(h,\lambda)\in\cH_n\times\Lambda_n}
        \frac{G_M\Big(\bar{B}_{f,\epsilon,x_0,K,h}\,s_{n,h}^{-1/2}h^{\beta^*},\,
    \frac{R_2(K)V_{P,x_0}^{(\lambda)}(\rho)}{p_X(x_0)}\frac{1}{s_{n,h}nh^d}\Big)+2\eta}{G_M\Big(\bar{B}_{f,\epsilon,x_0,K,h}\,s_{n,h}^{-1/2}h^{\beta^*},\,
    \frac{R_2(K)\sigma_P^2(x_0)}{p_X(x_0)}\frac{1}{s_{n,h}nh^d}\Big)-2\eta}
    \notag
    \\
    &=
    \sup_{P\in\cP}\sup_{x_0\in\cX}\sup_{(h,\lambda)\in\cH_n\times\Lambda_n}
    \frac{G_M\Big(\bar{B}_{f,\epsilon,x_0,K,h}\,\alpha_1\big(nh^{2\beta^*\hspace{-0.1em}+d}\big),\,
    \frac{R_2(K)V_{P,x_0}^{(\lambda)}(\rho)}{p_X(x_0)}\alpha_2\big(nh^{2\beta^*\hspace{-0.1em}+d}\big)\Big)+2\eta}{G_M\Big(\bar{B}_{f,\epsilon,x_0,K,h}\,\alpha_1\big(nh^{2\beta^*\hspace{-0.1em}+d}\big),\,
    \frac{R_2(K)\sigma_P^2(x_0)}{p_X(x_0)}\alpha_2\big(nh^{2\beta^*\hspace{-0.1em}+d}\big)\Big)-2\eta},
    \label{eq:lemma-step1}
\end{align}
where $\alpha_1(u):=\sqrt{\frac{u}{1+u}}$ 
and $\alpha_2(u):=\frac{1}{1+u}$. By Lemma~\ref{lem:B(f)}\ref{lem:bias3} there exists $C_B(K,d,\beta_0^*,L) > 0$, not depending on $(P,x_0,h,\lambda)$, such that 
\[
|B(\tilde{f},x_0,K,h)|\alpha_1\big(nh^{2\beta^*\hspace{-0.1em}+d}\big) \leq C_B(K,d,\beta_0^*,L).
\]
By Lemma~\ref{lem:gauss-eq}, $V_{P,x_0}^{(\lambda)}(\rho_{\mathrm{score}})\leq\sigma_P^2(x_0)$, and moreover, by Lemma~\ref{lem:GM}, $G_M$ is coordinate-wise strictly increasing on the restricted domain $[0,C_B(K,d,\beta_0^*,L)]\times[0,\infty)$ for $M\geq \frac{25}{4}C_B(K,d,\beta_0^*,L)^2$.  Hence, for such $M$,
    \begin{align}
    \sup_{P\in\cP}&\sup_{x_0\in\cX}\sup_{(h,\lambda)\in\cH_n\times\Lambda_n}
\frac{G_M\Big(\bar{B}_{f,\epsilon,x_0,K,h}\,\alpha_1\big(nh^{2\beta^*\hspace{-0.1em}+d}\big),\,
    \frac{R_2(K)V_{P,x_0}^{(\lambda)}(\rho_{\mathrm{score}})}{p_X(x_0)}\alpha_2\big(nh^{2\beta^*\hspace{-0.1em}+d}\big)\Big)+2\eta}{G_M\Big(\bar{B}_{f,\epsilon,x_0,K,h}\,\alpha_1\big(nh^{2\beta^*\hspace{-0.1em}+d}\big),\,
    \frac{R_2(K)\sigma_P^2(x_0)}{p_X(x_0)}\alpha_2\big(nh^{2\beta^*\hspace{-0.1em}+d}\big)\Big)-2\eta}
    \notag
    \\
    &\leq
    \sup_{P\in\cP}\sup_{x_0\in\cX}\sup_{h\in\cH_n}
    \frac{G_M\Big(\bar{B}_{f,\epsilon,x_0,K,h}\,\alpha_1\big(nh^{2\beta^*\hspace{-0.1em}+d}\big),\,
    \frac{R_2(K)\sigma_P^2(x_0)}{p_X(x_0)}\alpha_2\big(nh^{2\beta^*\hspace{-0.1em}+d}\big)\Big)+2\eta}{G_M\Big(\bar{B}_{f,\epsilon,x_0,K,h}\,\alpha_1\big(nh^{2\beta^*\hspace{-0.1em}+d}\big),\,
    \frac{R_2(K)\sigma_P^2(x_0)}{p_X(x_0)}\alpha_2\big(nh^{2\beta^*\hspace{-0.1em}+d}\big)\Big)-2\eta}
    \notag
    \\
    &=
    1+\sup_{P\in\cP}\sup_{x_0\in\cX}\sup_{h\in\cH_n}
    \frac{4\eta}{G_M\Big(\bar{B}_{f,\epsilon,x_0,K,h}\,\alpha_1\big(nh^{2\beta^*\hspace{-0.1em}+d}\big),\,
    \frac{R_2(K)\sigma_P^2(x_0)}{p_X(x_0)}\alpha_2\big(nh^{2\beta^*\hspace{-0.1em}+d}\big)\Big)-2\eta}.
    \label{eq:donoworse-step1}
    \end{align}
    Now, by Lemma~\ref{lem:supB-delta}, there exists $N_3\in\mathbb{N}$ and $\delta>0$, neither depending on $(P,x_0,h,\lambda)$, such that for all $n\geq N_3$,
    \begin{equation*}
        \inf_{P\in\cP}\inf_{x_0\in\cX}\inf_{h\in\cH_n}\bar{B}_{f,\epsilon,x_0,K,h}\,
        =\inf_{P\in\cP}\inf_{x_0\in\cX}\inf_{h\in\cH_n}\sup_{\tilde{f}\in\cH_{f,\epsilon}^{\mathrm{loc}}(\beta,L)}|B(\tilde{f},x_0,K,h)|
        \geq\delta.
    \end{equation*}
    From hereon we take $n\geq N_1\vee N_2\vee N_3$. 
    Then for $M\geq\frac{25}{4}C_B(K,d,\beta_0^*,L)^2$,
    \begin{align}
        \inf_{P\in\cP} \inf_{x_0\in\cX}\inf_{h\in\cH_n}&G_M\bigg(\bar{B}_{f,\epsilon,x_0,K,h}\,\alpha_1\big(nh^{2\beta^*\hspace{-0.1em}+d}\big),\,
        \frac{R_2(K)\sigma_P^2(x_0)}{p_X(x_0)}\alpha_2\big(nh^{2\beta^*\hspace{-0.1em}+d}\big)\bigg)
        \notag
        \\
        &\geq
        \inf_{h\in\cH_n}G_M\bigg(\delta\alpha_1\big(nh^{2\beta^*\hspace{-0.1em}+d}\big),\,
        \frac{R_2(K)c_3}{C_X}\alpha_2\big(nh^{2\beta^*\hspace{-0.1em}+d}\big)\bigg)
        \notag
        \\
        &\geq
        \inf_{t\in[0,1]}G_M\biggl(\delta\sqrt{1-t},\,
        \frac{R_2(K)c_3t}{C_X}\biggr)
        =C_G
        ,
        \label{eq:CG-bound}
    \end{align}
    with the final inequality following from the facts that $\alpha_1=
    \sqrt{1-\alpha_2}$, and $\alpha_2([0,\infty])=[0,1]$. Thus for $\eta\in(0,\frac{1}{4}C_G]$, 
    \begin{align}
        &\quad
        \sup_{P\in\cP}\sup_{x_0\in\cX}\sup_{h\in\cH_n}
    \frac{4\eta}{G_M\Big(\bar{B}_{f,\epsilon,x_0,K,h}\,\alpha_1\big(nh^{2\beta^*\hspace{-0.1em}+d}\big),\,
    \frac{R_2(K)\sigma_P^2(x_0)}{p_X(x_0)}\alpha_2\big(nh^{2\beta^*\hspace{-0.1em}+d}\big)\Big)-2\eta}
    \leq
    \frac{8\eta}{C_G}.
        \label{eq:donoworse-step2}
    \end{align}
    Combining~\eqref{eq:lemma-step1},~\eqref{eq:donoworse-step1}~and~\eqref{eq:donoworse-step2}, 
    \begin{equation*}
        \sup_{P\in\cP}
        \sup_{x_0\in\cX}
        \sup_{(h,\lambda)\in\cH_n\times\Lambda_n}
        \frac{\mathcal{R}_{n,h,P,x_0,\epsilon,M}(\hat{f}^{\mathrm{Outrig}})}{\mathcal{R}_{n,h,P,x_0,\epsilon,M}(\hat{f}^{\mathrm{LP}})} \leq 1+\frac{8\eta}{C_G}.
    \end{equation*}
    Since $\eta\in\big(0,\frac{1}{4}C_G\big]$, $M\geq\frac{25}{4}C_B(K,d,\beta_0^*,L)^2$ and $\epsilon > 0$ were arbitrary, the result follows.
\end{proof}

\subsection{Auxiliary results for Theorem~\ref{thm:do-no-worse}}

\begin{lemma}\label{lem:supB-delta}
    Let $\beta,L,\epsilon > 0$ and suppose that the kernel $K$ satisfies Assumption~\ref{ass:kernel}.  Then there exists $\delta \equiv \delta(\beta,L,\epsilon,K) > 0$ such that 
    \begin{equation}\label{eq:suffices}
    \liminf_{n\to\infty}
    \inf_{f \in \mathcal{H}(\beta,L)}
    \inf_{x_0\in\cX} \inf_{h\in\cH_n}\sup_{\tilde{f}\in\cH_{f,\epsilon}^{\mathrm{loc}}(\beta,L)}|B(\tilde{f},x_0,K,h)| \geq \delta,
\end{equation}
where $B$ is as in~\eqref{eq:B-def}.
\end{lemma}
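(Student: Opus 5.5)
The plan is to exploit the linearity of $\tilde f\mapsto B(\tilde f,x_0,K,h)$ together with the convexity and symmetry of $\cH(\beta,L)$. By \eqref{eq:B-def} (equivalently, because $K$ has order $p+1$ and so annihilates the Taylor polynomial part), we have $B(\tilde f,x_0,K,h)=h^{-\beta^*}\int_{\mathcal{B}_0(1)}K(\nu)\{\tilde f(x_0+h\nu)-T_{x_0}\tilde f(x_0+h\nu)\}\,d\nu$, where $T_{x_0}\tilde f$ is the Taylor polynomial of degree $\beta_0^*$. This expression is linear in $\tilde f$. Given $f\in\cH(\beta,L)$ and some $g\in\cH(\beta,L)$ to be chosen, set $f_\pm:=(1-t)f\pm tg$ with $t:=\epsilon/(2L)\wedge 1$. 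By convexity and symmetry, $f_\pm\in\cH(\beta,L)$. Moreover $\|f_\pm-f\|_\infty\le t(\|f\|_\infty+\|g\|_\infty)\le 2tL\le\epsilon$, so $f_\pm\in\cH^{\mathrm{loc}}_{f,\epsilon}(\beta,L)$. Since $B(f_+)-B(f_-)=2tB(g)$, the triangle inequality gives
\[
\sup_{\tilde f\in\cH^{\mathrm{loc}}_{f,\epsilon}(\beta,L)}|B(\tilde f,x_0,K,h)|\ \ge\ t\,|B(g,x_0,K,h)|.
\]
It therefore suffices to exhibit, for each $(x_0,h)$ with $h\le b_n$ and $n$ large (so that $\mathcal{B}_{x_0}(h)\subseteq\mathcal{X}^\circ$), some $g=g_{x_0,h}\in\cH(\beta,L)$ with $|B(g,x_0,K,h)|\ge\delta_0>0$, where $\delta_0$ depends only on $(\beta,L,K,d)$. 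We then take $\delta:=t\delta_0$, which does not depend on $f$.

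\emph{Case $\beta\le p+1$ (so $\beta^*=\beta$).} Take a rescaled bump $g(x):=c\,h^{\beta}\psi\bigl((x-x_0)/h\bigr)$ with $\psi\in C_c^\infty(\R^d)$ fixed. For $h\le1$, every derivative of order $j\le\beta_0$ scales as $h^{\beta-j}\le1$, and the $\beta$-Hölder seminorm of $h^\beta\psi(\cdot/h)$ equals that of $\psi$. Hence a small constant $c=c(\psi,L,\beta,d)$ puts $g$ in $\cH(\beta,L)$, uniformly in $(x_0,h)$. Substituting $x=x_0+h\nu$ gives
\[
B(g,x_0,K,h)=c\int K(\nu)\bigl\{\psi(\nu)-T_0\psi(\nu)\bigr\}\,d\nu,
\]
which is free of $(x_0,h)$. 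It remains to choose $\psi$ so that this integral is nonzero. If it vanished for every test function $\psi$, then $K$ (a bounded function) would agree as a distribution with a finite combination of $\delta_0$ and its derivatives of order at most $\beta_0^*$. That is impossible for a bounded function with $\int K=1$, so such a $\psi$ exists.

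\emph{Case $\beta>p+1$ (so $\beta^*=p+1$).} Rescaling now fails, because the Hölder seminorm of $h^{p+1}\psi(\cdot/h)$ is of order $h^{p+1-\beta}$, which blows up. Instead I will take an unscaled $g$ equal on $\mathcal{X}^\circ$ to $c\,m(x-x_0)$, where $m$ is a homogeneous polynomial of degree $p+1$ (multiplied by a fixed smooth cutoff if needed for boundedness). For small $c$ this lies in $\cH(\beta,L)$ uniformly in $x_0$. Then $B(g,x_0,K,h)=c\int K(\nu)m(\nu)\,d\nu$. I need some degree-$(p+1)$ moment of $K$ to be nonzero. This is where Assumption~\ref{ass:kernel}'s condition $\mu_{\beta^*}(K)\neq0$ enters: when $p+1$ is even, $m(\nu)=\sum_j\nu_j^{p+1}$ gives exactly $\mu_{p+1}(K)\neq0$.

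The main obstacle is this last step when $p+1$ is odd. There $\mu_{p+1}(K)$ involves $|\nu_j|^{p+1}$, which is not a polynomial. I would first try $m(\nu)=\sum_j\nu_j^{p}|\nu_j|$ combined with a cutoff. If this fails to be $\beta$-smooth, I would fall back on arguing from the invertibility of $s_1(K)$ together with the order-$(p+1)$ moment structure that some degree-$(p+1)$ monomial moment is nonzero, or else adding that requirement explicitly. Given a nonzero moment, the bound \eqref{eq:suffices} follows with $\delta=t\,c\,|\int Km|$.
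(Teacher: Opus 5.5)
\textbf{Verdict.} Most of your argument is sound, but the case $\beta>p+1$ with $p+1$ odd is left open. For $p\ge1$ your own fallback closes it; for $p=0$ it cannot be closed, because the statement is false there.

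\textbf{What works.} Your reduction is correct. Linearity of $\tilde f\mapsto B(\tilde f,x_0,K,h)$ and the symmetric pair $f_\pm=(1-t)f\pm tg$ give $\sup_{\tilde f}|B|\ge t|B(g)|$, with $\delta=t\delta_0$ free of $f$. This is a tidier form of the paper's argument, which splits on whether $|B(f,x_0,K,h)|<\delta$ and then uses $\tilde f_n=(1-\xi)f+\xi g$.

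For $\beta\le p+1$, your rescaled bump is the same mechanism as the paper's, but slightly stronger. The paper chooses the profile $|u|^{\beta^*}\psi(u)$ so that $B(g)$ is a multiple of $\mu_{\beta^*}(K)$. You need no condition on $K$ beyond $K\not\equiv0$. Most simply, take $\psi$ supported away from the origin with $\int K\psi\neq0$; then $T_0\psi=0$ and $B(g)=c\int K\psi$.

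You are also right that rescaling cannot work for $\beta>p+1$. The paper's proof has exactly this defect: its membership check only verifies the $\beta^*$-H\"older condition, so its $g(\cdot\,;h)$ is not in $\mathcal{H}(\beta,L)$ in that regime.

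\textbf{The gap.} Your first attempt for odd $p+1$ fails. The function $|\nu_j|^{p+1}$ has a jump in its $(p+1)$th derivative at $0$, so it lies in no $\mathcal{H}(\beta,L)$ with $\beta>p+1$.

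More fundamentally, $\mu_{p+1}(K)$ is the wrong quantity in this regime. For $\tilde f\in\mathcal{H}(\beta,L)$ with $\beta>p+1$, Taylor's theorem gives, uniformly in $\tilde f$ and $x_0$,
\[
B(\tilde f,x_0,K,h)=\sum_{\|\alpha\|_1=p+1}\frac{\partial^\alpha\tilde f(x_0)}{\alpha!}\int K(\nu)\nu^\alpha\,d\nu+O\bigl(h^{(\beta\wedge(p+2))-(p+1)}\bigr).
\]
So everything hinges on some degree-$(p+1)$ \emph{polynomial} moment of $K$ being non-zero.

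\textbf{Closing it for $p\ge1$.} Your fallback does deliver this. Consider the row of $s_1(K)$ indexed by the monomial $\nu_j$: it consists of moments of degrees $1,\ldots,p+1$. Those of degree at most $p$ vanish by the order condition. If every degree-$(p+1)$ moment also vanished, this row would be zero and $s_1(K)$ would be singular. Combined with your polynomial-times-fixed-cutoff $g$, this completes the proof for every $p\ge1$, whatever the parity of $p+1$.

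\textbf{Why $p=0$ cannot be closed.} For $p=0$ and $\beta>1$, $s_1(K)=1$ carries no information and the order condition is vacuous. Take a symmetric kernel such as the Epanechnikov kernel: $\int K(\nu)\nu\,d\nu=0$ while $\mu_1(K)=3/8>0$. The display then gives $\sup_{\tilde f\in\mathcal{H}(\beta,L)}|B(\tilde f,x_0,K,h)|=O\bigl(h^{(\beta\wedge2)-1}\bigr)\to0$, so no $\delta>0$ exists. Your option of adding the hypothesis explicitly (some $\int K\nu^\alpha\neq0$ with $\|\alpha\|_1=p+1$, or restricting to $\beta\le p+1$) is therefore necessary, not merely convenient.
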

\begin{proof}
Define $\psi:\R\to[0,1]$ and $\omega:\R\to\R$ by
\begin{equation}\label{eq:omega-psi}
    \psi(u) := \begin{cases}
        1 &\quad\text{if } |u|<1
        \\
        \exp\Big(1-\frac{1}{|u|(2-|u|)}\Big) &\quad\text{if } 1\leq|u|\leq2
        \\
        0 &\quad\text{if }|u|>2
    \end{cases},
    \qquad
    \omega(u) := |u|^{\beta^*}\psi(u).
\end{equation}
Then $\|\omega\|_{C_{\beta^*}}<\infty$ by Lemma~\ref{lem:seminorm}. 
Further define
\begin{equation*}
    a_{\beta^*} := \begin{cases}
        1 &\quad\text{if $\beta_0^*=0$}
        \\
        \frac{(\beta^*)_{\beta_0^*-1}}{(\beta_0^*-1)!} &\quad \text{if $\beta_0^*\geq1$}
    \end{cases}.
\end{equation*}
We will prove~\eqref{eq:suffices} with $\delta := \frac{\xi}{2-\xi}\frac{a_{\beta^*}L}{d\|\omega\|_{C_{\beta^*}}}|\mu_{\beta^*}(K)|$, where $\xi := \frac{2\epsilon}{2L+\epsilon}\wedge1$. 
Take an arbitrary $x_0\in\cX$. If $|B(f,x_0,K,h)|\geq\delta$ then~\eqref{eq:suffices} follows immediately. We claim that if $|B(f,x_0,K,h)|<\delta$ then there exists a sequence $\tilde{f}_n\in\cH(\beta,L)$ satisfying 
\begin{equation*}
    \|\tilde{f}_n-f\|_\infty\leq\epsilon
    \quad\text{and}\quad
    \inf_{P\in\cP}\inf_{x_0\in\cX}\inf_{h\in\cH_n}|B(\tilde{f}_n,x_0,K,h)|\geq\delta
\end{equation*}
for sufficiently large $n\in\N$. 
Define $g:\R^d\to\R$ by
\begin{equation*}
    g(x;h) := \frac{Lh^{\beta^*}}{d\|\omega\|_{C_{\beta^*}}}\sum_{\ell=1}^d \omega\bigg(\frac{(x-x_0)_\ell}{h}\bigg).
\end{equation*}
Note the following properties of $g$:
\begin{enumerate}[label=(\roman*), leftmargin=*, align=left]
    \item $\|g(\cdot\,;h)\|_\infty \leq \frac{L\|\omega\|_\infty}{\|\omega\|_{C_{\beta^*}}}\,h^{\beta^*}$.
    \item For all $\alpha\in\N_0^d$ with $\|\alpha\|_1\leq\beta^*_0$ we have $\|\partial^\alpha g(\cdot\,;h)\|_\infty \leq \frac{L\|\partial^\alpha\omega\|_\infty}{d\|\omega\|_{C_{\beta^*}}}h^{\beta^*-\|\alpha\|_1}$.
    \item Take $\alpha\in\N_0^d$ with $\|\alpha\|_1=\beta_0^*$. 
    In the case $\beta_0^*=0$,
    \begin{equation*}
        \frac{|g(x;h)-g(y;h)|}{\|x-y\|^{\beta^*}}
        \leq 
        L.
    \end{equation*}
    Now consider the case $\beta_0^*\geq 1$. 
    If $\|\alpha\|_0 \geq 2$ then $\partial^\alpha g(x;h) = 0$. 
    If $\|\alpha\|_0=1$, then define $\ell^*\in[d]$ to be the index with $\alpha_{\ell^*} = \beta_0^*$. Then
    \begin{multline*}
        \frac{|\partial^\alpha g(x;h)-\partial^\alpha g(y;h)|}{\|x-y\|^{\beta^*-\beta_0^*}}
        =\frac{\Big|\frac{\partial^{\beta_0^*}}{\partial x_{\ell^*}^{\beta_0^*}}g(x;h)-\frac{\partial^{\beta_0^*}}{\partial x_{\ell^*}^{\beta_0^*}}g(y;h)\Big|}{\|x-y\|^{\beta^*-\beta_0^*}} 
        \\
        =
        \frac{Lh^{\beta^* - \beta_0^*}}{d\|\omega\|_{C_{\beta^*}}}\cdot\frac{\bigl|\omega^{(\beta_0)}\bigl(\frac{(x-x_0)_{\ell^*}}{h}\bigr) - \omega^{(\beta_0)}\bigl(\frac{\{y-x_0\}_{\ell^*}}{h}\bigr)\bigr|}{\|x-y\|^{\beta^*-\beta_0^*}}
        \leq \frac{L}{d}\cdot\bigg(\frac{|(x-x_0)_{\ell^*}|}{\|x-y\|}\bigg)^{\beta^*-\beta_0^*}
        \leq L.
    \end{multline*}
    \item Take $\alpha\in\N_0^d$ with $\|\alpha\|_1=\beta_0^*$. 
    Consider first the case $\beta_0^*\geq1$. 
    If $\|\alpha\|_0 \geq 2$, then
    \begin{equation*}
        \int_{\mathcal{B}_{0}(1)} K(\nu)\nu^\alpha \int_0^1(1-t)^{\beta_0^*-1}\bigl\{\partial^\alpha g(x_0+th\nu;h)-\partial^\alpha g(x_0;h)\bigr\} \, dt\,d\nu = 0.
    \end{equation*}
    Now consider the case $\|\alpha\|_0=1$, and let $\ell^*\in[d]$ be the index with $\alpha_{\ell^*} = \beta_0^*$. For $\nu\in\supp K$ we have $\|\nu\|\leq1$, and so $|t\nu_{\ell^*}|\leq1$ for all $t \in [0,1]$ and $\psi\big(\frac{\{(x_0+th\nu)-x_0\}_{\ell^*}}{h}\big) = \psi(t\nu_{\ell^*})=1$.  Hence
     \begin{equation}\label{eq:g-exp}
        g(x;h) = \frac{L}{d\|\omega\|_{C_{\beta^*}}}\sum_{\ell=1}^d|(x-x_0)_{\ell}|^{\beta^*}\psi\bigg(\frac{(x-x_0)_{\ell}}{h}\bigg)
        =
        \frac{L}{d\|\omega\|_{C_{\beta^*}}}\|x-x_0\|_{\beta^*}^{\beta^*}
    \end{equation}
    for $x\in\bigl\{x_0+th\nu\,:\,(\nu,t)\in(\supp K)\times[0,1]\bigr\}$.  It follows that 
    \begin{equation*}
        \partial^\alpha g(x;h) = \frac{L}{d\|\omega\|_{C_{\beta^*}}}(\beta^*)_{\beta_0^*}|(x-x_0)_{\ell^*}|^{\beta^*-\beta_0^*}\sgn^{\beta_0^*}\bigl((x-x_0)_{\ell^*})\bigr),
    \end{equation*}
    again for $x\in\{x_0+th\nu\,:\,(\nu,t)\in(\supp K)\times[0,1]\}$.  In particular, for $\nu\in\supp K$ and $t\in[0,1]$,
    \begin{equation*}
        \partial^\alpha g(x_0+th\nu;h) = \frac{Lt^{\beta^*-\beta_0^*}h^{\beta^*-\beta_0^*}}{d\|\omega\|_{C_{\beta^*}}}(\beta^*)_{\beta_0^*}|\nu_{\ell^*}|^{\beta^*-\beta_0^*}\sgn^{\beta_0^*}(\nu_{\ell^*}).
    \end{equation*}
We deduce that 
    \begin{align*}
        \int_{\mathcal{B}_{0}(1)} K(\nu)\nu^\alpha & \int_0^1(1-t)^{\beta_0^*-1}\bigl\{\partial^\alpha g(x_0+th\nu)-\partial^\alpha g(x_0)\bigr\} \,dt\,d\nu
        \\
        &= \frac{L(\beta^*)_{\beta_0^*}}{d\|\omega\|_{C_{\beta^*}}}h^{\beta^*-\beta_0^*}\int_{\mathcal{B}_{0}(1)} K(\nu)|\nu_{\ell^*}|^{\beta^*} \int_0^1 t^{\beta^*-\beta_0^*}dt\, d\nu\\
        &=
        \frac{L(\beta^*)_{\beta^*_0-1}}{d\|\omega\|_{C_{\beta^*}}}h^{\beta^*-\beta_0^*} \int_{\mathcal{B}_{0}(1)} K(\nu) |\nu_{\ell^*}|^{\beta^*}\, d\nu.
    \end{align*}
    Thus
    \begin{align*}
        B&\bigl(g(\cdot\,;h),x_0,K,h\bigr) 
        \\&= h^{\beta_0^*-\beta^*} \sum_{\substack{\alpha\in\N_0^d:\\\|\alpha\|_1=\beta_0^*,\|\alpha\|_0=1}}\frac{\beta_0^*}{\alpha!}\int_{\mathcal{B}_{0}(1)} K(\nu)\nu^\alpha
        \int_0^1(1-t)^{\beta_0^*-1}\bigl\{\partial^\alpha g(x_0+th\nu;h)-\partial^\alpha g(x_0;h)\bigr\}\,dt\,d\nu
        \\
        &=
        \frac{L(\beta^*)_{\beta^*_0-1}}{d\|\omega\|_{C_{\beta^*}}(\beta_0^*-1)!} \sum_{\ell=1}^d \int_{\mathcal{B}_{0}(1)} K(\nu)|\nu_\ell|^{\beta^*} \, d\nu
        =
        \frac{L(\beta^*)_{\beta^*_0-1}}{d\|\omega\|_{C_{\beta^*}}(\beta_0^*-1)!} \mu_{\beta^*}(K).
    \end{align*}

    On the other hand, if $\beta_0^*=0$, then
    \begin{equation*}
    B\bigl(g(\cdot\,;h),x_0,K,h\bigr)
    =
    h^{-\beta^*}\int_{\mathcal{B}_0(1)}K(\nu) \bigl\{g(x_0+h\nu;h)-g(x_0;h)\bigr\} \, d\nu 
    =
    \frac{L}{d\|\omega\|_{C_{\beta^*}}}\mu_{\beta^*}(K).
\end{equation*}
We conclude that
\begin{equation*}
    B\bigl(g(\cdot\,;h),x_0,K,h\bigr)
    =
    \frac{a_{\beta^*}L}{d\|\omega\|_{C_{\beta^*}}}\mu_{\beta^*}(K).
\end{equation*}
\end{enumerate}

\noindent
Now, for $n \in \mathbb{N}$, define 
\[
h_n := a_n \vee \biggl(\frac{\|\omega\|_{C_{\beta^*}}}{\|\omega\|_\infty}\biggr)^{1/\beta^*} \wedge b_n,
\]
and
$\tilde{f}_n:\R^d\to\R$ by
\begin{equation*}
    \tilde{f}_n(x) := (1-\xi)f(x)+\xi\,g(x;h_n).
\end{equation*}
Take $N\in\N$ large enough that  
\[
b_n\leq \biggl(\frac{\|\omega\|_{C_{\beta^*}}}{\|\omega\|_\infty}\biggr)^{1/\beta^*} \wedge\biggl(\frac{d\|\omega\|_{C_{\beta^*}}}{\sup_{\alpha\in\N_0^d:\|\alpha\|_1\leq\beta_0^*}\|\partial^\alpha\omega\|_\infty}\biggr)^{1/(\beta^*-\beta_0^*)}
\wedge \biggl(\frac{\epsilon\|\omega\|_{C_{\beta^*}}}{2L\|\omega\|_\infty}\biggr)^{1/\beta^*} \wedge 1
\]
for all $n\geq N$.  Then, for such $n$,
\begin{equation*}
    \|\tilde{f}_n\|_\infty
    \leq
    (1-\xi)\|f\|_\infty + \xi\|g(\cdot\,;h_n)\|_\infty
    \leq
    (1-\xi) L + \xi\frac{L\|\omega\|_\infty h_n^{\beta^*}}{\|\omega\|_{C_{\beta^*}}} \leq L.
\end{equation*}
Moreover, for $\beta_0^*\geq1$ and $n \geq N$,
\begin{align*}
\sup_{\alpha\in\N_0^d:\|\alpha\|_1\leq\beta^*_0}\|\partial^\alpha \tilde{f}_n\|_\infty
        &\leq
\sup_{\alpha\in\N_0^d:\|\alpha\|_1\leq\beta^*_0}\Bigl\{
        (1-\xi)\|\partial^\alpha f\|_\infty +  \xi\|\partial^\alpha g(\cdot\,;h_n)\|_{\infty}
        \Bigr\}
        \\
        &\leq
        (1-\xi)L + \xi Lh^{\beta^*-\beta^*_0}\frac{\sup_{\alpha\in\N_0^d:\|\alpha\|_1\leq\beta_0^*}\|\partial^\alpha\omega\|_\infty}{d\|\omega\|_{C_{\beta^*}}}
        \leq L.
\end{align*}
Finally, for all $\alpha\in\N_0^d$ with $\|\alpha\|_1=\beta_0^*$, 
\begin{align*}
        \sup_{x\neq y}\frac{|\partial^\alpha\tilde{f}_n(x)-\partial^\alpha\tilde{f}_n(y)|}{\|x-y\|^{\beta^*-\beta_0^*}}
        &\leq
        (1-\xi)\sup_{x\neq y}\frac{|\partial^\alpha f(x)-\partial^\alpha f(y)|}{\|x-y\|^{\beta^*-\beta_0^*}}
        +
        \xi\sup_{x\neq y}\frac{|\partial^\alpha g(x;h_n)-\partial^\alpha g(y;h_n)|}{\|x-y\|^{\beta^*-\beta_0^*}}
        \\
        &\leq (1-\xi)L + \xi L = L.
\end{align*}
Hence $\tilde{f}_n\in\cH(\beta,L)$ for $n\geq N$. Moreover,
\begin{align*}
        \|\tilde{f}_n-f\|_\infty
        \leq \xi (\|f\|_\infty+\|g(\cdot\,;h_n)\|_\infty) 
        \leq \xi(L+\epsilon/2) = \epsilon,
    \end{align*}
for all $n \geq N$, so $\tilde{f}_n\in\cH_{f,\epsilon}^{\mathrm{loc}}(\beta,L)$ for such $n$.  Finally, by the linearity of the bias function~\eqref{eq:B-def} in its first argument,
\begin{equation*}
    B(\tilde{f}_n,x_0,K,h_n) = (1-\xi)B(f,x_0,K,h_n)+\xi B\bigl(g(\cdot\,;h_n),x_0,K,h_n\bigr),
\end{equation*}
and so
\begin{align*}
    \bigl|B(\tilde{f}_n,x_0,K,h_n)\bigr| &\geq \xi\bigl|B(g(\cdot\,;h_n),x_0,K,h_n)\bigr| - (1-\xi)|B(f,x_0,K,h_n)|
    \\
    &> \frac{\xi a_{\beta^*}L}{d\|\omega\|_{C_{\beta^*}}}|\mu_{\beta^*}(K)|
    -
    (1-\xi)\delta
    =
    \delta,
\end{align*}
as required.
\end{proof}

\begin{lemma}\label{lem:unif-portmanteau}
    Consider a sequence of random variables $(Z_n)_{n\in\mathbb{N}}$ that depend on parameters $P\in\cP$, $x_0\in\cX$, $h\in\cH_n$, $\lambda\in\Lambda_n$ with
    \[
    \sup_{P\in\cP}\sup_{x_0\in\cX}\sup_{(h,\lambda)\in\cH_n\times\Lambda_n}\sup_{t\in\R}\big|\Pr_P( Z_n \leq t)-\Phi(t)\big| \rightarrow 0.
    \]
    Given a real sequence $(\mu_n)$ and a positive sequence $(\sigma_n)$, define $X_n := \mu_n + \sigma_n Z_n$ and $\tilde{X}_n :=\mu_n + \sigma_n Z$, where $Z \sim N(0,1)$.   For the loss function $\ell$ given by $\ell(u):=u^2\wedge M$ for some $M>0$, we have
    \begin{equation*}
\sup_{P\in\cP}\sup_{x_0\in\cX}\sup_{(h,\lambda)\in\cH_n\times\Lambda_n}\bigl|\E_P\bigl(\ell(X_n)\bigr) - \E\bigl(\ell(\tilde{X}_n)\bigr)\bigr| \to 0.
    \end{equation*}
\end{lemma}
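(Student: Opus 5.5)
The idea is to reduce the statement to uniform convergence in Kolmogorov distance, using the fact that $\ell$ is bounded and has a simple level-set structure. Since $0\le\ell\le M$, the layer-cake representation gives
\[
\E_P\bigl(\ell(X_n)\bigr)=\int_0^M \Pr_P\bigl(X_n^2>s\bigr)\,ds=\int_0^M \Pr_P\bigl(|X_n|>\sqrt{s}\bigr)\,ds,
\]
and the same identity holds for $\tilde X_n$ with $Z$ in place of $Z_n$. It therefore suffices to bound $\bigl|\Pr_P(|X_n|>\sqrt{s})-\Pr(|\tilde X_n|>\sqrt{s})\bigr|$ uniformly in $s\in[0,M]$ and in the parameters $(P,x_0,h,\lambda)$.

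For fixed $s$, the event $\{|X_n|>\sqrt{s}\}$ is the union of $\{Z_n>(\sqrt{s}-\mu_n)/\sigma_n\}$ and $\{Z_n<(-\sqrt{s}-\mu_n)/\sigma_n\}$, using $\sigma_n>0$. Its probability is therefore $1-\Pr_P(Z_n\le a)+\Pr_P(Z_n<b)$ for suitable thresholds $a,b$. The analogous expression for $Z$ involves $\Phi(a)$ and $\Phi(b)$.

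Write $\Delta_n:=\sup_{P,x_0,h,\lambda}\sup_t|\Pr_P(Z_n\le t)-\Phi(t)|$. Then $|\Pr_P(Z_n\le a)-\Phi(a)|\le\Delta_n$ directly. For the strict inequality, $\Pr_P(Z_n<b)=\lim_{t\uparrow b}\Pr_P(Z_n\le t)$, and $\Phi$ is continuous, so $|\Pr_P(Z_n<b)-\Phi(b)|\le\Delta_n$ as well. Hence the difference of tail probabilities is at most $2\Delta_n$, uniformly in $s$, $\mu_n$, $\sigma_n$ and the parameters.

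Integrating over $s\in[0,M]$ gives
\[
\sup_{P,x_0,h,\lambda}\bigl|\E_P\ell(X_n)-\E\ell(\tilde X_n)\bigr|\le 2M\Delta_n\to0.
\]
The argument needs nothing about $(\mu_n)$ or $(\sigma_n)$ beyond $\sigma_n>0$, and these sequences may depend on the parameters. Because the bound holds for each parameter value with a constant independent of it, the sup passes through. The only delicate point is the strict-inequality boundary term, which continuity of $\Phi$ handles. If one prefers, the case $\sigma_n=0$ is trivial, since then $X_n=\tilde X_n$.
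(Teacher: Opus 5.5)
Your proposal is correct and follows essentially the same route as the paper's proof: the layer-cake representation over $[0,M]$, rewriting the level sets as two one-sided events for $Z_n$, and bounding each by the uniform Kolmogorov distance, for an overall bound of $2M\Delta_n$. Your explicit treatment of the strict-inequality term, via left limits and the continuity of $\Phi$, fills in a step the paper leaves implicit.
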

\begin{proof}
    As $\ell$ takes values in $[0,M]$, we have
    \begin{align*}
        \bigl|\E_P\bigl(\ell(X_n)\bigr) &-
        \E\bigl(\ell(\tilde{X}_n)\bigr)\bigr|
        =
        \biggl|\int_0^M \bigl\{\Pr_P\bigl(\ell(X_n)>t\bigr)-\Pr\bigl(\ell(\tilde{X}_n)>t\bigr)\bigr\} \, dt\biggr| \\
        &\leq
        M \sup_{t\in[0,M]} \bigl|\Pr_P\bigl(\ell(X_n)>t\bigr)-\Pr\bigl(\ell(\tilde{X}_n)>t\bigr)\bigr|
        \\
        &=
        M\sup_{t\in[0,M]}\biggl|\Pr_P\biggl(\frac{-\sqrt{t}-\mu_n}{\sigma_n}\leq Z_n\leq \frac{\sqrt{t}-\mu_n}{\sigma_n}\biggr)-\Pr\biggl(\frac{-\sqrt{t}-\mu_n}{\sigma_n}\leq Z\leq \frac{\sqrt{t}-\mu_n}{\sigma_n}\biggr)\biggr|
        \\
        &\leq
        M\sup_{t\in[0,M]}\biggl|\Pr_P\biggl(Z_n\leq \frac{\sqrt{t}-\mu_n}{\sigma_n}\biggr)-\Pr\biggl(Z\leq \frac{\sqrt{t}-\mu_n}{\sigma_n}\biggr)\biggr|
        \\
        &\qquad\qquad +
        M\sup_{t\in[0,M]}\biggl|\Pr_P\biggl(Z_n < \frac{-\sqrt{t}-\mu_n}{\sigma_n}\biggr)-\Pr\biggl(Z < \frac{-\sqrt{t}-\mu_n}{\sigma_n}\biggr)\biggr|.
    \end{align*}
    Taking suprema and the limit as $n\to\infty$ yields the required result.
\end{proof}

\begin{lemma}\label{lem:seminorm}
    Adopt the notation of Lemma~\ref{lem:supB-delta}. The function $\omega$ of the form~\eqref{eq:omega-psi} has finite $\beta^*$-H\"older seminorm.
\end{lemma}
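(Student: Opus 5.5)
The goal is to show that $\|\omega\|_{C_{\beta^*}}<\infty$ for $\omega(u)=|u|^{\beta^*}\psi(u)$. I will do this by splitting $\mathbb{R}$ into the region $|u|<1$, where $\psi\equiv1$, and the region $|u|\geq 1$, where $\psi$ is smooth and compactly supported. Write $\beta_0:=\ceil{\beta^*}-1$ and $\gamma:=\beta^*-\beta_0\in(0,1]$.

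\emph{Smoothness of the cutoff.} First check that $\psi$ is $C^\infty$ on $\{1<|u|\}$. The function $u\mapsto\exp(1-1/\{u(2-u)\})$ on $(1,2)$ is smooth, and all its derivatives vanish as $u\to2^-$, by the standard $e^{-1/s}$ argument. At $u=1$, however, $u(2-u)$ attains its maximum $1$, so $\psi(1)=1$ and $\psi'(1)=0$, but higher derivatives need not match those of the constant function. I would therefore either verify this directly or, more simply, note that on $[1,2]$ the map $u\mapsto 1/\{u(2-u)\}$ is smooth. This gives that $\psi$ restricted to $[1,\infty)$ is $C^\infty$ with bounded derivatives.

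The consequence is that on $|u|\geq1$, the function $\omega$ is a product of the smooth function $|u|^{\beta^*}$ (smooth away from $0$) with such a $\psi$. Hence $\omega^{(\beta_0)}$ is Lipschitz there with a bounded constant, and it is supported in $[-2,2]$.

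\emph{The inner region.} On $|u|<1$ we have $\omega(u)=|u|^{\beta^*}$. Its $\beta_0$-th derivative is $c\,|u|^{\gamma}\sgn^{\beta_0}(u)$ with $c=(\beta^*)_{\beta_0}$, and this function is $\gamma$-Hölder with constant at most $2c$ using $\bigl||x|^\gamma-|y|^\gamma\bigr|\le|x-y|^\gamma$ (with a factor $2$ for a sign change across $0$ when $\beta_0$ is odd).

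\emph{Gluing the regions.} To combine the two regions, I would treat pairs $x,y$ in three cases.
\begin{enumerate}
\item When $|x-y|\geq1$, bound the difference quotient by $2\|\omega^{(\beta_0)}\|_\infty<\infty$.
\item When both points lie in the same region, use the estimate for that region; Lipschitz implies $\gamma$-Hölder on distances at most $1$.
\item When the points lie in different regions, insert the boundary point $\pm1$ between them and apply the triangle inequality, using $|x-1|,|1-y|\le|x-y|$.
\end{enumerate}
The main obstacle is whether $\omega^{(\beta_0)}$ is continuous across $|u|=1$, that is, whether the derivatives of $\psi$ at $1^+$ vanish up to order $\beta_0$. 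Since $1/\{u(2-u)\}=1/\{1-(u-1)^2\}$ is even about $u=1$, $\psi$ has vanishing odd derivatives there but possibly nonzero even ones. A jump in $\omega^{(\beta_0)}$ would make the seminorm infinite.

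If such a jump occurs, the fix is to note that Lemma~\ref{lem:supB-delta} only uses $\psi=1$ on $[-1,1]$ together with smoothness and compact support. I would replace $\psi$ by any $C^\infty$ bump equal to $1$ on $[-1,1]$ and $0$ outside $[-2,2]$, for example built from $e^{-1/s}$ transitions. With that choice the triangle-inequality gluing argument above goes through.
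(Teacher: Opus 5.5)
Your proposal is sound, and the point you flag at $|u|=1$ is a real problem that you should settle rather than leave conditional. Write $s=|u|-1\in[0,1]$. The paper's cutoff is then $\psi=\exp\{-s^2/(1-s^2)\}=1-s^2+O(s^4)$. So $\psi'(\pm1)=0$, but at $u=1$ the one-sided second derivatives are $-2$ from outside and $0$ from inside.

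Consequently, for $\beta^*\le2$ (so $\beta_0\le1$) the one-sided values of $\omega^{(\beta_0)}$ agree at $\pm1$. In that range your three-case gluing proves the lemma exactly as stated.

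For $\beta^*>2$, $\omega''$ has a jump of size $2$ at $\pm1$, so $\omega$ is not $C^2$ and lies in no H\"older class of order $\beta^*>2$. The lemma is therefore false as written in that range. Your replacement of $\psi$ by a genuinely $C^\infty$ cutoff, equal to $1$ on $[-1,1]$ and vanishing off $[-2,2]$, is necessary rather than optional. As you say, Lemma~\ref{lem:supB-delta} uses only four properties: $\psi\equiv1$ on $[-1,1]$, compact support, bounded derivatives of $\omega$ up to order $\beta_0^*$, and $0<\|\omega\|_{C_{\beta^*}}<\infty$. So nothing downstream is affected. (Minor: $u\mapsto1/\{u(2-u)\}$ is smooth on $[1,2)$, not $[1,2]$; the endpoint $u=2$ is covered by your flatness argument.)

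The paper takes a different route. It applies the Leibniz rule to $\omega^{(\beta_0^*)}(u)=\sum_{j}\binom{\beta_0^*}{j}(\beta^*)_j|u|^{\beta^*-j}(\sgn u)^j\psi^{(\beta_0^*-j)}(u)$ on all of $\R$. It then splits each product difference as $|a(x)b(x)-a(y)b(y)|\le|a(x)|\,|b(x)-b(y)|+|a(x)-a(y)|\,|b(y)|$. The pieces are bounded using H\"older continuity of every $\psi^{(j)}$ and H\"older bounds for $|u|^{\beta^*-j}(\sgn u)^j$ on $[-2,2]$, with cases according to whether $|x|$ or $|y|$ is below $2$.

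That argument explicitly rests on $\psi$ being infinitely differentiable with bounded derivatives of all orders, which the computation above shows is false at $|u|=1$. Since it needs $\psi^{(\beta_0^*)}$ to be H\"older, it breaks for $\beta_0^*\ge2$, and the same change of cutoff repairs it. Once $\psi$ is smooth, both proofs are valid.

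The product-rule approach handles the singularity of $|u|^{\beta^*}$ at $0$ and the cutoff together, with no junction analysis, but it hides where the regularity of $\psi$ is actually used. Your region split treats pure $|u|^{\beta^*}$ on $|u|<1$, a smooth function on $|u|\ge1$, and a matching condition at $\pm1$. That isolates exactly the point where regularity is at stake, which is why it exposes the error.
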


\begin{proof}
    Since $\psi$ is infinitely differentiable, with bounded derivatives of all orders, for any $j\in\mathbb{N}$ and $\beta_{\psi}\in(0,1]$, there exists $L_{j,\beta_{\psi}}\in(0,\infty)$ such that $\psi^{(j)}\in\cH(\beta_{\psi},L_{j,\beta_{\psi}})$.  Define 
    \[
    C_\psi := \max_{j\in\{0,1,\ldots,\beta_0^*\}}\sup_{u\in\R}\bigl|\psi^{(j)}(u)\bigr| <\infty.
    \]
    By the Leibniz rule,
    \begin{equation*}
        \omega^{(\beta_0^*)}(u) = \sum_{j=0}^{\beta_0^*}{\binom{\beta_0^*}{j}} (\beta^*)_{j}|u|^{\beta^*-j}(\sgn u)^j \psi^{(\beta_0^*-j)}(u),
    \end{equation*}
    where $(x)_j:=\prod_{k=1}^{j}(x+1-k)$ denotes the $j$th falling factorial. 
    We consider three cases: $\{|x|,|y|\geq2\}$, $\{|x|<2\}$ and $\{|y|<2\}$.
\begin{enumerate}[label=(\roman*), leftmargin=*, align=left]
\item
    If $|x|,|y|\geq 2$ then  $\omega^{(\beta_0^*)}(x) = \omega^{(\beta_0^*)}(y) = 0$.

\item  If $|x|<2$ then
    \begin{align*}
        \frac{|\omega^{(\beta_0^*)}(x)-\omega^{(\beta_0^*)}(y)|}{|x-y|^{\beta^*-\beta_0^*}}
        &\leq
        \sum_{j=0}^{\beta_0^*}\binom{\beta_0^*}{j}(\beta^*)_j \frac{\big||x|^{\beta^*-j}(\sgn x)^j\psi^{(\beta_0^*-j)}(x) - |y|^{\beta^*-j}(\sgn y)^j\psi^{(\beta_0^*-j)}(y)\big|}{|x-y|^{\beta^*-\beta_0^*}}
        \\
        &\leq
        \sum_{j=0}^{\beta_0^*}\binom{\beta_0^*}{j}(\beta^*)_j \frac{|x|^{\beta^*-j}}{|x-y|^{\beta^*-\beta_0^*}}\big|\psi^{(\beta_0^*-j)}(x)-\psi^{(\beta_0^*-j)}(y)\big|
        \\
        &\hspace{1.5cm} +
        \sum_{j=0}^{\beta_0^*}\binom{\beta_0^*}{j}(\beta^*)_j \frac{\big||x|^{\beta^*-j}(\sgn x)^j-|y|^{\beta^*-j}(\sgn y)^j\big|}{|x-y|^{\beta^*-\beta_0^*}}\big|\psi^{(\beta_0^*-j)}(y)\big|
        \\
        &\leq
        \sum_{j=0}^{\beta_0^*}L_{\beta_0^*-j,\beta^*-\beta_0^*}\binom{\beta_0^*}{j}(\beta^*)_j 2^{\beta^*-j}
        \\
        &\qquad +
        C_\psi\ind_{\{|y|<2\}}\sum_{j=0}^{\beta_0^*}\binom{\beta_0^*}{j}(\beta^*)_j \frac{\big||x|^{\beta^*-j}(\sgn x)^j-|y|^{\beta^*-j}(\sgn y)^j\big|}{|x-y|^{\beta^*-\beta_0^*}}.
    \end{align*}
    For $j\in\{0,1,\ldots,\beta_0^*\}$ define $k_j(u) := |u|^{\beta^*-j}(\sgn u)^j$ and 
    \[
    \bar{L}_j := \bigl((\beta^*-j)2^{\beta^*-j-1}\bigr)\vee \|k_{\beta_0^*}\|_{C_{\beta^*-\beta_0^*}}<\infty.
    \]
    Then when $|x|<2$ and $|y|<2$, 
    \begin{equation*}
        \frac{\big||x|^{\beta^*-j}(\sgn x)^j-|y|^{\beta^*-j}(\sgn y)^j\big|}{|x-y|^{\beta^*-\beta_0^*}}
        \leq
        \bar{L}_j\frac{|x-y|\vee|x-y|^{\beta^*-\beta^*_0}}{|x-y|^{\beta^*-\beta_0^*}} \leq 4\bar{L}_j.
\end{equation*}

    Therefore for $|x|<2$,
    \begin{equation}\label{eq:beta-Holder-norm}
        \frac{|\omega^{(\beta_0^*)}(x)-\omega^{(\beta_0^*)}(y)|}{|x-y|^{\beta^*-\beta_0^*}}
        \leq
        \sum_{j=0}^{\beta_0^*}2^{\beta^*-j}L_{\beta^*-j,\beta^*-\beta_0^*}\binom{\beta_0^*}{j}(\beta^*)_j 
        +
        4C_\psi\sum_{j=0}^{\beta_0^*}\bar{L}_j\binom{\beta_0^*}{j}(\beta^*)_j 
        <\infty.
    \end{equation}
\item  By symmetry~\eqref{eq:beta-Holder-norm} also holds when $|y|<2$. 
\end{enumerate}
    
We conclude that
    \begin{equation*}
        \|\omega\|_{C_{\beta^*}}
        \leq \sum_{j=0}^{\beta_0^*}2^{\beta^*-j}L_{\beta^*-j,\beta^*-\beta_0^*}\binom{\beta_0^*}{j}(\beta^*)_j 
        +
        4C_\psi\sum_{j=0}^{\beta_0^*}\bar{L}_j\binom{\beta_0^*}{j}(\beta^*)_j
        <\infty
        ,
    \end{equation*}
    as required.
    \end{proof}
    
\begin{lemma}\label{lem:GM}
    Recall the definition of $G_M$ from~\eqref{eq:G_M}. Then
    \begin{enumerate}[label=(\roman*), leftmargin=*, align=left]
        \item For any $V,M>0$, the function $\mu\mapsto G_M(\mu,V)$ is strictly increasing on $[0,\infty)$.
        \label{lem:GM-1}
        \item For any $B_\infty>0$, there exists $M(B_{\infty}):=\frac{25}{4}B_\infty^2>0$ such that for all $\mu \in [0,B_\infty]$ and $M \geq M(B_\infty)$, the function $V \mapsto G_M(\mu,V)$ is strictly increasing on $[0,\infty)$.
        \label{lem:GM-2}
        \item Fix any $M>0$. Then $G_M(\mu,V)\geq0$, with equality if and only if $(\mu,V)= (0,0)$.
        \label{lem:GM-3}
    \end{enumerate}
\end{lemma}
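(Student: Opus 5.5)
The plan is to write $G_M(\mu,V)=\E\{(\mu+\sqrt{V}Z)^2\wedge M\}$ with $Z\sim N(0,1)$, and to handle each part by direct calculus on Gaussian integrals.

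\emph{Part (i).} Express $G_M$ through tail probabilities:
\[
G_M(\mu,V)=\int_0^M \Pr\bigl(|\mu+\sqrt V Z|>\sqrt t\bigr)\,dt .
\]
For $V>0$ and $s>0$, the map $\mu\mapsto\Pr(|\mu+\sqrt VZ|\le s)=\Phi\bigl((s-\mu)/\sqrt V\bigr)-\Phi\bigl((-s-\mu)/\sqrt V\bigr)$ is symmetric about $0$. It is strictly decreasing on $[0,\infty)$, because its $\mu$-derivative equals
\[
V^{-1/2}\Bigl\{\phi\bigl((s+\mu)/\sqrt V\bigr)-\phi\bigl((s-\mu)/\sqrt V\bigr)\Bigr\},
\]
and this is negative for $\mu,s>0$. (This is Anderson's lemma in one dimension.) Integrating over $t\in(0,M]$ therefore gives strict monotonicity of $\mu\mapsto G_M(\mu,V)$.

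\emph{Part (iii).} Nonnegativity is trivial. If $V>0$, then $(\mu+\sqrt VZ)^2\wedge M>0$ almost surely, so $G_M>0$. If $V=0$, then $G_M=\mu^2\wedge M$, which is zero iff $\mu=0$.

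\emph{Part (ii)} is the main step. Put $\sigma=\sqrt V$ and $m=\sqrt M$, and differentiate in $\sigma$:
\[
\partial_\sigma G_M=\E\bigl\{2(\mu+\sigma Z)Z\,\ind_{\{|\mu+\sigma Z|<m\}}\bigr\}.
\]
Stein's identity is unavailable here because of the truncation, so I would compute directly. Writing $a=(m-\mu)/\sigma$ and $b=(-m-\mu)/\sigma$, we have
\[
\E\{Z\ind_{(b,a)}(Z)\}=\phi(b)-\phi(a), \qquad \E\{Z^2\ind_{(b,a)}(Z)\}=\Phi(a)-\Phi(b)+b\phi(b)-a\phi(a).
\]
This gives
\[
\tfrac12\partial_\sigma G_M=\mu\{\phi(b)-\phi(a)\}+\sigma\{\Phi(a)-\Phi(b)\}+\sigma\{b\phi(b)-a\phi(a)\}.
\]
Substituting $\sigma b=-m-\mu$ and $\sigma a=m-\mu$, the $\mu$ terms combine and this simplifies to
\[
\tfrac12\partial_\sigma G_M=\sigma\{\Phi(a)-\Phi(b)\}-m\{\phi(a)+\phi(b)\}.
\]
The goal is to show this is positive whenever $0\le\mu\le B_\infty$ and $m\ge\tfrac52 B_\infty$. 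Then $m-\mu\ge\tfrac35m$ and $m+\mu\ge m$, so $a\ge 3m/(5\sigma)=:c$ and $-b\ge c$. Hence $\Phi(a)-\Phi(b)\ge 1-2\bar\Phi(c)$ and $\phi(a)+\phi(b)\le 2\phi(c)$. It therefore suffices to check
\[
\tfrac35\bigl(1-2\bar\Phi(c)\bigr)>2c\,\phi(c)\quad\text{for all }c>0,
\]
using $m/\sigma=\tfrac53c$. This single-variable inequality is where the constant $25/4$ has to come out. I expect it to be the delicate point, since it needs a numerical check near the maximiser of $c\phi(c)$. For small $c$ both sides behave like $\tfrac65c\,\phi(0)$ versus $2c\,\phi(0)$, so the crude bound fails there. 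In that regime I would instead keep the exact $a,b$ and use the slack in $\Phi(a)-\Phi(b)\ge\int_{b}^{a}\phi$ more carefully; equivalently, I would apply the mean-value bound $\Phi(a)-\Phi(b)\ge(a-b)\min\{\phi(a),\phi(b)\}$. Since $a-b=2m/\sigma$, this reduces the condition to comparing $\phi(a)+\phi(b)$ with $2\min\{\phi(a),\phi(b)\}$ when $\sigma$ is large, which holds up to a factor that must be controlled using $\mu/m\le 2/5$. If neither route closes cleanly, the fallback is to verify the inequality numerically in the transitional range of $c$ and use the asymptotic arguments at both ends.

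Finally, $V=\sigma^2$ is a monotone reparametrisation, so positivity of $\partial_\sigma G_M$ for $\sigma>0$, together with continuity at $\sigma=0$, gives strict increase of $V\mapsto G_M(\mu,V)$ on $[0,\infty)$.
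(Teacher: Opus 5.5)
Parts (i) and (iii) are fine. Your argument for (i), differentiating $\Pr(|\mu+\sqrt V Z|\le s)$ in $\mu$, is a valid alternative to the paper's Poisson-mixture coupling of non-central $\chi^2$ laws. Your closed form $\tfrac12\partial_\sigma G_M=\sigma\{\Phi(a)-\Phi(b)\}-m\{\phi(a)+\phi(b)\}$ is also correct, and up to a positive factor it is exactly the paper's $xF_y(1/x)$. Stein's identity is in fact available here, since the truncated square is Lipschitz, and the paper uses it; your direct $\sigma$-derivative reaches the same expression anyway.

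The gap is the final positivity step of (ii), which is the whole content of that part. Your reduction to $c=3m/(5\sigma)$ does not merely fail as $c\to0$. The inequality $\tfrac35\{1-2\bar\Phi(c)\}>2c\phi(c)$ is false on the entire range $0<c\lesssim1.21$ (at $c=1$ the two sides are about $0.41$ and $0.48$), i.e.\ whenever $\sigma\gtrsim m/2$. The worst-case substitution for $a,b$ loses a first-order term in $c$ whenever $\mu/m$ may be positive. So no choice of constant in $M(B_\infty)$ rescues it, and $25/4$ does not come from this inequality.

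Your proposed repair cannot work either. The mean-value bound gives $\sigma\{\Phi(a)-\Phi(b)\}\ge 2m\min\{\phi(a),\phi(b)\}$. But $2\min\{\phi(a),\phi(b)\}\le\phi(a)+\phi(b)$ always, strictly when $\mu>0$ (as $|b|>|a|$), so this bound never certifies positivity. The trapezoid rule with concavity of $\phi$ on $[-1,1]$ would handle $\sigma\ge m+\mu$, but that still leaves roughly $\sigma/m\in(1/2,7/5)$ uncovered. The numerical fallback is not a proof. Once the crude reduction is abandoned, it is also a check over the unbounded two-parameter region $(\mu/m,\sigma/m)\in[0,2/5]\times(0,\infty)$, not a one-variable one.

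The missing idea is to keep both parameters and argue by monotonicity in $t:=m/\sigma$. With $y:=\mu/m\in[0,2/5]$, your bracket equals $\sigma F_y(t)$, where
\[
F_y(t):=\Phi((1+y)t)+\Phi((1-y)t)-1-t\phi((1+y)t)-t\phi((1-y)t).
\]
Then $F_y(0)=0$ and
\[
F_y'(t)=\{y+(1+y)^2t^2\}\phi((1+y)t)+\{(1-y)^2t^2-y\}\phi((1-y)t)=\frac{e^{-(1+y)^2t^2/2}}{\sqrt{2\pi}}\,G_y(t^2),
\]
where $G_y(\nu):=y+(1+y)^2\nu+\{(1-y)^2\nu-y\}e^{2y\nu}$. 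We have $G_y(0)=0$ and
\[
G_y'(\nu)=(1+y)^2+(1-2y-y^2)e^{2y\nu}+2y(1-y)^2\nu e^{2y\nu}>0.
\]
This is exactly where $y\le2/5$, i.e.\ $M\ge\frac{25}{4}B_\infty^2$, enters, via $1-2y-y^2\ge1/25$. Hence $G_y>0$ and so $F_y>0$ on $(0,\infty)$, which is precisely what (ii) requires.
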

\begin{proof}
    \ref{lem:GM-1} We claim that non-central chi-squared distributions are stochastically ordered in the sense that for $\mu_1 < \mu_2$ and $k \in \mathbb{N}$, we have $\chi_k^2(\mu_1) \leq_{\mathrm{st}} \chi_k^2(\mu_2)$.  To see this, let $Z_1 \sim \mathrm{Poi}(\mu_1/2)$ and $Z_1' \sim \mathrm{Poi}\bigl((\mu_2 - \mu_1)/2\bigr)$ be independent and let $Z_2 := Z_1 + Z_1'$.  Now let $(E_n)$ be a sequence of independent $\mathrm{Exp}(1)$ random variables that are independent of $(Z_1,Z_1')$, and set $S_n := \sum_{i=1}^n E_i$ for $n \in \mathbb{N}$.  Then $Y_1 := S_{k+2Z_1} \sim \chi_k^2(\mu_1)$ and $Y_2 := S_{k+2Z_2} \sim \chi_k^2(\mu_2)$, with $Y_1 \leq Y_2$ almost surely.  This establishes the claim, and since $G_M(\mu,V) = \mathbb{E}\bigl\{V \chi_1^2(\mu^2/V) \wedge M\bigr\}$, the result follows.

    \medskip
    \ref{lem:GM-2} We have
    \begin{equation*}
        \frac{\partial}{\partial V}G_M(\mu,V) = \frac{1}{2V}\E_{Z\sim N(\mu,V)}\bigg((Z^2\wedge M)\bigg\{\biggl(\frac{Z-\mu}{\sqrt{V}}\bigg)^2-1\bigg\}\bigg)
        = \frac{1}{2V}\E_{\varepsilon\sim N(0,1)}\Bigl(\bigl\{(\mu+\sqrt{V}\varepsilon)^2\wedge M\bigr\}(\varepsilon^2-1)\Bigr).
    \end{equation*}
    Fix $\mu\in[0,B_\infty]$ and $M \geq M(B_\infty)$, so that $\mu/\sqrt{M} \in[0,2/5]$. 
    Now
    \begin{align*}
        \E_{\varepsilon\sim N(0,1)}\Bigl(\bigl\{(\mu+\sqrt{V}\varepsilon)^2\wedge M\bigr\}(\varepsilon^2-1)\Bigr)
        &=
        \E_{\varepsilon\sim N(0,1)}\Bigl(\min\bigl\{(\mu+\sqrt{V}\varepsilon)^2-M,0\bigr\}(\varepsilon^2-1)\Bigr)
        \\
        &=
        M\,\E_{\varepsilon\sim N(0,1)}\biggl(\min\biggl\{\biggl(\frac{\mu}{\sqrt{M}}+\sqrt{\frac{V}{M}}\varepsilon\biggr)^2-1,0\biggr\}(\varepsilon^2-1)\biggr),
    \end{align*}
    so it suffices to show that for all $(x,y)\in(0,\infty)\times[0,2/5]$, we have
    \begin{equation}\label{eq:suffices-GM}
        \E_{\varepsilon\sim N(0,1)}\Bigl(\min\bigl\{(y+x\varepsilon)^2-1,0\bigr\}(\varepsilon^2-1)\Bigr) > 0.
    \end{equation}
    By Stein's lemma~\citep[Lemma~1]{stein}, and writing $\phi$ and $\Phi$ denote the standard normal density and distribution functions, 
    \begin{align*}
        \E_{\varepsilon\sim N(0,1)}\Bigl(\min\bigl\{(y+x\varepsilon)^2 & -1,0\bigr\}(\varepsilon^2-1)\Bigr)
        \\
        &=
        2x\,\E_{\varepsilon\sim N(0,1)} \Bigl(\ind_{\{|y+x\varepsilon| < 1\}}\varepsilon(y+x\varepsilon)\Bigr)
        \\
        &= \int_{\frac{-1-y}{x}}^{\frac{1-y}{x}}\varepsilon (y+x\varepsilon)\phi(\varepsilon) \, d\varepsilon
        \\
        &= -\phi\bigg(\frac{1-y}{x}\bigg) - \phi\bigg(\frac{-1-y}{x}\bigg) + x\bigg\{\Phi\bigg(\frac{1-y}{x}\bigg) - \Phi\bigg(\frac{-1-y}{x}\bigg)\bigg\}
        \\
        &= x\bigg\{
        -\frac{1}{x}\phi\bigg(\frac{1-y}{x}\bigg) - \frac{1}{x}\phi\bigg(\frac{1+y}{x}\bigg) + \Phi\bigg(\frac{1-y}{x}\bigg) + \Phi\bigg(\frac{1+y}{x}\bigg)-1
        \bigg\}
        \\
        &= 
        xF_y\bigg(\frac{1}{x}\bigg)
        ,
    \end{align*}
    where
    \begin{equation*}
        F_y(x) := \Phi\big((1+y)x\big) + \Phi\big((1-y)x\big) - x\phi\big((1+y)x\big) - x\phi\big((1-y)x\big) - 1.
    \end{equation*}
    We have $F_y(0)=2\Phi(0)-1=0$, and $F_y$ is continuous for each $y\in[0,2/5]$.  Moreover,
    \begin{align*}
        F_y'(x) &=
        (1+y)\phi\big((1+y)x\big) + (1-y)\phi\big((1-y)x\big)
        - \phi\big((1+y)x\big) - x(1+y)\phi'\big((1+y)x\big)
        \\
        &\qquad
        - \phi\bigl((1-y)x\bigr) - x(1-y)\phi'\bigl((1-y)x\bigr)
        \\
        &= \bigl(y+(1+y)^2x^2\bigr)\phi\big((1+y)x\big) + \bigl(-y+(1-y)^2x^2\bigr)\phi\big((1-y)x\big)
        \\
        &=: \frac{1}{\sqrt{2\pi}}\exp\biggl(-\frac{(1+y)^2x^2}{2}\biggr)G_y(x^2)
        ,
    \end{align*}
    where
    \begin{equation*}
        G_y(\nu) :=  \bigl(y+(1+y)^2\nu\bigr) + \bigl(-y+(1-y)^2\nu\bigr)e^{2y\nu}.
    \end{equation*}
    Now $G_y(0)=0$, and $G_y$ is continuous for each $y\in[0,2/5]$. Moreover,
    \begin{align*}
        G_y'(\nu) &=
        (1+y)^2 + \big((1-y)^2-2y^2+2y(1-y)^2\nu\big)e^{2y\nu}
        \\
        &=\underbrace{(1+y)^2}_{\geq1} + \underbrace{\big(1-2y-y^2\big)}_{\geq 1/25\text{ for }y\in[0,2/5]}\,\underbrace{e^{2y\nu}}_{\geq1} + \underbrace{2y(1-y)^2}_{\geq0}\,\underbrace{\nu e^{2y\nu}}_{\geq0} \geq \frac{26}{25} > 0.
    \end{align*}
    Therefore $G_y(\nu)>0$ for all $(\nu,y)\in(0,\infty)\times[0,2/5]$, so $F_y(x)>0$ for all $(x,y)\in(0,\infty)\times[0,2/5]$, and so~\eqref{eq:suffices-GM} holds, from which~\ref{lem:GM-2} follows.
    
    \medskip
    \ref{lem:GM-3} The fact that $G_M(\mu,V) \geq 0$ follows immediately from the definition. If $V>0$ then $G_M(\mu,V)=\E_{Z\sim N(\mu,V)}(Z^2\wedge M) >0$, as $z^2\wedge M>0$ for $z\neq0$. Finally $G_M(\mu,0)=\mu^2\wedge M=0$ if and only if $\mu=0$.
\end{proof}

\begin{lemma}\label{lem:score-min}
    Fix $x_0 \in \mathcal{X}$ and assume that the conditional density $p(\cdot \given x_0)$ of $\varepsilon\given X = x_0$ is absolutely continuous with respect to Lebesgue measure with $i_P(x_0)<\infty$. Then
    $V_{P,x_0}^{(\infty)}(\cdot)$ is minimised over absolutely continuous functions $\varrho(\cdot \given x_0)$ satisfying $\E_P\bigl\{\varrho^2(\varepsilon\given X)\biggiven X=x_0\bigr\}<\infty$ by $\varrho(\cdot \given x_0)=\rho(\cdot \given x_0)$, where $\rho(\cdot \given x_0) := p'(\cdot \given x_0)/p(\cdot \given x_0)$ is the conditional score function. 
\end{lemma}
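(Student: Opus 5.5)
The plan is to reduce the claim to Cauchy--Schwarz, using integration by parts to rewrite the denominator $\E_P\{\varrho'(\varepsilon\given X)\given X=x_0\}$ of $V_{P,x_0}^{(\infty)}(\varrho)$ as an inner product against the score. Throughout I suppress the conditioning on $X=x_0$ and write $p=p(\cdot\given x_0)$ and $\rho=\rho(\cdot\given x_0)$. Recall from Theorem~\ref{thm:decomp-ext} that
\[
V_{P,x_0}^{(\infty)}(\varrho)=\frac{\E_P\{\varrho^2(\varepsilon)\}}{\{\E_P\varrho'(\varepsilon)\}^2}.
\]
I restrict attention to those $\varrho$ for which $\E_P|\varrho'(\varepsilon)|<\infty$ and $\E_P\varrho'(\varepsilon)\neq0$; for any other $\varrho$ the ratio is either undefined or $+\infty$, so such $\varrho$ cannot be minimisers.

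The key identity I will prove is
\[
\E_P\varrho'(\varepsilon)=\int_\R\varrho'p=-\int_\R\varrho p'=-\E_P\{\varrho(\varepsilon)\rho(\varepsilon)\}.
\]
I will justify it as follows. Since $\E_P\varrho^2<\infty$ and $i_P(x_0)=\E_P\rho^2<\infty$, Cauchy--Schwarz gives that $\varrho p'=\varrho\rho p$ is integrable, and also that $\varrho p$ is integrable because $\E_P|\varrho|<\infty$. By assumption $\varrho'p$ is integrable. The product $\varrho p$ is absolutely continuous, and $(\varrho p)'=\varrho'p+\varrho p'$ is integrable, so the limits of $\varrho p$ at $\pm\infty$ exist. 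These limits must be zero, since $\varrho p$ is integrable. Hence the boundary terms in the integration by parts vanish.

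I expect this boundary-term step to be the main obstacle. The delicate points are establishing absolute continuity of $\varrho p$ on the whole of $\R$ and handling the regions where $p=0$. On such regions $\rho$ should be interpreted as $0$, and on the set $\{p=0\}$ we have $p'=0$ almost everywhere.

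Given the identity, Cauchy--Schwarz yields
\[
\{\E_P\varrho'(\varepsilon)\}^2=\{\E_P\varrho\rho\}^2\le\E_P\varrho^2\cdot i_P(x_0).
\]
Hence $V_{P,x_0}^{(\infty)}(\varrho)\ge1/i_P(x_0)$, with equality if and only if $\varrho=c\rho$ $P_{\varepsilon|x_0}$-almost surely for some $c\neq0$. Taking $\varrho=\rho$ in the identity gives the two forms of the Fisher information,
\begin{equation*}
\E_P\{\rho'(\varepsilon\given X)\given X=x_0\}=-\E_P\{\rho^2(\varepsilon\given X)\given X=x_0\},
\end{equation*}
which is the identity cited in the main text as~\eqref{Eq:FITwoForms}. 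It follows that $V_{P,x_0}^{(\infty)}(\rho)=i_P(x_0)/i_P(x_0)^2=1/i_P(x_0)$, so $\rho$ attains the lower bound and is therefore the minimiser.

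As a by-product I will also record the variance comparison used throughout the main text. Applying the identity to $\varrho(e)=e$, which is admissible because $\E_P\varepsilon^2<\infty$, gives $\E_P\{\varepsilon\rho(\varepsilon)\}=-1$. Cauchy--Schwarz then gives $\sigma_P^2(x_0)\ge1/i_P(x_0)$. Equality holds if and only if $\rho(e)=-e/\sigma_P^2(x_0)$ almost everywhere, which after integration means that $p$ is the $N\bigl(0,\sigma_P^2(x_0)\bigr)$ density.
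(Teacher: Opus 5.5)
Your proof is correct. It has the same architecture as the paper's: establish $\E_P\{\varrho'(\varepsilon\given X)\given X=x_0\}=-\E_P\{\varrho(\varepsilon\given X)\rho(\varepsilon\given X)\given X=x_0\}$, then apply Cauchy--Schwarz. The difference lies in how you justify that integration by parts.

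The paper writes $\varrho(t_0)-\varrho(\varepsilon)=\int_\R g(\varepsilon,t)\varrho'(t)\,dt$ with $g(\varepsilon,t)=\ind_{\{\varepsilon<t\leq t_0\}}-\ind_{\{t_0<t\leq\varepsilon\}}$. It then uses $\int_\R p'=0$ and $\int_\R p'(\varepsilon)g(\varepsilon,t)\,d\varepsilon=p(t)$, and swaps the integrals by Fubini. This is the familiar Stein-lemma device and needs no discussion of limits of $\varrho p$. You instead integrate $(\varrho p)'=\varrho'p+\varrho p'$ over $[a,b]$ and send $a\to-\infty$, $b\to\infty$; the boundary terms vanish because $\varrho p$ and $(\varrho p)'$ are both integrable.

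Beyond the standing hypotheses, your route needs only $\varrho'p\in L^1$, which is required anyway for $V_{P,x_0}^{(\infty)}(\varrho)$ to be defined. The Fubini interchange instead tacitly needs $\int_\R|\varrho'(t)|\int_\R|p'(\varepsilon)|\,|g(\varepsilon,t)|\,d\varepsilon\,dt<\infty$. This reduces to $\int_\R|\varrho'|p<\infty$ when $p$ is unimodal and $t_0$ is taken to be its mode, but it does not follow from the stated hypotheses in general. A further gain: applying your argument to $\varrho(e)=e$ shows that the hypothesis $\lim_{\varepsilon\to\pm\infty}\varepsilon p(\varepsilon\given x_0)=0$ in Lemma~\ref{lem:gauss-eq} is automatic once $\E_P(\varepsilon^2\given X=x_0)<\infty$.

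The \emph{delicate points} you anticipate are already handled by what you wrote. You only use absolute continuity of $\varrho p$ on compact intervals, and a product of functions absolutely continuous on $[a,b]$ is absolutely continuous there. Also, $p'(e)=0$ at every $e$ with $p(e)=0$ where $p$ is differentiable, since $p\geq0$ attains its minimum there.

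One tacit requirement, shared with the paper, is that $\rho(\cdot\given x_0)$ is itself admissible: locally absolutely continuous with $\E_P\{|\rho'(\varepsilon\given X)|\given X=x_0\}<\infty$. You need this to apply the identity with $\varrho=\rho$. It holds under~\ref{ass:score} and~\ref{ass:exp-bounds}, but not under the lemma's bare hypotheses.
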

\begin{proof}
The fact that $p(\cdot\given x_0)$ is absolutely continuous means that  $\lim_{\varepsilon\to\pm\infty}p(\varepsilon\given x_0)=0$.  Fix $t_0\in\R$, and define $g(\varepsilon,t):=\ind_{\{\varepsilon<t\leq t_0\}}-\ind_{\{t_0<t\leq\varepsilon\}}$ for $\varepsilon,t\in\R$. Then $\int_\R g(\varepsilon,t)\varrho'(t\given x_0)\,dt = \varrho(t_0\given x_0)-\varrho(\varepsilon\given x_0)$ for all $\varepsilon\in\R$. Moreover $\int_\R p'(\varepsilon\given x_0)g(\varepsilon,t)\,d\varepsilon=p(t\given x_0)$ for all $t\in\R$.  By Cauchy--Schwarz, 
\[
\bigl[\E_P\bigl\{\varrho(\varepsilon \given X)\rho(\varepsilon\given X)\given X=x_0\bigr\}\bigr]^2 \leq \E_P\bigl\{\varrho^2(\varepsilon\given X)\biggiven X=x_0\bigr\} \cdot i_P(x_0) < \infty.
\]
Hence, applying Fubini's theorem,
\begin{align}
    \E_P\bigl\{\varrho(\varepsilon&\given X)\rho(\varepsilon\given X)\given X=x_0\bigr\}
    =
    \int_{\R}\varrho(\varepsilon\given x_0)p'(\varepsilon\given x_0)\,d\varepsilon
    \notag
    \\
    &=
    \varrho(t_0\given x_0)\int_{\R}p'(\varepsilon\given x_0)\,d\varepsilon - \int_{\R}p'(\varepsilon\given x_0)\int_{\R}g(\varepsilon,t)\varrho'(t\given x_0)\,dt\,d\varepsilon 
    \notag
    \\
    &= - \int_{\R} \varrho'(t\given x_0) \int_{\R} p'(\varepsilon\given x_0)g(\varepsilon,t)\,d\varepsilon\,dt
    = - \E_P\bigl\{\varrho'(\varepsilon\given X)\given X)\given X=x_0\bigr\}.
    \label{Eq:FITwoForms}
\end{align}
Therefore
\begin{align*}
        V_{P,x_0}^{(\infty)}(\varrho)
        =
        \frac{\E_P\bigl\{\varrho^2(\varepsilon\given X)\given X=x_0\bigr\}}{\bigl\{\E_P\bigl(\varrho'(\varepsilon\given X)\given X=x_0\bigr)\bigr\}^2}
        &=
        \frac{\E_P\bigl\{\varrho^2(\varepsilon\given X)\given X=x_0\bigr\}}{\bigl\{\E_P\bigl(\varrho(\varepsilon\given X)\rho(\varepsilon\given X)\given X=x_0\bigr)\bigr\}^2} 
        \geq
        \frac{1}{\E_P\bigl\{\rho^2(\varepsilon\given X)\given X=x_0\bigr\}},
    \end{align*}
    where the inequality follows by Cauchy--Schwarz, and equality holds if and only if $\varrho(\cdot\given x_0)\propto\rho(\cdot\given x_0)$.
\end{proof}

\begin{lemma}\label{lem:gauss-eq}
    Adopt the setup of Lemma~\ref{lem:score-min}, and assume that $\lim_{\varepsilon \rightarrow \pm \infty}\varepsilon p(\varepsilon \given x_0) = 0$.  Then
    \begin{equation*}
        \sup_{n\in\N}\sup_{\lambda\in\Lambda_n}V_{P,x_0}^{(\lambda)}(\rho) \leq \sigma_P^2(x_0),
    \end{equation*}
    with equality if and only if $\rho(\cdot\given x_0)$ is linear. 
\end{lemma}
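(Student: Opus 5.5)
Since $R_2(\kappa_\lambda)/R_2(K)\in(0,1)$ for $\lambda>\lambda_0(K)$, the quantity $V_{P,x_0}^{(\lambda)}(\rho)$ in \eqref{eq:V-lambda} is a strict convex combination of $1/i_P(x_0)$ and $\sigma_P^2(x_0)$. The plan is therefore to reduce the claim to the single Cram\'er--Rao type inequality
\[
\sigma_P^2(x_0)\, i_P(x_0)\geq 1,
\]
and to characterise when it holds with equality. The supremum over $n$ and $\lambda\in\Lambda_n$ then requires no further work. Indeed, writing $w_\lambda:=R_2(\kappa_\lambda)/R_2(K)$, which lies in $(0,1)$ by Assumption~\ref{ass:kappa} and the constraint $\inf\Lambda_n>\lambda_0(K)$, we have
\[
\sigma_P^2(x_0)-V_{P,x_0}^{(\lambda)}(\rho)=(1-w_\lambda)\bigl(\sigma_P^2(x_0)-1/i_P(x_0)\bigr).
\]
This difference is nonnegative, and it vanishes if and only if $\sigma_P^2(x_0)=1/i_P(x_0)$.

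To prove the inequality, I would reuse the integration-by-parts identity \eqref{Eq:FITwoForms} from the proof of Lemma~\ref{lem:score-min}, now with test function $\varrho(\varepsilon\given x_0)=\varepsilon$, so that $\varrho'\equiv 1$. This gives
\[
\E_P\{\varepsilon\rho(\varepsilon\given X)\given X=x_0\}=\int_\R \varepsilon\, p'(\varepsilon\given x_0)\,d\varepsilon=-1.
\]
The derivation should be done directly by integration by parts. The boundary term $[\varepsilon p(\varepsilon\given x_0)]_{-\infty}^{\infty}$ vanishes by the added hypothesis $\lim_{\varepsilon\to\pm\infty}\varepsilon p(\varepsilon\given x_0)=0$, and the remaining term is $-\int p=-1$. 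Integrability of $\varepsilon p'=\varepsilon\rho p$ follows from Cauchy--Schwarz, since $\E(\varepsilon^2\given X=x_0)<\infty$ and $i_P(x_0)<\infty$. Using $\E_P\{\varepsilon\given X=x_0\}=0$, Cauchy--Schwarz then yields
\[
1=\bigl(\E_P\{\varepsilon\rho(\varepsilon\given X)\given X=x_0\}\bigr)^2\leq \sigma_P^2(x_0)\,i_P(x_0),
\]
which is the desired inequality.

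For the equality case, equality in Cauchy--Schwarz holds if and only if $\rho(\varepsilon\given x_0)=c\varepsilon$ for $P_{\varepsilon|x_0}$-almost every $\varepsilon$. Because $\E\rho=0=\E\varepsilon$, this is the same as $\rho(\cdot\given x_0)$ being linear (affine) almost everywhere. Pinning down the constant from the identity $-1=c\,\sigma_P^2(x_0)$ gives $c=-1/\sigma_P^2(x_0)$. Conversely, if $\rho(\cdot\given x_0)$ is linear, the same identity forces the slope to be $-1/\sigma_P^2(x_0)$, so $i_P(x_0)=\sigma_P^{-2}(x_0)$ and equality holds.

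I expect the main obstacle to be justifying the integration-by-parts step rigorously under the stated hypotheses: absolute continuity of $p(\cdot\given x_0)$ and the tail condition. A secondary issue is passing from almost-everywhere linearity to genuine linearity, which I would handle via the continuity of $\rho$ guaranteed by \ref{ass:score}. I would also note that linearity of $\rho$ is equivalent to $\varepsilon\given X=x_0$ being Gaussian, which links this lemma to Theorem~\ref{thm:improvements}.
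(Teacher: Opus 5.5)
Your proposal is correct and is essentially the paper's argument. The paper writes $V_{P,x_0}^{(\lambda)}(\rho)=\sigma_P^2(x_0)-(1-w_\lambda)\bigl(\sigma_P^2(x_0)-V_{P,x_0}^{(\infty)}(\rho)\bigr)$ and then applies Lemma~\ref{lem:score-min} with $\varrho=\mathrm{id}$. That is exactly your integration by parts with $\varrho(\varepsilon)=\varepsilon$ followed by Cauchy--Schwarz, and you usefully make explicit that the tail condition is what removes the boundary term.

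One small correction concerns the equality statement for the \emph{supremum}. There you need $\sup_{n\in\N}\sup_{\lambda\in\Lambda_n}w_\lambda<1$, not merely $w_\lambda<1$ for each $\lambda$. The per-$n$ constraint $\inf\Lambda_n>\lambda_0(K)$ that you quote allows $\inf\Lambda_n\downarrow\lambda_0(K)$. In that case $\sup_\lambda w_\lambda=1$, so $\sup_\lambda V_{P,x_0}^{(\lambda)}(\rho)=\sigma_P^2(x_0)$ even for non-Gaussian errors. So the step does require ``further work'', though only one line. It follows from the uniform condition $\inf_{n\in\N}\inf\Lambda_n>\lambda_0(K)$ in~\ref{ass:h} together with strict monotonicity of $\lambda\mapsto R_2(\kappa_\lambda)$. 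This is also what the paper implicitly relies on when it replaces $R_2(\kappa_\lambda)$ by its supremum.
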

\begin{proof}
    Recall the definition
    \begin{align*}
        V_{P,x_0}^{(\lambda)}(\rho)
        &=
        V_{P,x_0}^{(\infty)}(\rho)
        +
        \bigl(\sigma_P^2(x_0)- V_{P,x_0}^{(\infty)}(\rho)\bigr)\frac{R_2(\kappa_\lambda)}{R_2(K)}
        \\
        &=
        \sigma_P^2(x_0) - \biggl(1-\frac{R_2(\kappa_\lambda)}{R_2(K)}\biggr)\bigl(\sigma_P^2(x_0)- V_{P,x_0}^{(\infty)}(\rho)\bigr).
    \end{align*}
    Define $\mathrm{id}:\R\times\R^d\to\R$ by $\mathrm{id}(\varepsilon\given x) := \varepsilon$.  Then, by Lemma~\ref{lem:score-min},  $V_{P,x_0}^{(\infty)}(\rho)\leq V_{P,x_0}^{(\infty)}(\mathrm{id})=\sigma_P^2(x_0)$, and so
    \begin{equation*}
        \sup_{n\in\N}\sup_{\lambda\in\Lambda_n}V_{P,x_0}^{(\lambda)}(\rho)
        =
        \sigma_P^2(x_0) - \biggl(1-\frac{\sup_{n\in\N}\sup_{\lambda\in\Lambda_n}R_2(\kappa_\lambda)}{R_2(K)}\biggr)\bigl(\sigma_P^2(x_0)- V_{P,x_0}^{(\infty)}(\rho)\bigr)
        \leq \sigma_P^2(x_0),
    \end{equation*}
    with equality if and only if $\rho(\cdot\given x_0)\propto\mathrm{id}(\cdot\given x_0)$. 
\end{proof}

\section{Proof of Theorem~\ref{thm:improvements}}

\subsection{Proof of Theorem~\ref{thm:improvements}{\ref{thm:improved-ratio-2}}}

    Adopt the notation of the proof of Theorem~\ref{thm:do-no-worse}.  There, the following two facts are established. First, there exists $\mu_{\max}:=C_B(K,d,\beta_0^*,L) > 0$ 
    (defined in Lemma~\ref{lem:B(f)}\ref{lem:bias3}) such that
    \begin{equation*}
        \sup_{f\in\cH(\beta,L)} \sup_{\epsilon > 0} \sup_{h\in\cH_n} \bar{B}_{f,\epsilon,x_0,K,h}\, \leq \mu_{\max}.
    \end{equation*}
    Second, for any $\eta\in(0,\frac{1}{4}C_G]$ there exists $N\in\mathbb{N}$, not depending on $(P,x_0,h,\lambda)$, such that for any $n\geq N$, $M\geq\frac{25}{4}C_B(K,d,\beta_0^*,L)^2$ and error distribution $P_{\varepsilon|x_0}$, 
    \begin{align*}
    &\quad\;\sup_{P_X\in\cP_X}\sup_{f\in\cH(\beta,L)}\sup_{(h,\lambda)\in\cH_n'\times\Lambda_n}\frac{\mathcal{R}_{n,h,P,x_0,\epsilon,M}(\hat{f}_n^{\mathrm{Outrig}})}{\mathcal{R}_{n,h,P,x_0,\epsilon,M}(\hat{f}_n^{\mathrm{LP}})}
    \\
    &\leq
    \sup_{P_X\in\cP_X}\sup_{f\in\cH(\beta,L)}\sup_{(h,\lambda)\in\cH_n'\times\Lambda_n}
    \! \! \frac{G_M\Big(\bar{B}_{f,\epsilon,x_0,K,h}\,\alpha_1\big(nh^{2\beta^*\hspace{-0.1em}+d}\big),\,
    \frac{R_2(K)V_{P,x_0}^{(\lambda)}(\rho)}{p_X(x_0)}\alpha_2\big(nh^{2\beta^*\hspace{-0.1em}+d}\big)\Big)+2\eta}{G_M\Big(\bar{B}_{f,\epsilon,x_0,K,h}\,\alpha_1\big(nh^{2\beta^*\hspace{-0.1em}+d}\big),\,
    \frac{R_2(K)\sigma_P^2(x_0)}{p_X(x_0)}\alpha_2\big(nh^{2\beta^*\hspace{-0.1em}+d}\big)\Big)-2\eta}
    .
    \end{align*}
    Write~$\gamma:=\inf_{n\in\N}\inf_{\lambda\in\Lambda_n}\bigl(1-\frac{R_2(\kappa_{\lambda})}{R_2(K)}\bigr)>0$ and $\theta := \sup_{h\in\cH_n'}nh^{2\beta^*\hspace{-0.1em}+d}<\infty$. 
    We now apply Lemma~\ref{lem:G-ratio-bound} with $\xi:=\frac{R_2(K)}{p_X(x_0)}\alpha_2\big(nh^{2\beta^*\hspace{-0.1em}+d}\big)$ so that $\xi \geq \frac{R_2(K)}{C_X}\alpha_2(\theta)=:\xi_{\min}>0$ and $\xi\leq \frac{R_2(K)}{c_X}=:\xi_{\max}$, as well as $\mu=\bar{B}_{f,\epsilon,x_0,K,h}\alpha_1\bigl(nh^{2\beta^*\hspace{-0.1em}+d}\bigr)\leq\mu_{\max}$ so that $|\mu| \leq \mu_{\max}$, and $V_1:=V_{P,x_0}^{(\lambda)}(\rho) = 1/i_P(x_0)+\frac{R_2(\kappa_\lambda)}{R_2(K)}\bigl(\sigma_P^2(x_0)-1/i_P(x_0)\bigr)$, $V_2:=\sigma_P^2(x_0)$. Then
    \begin{align}
\sup_{P_X\in\cP_X}\sup_{f\in\cH(\beta,L)} & \sup_{(h,\lambda)\in\cH_n'\times\Lambda_n}\frac{\mathcal{R}_{n,h,P,x_0,\epsilon,M}(\hat{f}_n^{\mathrm{Outrig}})}{\mathcal{R}_{n,h,P,x_0,\epsilon,M}(\hat{f}_n^{\mathrm{LP}})}
        \notag
    \\
    &\leq
    \frac{1}{1-\Delta}\sup_{\lambda\in\Lambda_n}\biggl(
    1+\frac{\xi_{\min}\gamma\bigl(\sigma_P^2(x_0)-\frac{1}{i_P(x_0)}\bigr)}{\mu_{\max}^2+\xi_{\max}V_{P,x_0}^{(\lambda)}(\rho)+2\eta}
    -\frac{2\big(1+\tfrac{1}{1-\Delta}\big)\eta}{\xi_{\min}V_{P,x_0}^{(\lambda)}(\rho)+2\eta}
    \biggr)^{-1}
    \label{eq:partway}
    \\
    &\leq
    \frac{1}{1-\Delta}\biggl(
    1+\frac{\xi_{\min}\gamma\bigl(\sigma_P^2(x_0)-\frac{1}{i_P(x_0)}\bigr)}{\mu_{\max}^2+\xi_{\max}\sigma_P^2(x_0)+2\eta}
    -\frac{2\big(1+\tfrac{1}{1-\Delta}\big)\eta}{\xi_{\min}/i_P(x_0)+2\eta}
    \biggr)^{-1}.
    \notag
    \end{align}
    Since $\Delta \in (0,1)$ and $\eta \in (0,C_G/4]$ were arbitrary,
    \begin{align}
    {r}_{\cH'}(P_{\varepsilon|x_0})\leq \biggl(1+\frac{\xi_{\min}\gamma\bigl(\sigma_P^2(x_0)-\frac{1}{i_P(x_0)}\bigr)}{\mu_{\max}^2+\xi_{\max}\sigma_P^2(x_0)}\biggr)^{-1}
    \leq 1
    ,
    \label{eq:id-rho}
    \end{align}
    with equality if and only if $\sigma_P^2(x_0)=\frac{1}{i_P(x_0)}$, which occurs only if $P_{\varepsilon|x_0}$ is Gaussian by Lemma~\ref{lem:gauss-eq}. \hfill \qed

\subsection{Proof of Theorem~\ref{thm:improvements}{\ref{thm:improved-ratio-3}}}

    Adopt the notation of the proof of Theorem~\ref{thm:do-no-worse}, as well as the quantities $\xi_{\min}, \xi_{\max}$ and $\mu_{\max}$ as in the proof of Theorem~\ref{thm:improvements}{\ref{thm:improved-ratio-2}}. 
    Given $\tau \in (3,5]$, define a distribution $P_{\varepsilon|x_0}(\tau)$ for $\tau\in(3,5]$ with density function 
    \begin{equation*}
        p_{\varepsilon|x_0}(u\given x_0;\tau) := \frac{C_\tau}{|u|^{\tau}+1},
        \qquad \text{where} \ 
        C_\tau:=\frac{\tau\sin(\pi/\tau)}{2\pi}.
    \end{equation*}
    Then
    \begin{equation}\label{eq:sigma2-tau}
        \sigma_{P(\tau)}^2(x_0)=\E_{P(\tau)}(\varepsilon^2\given X=x_0) = 2C_\tau\int_0^\infty\frac{u^2}{u^\tau+1} \, du=\frac{\sin(\pi/\tau)}{\sin(3\pi/\tau)}.
    \end{equation}
    Further, the conditional score function is 
    $$\rho(\varepsilon\given x_0;\tau)=\frac{-p'(\varepsilon\given x_0;\tau)}{p(\varepsilon\given x_0;\tau)}=\frac{\tau|\varepsilon|^{\tau-1}\sgn\varepsilon}{|\varepsilon|^\tau+1},$$
    and so
    \begin{equation}\label{eq:i-tau}
        \frac{1}{i_{P(\tau)}(x_0)}=\frac{1}{\E_{P(\tau)}\{\rho^2(\varepsilon\given x_0;\tau)\}} = \bigg(2C_\tau\int_0^\infty\frac{\tau^2u^{2(\tau-1)}}{(u^\tau+1)^3}\,du\bigg)^{-1}=\frac{2}{\tau-1}
        \in [1/2,2)
        .
    \end{equation}
    Now define $\gamma(\underbar{\lambda}):=1-\frac{R_2(\kappa_{\underbar{\lambda}})}{R_2(K)}$. 
    Arguing similarly to~\eqref{eq:partway}, we have for each $\underbar{\lambda} > \lambda_0(K)$, $\Delta \in (0,1)$ and $\eta\in(0,C_G/4]$ that
    \begin{align*}
r_{\cH'}\bigl(P_{\varepsilon|x_0}(\tau),\underbar{\lambda}\bigr)
    \leq
    \frac{1}{1-\Delta}\biggl(
    1+\frac{\xi_{\min}\gamma(\underbar{\lambda})\bigl(\sigma_P^2(x_0)-\frac{1}{i_P(x_0)}\bigr)}{\mu_{\max}^2+\xi_{\max}V_{P,x_0}^{(\underbar{\lambda})}(\rho)+2\eta}
    -\frac{2\big(1+\tfrac{1}{1-\Delta}\big)\eta}{\xi_{\min}V_{P,x_0}^{(\underbar{\lambda})}(\rho)+2\eta}
    \biggr)^{-1}.
    \end{align*}
    Since $\Delta \in (0,1)$ and $\eta \in (0,C_G/4]$ were arbitrary, we deduce from~\eqref{eq:sigma2-tau}~and~\eqref{eq:i-tau} that
    \begin{align*}
        r_{\cH'}\bigl(P_{\varepsilon|x_0}(\tau),\underbar{\lambda}\bigr) 
        &\leq
        \biggl(1+\frac{\xi_{\min}\gamma(\underbar{\lambda})\bigl(\sigma_{P(\tau)}^2(x_0)-\frac{1}{i_{P(\tau)}(x_0)}\bigr)}{\mu_{\max}^2+\xi_{\max}\bigl\{\frac{1}{i_{P(\tau)}(x_0)}+\bigl(1-\gamma(\underbar{\lambda})\bigr)\bigl(\sigma_{P(\tau)}^2(x_0)-\frac{1}{i_{P(\tau)}(x_0)}\bigr)\bigr\}}\biggr)^{-1}
        \\
        &=\biggl(1+\frac{\xi_{\min}\gamma(\underbar{\lambda})}{\mu_{\max}^2+\xi_{\max}\bigl\{\frac{2}{\tau-1}+\bigl(1-\gamma(\underbar{\lambda})\bigr)\bigl(\frac{\sin(\pi/\tau)}{\sin(3\pi/\tau)}-\frac{2}{\tau-1}\bigr)\bigr\}}\biggl\{\frac{\sin(\pi/\tau)}{\sin(3\pi/\tau)}-\frac{2}{\tau-1}\biggr\}\biggr)^{-1}.
    \end{align*}
    Since $\gamma:[\lambda_0(K),\infty)\to(0,1]$ is a strictly increasing function with $\gamma\bigl(\lambda_0(K)\bigr)=0$ and $\lim_{\underbar{\lambda}\to\infty}\gamma(\underbar{\lambda})=1$, and $\frac{\sin(\pi/\tau)}{\sin(3\pi/\tau)}-\frac{2}{\tau-1}\to \infty$ as $\tau\searrow3$, we conclude that
    \begin{equation*}
        \lim_{\tau\searrow3,\,\underbar{\lambda}\to\infty}
        r_{\cH'}\bigl(P_{\varepsilon|x_0}(\tau),\underbar{\lambda}\bigr)
        = 0,
    \end{equation*}
    so the result follows.
\hfill \qed

\subsection{Additional lemma for Theorem~\ref{thm:improvements}}

\begin{lemma}\label{lem:G-ratio-bound}
Fix $\Delta\in(0,1)$ and $\eta\in(0,C_G/4]$, as well as $0 < \xi_{\min} \leq \xi_{\max}$, $\mu_{\max} \in \mathbb{R}$ and $0 < V_{\min} \leq V_{\max}$.  Suppose that $\xi \in [\xi_{\min},\xi_{\max}]$, $\mu \in [-\mu_{\max},\mu_{\max}]$ and $V_1,V_2 \in [V_{\min},V_{\max}]$ with $V_1 \leq V_2$ and $M \geq 3(\mu_{\max}^2 + \xi_{\max}V_{\max})/\Delta$. Then
    \begin{equation}
        \frac{G_M(\mu,\xi V_1)+2\eta}{G_M(\mu,\xi V_2)-2\eta} 
        \leq \frac{1}{1-\Delta}\bigg(1+\frac{\xi_{\min}(V_2-V_1)}{\mu_{\max}^2+\xi_{\max}V_1+2\eta}-\frac{2\big(1+\tfrac{1}{1-\Delta}\big)\eta}{\xi_{\min} V_1+2\eta}\bigg)^{-1}.
    \end{equation}
\end{lemma}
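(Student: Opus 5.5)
The plan is to control the numerator and denominator of the ratio separately, using truncation-level estimates of $G_M(\mu,V)$ relative to the untruncated second moment $\mu^2+V$. Write $V_i':=\xi V_i$. We have $G_M(\mu,V)\le \mu^2+V$, and because the Gaussian second moment is finite, $G_M(\mu,V)\ge (\mu^2+V)-\E\{(Z^2-M)_+\}$. The tail term is at most $\E(Z^4)/M\le 3(\mu^2+V)^2/M$. With $M\ge 3(\mu_{\max}^2+\xi_{\max}V_{\max})/\Delta$ this gives
\[
G_M(\mu,V)\ge (1-\Delta)(\mu^2+V)
\quad\text{whenever } \mu^2+V\le \mu_{\max}^2+\xi_{\max}V_{\max}.
\]
Our range of $(\mu,V)$ satisfies this throughout. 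If the constant $3$ does not quite close, I would fall back on a Markov or Cauchy--Schwarz tail estimate, $\E\{Z^2\ind_{Z^2>M}\}\le \E(Z^4)/M$. Either way, this yields the sandwich
\[
(1-\Delta)(\mu^2+V)\le G_M(\mu,V)\le \mu^2+V.
\]

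Next I bound the ratio itself. The upper bound on the numerator gives $G_M(\mu,V_1')+2\eta\le \mu^2+V_1'+2\eta$, and the lower bound on the denominator gives $G_M(\mu,V_2')-2\eta\ge(1-\Delta)(\mu^2+V_2')-2\eta$. So the ratio is at most
\[
\frac{1}{1-\Delta}\cdot\frac{\mu^2+V_1'+2\eta}{\mu^2+V_2'-2\eta/(1-\Delta)}.
\]
I invert this as $\frac{1}{1-\Delta}\bigl(\frac{\mu^2+V_2'-2\eta/(1-\Delta)}{\mu^2+V_1'+2\eta}\bigr)^{-1}$ and split the inner fraction as
\[
1+\frac{V_2'-V_1'}{\mu^2+V_1'+2\eta}-\frac{2\eta\bigl(1+\frac{1}{1-\Delta}\bigr)}{\mu^2+V_1'+2\eta}.
\]
Finally I make each term worst-case. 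In the second term, use $V_2'-V_1'=\xi(V_2-V_1)\ge \xi_{\min}(V_2-V_1)$ in the numerator, together with $\mu^2\le\mu_{\max}^2$ and $\xi\le\xi_{\max}$ in the denominator. In the third term, which is subtracted, enlarge it by lowering its denominator using $\mu^2\ge0$ and $\xi\ge\xi_{\min}$. This produces exactly the stated right-hand side, since $t\mapsto t^{-1}$ is decreasing.

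The main obstacle is positivity. Inverting requires the bracket on the right-hand side to be positive, and we also need $G_M(\mu,V_2')-2\eta>0$. The latter follows from $\eta\le C_G/4$ if the denominator is at least $C_G$, as in \eqref{eq:CG-bound}. The former is not implied by the listed hypotheses alone. I would handle it by noting that if the bracket is nonpositive the claimed bound is vacuous or read as $+\infty$, and in the application $\eta$ is taken small. I would check the convention the paper adopts and state this caveat explicitly.
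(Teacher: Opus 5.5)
Your proposal is correct and matches the paper's proof essentially step for step. Both use the same sandwich $(1-\Delta)(\mu^2+V)\le G_M(\mu,V)\le \mu^2+V$, obtained from $\E\{(Z^2-M)\vee 0\}\le \E(Z^4)/M\le 3(\mu^2+V)^2/M$, followed by the same inversion, splitting and worst-casing of the bracket.

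Your positivity caveat is legitimate, since the paper leaves it implicit, and it can be stated more economically: positivity of the final bracket alone suffices. That bracket is at most $(\mu^2+\xi V_2-2\eta/(1-\Delta))/(\mu^2+\xi V_1+2\eta)$, so its positivity forces $(1-\Delta)(\mu^2+\xi V_2)>2\eta$ and hence $G_M(\mu,\xi V_2)-2\eta>0$, with no appeal to $C_G$.
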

\begin{proof}
    For any $\mu \in \mathbb{R}$ and $V \geq 0$, we have
    \begin{equation*}
        G_M(\mu,V)=\mu^2+V-\E_{Z\sim N(\mu,V)}\bigl\{(Z^2-M)\vee0\bigr\},
    \end{equation*}
    so $G_M(\mu,V)\leq \mu^2+V$. Further,
    \begin{align*}
        \E_{Z\sim N(\mu,V)}\bigl\{(Z^2-M)\vee0\bigr\}
        &=
        \E_{Z\sim N(\mu,V)}\bigl(Z^2\ind_{\{|Z|>\sqrt{M}\}}\bigr) - M\mathbb{P}(|Z| > \sqrt{M})
        \\ 
        &\leq
        \frac{1}{M}\E_{Z\sim N(\mu,V)}(Z^4)
        =
        \frac{\mu^4+6\mu^2V+3V^2}{M},
    \end{align*}
    so
    \begin{equation*}
        \mu^2+V-\frac{\mu^4+6\mu^2V+3V^2}{M}
        \leq G_M(\mu,V) 
        \leq \mu^2+V.
    \end{equation*}
    On the other hand,
    \begin{equation*}
        \frac{\mu^4+6\mu^2\xi V_2+3\xi^2 V_2^2}{M}
        \leq 
        \frac{3(\mu^2 + \xi V_2)^2}{M} \leq \Delta(\mu^2 + \xi V_2),
    \end{equation*}
    so
    \begin{align*}
        G_M(\mu,\xi V_2) &\geq
        \mu^2+\xi V_2 - \frac{\mu^4+6\mu^2\xi V_2+3\xi^2V_2^2}{M}
        \geq
        (1-\Delta)(\mu^2+\xi V_2). 
    \end{align*}
    Therefore
    \begin{align*}
        \frac{G_M(\mu,\xi V_1)+2\eta}{G_M(\mu,\xi V_2)-2\eta}
        \leq
        \frac{1}{1 \!- \! \Delta}\biggl(\frac{\mu^2+\xi V_2-\tfrac{2\eta}{1-\Delta}}{\mu^2+\xi V_1+2\eta}\biggr)^{-1}
        &=
        \frac{1}{1 \!-\! \Delta}\biggl(1+\frac{\xi(V_2-V_1)}{\mu^2+\xi V_1+2\eta}-\frac{2\big(1+\tfrac{1}{1-\Delta}\big)\eta}{\mu^2+\xi V_1+2\eta}\biggr)^{-1}
        \\
        &\leq
        \frac{1}{1 \!-\! \Delta}\biggl(1+\frac{\xi_{\min}(V_2-V_1)}{\mu_{\max}^2+\xi_{\max}V_1+2\eta}-\frac{2\big(1+\tfrac{1}{1-\Delta}\big)\eta}{\xi_{\min} V_1+2\eta}\biggr)^{-1}
        ,
    \end{align*}
    as required.
\end{proof}

\section{Proofs of minimaxity results of Section~\ref{sec:minimax}}

    \begin{proof}[Proof of Theorem~\ref{thm:UB}]
        We take $K$ to be a bounded kernel of order $p+1$ supported on $\mathcal{B}_0(1)$, and define
        \begin{align*}
            \bar{\mu}_\beta(K)
            &:= \int_{\mathcal{B}_0(1)}|K(\nu)|\,\|\nu\|_1^{\beta_0}\|\nu\|^{\beta-\beta_0}\,d\nu, 
            \qquad
            a :=\frac{d\,\Gamma^2(\beta+1)R_2(K)}{2\beta 
            \Gamma^2(\beta-\beta_0+1)
            L^2\bar\mu_\beta^2(K) i_P(x_0)p_X(x_0)}, \\
            h \equiv h_n &:= 
            a^{1/(2\beta+d)}n^{-1/(2\beta+d)}.
    \end{align*} 
        Consider the decomposition of Theorem~\ref{thm:decomp}. 
        In the case $\beta\in(0,1]$, the bias term~\eqref{eq:B-def-beta01} satisfies
       \begin{align*}
            |B(f,x_0,K,h)|\,h^{\beta}
            &\leq
            \int_{\mathcal{B}_0(1)}|K(\nu)|\,|f(x_0+h\nu)-f(x_0)|\,d\nu
            \leq
            L\int_{\mathcal{B}_0(1)}|K(\nu)|\,\|h\nu\|^\beta\,d\nu
            \leq 
            Lh^{\beta}\bar{\mu}_{\beta}(K).
        \end{align*}
        On the other hand, for $\beta>1$ we have from~\eqref{eq:multinom-expansion} that
        \[
        \sum_{\substack{\alpha\in\N_0^d:\\\|\alpha\|_1=\beta_0}}\frac{1}{\alpha!} \int_{\mathcal{B}_0(1)}|K(\nu)|\,|\nu^\alpha|\,\|\nu\|^{\beta-\beta_0}\,d\nu
        =
        \frac{1}{\beta_0!}\int_{\mathcal{B}_0(1)}|K(\nu)|\,\|\nu\|_1^{\beta_0}\|\nu\|^{\beta-\beta_0}\,d\nu
        =\frac{\bar{\mu}_\beta(K)}{\beta_0!}.
        \]
        Thus, the bias term~\eqref{eq:B-def} satisfies
        \begin{align*}
            |B(f,x_0,K,h)|\,h^{\beta}
            &\leq
            h^{\beta_0} \sum_{\substack{\alpha\in\N_0^d:\\\|\alpha\|_1=\beta_0}}\frac{\beta_0}{\alpha!}\int_{\mathcal{B}_{0}(1)} |K(\nu)|\,|\nu^\alpha| 
            \,\int_0^1(1-t)^{\beta_0-1}\bigl|\partial^\alpha f(x_0+th\nu)-\partial^\alpha f(x_0)\bigr| \,dt\,d\nu
            \\
            &\leq
            L h^\beta  
            \int_0^1 t^{\beta-\beta_0} (1-t)^{\beta_0-1}\,dt
            \sum_{\substack{\alpha\in\N_0^d:\\\|\alpha\|_1=\beta_0}}\frac{\beta_0}{\alpha!} \int_{\mathcal{B}_0(1)}|K(\nu)|\,|\nu^\alpha|\,\|\nu\|^{\beta-\beta_0}\,d\nu
            \\
            &
            =
            Lh^{\beta}\frac{\Gamma(\beta-\beta_0+1)}{\Gamma(\beta+1)}\bar{\mu}_\beta(K). 
        \end{align*}

        Using the fact that $V_{P,x_0}^{(\infty)}(\rho)= 1/i_P(x_0)$, and arguing as in the proof of Theorem~\ref{thm:do-no-worse}, we therefore have
        \begin{align*}
            \limsup_{M\to\infty}&\limsup_{n\to\infty}\mathrm{MSE}_{n,M}(\hat{f}^{\mathrm{Outrig}}_n)
            \leq
            L^2 a^{2\beta/(2\beta+d)}\frac{\Gamma^2(\beta-\beta_0+1)}{\Gamma^2(\beta+1)}\bar\mu_\beta^{2}(K)
            +
            \frac{R_2(K)}{ i_P(x_0)p_X(x_0)\,a^{d/(2\beta+d)}}
            \\
            &=
            (2\beta+d)\biggl(\frac{1}{(2\beta)^{2\beta}d^d}
            \frac{\Gamma^{2d}(\beta-\beta_0+1)}{\Gamma^{2d}(\beta+1)}
            \bar\mu_\beta^{2d}(K)R_2^{2\beta}(K)
            \biggr)^{1/(2\beta+d)}
            \biggl(\frac{L^{d/\beta}}{i_P(x_0) p_X(x_0)}\biggr)^{2\beta/(2\beta+d)}.
        \end{align*}
        This establishes~\eqref{eq:LAM-UB}, and it remains to bound $C_{\beta,1}$.  Now 
        \begin{equation*}
            \bar\mu_\beta(K)
            =
        \int_{\mathcal{B}_0(1)}|K(\nu)|\,\|\nu\|_1^{\beta_0}\|\nu\|^{\beta-\beta_0}\,d\nu
            \leq
            d^{\beta_0(q-1)/q}\int_{\mathcal{B}_0(1)}|K(\nu)|\|\nu\|^{\beta}\,d\nu
            =: d^{\beta_0(q-1)/q}\tilde{\mu}_\beta(K)
            .
        \end{equation*}
Henceforth we take our kernel $K$ to be 
\begin{equation*}
            K(\nu) := \frac{\beta+d}{\beta}
\frac{\Gamma(1+d/q)}{2^d\Gamma^d(1+1/q)}\max\bigl(0,\,1-\|\nu\|^\beta\bigr),
        \end{equation*}
        so that
        \begin{equation*}
            R_2(K) = \frac{2(\beta+d)}{2\beta+d}\cdot \frac{\Gamma(1+d/q)}{2^d \Gamma^d(1+1/q)},
            \qquad
            \tilde{\mu}_{\beta}(K) = \frac{d}{2\beta+d}.
        \end{equation*}
        Hence
        \begin{equation*}
            \tilde{\mu}_\beta^d(K)R_2^\beta(K)
            =
            \frac{d^d(\beta+d)^\beta}{2^{\beta(d-1)}(2\beta+d)^{\beta+d}}\cdot\frac{\Gamma^{\beta}(1+d/q)}{\Gamma^{\beta d}(1+d/q)}.
        \end{equation*}
        Moreover, for $\beta\in(0,2]$, we have $\frac{\Gamma(\beta+1)}{\Gamma(\beta-\beta_0+1)}=1\vee\beta$, 
        and so~\eqref{eq:LAM-UB} holds with
        \begin{align}\label{eq:C-bd-(0,2]}
            C_{\beta,d}
            &=
            \biggl\{
            \frac{d^{(2\beta_0(q-1)/q+1)d}(\beta+d)^{2\beta}}{(2^d\beta)^{2\beta}(2\beta+d)^{d}(1\vee\beta)^{2d}}\cdot\frac{\Gamma^{2\beta}(1+d/q)}{\Gamma^{2\beta d}(1+1/q)}\biggr\}^{1/(2\beta+d)},
        \end{align}
        and in the special case $d=1$, this reduces to~\eqref{Eq:Cbetad}.
    \end{proof}

    For our corresponding lower bound we will make use of the following lemma. Our explicit constants involve the hypergeometric function $_2F_1$, where, for $a \in \mathbb{R}$, $c > b > 0$ and $z \in \mathbb{R}$, we have
    \[
    _2F_1(a,b,c;z) := \frac{1}{\mathrm{B}(b,c-b)} \int_0^1 x^{b-1}(1-x)^{c-b-1}(1-zx)^{-a} \, dx,
    \]
    where $\mathrm{B}(\cdot,\cdot)$ denotes the beta function.  For more details on hypergeometric functions, see e.g.,~\citet[Chapter 9]{hypergeometric}. We also introduce the following notation. Given a function $K:\R^d\to\R$ we define the $\beta$-H\"older semi-norm
    \begin{equation*}
        \|K\|_{C_{\beta}}:=\sup_{\alpha\in\N_0^d\,:\,\|\alpha\|_1=\beta_0}\sup_{x\neq y}\frac{|\partial^\alpha K(x)-\partial^\alpha K(y)|}{\|x-y\|^{\beta-\beta_0}},
    \end{equation*}
    where $\beta_0:=\ceil{\beta}-1$.

    \begin{lemma}\label{lem:LB-K}
        Let $x_0\in\cX$ and $\beta \in (0,\infty)$.  
        Fix an arbitrary function $K:\R^d\to\R$ with $\|K\|_{C_\beta}<\infty$, as well as $\sup_{\alpha\in\N_0^d\,:\,\|\alpha\|_1\leq\beta_0}\|\partial^\alpha K\|_\infty<\infty$. Then for any sequence $(\hat{f}_n)$ of estimators,
        \begin{equation}\label{eq:LB-cbetaK}
            \liminf_{n\to\infty} \mathrm{MSE}_n(\hat{f}_n)
            \geq
            c_{\beta,d}(K) \biggl(\frac{L^{d/\beta}}{  p_X(x_0) i_P(x_0) }\biggr)^{2\beta/(2\beta+d)},
        \end{equation}
        where
        \begin{equation*}
            c_{\beta,d}(K)
            :=
            a_{\beta,d}\cdot\biggl(\frac{(2\beta)^{2\beta}d^d}{(2\beta+d)^{2\beta+d}}\biggr)^{1/(2\beta+d)}\frac{K^2(0)}{\|K\|_{C_\beta}^{2d/(2\beta+d)}R_2(K)^{2\beta/(2\beta+d)}},
        \end{equation*}
        and $a_{\beta,d} > 0$ satisfies
        \begin{equation}\label{eq:a}
        \frac{1}{1.69} \leq \inf_{\beta \in (0,\infty)} \inf_{d \in \mathbb{N}} a_{\beta,d} \leq \lim_{\beta \searrow 0} \inf_{d \in \mathbb{N}} a_{\beta,d} = 1.
        \end{equation}
    \end{lemma}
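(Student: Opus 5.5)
The plan is to reduce the problem to a one-dimensional parametric submodel through $x_0$ and apply a Bayesian Cram\'er--Rao (van Trees) inequality, keeping careful track of constants. Fix $P=(P_X,P_{\varepsilon|X},f)$ satisfying Assumption~\ref{ass:DGM}, and hold $P_X$ and the conditional error law $P_{\varepsilon|X}$ fixed. For a bandwidth $h=h_n\to0$ and a parameter $\theta\in[-1,1]$, define the perturbed regression functions
\[
f_\theta(x):=\theta\,\frac{L h^\beta}{\|K\|_{C_\beta}}\,K\Bigl(\frac{x-x_0}{h}\Bigr).
\]
Rescaling shows that the $\beta$-H\"older seminorm of $f_\theta$ is at most $L$. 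The lower-order derivative bounds $\|\partial^\alpha f_\theta\|_\infty\lesssim h^{\beta-\|\alpha\|_1}$ tend to zero, so $f_\theta\in\cH(\beta,L)$ for all $n$ large. Since $\mathrm{MSE}_n$ is a supremum over $\cH(\beta,L)$, it is bounded below by the Bayes risk over any prior $\pi$ on $\theta$. The target functional is $\psi(\theta):=f_\theta(x_0)=\theta L h^\beta K(0)/\|K\|_{C_\beta}$.

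Next I compute the Fisher information of the location model $Y=f_\theta(X)+\varepsilon$. Because $\varepsilon\given X$ has score $\rho(\cdot\given X)$ not depending on $\theta$, the per-observation information is
\[
\E_P\bigl\{(\partial_\theta f_\theta(X))^2 i_P(X)\bigr\}=\frac{L^2h^{2\beta}}{\|K\|_{C_\beta}^2}\,h^d\int K^2(\nu)\,i_P(x_0+h\nu)\,p_X(x_0+h\nu)\,d\nu .
\]
By continuity of $\zeta_{\rho^2}=i_P(\cdot)$ and $p_X$ (Assumption~\ref{ass:DGM}), this equals $\bigl(1+o(1)\bigr)\,L^2h^{2\beta+d}R_2(K)\,p_X(x_0)\,i_P(x_0)/\|K\|_{C_\beta}^2$. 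I need a truncation or dominated-convergence argument for the tails of $K$ if $K$ is not compactly supported.

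I then apply van Trees with a smooth prior $\pi$ on $[-1,1]$ vanishing at the endpoints and having prior information $J(\pi)$. This gives
\[
\E\bigl(\hat f_n(x_0)-f_\theta(x_0)\bigr)^2\geq \frac{L^2h^{2\beta}K^2(0)/\|K\|_{C_\beta}^2}{n\,I_1+J(\pi)} ,
\]
where $n I_1$ is the total information computed above. Set $h=(t/n)^{1/(2\beta+d)}$ and multiply by $n^{2\beta/(2\beta+d)}$. The lower bound becomes
\[
\frac{L^2t^{2\beta/(2\beta+d)}K^2(0)}{\|K\|_{C_\beta}^2}\Big/\Bigl(t\,L^2R_2(K)p_Xi_P/\|K\|_{C_\beta}^2+J(\pi)\Bigr).
\]
Two further optimizations remain, and they are explicit. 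The first is over $t$, which produces the factor $\bigl((2\beta)^{2\beta}d^d/(2\beta+d)^{2\beta+d}\bigr)^{1/(2\beta+d)}$ together with the stated powers of $L$, $p_X(x_0)i_P(x_0)$, $R_2(K)$ and $\|K\|_{C_\beta}$. The second is over how to trade off $J(\pi)$ against the range of $\theta$. To make the prior information negligible I would rescale: use $\theta\in[-A,A]$ together with the kernel scaled by $1/A$, or equivalently let the prior spread while $J\to0$. This must be done in a way that preserves the H\"older constraint, and the resulting loss factor is exactly $a_{\beta,d}$.

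The main obstacle is pinning down $a_{\beta,d}$ and proving~\eqref{eq:a}. Because $f_\theta$ must stay in $\cH(\beta,L)$, the prior cannot simply be flattened. My first attempt would take $\pi$ proportional to $\cos^2(\pi\theta/2)$, or more generally the family $(1-\theta^2)^{\gamma}$, whose Fisher information and second moments are beta integrals. I would then optimize jointly over the prior shape and over $t$. The resulting one-variable optimization should produce ratios of Beta/Gamma functions; the hypergeometric ${}_2F_1$ presumably appears when the optimal $t$ is substituted into the van Trees bound with a nonzero $J$. I would then show numerically or analytically that the infimum of $a_{\beta,d}$ over $(\beta,d)$ is at least $1/1.69$. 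For the limit $\beta\searrow0$, the exponent $d/(2\beta+d)\to1$ makes the $J$-term relatively negligible after optimization, and this gives $a_{\beta,d}\to1$ uniformly in $d$.
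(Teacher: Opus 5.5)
Your reduction matches the paper's. That is, you use the perturbation $f_\theta=\theta Lh^\beta K((\cdot-x_0)/h)/\|K\|_{C_\beta}$ with $|\theta|\le1$, the total Fisher information tends to $\lambda:=nL^2h^{2\beta+d}R_2(K)p_X(x_0)i_P(x_0)/\|K\|_{C_\beta}^2$, and optimising over the bandwidth gives the factor $\gamma^\gamma(1-\gamma)^{1-\gamma}=\bigl((2\beta)^{2\beta}d^d/(2\beta+d)^{2\beta+d}\bigr)^{1/(2\beta+d)}$ with $\gamma:=2\beta/(2\beta+d)$.

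\textbf{The gap.} The classical van Trees inequality cannot deliver~\eqref{eq:a}. For a prior density $q$ on $[-1,1]$, your bound is, up to the explicit factors, $\lambda^\gamma/\{\lambda+J(q)\}$. This is maximised at $\lambda=\gamma J(q)/(1-\gamma)$ and gives exactly $a_{\beta,d}=J(q)^{-(1-\gamma)}$; no Beta or hypergeometric functions appear. The H\"older constraint forces $|\theta|\le1$, so $J(q)\ge\pi^2$, with equality for $\cos^2(\pi\theta/2)$. Hence the best you can obtain is $a_{\beta,d}=\pi^{-2d/(2\beta+d)}$. This tends to $\pi^{-2}\approx0.10$ as $\beta\searrow0$, and it is below $1/1.69$ for every $\beta\in(0,1]$ and $d\in\N$.

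Your two escape routes do not work. Rescaling to $\theta\in[-A,A]$ with the kernel divided by $A$ is only a reparametrisation: the Fisher information, $J$ and $\psi'(\theta)^2$ all scale by $A^{-2}$, so the bound is unchanged. Spreading the prior beyond $[-1,1]$ takes $f_\theta$ out of $\cH(\beta,L)$.

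Your heuristic for $\beta\searrow0$ is also backwards. In that limit the optimal $\lambda\to0$, so the data become uninformative and the bound is driven entirely by the prior term. There $1/J(q)\le\pi^{-2}$ underestimates the true no-data minimax risk, which is $1$ (the squared half-width), by a factor of $\pi^2$.

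\textbf{The missing ingredient} is the augmented van Trees inequality (Theorem~\ref{thm:avt}). It decouples the weight $\alpha$ from the prior: $\int\E_{P_t}(\hat t-t)^2\pi(t)\,dt\ge(\int\alpha)^2/\int\{I(t)\alpha^2+\alpha'^2\}/\pi$.

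\begin{enumerate}[label=(\roman*)]
\item The paper takes $\alpha(t)=(1-|t|)^m$ and, by Cauchy--Schwarz, the optimal prior $\pi\propto\sqrt{\lambda\alpha^2+\alpha'^2}$.
\item This yields the bound $4/\{(m+1)^2(\int_0^1u^{m/2-1}\sqrt{m^2+\lambda u}\,du)^2\}$; this integral is where the ${}_2F_1$ comes from.
\item Letting $m\to0$ makes $\alpha$ nearly flat, and the bound then tends to $1$ as $\lambda\to0$. This recovers the correct constant in the low-information regime.
\item Using $\sqrt{m^2+\lambda u}\le\sqrt{m^2+\lambda}$ and optimising over $\lambda$ and then $m$ gives $a_{\beta,d}\ge\gamma^{2\gamma}(1-\gamma)^{2(1-\gamma)}\to1$.
\item The reverse inequality $\sqrt{m^2+\lambda u}\ge(1-\delta)m+\delta\sqrt{\lambda u}$ shows the limit is exactly $1$.
\item The constant $1/1.69$ is then verified numerically.
\end{enumerate}
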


\begin{proof}[Proof of Lemma~\ref{lem:LB-K}]
        For $t \in [-1,1]$, define $f_t:\mathbb{R}^d \rightarrow \mathbb{R}$ by
        \begin{equation}\label{eq:ft}
        f_t(x) := \frac{tLh^\beta}{\|K\|_{C_{\beta}}}K\Bigl(\frac{x-x_0}{h}\Bigr),
    \end{equation}
    with
    \begin{equation}\label{eq:h}
        h := \biggl(\frac{\lambda \|K\|_{C_\beta}^2}{L^2 R_2(K)p_X(x_0)i_P(x_0)n}\biggr)^{1/(2\beta+d)}
    \end{equation}
    for some $\lambda > 0$.  Then for all $\alpha\in\N_0^d$ with $\|\alpha\|_1\leq\beta_0$,
    \begin{equation*}
        \|\partial^\alpha f_t\|_\infty
        \leq \frac{L\|\partial^\alpha K\|_\infty h^{\beta-\|\alpha\|_1}}{\|K\|_{C_\beta}}
        \to 0
    \end{equation*}
    as $n\to\infty$. 
    Moreover,
    \begin{equation*}
        \sup_{\substack{\alpha\in\N_0^d:\\\|\alpha\|_1=\beta_0}}\sup_{x\neq y}\frac{|\partial^\alpha f_t(x)-\partial^\alpha f_t(y)|}{\|x-y\|^{\beta-\beta_0}}
        \leq
        \frac{Lh^{\beta-\beta_0}}{\|K\|_{C_\beta}}\sup_{\substack{\alpha\in\N_0^d:\\\|\alpha\|_1=\beta_0}}\sup_{x\neq y}\frac{|\partial^\alpha K(\frac{x-y}{h})-\partial^\alpha K(\frac{y-x_0}{h})|}{\|x-y\|^{\beta-\beta_0}}
        = L.
    \end{equation*}
    Hence there exists $N \in \mathbb{N}$ such that $f_t\in\cH(\beta,L)$ for all $t \in [-1,1]$ and $n \geq N$.  Let $P_t$ denote the joint distribution of independent and identically distributed pairs $(X_1,Y_1),\ldots,(X_n,Y_n)$ satisfying $X_1 \sim P_X$ and  
    \begin{equation}\label{eq:LB-model}
        Y_1 = f_t(X_1) + \varepsilon_1,
    \end{equation}
    where $\varepsilon_1 \given X_1 \sim P_{\varepsilon|X}$.  Suppose further that $P_t:=(P_X,P_{\varepsilon|X},f_t) \in \mathcal{P}$.  The Fisher information $I(t)$ of $P_t$ is given by
    \begin{align*}
        I(t) &= n\int_{\R^d}\int_\R\biggl(\frac{\partial}{\partial t}\log\bigl\{p_{\varepsilon|X}\bigl(y-f_t(x)\given x\bigr)p_X(x)\bigr\}\biggr)^2p_{\varepsilon|X}\bigl(y-f_t(x)\given x\bigr)p_X(x)\,dy\,dx
        \\
        &=
        \frac{nL^2h^{2\beta}}{\|K\|_{C_\beta}^2}\int_{\R^d}\int_\R\frac{\bigl\{p_{\varepsilon|X}'\bigl(y-f_t(x)\given x\bigr)\bigr\}^2}{p_{\varepsilon|X}\bigl(y-f_t(x)\given x\bigr)}K^2\Bigl(\frac{x-x_0}{h}\Bigr)p_X(x)\,dy\,dx
        \\
        &=
        \frac{nL^2h^{2\beta}}{\|K\|_{C_\beta}^2}\int_{\R^d}\E_{P_{\varepsilon|X}}\bigl\{\rho^2(\varepsilon\given X)\biggiven X=x\bigr\}K^2\Bigl(\frac{x-x_0}{h}\Bigr)p_X(x)\,dx
        \\
        &=
        \frac{nL^2h^{2\beta+d}}{\|K\|_{C_\beta}^2}\int_{\R^d}\E_{P_{\varepsilon|X}}\bigl\{\rho^2(\varepsilon\given X)\biggiven X=x_0+h\nu\bigr\}K^2(\nu)p_X(x_0+h\nu)\,d\nu
        \\
        &\to
        \lambda 
        ,
    \end{align*}
    as $n\to\infty$ by the dominated convergence theorem, where we recall that $i_P(x_0):=\E_P\bigl(\rho^2(\varepsilon\given X)\given X=x_0\bigr)$. Therefore for all $\epsilon'\in(0,1)$ there exists $N_{\epsilon'} \in \mathbb{N}$ such that 
    \begin{equation*}
        I(t)
        \geq (1-\epsilon')\lambda
    \end{equation*}
    for all $n\geq N_{\epsilon'}$ and $t\in[-1,1]$. 
    Define $\hat{t}_n:=\frac{\|K\|_{C_\beta}}{Lh^\beta K(0)}\hat{f}_n(x_0)$. Then for $n\geq N\vee N_{\epsilon'}$,
    \begin{align*}
        &\sup_{\substack{f\in\cH(\beta,L)}}\E_P\bigl\{\bigl(\hat{f}_n(x_0)-f(x_0)\bigr)^2\bigr\}
        \geq
    \sup_{t\in[-1,1]}\E_{P_t}\bigl\{\bigl(\hat{f}_n(x_0)-f_t(x_0)\bigr)^2\bigr\}
        \\
        &=
        \frac{L^2h^{2\beta}K^2(0)}{\|K\|_{C_\beta}^2}\sup_{t\in[-1,1]}\E_{P_t}\bigl\{(\hat{t}_n-t)^2\bigr\}
        \\
        &=
        \frac{K^2(0)}{\bigl(\|K\|_{C_\beta}^{2d}R_2(K)^{2\beta}\bigr)^{1/(2\beta+d)}} \biggl(\frac{L^{d/\beta}}{  p_X(x_0) i_P(x_0) n}\biggr)^{2\beta/(2\beta+d)}
         \lambda^{2\beta/(2\beta+d)}\sup_{t\in[-1,1]}\E_{P_t}\bigl\{(\hat{t}_n-t)^2\bigr\}
         \\
         &\geq
         \frac{K^2(0)}{\bigl(\|K\|_{C_\beta}^{2d}R_2(K)^{2\beta}\bigr)^{1/(2\beta+d)}} \biggl(\frac{L^{d/\beta}}{  p_X(x_0) i_P(x_0) n}\biggr)^{2\beta/(2\beta+d)}
        \sup_{m>0}\frac{\lambda^{2\beta/(2\beta+d)}}{(m+1)^2\{_2F_1\bigl(-\frac{1}{2}, \frac{m}{2}, \frac{m}{2}+1; -\frac{(1-\epsilon')\lambda}{m^2}\bigr)\}^2},
    \end{align*}
    by the augmented van Trees inequality~\citep{young2026} (restated as Theorem~\ref{thm:avt} for convenience) with $\alpha(t) = (1 - |t|)^m$ and 
    \[
    \pi(t)=\frac{\sqrt{\lambda \alpha^2(t)+\alpha'(t)^2}}{\int_{-1}^1\sqrt{\lambda \alpha^2(\tau)+ \alpha'(\tau)^2}\,d\tau}.
    \]
    Since $\epsilon' \in (0,1)$ and $\lambda > 0$ were arbitrary, we deduce that 
    \begin{align*}
& \liminf_{n\to\infty} \mathrm{MSE}_n(\hat{f}_n)
        \\
        &\hspace{1cm} \geq
        g\biggl(\frac{2\beta}{2\beta+d}\biggr)
        \biggl(\frac{(2\beta)^{2\beta}d^d}{(2\beta+d)^{2\beta+d}}\biggr)^{1/(2\beta+d)}
        \frac{K^2(0)}{\bigl(\|K\|_{C_\beta}^{2d}R_2(K)^{2\beta}\bigr)^{1/(2\beta+d)}} \biggl(\frac{L^{d/\beta}}{  p_X(x_0) i_P(x_0)}\biggr)^{2\beta/(2\beta+d)},
    \end{align*}
    where
    \begin{equation*}
        g(\gamma) := \frac{1}{\gamma^\gamma(1-\gamma)^{1-\gamma}}\sup_{m>0}\sup_{\lambda>0}\frac{\lambda^\gamma}{(m+1)^2\bigl\{_2F_1\bigl(-\frac{1}{2}, \frac{m}{2}, \frac{m}{2}+1; -\frac{\lambda}{m^2}\bigr)\bigr\}^2}.
    \end{equation*}
    Now
    \begin{align*}
        g(\gamma)
        &=
        \frac{1}{\gamma^\gamma(1-\gamma)^{1-\gamma}}\sup_{m>0}\sup_{\lambda>0}\frac{4\lambda^\gamma}{(m+1)^2 \bigl(\int_0^1 t^{\frac{m}{2}-1}\sqrt{m^2+\lambda t}\,dt\bigr)^2}
        \\
        &\geq
        \frac{1}{\gamma^\gamma(1-\gamma)^{1-\gamma}}\sup_{m>0}\sup_{\lambda>0}\frac{m^2\lambda^\gamma}{(m+1)^2 (m^2+\lambda)} = \sup_{m>0} \frac{m^{2\gamma}}{(m+1)^2}
        = \gamma^{2\gamma}(1-\gamma)^{2(1-\gamma)}.
    \end{align*}
    This lower bound is convex in $\gamma$ and symmetric about $1/2$, so     \begin{equation*}
        \inf_{\gamma\in(0,1)}g(\gamma) \geq \frac{1}{4},
    \end{equation*}
which establishes~\eqref{eq:LB-cbetaK} with the stated expression for $c_{\beta,d}(K)$ and $a_{\beta,d} := g\bigl(\frac{2\beta}{2\beta+d}\bigr)$.   In fact, it can be verified numerically that
    \begin{equation*}
         \inf_{\gamma\in (0,1)}g(\gamma) \geq \frac{1}{1.69}.
    \end{equation*}
It remains to show the equality in~\eqref{Eq:Cratio}. Take $\delta\in(0,1)$.  Then, using the inequality $\sqrt{m^2+\lambda t\,}\geq (1-\delta)m+\delta\sqrt{\lambda t\,}$ for $m,\lambda > 0$ and $t \in [0,1]$, we have for any $\gamma \in (0,1)$ that
\begin{align*}
    \sup_{\lambda>0}&\sup_{m>0}\frac{\lambda^\gamma}{(m+1)^2\bigl(\frac{1}{2}\int_0^1 t^{\frac{m}{2}-1}\sqrt{m^2+\lambda t}\,dt\bigr)^2}
    \leq
    \sup_{\lambda>0}\sup_{m>0}\frac{\lambda^\gamma}{(m+1)^2\bigl((1-\delta)+\frac{\delta\sqrt{\lambda}}{m+1}\bigr)^2}
    \\
    &\leq
    \sup_{\lambda>0}\sup_{m>0}\frac{\lambda^\gamma}{(1-\delta)^2(m+1)^2+\delta^2\lambda}
    =
    \sup_{\lambda>0}\frac{\lambda^\gamma}{(1-\delta)^2+\delta^2\lambda}
    = 
    \frac{\gamma^\gamma(1-\gamma)^{1-\gamma}}{\delta^{2\gamma}(1-\delta)^{2(1-\gamma)}}.
\end{align*}
Therefore
\begin{equation*}
    \gamma^{2\gamma}(1-\gamma)^{2(1-\gamma)} \leq g(\gamma) \leq \frac{1}{\delta^{2\gamma}(1-\delta)^{2(1-\gamma)}}.
\end{equation*}
In particular,
\begin{equation*}
    1\leq \lim_{\gamma\searrow0}g(\gamma) \leq \frac{1}{(1-\delta)^2}.
\end{equation*}
As $\delta\in(0,1)$ was arbitrary, 
\begin{equation*}
    \lim_{\beta\searrow0}\inf_{d\in\N}a_{\beta,d}=\lim_{\gamma\searrow0}g(\gamma) =1,
\end{equation*}
which completes the proof.
\end{proof}

\begin{theorem}[Augmented van Trees inequality, \citealp{young2026}]\label{thm:avt}
Let $(\mathcal{X},\mathcal{A},\mu)$ be a $\sigma$-finite measure space.  Assume that:
\begin{enumerate}[label=(\roman*)]
\item $p:\mathcal{X} \times [-1,1] \rightarrow [0,\infty)$ is a measurable function such that $p(\cdot,t)$ is a (Lebesgue) density for each $t \in T$, and that $t \mapsto p(x,t)$ is absolutely continuous for $\mu$-almost all $x$.  Further suppose that the Fisher information
\[
\mathcal{I}(t) := \int_{\mathcal{X}} \biggl(\frac{\partial_t p(x,t)}{p(x,t)}\biggr)^2 p(x,t) \, d\mu(x) 
\]
satisfies $\int_{-1}^1 \mathcal{I}(t) \, dt < \infty$.

\item $\pi$ is a (Lebesgue) density on $[-1,1]$.

\item $\alpha:[-1,1] \rightarrow \mathbb{R}$ is absolutely continuous with $\alpha(-1) = \alpha(1) = 0$ and
\[
\max\biggl\{\int_{-1}^1 \alpha(t) \, dt,\int_{-1}^1 \frac{\alpha^2(t)}{\pi(t)} \, dt, \int_{-1}^1 \frac{\alpha'(t)^2}{\pi(t)} \, dt\biggr\} < \infty.
\]
\end{enumerate}
Then writing $P_t$ for the distribution with density $p(\cdot,t)$, we have for any measurable $\hat{t}:\mathcal{X} \rightarrow \mathbb{R}$ that
\[
\int_{-1}^1 \mathbb{E}_{P_t}\bigl\{\bigl(\hat{t}(X) - t\bigr)^2\bigr\} \pi(t) \, dt \geq \frac{\bigl\{\int_{-1}^1 \alpha(t) \, dt\bigr\}^2}{\int_{-1}^1 \frac{\mathcal{I}(t)\alpha^2(t) + \alpha'(t)^2}{\pi(t)} \, dt}.
\]
\end{theorem}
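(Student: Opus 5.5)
The statement to prove is the augmented van Trees inequality (Theorem~\ref{thm:avt}). I would prove it by the classical Bayesian Cram\'er--Rao argument: build a covariance identity by integrating by parts in $t$, then apply Cauchy--Schwarz. The twist is that the weight is the auxiliary function $\alpha$ rather than the prior, which removes the usual requirement that $\pi$ vanish at the endpoints.

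First I would establish the key identity. Since $\alpha(-1)=\alpha(1)=0$ and $t\mapsto \alpha(t)p(x,t)$ is absolutely continuous for $\mu$-almost all $x$,
\[
\int_{-1}^1 \partial_t\{\alpha(t)p(x,t)\}\,dt = 0 .
\]
Multiplying by $\hat t(x)$, integrating over $x$ and applying Fubini gives $\int\!\int \hat t(x)\,\partial_t\{\alpha p\}\,dt\,d\mu = 0$. Integrating by parts in $t$ gives
\[
\int_{-1}^1\!\int t\,\partial_t\{\alpha(t) p(x,t)\}\,d\mu\,dt = -\int_{-1}^1\alpha(t)\,dt ,
\]
using $\int p(x,t)\,d\mu=1$ and the vanishing boundary terms. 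Subtracting the two yields
\[
\int_{-1}^1\!\int (\hat t(x)-t)\,\partial_t\{\alpha(t)p(x,t)\}\,d\mu(x)\,dt = \int_{-1}^1\alpha(t)\,dt .
\]

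Next I would factor the integrand. Write $\partial_t\{\alpha p\} = \{\alpha'(t)+\alpha(t)s(x,t)\}\,p(x,t)$, where $s=\partial_t p/p$ is the score. Then
\[
(\hat t-t)\,\partial_t\{\alpha p\} = \bigl\{(\hat t-t)\sqrt{\pi(t)p}\bigr\}\cdot\bigl\{(\alpha'+\alpha s)\sqrt{p/\pi(t)}\bigr\}.
\]
Cauchy--Schwarz on the product space $\mathcal{X}\times[-1,1]$ bounds $\bigl(\int\alpha\bigr)^2$ by the Bayes risk times
\[
\int_{-1}^1 \frac{1}{\pi(t)}\int (\alpha' + \alpha s)^2 p\,d\mu\,dt .
\]
Because $\int s\,p\,d\mu = \partial_t\int p\,d\mu=0$, the cross term vanishes. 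The inner integral is therefore $\alpha'(t)^2+\alpha(t)^2\mathcal{I}(t)$, and dividing gives the stated inequality. If the Bayes risk is infinite there is nothing to prove, so I may assume it is finite.

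The main obstacle is rigor in the interchanges, specifically justifying Fubini and integration by parts when only $\int\mathcal{I}<\infty$ and integrability of $\alpha^2/\pi$ and $\alpha'^2/\pi$ are assumed.
\begin{itemize}
\item For the identity with $\hat t$, the term $\hat t\,\partial_t\{\alpha p\}$ must be jointly absolutely integrable. I would first prove the result for bounded truncations $\hat t_M=(\hat t\wedge M)\vee(-M)$ and then let $M\to\infty$, since truncation toward the support $[-1,1]$ of $t$ does not increase the risk. A cleaner alternative is to clip $\hat t$ to $[-1,1]$, which only reduces the risk, so one may assume $|\hat t|\le 1$ from the start.
\item The integrability $\int\!\int|\alpha' + \alpha s|\,p\,d\mu\,dt<\infty$ follows from Cauchy--Schwarz with the weight $\pi$: it is at most $\bigl(\int\pi\bigr)^{1/2}$ times the square root of the finite denominator.
\item The vanishing of $\int s\,p\,d\mu$ for almost every $t$ also needs a differentiation-under-the-integral justification. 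I would get it by integrating against test functions in $t$ and using the integrability just established.
\end{itemize}
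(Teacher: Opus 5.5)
The paper does not prove this theorem: it is quoted from \citet{young2026} and only restated for use in the proof of Lemma~\ref{lem:LB-K}, so there is no in-paper argument to compare with. Your proposal is the standard van Trees (Bayesian Cram\'er--Rao) argument, with $\alpha$ rather than the prior as the weight in the integration by parts. That choice is exactly what removes the need for $\pi(\pm1)=0$, and the argument is correct.

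Three small points would tighten it.

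First, the hypotheses do not imply $\int_{-1}^1\mathcal{I}\alpha^2/\pi<\infty$. You should dispose of an infinite denominator explicitly (the right-hand side is then $0$) rather than referring to ``the finite denominator''.

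Second, the integrability of $\partial_t p$ itself on $\mathcal{X}\times[-1,1]$ follows directly from
$\int_{\mathcal{X}}|\partial_t p(x,t)|\,d\mu(x)\le\mathcal{I}(t)^{1/2}$ and $\int_{-1}^1\mathcal{I}^{1/2}\le\bigl(2\int_{-1}^1\mathcal{I}\bigr)^{1/2}$. This is where $\int\mathcal{I}<\infty$ is used. It gives all your Fubini steps, and also
$\int_a^b\!\int_{\mathcal{X}}\partial_t p\,d\mu\,dt=\int_{\mathcal{X}}\{p(x,b)-p(x,a)\}\,d\mu=0$ for all $a<b$. Hence the cross term vanishes, and this route is more direct than the $\alpha$-weighted integrability you invoke.

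Third, finiteness of $\int\alpha^2/\pi$ and $\int\alpha'^2/\pi$ forces $\alpha=\alpha'=0$ almost everywhere on $\{\pi=0\}$. So dividing by $\sqrt{\pi}$ in your Cauchy--Schwarz factorisation is harmless.
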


Theorem~\ref{thm:LAM-LB} follows by applying Lemma~\ref{lem:LB-K} for a specific choice of bump function $K$. 

\begin{proof}[Proof of Theorem~\ref{thm:LAM-LB}]
The result follows from Lemma~\ref{lem:LB-K} by taking~$K(\nu)=\prod_{j=1}^d \bigl\{e^{-1/(1-\nu_j^2)}\mathbbm{1}_{\{|v_j| < 1\}}\bigr\}$.  In that case, $K(0)=e^{-d}$, $R_2(K)\leq 2^{d}e^{-2d}<\infty$ and $\|K\|_{C_\beta}$ is finite and depends only on $(\beta,d)$~\citep[e.g.][Exercise 8.13]{samworth-shah}. 
In the case $\beta\in(0,1]$, take $K = K^*$, where
    \begin{equation}\label{eq:LB-K}
        K^*(\nu):=\max\bigl(1-\|\nu\|^{\beta},0\bigr).
    \end{equation}
    Then $K^*(0)=1$. Moreover, for $u,v\in\mathcal{B}_0(1)$, 
    \[
    |K^*(u)-K^*(v)|=\bigl|\|u\|^\beta-\|v\|^\beta\bigr| \leq \bigl|\|u\|-\|v\|\bigr|^\beta \leq \|u-v\|^\beta,\]
    and if $u\in\mathcal{B}_0(1)$, $v\not\in\mathcal{B}_0(1)$, define $v_{\perp}:=\argmin_{\nu\in\mathcal{B}_0(1)}\|\nu-v\|$. Then
    \[|K^*(u)-K^*(v)|=|K^*(u)-K^*(v_{\perp})|\leq\|u-v_{\perp}\|^\beta
    \leq \|u-v\|^\beta.
    \]
    Therefore $\|K^*\|_{C_\beta}=1$.  Finally, 
    \begin{equation*}
        R_2(K^*) = \frac{2^{d+1}\beta^2}{(\beta+d)(2\beta+d)}\cdot\frac{\Gamma^d(1+1/q)}{\Gamma(1+d/q)}.
    \end{equation*}
    Thus
    \begin{align*}
        \biggl(\frac{(2\beta)^{2\beta}d^d}{(2\beta+d)^{2\beta+d}}\biggr)^{1/(2\beta+d)}\frac{K^{*2}(0)}{\bigl(\|K^*\|_{C_\beta}^{2d}R_2(K^*)^{2\beta}\bigr)^{1/(2\beta+d)}}
        &=
        \biggl(\frac{(\beta+d)^{2\beta}d^d}{(2^d\beta)^{2\beta}(2\beta+d)^d}\cdot\frac{\Gamma^{2\beta}(1+d/q)}{\Gamma^{2\beta d}(1+1/q)}\biggr)^{1/(2\beta+d)}.
    \end{align*}
    Applying Lemma~\ref{lem:LB-K} and recalling~\eqref{eq:C-bd-(0,2]}, we obtain~\eqref{eq:LB-cbetaK} with
    \begin{equation*}
        c_{\beta,d}(K^*) = a_{\beta,d}C_{\beta,d},
    \end{equation*}
    so the result follows from~\eqref{eq:a}.
    \end{proof}









\end{document}